\title{Optimal ancilla-free Clifford+$T$ approximation of $z$-rotations}
\author{\begin{tabular}{c}
    Neil J. Ross and Peter Selinger \\[.5ex]
    \normalsize Department of Mathematics and Statistics \\
    \normalsize Dalhousie University
  \end{tabular}
}
\date{}
\begin{document}
\maketitle

\begin{abstract}
  We consider the problem of approximating arbitrary single-qubit
  $z$-rotations by ancilla-free Clifford+$T$ circuits, up to given
  epsilon. We present a fast new probabilistic algorithm for
  solving this problem optimally, i.e., for finding the shortest
  possible circuit whatsoever for the given problem instance. The
  algorithm requires a factoring oracle (such as a quantum
  computer). Even in the absence of a factoring oracle, the algorithm
  is still near-optimal under a mild number-theoretic hypothesis. In
  this case, the algorithm finds a solution of $T$-count
  $m + O(\log(\log(1/\epsilon)))$, where $m$ is the $T$-count of the
  second-to-optimal solution. In the typical case, this yields circuit
  approximations of $T$-count $3\log_2(1/\epsilon) +
  O(\log(\log(1/\epsilon)))$. Our algorithm is efficient in practice,
  and provably efficient under the above-mentioned number-theoretic
  hypothesis, in the sense that its expected runtime is
  $O(\polylog(1/\epsilon))$.
\end{abstract}

%----------------------------------------------------------------------
\section{Introduction}

Practical quantum computing requires the fault-tolerant implementation
of a universal gate set. The decomposition of arbitrary unitary
operators into gates from this fixed set is then an important
problem. Most of the common error correction schemes, including most
stabilizer codes and surface codes, permit a relatively inexpensive
fault-tolerant implementation of gates from the Clifford
group. However, since Clifford gates are not universal for quantum
computation, at least one non-Clifford gate must be added to the basic
gate set to achieve universality. A common choice for this additional
gate is the $T$-gate or $\pi/8$-gate. The $T$-gate is not the only
possible extension of the Clifford group but it is considered to be
the most practical one. This is due to the availability of
fault-tolerant implementations of the $T$-gate. For this reason, the
Clifford+$T$ gate set is often considered as the most promising
candidate for practical quantum computing.

In this paper, we consider the problem of approximating arbitrary
single-qubit $z$-rotations by Clifford+$T$ circuits up to given
$\epsilon$.  Until about two years ago, the state-of-the-art algorithm
for this problem was the Solovay-Kitaev algorithm, which yields
circuits of size $O(\log^c(1/\epsilon))$, where $c>3$. At the other
end of the spectrum are algorithms based on exhaustive search. While
such algorithms achieve optimal circuit sizes, they have exponential
runtimes. For example, the algorithm of {\cite{Fowler04}} is feasible
up to $\epsilon\approx 10^{-4}$. Even the improved algorithm of
{\cite{KMM-practical}}, which combines search-based and other methods
and achieves optimal $T$-counts, still has exponential runtime which
makes it feasible only for precisions up to
$\epsilon\approx 10^{-17}$.

Within the last two years, a new generation of efficient
number-theoretic algorithms have been proposed for the approximate
synthesis problem, achieving circuit sizes of $O(\log(1/\epsilon))$
with polynomial runtime. Unlike the Solovay-Kitaev algorithm, which is
based on a geometric method of successive approximations, these new
algorithms are based on solving Diophantine equations.  The first
such algorithm was due to Kliuchnikov, Maslov, and Mosca
{\cite{KMM-approx}}. It uses a small number of ancilla qubits to
approximate a given single-qubit operator. An improved algorithm was
given in {\cite{Selinger-newsynth}}, which uses no ancillas and achieves
$T$-counts of $K+4\log_2(1/\epsilon)$ for approximating arbitrary
$z$-rotations. This compares to the information-theoretic lower bound
of $K+3\log_2(1/\epsilon)$.

In this paper, we present a fast new probabilistic algorithm for
solving the single-qubit approximate synthesis problem. Our algorithm
is optimal in an absolute sense, i.e., it finds the shortest possible
circuit whatsoever for any given problem instance. To achieve this
optimality, the algorithm requires an oracle for integer factoring. Of
course, a quantum computer can serve as such an oracle by using Shor's
algorithm {\cite{Shor}}.  But even in the absence of a factoring
oracle, our algorithm is still nearly optimal: In this case, under a
mild number-theoretic assumption, we can prove that the algorithm
finds a solution of $T$-count $m + O(\log(\log(1/\epsilon)))$, where
$m$ is the $T$-count of the second-to-optimal solution. In the typical
case, $m$ is given by the information-theoretic lower bound
$3\log_2(1/\epsilon)$. Therefore our algorithm, in the absence of a
factoring oracle, yields circuit approximations of $T$-count
$3\log_2(1/\epsilon) + O(\log(\log(1/\epsilon)))$ in the typical case.

We note that our algorithm is optimal only for the {\em specific}
problem of approximating a given $z$-rotation by a linear sequence of
single-qubit Clifford+$T$ gates. It is already known that even smaller
gate counts and/or circuit depths are achievable using additional
techniques, such as ancillas, measurements, or state distillation
{\cite{DCS12, WK13, BRS2015-1, BRS2015-2}}. In particular, the methods
of \cite{BRS2015-1} produce so-called repeat-until-success circuits
whose expected $T$-count is below the information theoretic lower
bound for deterministic circuits.

It is likely that in the future, the existence of an efficient
approximate synthesis algorithm will be considered an essential
requirement for any universal gate set proposed for practical quantum
computing, of similar importance, say, as the existence of a
fault-tolerant implementation. Our algorithm is specialized to the
Clifford+$T$ gate set which is arguably the most relevant gate set to
practical quantum computing. However, similar number-theoretic methods
are also applicable to certain other universal gate sets. For example,
Bocharov et al.\ {\cite{BGS2013}} gave an efficient synthesis
algorithm for the Clifford+$V$ gate set that achieves gate counts
linear in $\log(1/\epsilon)$, and Kliuchnikov et al.\ {\cite{KBS2013}}
did the same for the $\pair{{\cal F}, {\cal T}}$ gate set. One may
reasonably expect these gate sets to be amenable to the same kind of
optimal synthesis that we provide here for the Clifford+$T$ gate set.

%----------------------------------------------------------------------
\section{Overview}

Recall that the single-qubit Clifford group is generated by the
Hadamard gate $H$, the phase gate $S$, and the scalar
$\omega=e^{i\pi/4}$. By adding the non-Clifford operator $T$, one
obtains a universal gate set for quantum computing.
\[ 
  \omega = e^{i\pi/4}, \quad
  H = \frac{1}{\sqrt2}\zmatrix{cc}{1&1\\1&-1}, \quad
  S = \zmatrix{cc}{1&0\\0&i}, \quad
  T = \zmatrix{cc}{1&0\\0&e^{i\pi/4}}.
\]
Our goal is to approximate an arbitrary $z$-rotation 
\[ \Rz(\theta) = e^{-i\theta Z/2} = \zmatrix{cc}{e^{-i\theta/2} & 0 \\
  0 & e^{i\theta/2}}
\]
by a Clifford+$T$ operator up to given $\epsilon>0$. By a result of
Kliuchnikov, Maslov, and Mosca {\cite{Kliuchnikov-etal}}, a unitary
$2\times 2$-operator can be exactly written as a product of
Clifford+$T$ operators if and only if all of its matrix entries belong to
the ring $\D[\omega] = \Z[\frac{1}{\sqrt2},i]$. Our strategy is therefore
to approximate $\Rz(\theta)$ by a unitary operator of the form
\[ U = \zmatrix{cc}{u & -t\da \\ t & u\da},
\]
where $u,t\in\D[\omega]$. This problem can be solved in two stages:
\begin{enumerate}\roundlabels
\item\label{item-stage-1} find a suitable candidate $u\in\D[\omega]$ that is a good
  approximation of $e^{-i\theta/2}$;
\item\label{item-stage-2} solve the Diophantine
  equation $u\da u+t\da t=1$, to ensure that $U$ is unitary.
\end{enumerate}
Problem (\ref{item-stage-2}) can be solved by standard
number-theoretic methods. In the interest of self-containedness, we
summarize these methods in Section~\ref{sec-diophantine} and
Appendix~\ref{app-rings}. In general, solving the Diophantine equation
in (\ref{item-stage-2}) requires the ability to factor large
integers. However, if no efficient factoring method is available, then
(modulo a mild number-theoretic assumption), the Diophantine equation
can still be solved with large enough probability to ensure that at
most $O(\log(1/\epsilon))$ candidates need to be tried.

The main new technical innovation of this paper is a new and optimal
solution to problem (\ref{item-stage-1}). It turns out that the
``suitability'' of a candidate $u$ can be expressed as a problem of
the form $u\in A$ and $u\bul\in B$, where $A$ and $B$ are fixed convex
subsets of the complex plane depending only on $\theta$ and
$\epsilon$, and $(-)\bul$ is the automorphism of the ring $\D[\omega]$
obtained by mapping $\sqrt2$ to $-\sqrt2$. We call such a problem a
{\em two-dimensional grid problem}. In Sections~\ref{sec-grid-1d} and
{\ref{sec-grid-2d}}, we formulate a general algorithm for solving one-
and two-dimensional grid problems efficiently. Our algorithm proceeds
by transforming the sets $A$ and $B$ using invertible linear operators
that preserve solutions to the given problem, which we call {\em grid
  operators}. We show that for any two convex sets $A$ and $B$ one can
construct a grid operator $G$ such that the action of $G$ on $A$ and
$B$ normalizes the grid problem in an appropriate sense. The main
technical ingredient that makes our solution efficient is an iterative
process for constructing $G$, which is detailed in
Appendix~\ref{appendix-skew-red}.

The rest of this paper is organized as follows. In
Section~\ref{sec-algebra}, we review some basic notions from
algebra. We discuss one- and two-dimensional grid problems in
Sections~\ref{sec-grid-1d} and {\ref{sec-grid-2d}}, respectively. In
Section~\ref{sec-diophantine}, we show how to solve the relevant
Diophantine equation. A detailed description of the main synthesis
algorithm is given in Section~\ref{sec-algorithm}. In
Section~\ref{sec-analysis}, we analyze the algorithm's correctness,
optimality, and complexity. Some experimental results are given in
Section~\ref{sec-experimental}. For better readability of the main
body of the paper, certain technical results are proved in the
appendices.

% ----------------------------------------------------------------------
\section{Some algebra}
\label{sec-algebra}

We introduce some notation and algebraic prerequisites. The set of
natural numbers, including 0, is denoted by $\N$, the ring of integers
is denoted by $\Z$, and we let $\omega=e^{i\pi /4}=(1+i)/\sqrt 2$.

\begin{definition}
\emph{(Extensions of $\Z$)}
We are interested in the following rings of algebraic integers: 
\begin{itemize}
  \item $\Z[\sqrt 2]=\s{a+b\sqrt 2\such a,b\in\Z}$, the ring of 
  \emph{quadratic integers with radicand 2}; 
  \item $\Z[\omega]=\s{a\omega^3 +b\omega^2 +c\omega +d 
  \such a,b,c,d \in \Z}$, the ring of \emph{cyclotomic integers 
  of degree 8};  
  \item $\D=\Z[\frac{1}{2}]=\s{\frac{a}{2^k}\such a\in \Z, k\in\N}$,
  the ring of \emph{dyadic fractions};
  \item $\D[\sqrt 2]= \Z[\rtt{}]=\s{a+b\sqrt{2}\such a,b\in\D}$; and 
  \item $\D[\omega]= \Z[\rtt{}, i]=\s{a\omega^3 +b\omega^2 +c\omega +d
  \such a,b,c,d \in \D}$.
\end{itemize}
\end{definition}

We have the inclusions $\Z\subseteq \Z[\sqrt 2]\subseteq \Z[\omega]$ 
and $\D\subseteq \D[\sqrt 2]\subseteq \D[\omega]$. Moreover, 
$\Z\subseteq\D$, $\Z[\sqrt 2]\subseteq \D[\sqrt 2]$, and 
$\Z[\omega]\subseteq \D[\omega]$. Finally, $\Z[\sqrt 2]$ and 
$\Z[\omega]$ are dense in $\R$ and $\C$, respectively. 

\begin{definition}
\emph{(Automorphisms)}
The following maps are automorphisms of $\D[\omega]$:
\begin{itemize} 
  \item \emph{Complex conjugation}, which we denote $(-)^\dagger$, 
  acts on an arbitrary element of $\D[\omega]$ or $\Z[\omega]$ as 
  follows:
  \[
  (a\omega^3 +b\omega^2 +c\omega +d)^\dagger = 
  -c\omega^3 -b\omega^2 -a\omega +d .
  \]
  \item  \emph{$\sqrt{2}$-conjugation}, which we denote 
  $(-)^\bullet$, acts on an arbitrary element of $\D[\omega]$ or $\Z[\omega]$
  as follows:
  \[
  (a\omega^3 +b\omega^2 +c\omega +d)^\bullet = 
  -a\omega^3 +b\omega^2 -c\omega +d
  \]
\end{itemize}
\end{definition}

The action of $(-)^\bullet$ on an element of $\D[\sqrt 2]$ or
$\Z[\sqrt 2]$ is given by $(a+b\sqrt{2})^\bullet = a-b\sqrt{2}$. In
particular, this implies that if $t=a+b\sqrt{2}$ is an element of $\D[\sqrt 2]$
(resp.\ $\Z[\sqrt 2]$), then $t^\bullet t=a^2-2b^2$ is an element of $\D$
(resp.\ $\Z$).

\begin{remark}
If $\alpha$ and $\beta$ are two distinct elements of $\Z[\sqrt 2]$, 
then the following inequality holds:
  \begin{equation}\label{eqn-discrete}
    |\alpha-\beta|\cdot |\alpha\bul-\beta\bul| \geq 1,
  \end{equation}
This follows from the fact that for $t\in\Z[\sqrt 2]$, 
$t^\bullet t$ is an integer and $t\bul t=0$ if and only if $t=0$.
\end{remark}  

\begin{definition}
\label{def-denomexp}
Let $t\in \D[\omega]$ and $k\in\N$. If $\rt{k}t\in\Z[\omega]$, 
then we say that $k$ is a \emph{denominator exponent} of $t$. 
The smallest such $k\geq 0$ is the \emph{least denominator exponent} 
of $t$. For $k\in\N$, the elements of $\D[\sqrt 2]$ (resp.\ $\D[\omega]$) 
having $k$ as a denominator exponent form a ring, denoted 
$\rtt{k}\Z[\sqrt 2]$ (resp.\ $\rtt{k}\Z[\omega]$).
\end{definition}

\begin{definition}
We frequently refer to the following elements of $\Z[\sqrt 2]$ and 
$\Z[\omega]$:
\begin{itemize}
  \item $\lambda=1+\sqrt{2} \in \Z[\sqrt 2]$ and
  \item $\delta = 1+\omega\in\Z[\omega]$.
\end{itemize}
\end{definition}

\begin{remark}
  The element $\lambda$ is invertible in the ring $\Z[\sqrt 2]$, with
  inverse $\lambda\inv = -1+\sqrt{2}=-\lambda\bul$. The element
  $\delta$ of the ring $\Z[\omega]$ satisfies
  $\delta^2=\lambda\omega\sqrt{2}$ and
  $\delta^{\dagger}\delta=\lambda\sqrt{2}$.
\end{remark}

% ----------------------------------------------------------------------
\section{One-dimensional grid problems}
\label{sec-grid-1d}

\begin{definition}
  Let $B$ be a set of real numbers. The {\em (real) grid} for $B$ is
  the set
  \begin{equation}
    \grid(B) = \s{ \alpha\in\Z[\sqrt 2] \mid \alpha\bul \in B}.
  \end{equation}
  When $B$ is clear from the context, we refer to the elements of this
  set as {\em grid points}.
\end{definition}

In the following, we will only be interested in the case where $B$ is
a closed interval $[y_0,y_1]$ with $y_0<y_1$. In this case, the grid
is discrete and infinite. It is discrete because the distance between
grid points is bounded below by {\eqref{eqn-discrete}}. And it is
infinite by the density of $\Z[\sqrt 2]$: there are infinitely many
points $\beta\in B\cap \Z[\sqrt 2]$, and for each of them, $\beta\bul$
is a grid point.

%......................................................................
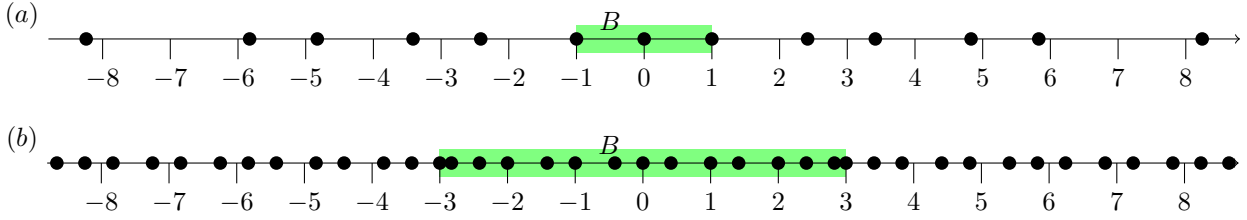
\begin{figure}
  \[ (a)~
  \mp{0.8}{\begin{tikzpicture}[scale=0.9]
\fill[thick,color=green!50] (-1.0,0.2) -- (1.0,0.2) -- (1.0,-0.2) -- (-1.0,-0.2);
\draw[->] (-8.8,0) -- (8.8, 0);
\draw (-8,0) -- (-8,-0.3) node[below] {$-8$};
\draw (-7,0) -- (-7,-0.3) node[below] {$-7$};
\draw (-6,0) -- (-6,-0.3) node[below] {$-6$};
\draw (-5,0) -- (-5,-0.3) node[below] {$-5$};
\draw (-4,0) -- (-4,-0.3) node[below] {$-4$};
\draw (-3,0) -- (-3,-0.3) node[below] {$-3$};
\draw (-2,0) -- (-2,-0.3) node[below] {$-2$};
\draw (-1,0) -- (-1,-0.3) node[below] {$-1$};
\draw (0,0) -- (0,-0.3) node[below] {$0$};
\draw (1,0) -- (1,-0.3) node[below] {$1$};
\draw (2,0) -- (2,-0.3) node[below] {$2$};
\draw (3,0) -- (3,-0.3) node[below] {$3$};
\draw (4,0) -- (4,-0.3) node[below] {$4$};
\draw (5,0) -- (5,-0.3) node[below] {$5$};
\draw (6,0) -- (6,-0.3) node[below] {$6$};
\draw (7,0) -- (7,-0.3) node[below] {$7$};
\draw (8,0) -- (8,-0.3) node[below] {$8$};
\fill (-8.242640687119286,0) circle (.1);
\fill (-5.82842712474619,0) circle (.1);
\fill (-4.82842712474619,0) circle (.1);
\fill (-3.414213562373095,0) circle (.1);
\fill (-2.414213562373095,0) circle (.1);
\fill (-1.0,0) circle (.1);
\fill (0.0,0) circle (.1);
\fill (1.0,0) circle (.1);
\fill (2.414213562373095,0) circle (.1);
\fill (3.414213562373095,0) circle (.1);
\fill (4.82842712474619,0) circle (.1);
\fill (5.82842712474619,0) circle (.1);
\fill (8.242640687119286,0) circle (.1);
\draw (-0.5,0) node[above] {$B$};\end{tikzpicture}}
  \]
  \[ (b)~
  \mp{0.8}{\begin{tikzpicture}[scale=0.9]
\fill[thick,color=green!50] (-3.0,0.2) -- (3.0,0.2) -- (3.0,-0.2) -- (-3.0,-0.2);
\draw[->] (-8.8,0) -- (8.8, 0);
\draw (-8,0) -- (-8,-0.3) node[below] {$-8$};
\draw (-7,0) -- (-7,-0.3) node[below] {$-7$};
\draw (-6,0) -- (-6,-0.3) node[below] {$-6$};
\draw (-5,0) -- (-5,-0.3) node[below] {$-5$};
\draw (-4,0) -- (-4,-0.3) node[below] {$-4$};
\draw (-3,0) -- (-3,-0.3) node[below] {$-3$};
\draw (-2,0) -- (-2,-0.3) node[below] {$-2$};
\draw (-1,0) -- (-1,-0.3) node[below] {$-1$};
\draw (0,0) -- (0,-0.3) node[below] {$0$};
\draw (1,0) -- (1,-0.3) node[below] {$1$};
\draw (2,0) -- (2,-0.3) node[below] {$2$};
\draw (3,0) -- (3,-0.3) node[below] {$3$};
\draw (4,0) -- (4,-0.3) node[below] {$4$};
\draw (5,0) -- (5,-0.3) node[below] {$5$};
\draw (6,0) -- (6,-0.3) node[below] {$6$};
\draw (7,0) -- (7,-0.3) node[below] {$7$};
\draw (8,0) -- (8,-0.3) node[below] {$8$};
\fill (-8.65685424949238,0) circle (.1);
\fill (-8.242640687119286,0) circle (.1);
\fill (-7.82842712474619,0) circle (.1);
\fill (-7.242640687119286,0) circle (.1);
\fill (-6.82842712474619,0) circle (.1);
\fill (-6.242640687119286,0) circle (.1);
\fill (-5.82842712474619,0) circle (.1);
\fill (-5.414213562373095,0) circle (.1);
\fill (-4.82842712474619,0) circle (.1);
\fill (-4.414213562373095,0) circle (.1);
\fill (-3.8284271247461903,0) circle (.1);
\fill (-3.414213562373095,0) circle (.1);
\fill (-3.0,0) circle (.1);
\fill (-2.8284271247461903,0) circle (.1);
\fill (-2.414213562373095,0) circle (.1);
\fill (-2.0,0) circle (.1);
\fill (-1.4142135623730951,0) circle (.1);
\fill (-1.0,0) circle (.1);
\fill (-0.41421356237309515,0) circle (.1);
\fill (0.0,0) circle (.1);
\fill (0.41421356237309515,0) circle (.1);
\fill (1.0,0) circle (.1);
\fill (1.4142135623730951,0) circle (.1);
\fill (2.0,0) circle (.1);
\fill (2.414213562373095,0) circle (.1);
\fill (2.8284271247461903,0) circle (.1);
\fill (3.0,0) circle (.1);
\fill (3.414213562373095,0) circle (.1);
\fill (3.8284271247461903,0) circle (.1);
\fill (4.414213562373095,0) circle (.1);
\fill (4.82842712474619,0) circle (.1);
\fill (5.414213562373095,0) circle (.1);
\fill (5.82842712474619,0) circle (.1);
\fill (6.242640687119286,0) circle (.1);
\fill (6.82842712474619,0) circle (.1);
\fill (7.242640687119286,0) circle (.1);
\fill (7.82842712474619,0) circle (.1);
\fill (8.242640687119286,0) circle (.1);
\fill (8.65685424949238,0) circle (.1);
\draw (-0.5,0) node[above] {$B$};\end{tikzpicture}}
  \]
  \caption{The real grid for two different intervals $B$. In both
    cases, the interval $B$ is shown in green, and grid points are
    shown as black dots.}
  \label{fig-grid-1d}
  \rule{\textwidth}{0.1mm}
\end{figure}
% ......................................................................

\begin{example}
  Figure~\ref{fig-grid-1d} illustrates the grids for the intervals
  $[-1,1]$ and $[-3,3]$, respectively. For example, the first few
  non-negative points in $\grid([-1,1])$ are $0$, $1$, $1+\sqrt2$,
  $2+\sqrt2$, $2+2\sqrt2$, $3+2\sqrt 2$, and $4+3\sqrt 2$. As one
  would expect, the grid for $[-3,3]$ is about three times denser than
  that for $[-1,1]$.  We also note that $B\seq B'$ implies
  $\grid(B)\seq\grid(B')$.
\end{example}

\begin{definition}
  Let $A$ and $B$ be sets of real numbers.  The {\em one-dimensional
    grid problem} for $A$ and $B$ is the following:
  \begin{equation}\label{eqn-grid-constraint}
    \mbox{{\bf One-dimensional grid problem:} Find $\alpha\in\Z[\sqrt
      2]$ satisfying $\alpha \in A$ and $\alpha\bul\in B$.}
  \end{equation}
\end{definition}

Note that {\eqref{eqn-grid-constraint}} can be equivalently written as
$\alpha \in A \cap \grid(B)$. In other words, the grid problem is to
find points in some given set $A$ that belong to the grid for $B$. We
also refer to the conditions $\alpha\in A$ and $\alpha\bul\in B$ as
{\em grid constraints}.

In the case where $A$ and $B$ are finite intervals, the grid problem
is guaranteed to have a finite number of solutions.  We recall the
following facts from {\cite{Selinger-newsynth}}:

\begin{lemma}\label{lem-grid-bounds}
  Let $A=[x_0,x_1]$ and $B=[y_0,y_1]$ be closed real intervals, such
  that $x_1-x_0=\delta$ and $y_1-y_0=\Delta$. If $\delta\Delta < 1$,
  then the grid problem {\eqref{eqn-grid-constraint}} has at most one
  solution. If $\delta\Delta \geq (1+\sqrt2)^2$, then the grid problem
  {\eqref{eqn-grid-constraint}} has at least one solution.
\end{lemma}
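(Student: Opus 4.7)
The lemma splits naturally into two claims; I will prove uniqueness first. Suppose $\delta\Delta < 1$ and that $\alpha \neq \beta$ are both solutions. Then $\alpha - \beta$ is a nonzero element of $\Z[\sqrt 2]$ with $|\alpha-\beta| \leq \delta$ and $|\alpha\bul - \beta\bul| \leq \Delta$, so the discreteness inequality~(\ref{eqn-discrete}) yields $\delta\Delta \geq 1$, a contradiction.

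For the existence half I assume $\delta\Delta \geq \lambda^2$ with $\lambda = 1+\sqrt 2$. My plan has two stages. In the first stage I exploit the fact that $\lambda$ is a unit of $\Z[\sqrt 2]$ with $\lambda\bul = -\lambda\inv$. Multiplication by $\lambda^k$ is therefore an automorphism of $\Z[\sqrt 2]$ which identifies the grid problem for $(A,B)$ with the grid problem for $(\lambda^k A,\,(-1)^k\lambda^{-k} B)$, whose new interval lengths $(\lambda^k\delta,\,\lambda^{-k}\Delta)$ still multiply to $\delta\Delta$. Because $\log_\lambda(\delta\Delta) \geq 2$, the interval $[-\log_\lambda\delta,\,\log_\lambda\Delta]$ has length at least $2$ and therefore contains an integer $k$; after this change of variables I may assume $\delta, \Delta \geq 1$, and then AM-GM gives $\delta + \Delta \geq 2\sqrt{\delta\Delta} \geq 2\lambda$.

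In the second stage I construct a solution $\alpha = a + b\sqrt 2$ directly. For each integer $b$, setting $\epsilon = b\sqrt 2 - (x_0-y_0)/2$, the joint grid constraint is equivalent to $a - (x_0+y_0)/2 \in [\,|\epsilon|,\,\min(\delta-\epsilon,\Delta+\epsilon)\,]$. A short piecewise computation (using $\delta, \Delta \geq 1$) shows that this interval has length $\geq 1$ exactly when $\epsilon$ lies in the sub-interval $[(1-\Delta)/2,\,(\delta-1)/2]$, whose length is $(\delta+\Delta)/2 - 1 \geq \sqrt 2$. As $b$ ranges over $\Z$, the values of $\epsilon$ form a translate of the discrete set $\sqrt 2\,\Z$, and any real interval of length at least $\sqrt 2$ meets $\sqrt 2\,\Z$. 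Hence some integer $b$ produces an $a$-interval of length $\geq 1$, which must contain an integer $a$; the resulting $\alpha$ solves the rescaled problem, and the inverse automorphism transports it back to a solution of the original.

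The main obstacle I anticipate lies in the first-stage rescaling: without it, $\delta$ could be arbitrarily small and the $a$-interval could never reach length $1$, so the second stage would collapse. A related delicate point is that the length $\sqrt 2$ of the good sub-interval for $\epsilon$ exactly matches the spacing of $\sqrt 2\,\Z$, so the threshold $(1+\sqrt 2)^2$ in the hypothesis is sharp; the equality case $\delta = \Delta = \lambda$ saturates every inequality in the construction and leaves essentially no slack.
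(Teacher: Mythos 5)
Your proof is correct. The paper itself does not prove this lemma but defers to Lemmas~16 and 17 of the cited reference, whose argument follows the same strategy you use: uniqueness from the fact that $(\alpha-\beta)\bul(\alpha-\beta)$ is a nonzero integer, and existence by first rescaling by a power of the unit $\lambda$ so that both intervals have length at least $1$, then choosing $b$ and $a$ in turn by a pigeonhole argument. Your write-up is thus a valid, self-contained substitute for the citation, and your closing observation that $\delta=\Delta=\lambda$ saturates every inequality correctly explains why the threshold $(1+\sqrt2)^2$ cannot be improved.
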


\begin{proof}
  Lemmas~16 and 17 of {\cite{Selinger-newsynth}}.
\end{proof}

\begin{proposition}\label{prop-algorithm-1d}
  There is an algorithm for enumerating all solutions of the
  one-dimensional grid problem for closed intervals $A=[x_0,x_1]$ and
  $B=[y_0,y_1]$. Moreover, the algorithm is efficient in the sense
  that it only requires a constant number of arithmetic operations per
  solution produced. 
\end{proposition}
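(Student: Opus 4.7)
My plan is to parametrize $\alpha \in \Z[\sqrt{2}]$ as $\alpha = a + b\sqrt{2}$ with $a,b\in\Z$. The two grid constraints become $a + b\sqrt{2} \in [x_0,x_1]$ and $a - b\sqrt{2} \in [y_0,y_1]$; adding and subtracting these inequalities confines $b$ to the interval $I_b = [(x_0-y_1)/(2\sqrt{2}),\,(x_1-y_0)/(2\sqrt{2})]$, and for each fixed integer $b$ confines $a$ to the intersection $I_a(b)$ of $[x_0-b\sqrt{2},\,x_1-b\sqrt{2}]$ with $[y_0+b\sqrt{2},\,y_1+b\sqrt{2}]$. Both endpoints of $I_a(b)$ are computable in $O(1)$ arithmetic operations. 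The naive algorithm then loops over integer $b\in I_b$ and, for each $b$, over integer $a\in I_a(b)$; every step of either loop costs $O(1)$ and each solution of the grid problem is produced exactly once.

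The difficulty with this naive version is that, when $\delta := x_1-x_0$ and $\Delta := y_1-y_0$ are very unbalanced, the outer-loop length $|I_b| = (\delta+\Delta)/(2\sqrt{2})$ can greatly exceed the final number $N$ of solutions. To avoid this I would preprocess by multiplying $\alpha$ by a power of the unit $\lambda = 1+\sqrt{2}$. Since multiplication by $\lambda^n$ is a bijection of $\Z[\sqrt{2}]$ and $\lambda\bul = -\lambda^{-1}$, the grid problem for $(A,B)$ is equivalent to the grid problem for $(\lambda^n A,\, (-\lambda)^{-n} B)$, whose interval lengths are $\lambda^n \delta$ and $\lambda^{-n}\Delta$ with product unchanged. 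Taking $n$ to be the nearest integer to $\tfrac{1}{2}\log_\lambda(\Delta/\delta)$ balances both new lengths at $\Theta(\sqrt{\delta\Delta})$. This balancing is a one-time preprocessing step whose cost does not scale with $N$; after it, one runs the naive enumeration and multiplies each output by $\lambda^{-n}$ to recover solutions of the original problem.

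The step I expect to require the most care is the efficiency analysis on the balanced problem. The goal is to show that the outer-loop cost $|I_b'| = \Theta(\sqrt{\delta\Delta})$ is always dominated by the output size. When $\delta\Delta < 1$, Lemma~\ref{lem-grid-bounds} gives $N \leq 1$ and also $|I_b'| = O(1)$, so the total cost is trivially $O(1)$. When $\delta\Delta \geq 1$, I would argue, by applying Lemma~\ref{lem-grid-bounds} to a central sub-rectangle of the balanced problem (whose area is at least $(1+\sqrt{2})^2$ once $\sqrt{\delta\Delta}$ exceeds a fixed constant), that every $b$ lying outside an $O(1)$-sized neighbourhood of the endpoints of $I_b'$ contributes at least one solution. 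This forces $N = \Omega(|I_b'|)$, so the algorithm indeed spends amortized $O(1)$ arithmetic operations per solution produced.
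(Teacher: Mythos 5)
Your proposal is correct and follows essentially the same route as the paper's proof: the same $(a,b)$ parametrization with $b=(\alpha-\alpha^{\bullet})/(2\sqrt2)$, the same rescaling of the problem by a power of the unit $\lambda$ (using $\lambda^{\bullet}=-\lambda^{-1}$), and the same appeal to Lemma~\ref{lem-grid-bounds} to lower-bound the number of solutions and amortize the loop over $b$. The only difference is cosmetic: the paper normalizes so that the $A$-interval has length in $[\lambda^{-1},1)$, guaranteeing at most one $a$ per $b$ and an outer loop of length $O(\delta\Delta)$, whereas you balance both lengths to $\Theta(\sqrt{\delta\Delta})$; either normalization makes the outer-loop cost $O(1)$ plus $O(1)$ per solution.
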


\begin{proof}
  It was already noted in {\cite[Lemma~17]{Selinger-newsynth}} that there
  is an efficient algorithm for computing one solution. To see that we
  can efficiently enumerate all solutions, let $\delta=x_1-x_0$ and
  $\Delta=y_1-y_0$ as before.  Recall that $\lambda=1+\sqrt2$ and that 
  $\lambda\inv=-\lambda\bul$. The grid problem for the sets $A$ and
  $B$ is equivalent to the grid problem for $\lambda\inv A$ and
  $-\lambda B$, because $\alpha\in A$ and $\alpha\bul\in B$ hold if
  and only if $\lambda\inv\alpha\in\lambda\inv A$ and
  $(\lambda\inv\alpha)\bul\in-\lambda B$. Using such rescaling, we may
  without loss of generality assume that $\lambda\inv\leq \delta<1$.

  Now consider any solution $\alpha = a+b\sqrt 2\in\Z[\sqrt2]$. From
  $\alpha\in[x_0,x_1]$, we know that $x_0-b\sqrt 2 \leq a \leq
  x_1-b\sqrt 2$. But since $x_1-x_0<1$, it follows that for any
  $b\in\Z$, there is at most one $a\in\Z$ yielding a
  solution. Moreover, we note that $b=(\alpha-\alpha\bul)/\rt{3}$, so
  that any solution satisfies $(x_0-y_1)/\rt{3} \leq b \leq
  (x_1-y_0)/\rt{3}$. The algorithm then proceeds by enumerating all
  the integers $b$ in the interval $[(x_0-y_1)/\rt{3},
  (x_1-y_0)/\rt{3}]$. For each such $b$, find the unique integer $a$
  (if any) in the interval $[x_0-b\sqrt 2, x_1-b\sqrt 2]$. Finally,
  check if $a+b\sqrt2$ is a solution. The runtime is
  governed by the number of $b\in\Z$ that need to be checked, of which
  there are at most $O(y_1-y_0) = O(\delta\Delta)$. As a consequence
  of Lemma~\ref{lem-grid-bounds}, the total number of solutions is
  at least $\Omega(\delta\Delta)$, and so the algorithm is efficient.
\end{proof}

\begin{remark}
  For the purposes of this paper, by an {\em arithmetic operation} we
  mean addition, subtraction, multiplication, division,
  exponentiation, and logarithm.
\end{remark}

\begin{remark}\label{rem-interval-specify}
  Since the inputs to the algorithm are real intervals, if we were to
  give a rigorous complexity-theoretic account, we should also clarify
  how these intervals are specified (for example, with rational
  endpoints, endpoints as computable real numbers, etc.). For our
  purposes, the manner in which an interval $A=[x_0,x_1]$ is specified
  as an input to the algorithm does not matter very much; it would be
  sufficient, for example, to assume that we are given rational bounds
  $a,b$ with $a< x_0 < x_1 < b$, such that $b-a$ exceeds $x_1-x_0$ by
  at most a fixed constant factor, as well as a procedure for deciding
  whether any given point of $\D[\sqrt 2]$ is in $A$ or not.
\end{remark}

% ----------------------------------------------------------------------
\section{Two-dimensional grid problems}
\label{sec-grid-2d}

Recall that $\Z[\omega]$ is a subset of the complex numbers. In what
follows, it is often convenient to identify the complex numbers with
the Euclidean plane $\R^2$, so we will often interchangeably regard
$\Z[\omega]$ as a subset of $\C$ and of $\R^2$.

\begin{definition}
  Let $B$ be a subset of $\R^2$. The {\em (complex) grid} for $B$ is
  the set
  \begin{equation}
    \Grid(B) = \s{ u\in\Z[\omega] \mid u\bul \in B}.
  \end{equation}
\end{definition}

We will only be interested in the case where $B$ is a bounded convex
set with non-empty interior. In this case, the grid is discrete and
infinite, just as in the one-dimensional case.

% ......................................................................
\begin{figure}
  \[ (a)~
  \mp{0.95}{\begin{tikzpicture}[scale=0.89]
\fill[color=green!50] (1.001,1.001) -- (-1.001,1.001) -- (-1.001,-1.001) -- (1.001,-1.001) -- cycle;
\draw[->] (-4.5,0) -- (4.5, 0);
\draw (-4,0) -- (-4,-0.1) node[below] {\small $-4$};
\draw (-3,0) -- (-3,-0.1) node[below] {\small $-3$};
\draw (-2,0) -- (-2,-0.1) node[below] {\small $-2$};
\draw (-1,0) -- (-1,-0.1) node[below] {\small $-1$};
\draw (0,0) -- (0,-0.1) node[below] {\small $0$};
\draw (1,0) -- (1,-0.1) node[below] {\small $1$};
\draw (2,0) -- (2,-0.1) node[below] {\small $2$};
\draw (3,0) -- (3,-0.1) node[below] {\small $3$};
\draw (4,0) -- (4,-0.1) node[below] {\small $4$};
\draw[->] (0,-4.5) -- (0,4.5);
\draw (0,-4) -- (-0.1,-4) node[left] {\small $-4$};
\draw (0,-3) -- (-0.1,-3) node[left] {\small $-3$};
\draw (0,-2) -- (-0.1,-2) node[left] {\small $-2$};
\draw (0,-1) -- (-0.1,-1) node[left] {\small $-1$};
\draw (0,0) -- (-0.1,0) node[left] {\small $0$};
\draw (0,1) -- (-0.1,1) node[left] {\small $1$};
\draw (0,2) -- (-0.1,2) node[left] {\small $2$};
\draw (0,3) -- (-0.1,3) node[left] {\small $3$};
\draw (0,4) -- (-0.1,4) node[left] {\small $4$};
\fill (-3.414213562373095,-3.414213562373096) circle (.1);
\fill (-3.414213562373095,-2.4142135623730954) circle (.1);
\fill (-3.4142135623730954,-1.0) circle (.1);
\fill (-3.4142135623730954,0.00000000000000011102230246251565) circle (.1);
\fill (-3.4142135623730954,1.0000000000000004) circle (.1);
\fill (-3.4142135623730954,2.414213562373096) circle (.1);
\fill (-3.4142135623730954,3.414213562373096) circle (.1);
\fill (-2.414213562373095,-3.414213562373096) circle (.1);
\fill (-2.414213562373095,-2.4142135623730954) circle (.1);
\fill (-2.4142135623730954,-1.0) circle (.1);
\fill (-2.4142135623730954,0.00000000000000011102230246251565) circle (.1);
\fill (-2.4142135623730954,1.0000000000000004) circle (.1);
\fill (-2.4142135623730954,2.414213562373096) circle (.1);
\fill (-2.4142135623730954,3.414213562373096) circle (.1);
\fill (-0.9999999999999999,-3.4142135623730954) circle (.1);
\fill (-0.9999999999999999,-2.4142135623730954) circle (.1);
\fill (-1.0,-1.0000000000000002) circle (.1);
\fill (-1.0,0.0) circle (.1);
\fill (-1.0,1.0000000000000002) circle (.1);
\fill (-1.0,2.4142135623730954) circle (.1);
\fill (-1.0,3.4142135623730954) circle (.1);
\fill (0.00000000000000011102230246251565,-3.4142135623730954) circle (.1);
\fill (0.00000000000000011102230246251565,-2.4142135623730954) circle (.1);
\fill (0.0,-1.0000000000000002) circle (.1);
\fill (0.0,0.0) circle (.1);
\fill (0.0,1.0000000000000002) circle (.1);
\fill (-0.00000000000000011102230246251565,2.4142135623730954) circle (.1);
\fill (-0.00000000000000011102230246251565,3.4142135623730954) circle (.1);
\fill (1.0,-3.4142135623730954) circle (.1);
\fill (1.0,-2.4142135623730954) circle (.1);
\fill (1.0,-1.0000000000000002) circle (.1);
\fill (1.0,0.0) circle (.1);
\fill (1.0,1.0000000000000002) circle (.1);
\fill (0.9999999999999999,2.4142135623730954) circle (.1);
\fill (0.9999999999999999,3.4142135623730954) circle (.1);
\fill (2.4142135623730954,-3.414213562373096) circle (.1);
\fill (2.4142135623730954,-2.414213562373096) circle (.1);
\fill (2.4142135623730954,-1.0000000000000004) circle (.1);
\fill (2.4142135623730954,-0.00000000000000011102230246251565) circle (.1);
\fill (2.4142135623730954,1.0) circle (.1);
\fill (2.414213562373095,2.4142135623730954) circle (.1);
\fill (2.414213562373095,3.414213562373096) circle (.1);
\fill (3.4142135623730954,-3.414213562373096) circle (.1);
\fill (3.4142135623730954,-2.414213562373096) circle (.1);
\fill (3.4142135623730954,-1.0000000000000004) circle (.1);
\fill (3.4142135623730954,-0.00000000000000011102230246251565) circle (.1);
\fill (3.4142135623730954,1.0) circle (.1);
\fill (3.414213562373095,2.4142135623730954) circle (.1);
\fill (3.414213562373095,3.414213562373096) circle (.1);
\fill (-4.121320343559643,-4.121320343559644) circle (.1);
\fill (-4.121320343559643,-1.7071067811865477) circle (.1);
\fill (-4.121320343559643,-0.7071067811865475) circle (.1);
\fill (-4.121320343559643,0.7071067811865478) circle (.1);
\fill (-4.121320343559643,1.7071067811865483) circle (.1);
\fill (-4.121320343559643,4.121320343559644) circle (.1);
\fill (-1.7071067811865475,-4.121320343559643) circle (.1);
\fill (-1.7071067811865475,-1.707106781186548) circle (.1);
\fill (-1.7071067811865475,-0.7071067811865476) circle (.1);
\fill (-1.7071067811865477,0.7071067811865477) circle (.1);
\fill (-1.7071067811865477,1.707106781186548) circle (.1);
\fill (-1.707106781186548,4.121320343559644) circle (.1);
\fill (-0.7071067811865475,-4.121320343559643) circle (.1);
\fill (-0.7071067811865476,-1.707106781186548) circle (.1);
\fill (-0.7071067811865476,-0.7071067811865476) circle (.1);
\fill (-0.7071067811865477,0.7071067811865477) circle (.1);
\fill (-0.7071067811865477,1.707106781186548) circle (.1);
\fill (-0.7071067811865478,4.121320343559644) circle (.1);
\fill (0.7071067811865478,-4.121320343559644) circle (.1);
\fill (0.7071067811865477,-1.707106781186548) circle (.1);
\fill (0.7071067811865477,-0.7071067811865477) circle (.1);
\fill (0.7071067811865476,0.7071067811865476) circle (.1);
\fill (0.7071067811865476,1.707106781186548) circle (.1);
\fill (0.7071067811865475,4.121320343559643) circle (.1);
\fill (1.707106781186548,-4.121320343559644) circle (.1);
\fill (1.7071067811865477,-1.707106781186548) circle (.1);
\fill (1.7071067811865477,-0.7071067811865477) circle (.1);
\fill (1.7071067811865475,0.7071067811865476) circle (.1);
\fill (1.7071067811865475,1.707106781186548) circle (.1);
\fill (1.7071067811865475,4.121320343559643) circle (.1);
\fill (4.121320343559643,-4.121320343559644) circle (.1);
\fill (4.121320343559643,-1.7071067811865483) circle (.1);
\fill (4.121320343559643,-0.7071067811865478) circle (.1);
\fill (4.121320343559643,0.7071067811865475) circle (.1);
\fill (4.121320343559643,1.7071067811865477) circle (.1);
\fill (4.121320343559643,4.121320343559644) circle (.1);
\draw (0.4,0.4) node {B};
\end{tikzpicture}}
  \quad (b)~
  \mp{0.95}{\begin{tikzpicture}[scale=0.89]
\fill[color=green!50] (0.0,0.0) circle (1.415213562373095);
\draw[->] (-4.5,0) -- (4.5, 0);
\draw (-4,0) -- (-4,-0.1) node[below] {\small $-4$};
\draw (-3,0) -- (-3,-0.1) node[below] {\small $-3$};
\draw (-2,0) -- (-2,-0.1) node[below] {\small $-2$};
\draw (-1,0) -- (-1,-0.1) node[below] {\small $-1$};
\draw (0,0) -- (0,-0.1) node[below] {\small $0$};
\draw (1,0) -- (1,-0.1) node[below] {\small $1$};
\draw (2,0) -- (2,-0.1) node[below] {\small $2$};
\draw (3,0) -- (3,-0.1) node[below] {\small $3$};
\draw (4,0) -- (4,-0.1) node[below] {\small $4$};
\draw[->] (0,-4.5) -- (0,4.5);
\draw (0,-4) -- (-0.1,-4) node[left] {\small $-4$};
\draw (0,-3) -- (-0.1,-3) node[left] {\small $-3$};
\draw (0,-2) -- (-0.1,-2) node[left] {\small $-2$};
\draw (0,-1) -- (-0.1,-1) node[left] {\small $-1$};
\draw (0,0) -- (-0.1,0) node[left] {\small $0$};
\draw (0,1) -- (-0.1,1) node[left] {\small $1$};
\draw (0,2) -- (-0.1,2) node[left] {\small $2$};
\draw (0,3) -- (-0.1,3) node[left] {\small $3$};
\draw (0,4) -- (-0.1,4) node[left] {\small $4$};
\fill (-3.414213562373095,-3.414213562373096) circle (.1);
\fill (-3.414213562373095,-2.4142135623730954) circle (.1);
\fill (-3.4142135623730954,-1.0) circle (.1);
\fill (-3.4142135623730954,0.00000000000000011102230246251565) circle (.1);
\fill (-3.4142135623730954,1.0000000000000004) circle (.1);
\fill (-3.4142135623730954,2.414213562373096) circle (.1);
\fill (-3.4142135623730954,3.414213562373096) circle (.1);
\fill (-2.414213562373095,-3.414213562373096) circle (.1);
\fill (-2.414213562373095,-2.4142135623730954) circle (.1);
\fill (-2.4142135623730954,-1.0) circle (.1);
\fill (-2.4142135623730954,0.00000000000000011102230246251565) circle (.1);
\fill (-2.4142135623730954,1.0000000000000004) circle (.1);
\fill (-2.4142135623730954,2.414213562373096) circle (.1);
\fill (-2.4142135623730954,3.414213562373096) circle (.1);
\fill (-1.4142135623730954,0.00000000000000011102230246251565) circle (.1);
\fill (-0.9999999999999999,-3.4142135623730954) circle (.1);
\fill (-0.9999999999999999,-2.4142135623730954) circle (.1);
\fill (-1.0,-1.0000000000000002) circle (.1);
\fill (-1.0,0.0) circle (.1);
\fill (-1.0,1.0000000000000002) circle (.1);
\fill (-1.0,2.4142135623730954) circle (.1);
\fill (-1.0,3.4142135623730954) circle (.1);
\fill (0.00000000000000011102230246251565,-3.4142135623730954) circle (.1);
\fill (0.00000000000000011102230246251565,-2.4142135623730954) circle (.1);
\fill (0.00000000000000011102230246251565,-1.4142135623730954) circle (.1);
\fill (0.0,-1.0000000000000002) circle (.1);
\fill (0.0,0.0) circle (.1);
\fill (0.0,1.0000000000000002) circle (.1);
\fill (-0.00000000000000011102230246251565,1.4142135623730954) circle (.1);
\fill (-0.00000000000000011102230246251565,2.4142135623730954) circle (.1);
\fill (-0.00000000000000011102230246251565,3.4142135623730954) circle (.1);
\fill (1.0,-3.4142135623730954) circle (.1);
\fill (1.0,-2.4142135623730954) circle (.1);
\fill (1.0,-1.0000000000000002) circle (.1);
\fill (1.0,0.0) circle (.1);
\fill (1.0,1.0000000000000002) circle (.1);
\fill (0.9999999999999999,2.4142135623730954) circle (.1);
\fill (0.9999999999999999,3.4142135623730954) circle (.1);
\fill (1.4142135623730954,-0.00000000000000011102230246251565) circle (.1);
\fill (2.4142135623730954,-3.414213562373096) circle (.1);
\fill (2.4142135623730954,-2.414213562373096) circle (.1);
\fill (2.4142135623730954,-1.0000000000000004) circle (.1);
\fill (2.4142135623730954,-0.00000000000000011102230246251565) circle (.1);
\fill (2.4142135623730954,1.0) circle (.1);
\fill (2.414213562373095,2.4142135623730954) circle (.1);
\fill (2.414213562373095,3.414213562373096) circle (.1);
\fill (3.4142135623730954,-3.414213562373096) circle (.1);
\fill (3.4142135623730954,-2.414213562373096) circle (.1);
\fill (3.4142135623730954,-1.0000000000000004) circle (.1);
\fill (3.4142135623730954,-0.00000000000000011102230246251565) circle (.1);
\fill (3.4142135623730954,1.0) circle (.1);
\fill (3.414213562373095,2.4142135623730954) circle (.1);
\fill (3.414213562373095,3.414213562373096) circle (.1);
\fill (-4.121320343559643,-4.121320343559644) circle (.1);
\fill (-4.121320343559643,-3.121320343559643) circle (.1);
\fill (-4.121320343559643,-2.707106781186548) circle (.1);
\fill (-4.121320343559643,-1.7071067811865477) circle (.1);
\fill (-4.121320343559643,-0.7071067811865475) circle (.1);
\fill (-4.121320343559643,0.7071067811865478) circle (.1);
\fill (-4.121320343559643,1.7071067811865483) circle (.1);
\fill (-4.121320343559643,2.7071067811865483) circle (.1);
\fill (-4.121320343559643,3.121320343559643) circle (.1);
\fill (-4.121320343559643,4.121320343559644) circle (.1);
\fill (-3.121320343559643,-4.121320343559644) circle (.1);
\fill (-3.121320343559643,-1.7071067811865477) circle (.1);
\fill (-3.121320343559643,-0.7071067811865475) circle (.1);
\fill (-3.121320343559643,0.7071067811865478) circle (.1);
\fill (-3.121320343559643,1.7071067811865483) circle (.1);
\fill (-3.121320343559643,4.121320343559644) circle (.1);
\fill (-2.7071067811865475,-4.121320343559643) circle (.1);
\fill (-2.7071067811865475,-1.707106781186548) circle (.1);
\fill (-2.707106781186548,1.707106781186548) circle (.1);
\fill (-2.707106781186548,4.121320343559644) circle (.1);
\fill (-1.7071067811865475,-4.121320343559643) circle (.1);
\fill (-1.7071067811865475,-3.121320343559643) circle (.1);
\fill (-1.7071067811865475,-2.707106781186548) circle (.1);
\fill (-1.7071067811865475,-1.707106781186548) circle (.1);
\fill (-1.7071067811865475,-0.7071067811865476) circle (.1);
\fill (-1.7071067811865477,0.7071067811865477) circle (.1);
\fill (-1.7071067811865477,1.707106781186548) circle (.1);
\fill (-1.7071067811865477,2.707106781186548) circle (.1);
\fill (-1.707106781186548,3.1213203435596433) circle (.1);
\fill (-1.707106781186548,4.121320343559644) circle (.1);
\fill (-0.7071067811865475,-4.121320343559643) circle (.1);
\fill (-0.7071067811865475,-3.121320343559643) circle (.1);
\fill (-0.7071067811865476,-1.707106781186548) circle (.1);
\fill (-0.7071067811865476,-0.7071067811865476) circle (.1);
\fill (-0.7071067811865477,0.7071067811865477) circle (.1);
\fill (-0.7071067811865477,1.707106781186548) circle (.1);
\fill (-0.7071067811865478,3.1213203435596433) circle (.1);
\fill (-0.7071067811865478,4.121320343559644) circle (.1);
\fill (0.7071067811865478,-4.121320343559644) circle (.1);
\fill (0.7071067811865478,-3.1213203435596433) circle (.1);
\fill (0.7071067811865477,-1.707106781186548) circle (.1);
\fill (0.7071067811865477,-0.7071067811865477) circle (.1);
\fill (0.7071067811865476,0.7071067811865476) circle (.1);
\fill (0.7071067811865476,1.707106781186548) circle (.1);
\fill (0.7071067811865475,3.121320343559643) circle (.1);
\fill (0.7071067811865475,4.121320343559643) circle (.1);
\fill (1.707106781186548,-4.121320343559644) circle (.1);
\fill (1.707106781186548,-3.1213203435596433) circle (.1);
\fill (1.7071067811865477,-2.707106781186548) circle (.1);
\fill (1.7071067811865477,-1.707106781186548) circle (.1);
\fill (1.7071067811865477,-0.7071067811865477) circle (.1);
\fill (1.7071067811865475,0.7071067811865476) circle (.1);
\fill (1.7071067811865475,1.707106781186548) circle (.1);
\fill (1.7071067811865475,2.707106781186548) circle (.1);
\fill (1.7071067811865475,3.121320343559643) circle (.1);
\fill (1.7071067811865475,4.121320343559643) circle (.1);
\fill (2.707106781186548,-4.121320343559644) circle (.1);
\fill (2.707106781186548,-1.707106781186548) circle (.1);
\fill (2.7071067811865475,1.707106781186548) circle (.1);
\fill (2.7071067811865475,4.121320343559643) circle (.1);
\fill (3.121320343559643,-4.121320343559644) circle (.1);
\fill (3.121320343559643,-1.7071067811865483) circle (.1);
\fill (3.121320343559643,-0.7071067811865478) circle (.1);
\fill (3.121320343559643,0.7071067811865475) circle (.1);
\fill (3.121320343559643,1.7071067811865477) circle (.1);
\fill (3.121320343559643,4.121320343559644) circle (.1);
\fill (4.121320343559643,-4.121320343559644) circle (.1);
\fill (4.121320343559643,-3.121320343559643) circle (.1);
\fill (4.121320343559643,-2.7071067811865483) circle (.1);
\fill (4.121320343559643,-1.7071067811865483) circle (.1);
\fill (4.121320343559643,-0.7071067811865478) circle (.1);
\fill (4.121320343559643,0.7071067811865475) circle (.1);
\fill (4.121320343559643,1.7071067811865477) circle (.1);
\fill (4.121320343559643,2.707106781186548) circle (.1);
\fill (4.121320343559643,3.121320343559643) circle (.1);
\fill (4.121320343559643,4.121320343559644) circle (.1);
\draw (0.4,0.4) node {B};
\end{tikzpicture}}
  \]
  \[ (c)~
  \mp{0.95}{\begin{tikzpicture}[scale=0.89]
% Debug: 8.381527307120106 8.440663653560053 5.9136346439947474e-2 (-4.0,-5.440763653560053) (5.440763653560053,-4.0) (-0.20388260857871685,-0.2773192715870234) (3.3131500365557374,-2.435797801573568)
\fill[color=green!50] (-0.20388260857871685,-0.2773192715870234)
.. controls (1.6259196305123527,-1.622573251067478) and (3.200548781081991,-2.588957006105265) .. (3.3131500365557374,-2.435797801573568)
.. controls (3.4257512920294837,-2.282638597041871) and (2.0336848476697864,-1.0679347078934311) .. (0.20388260857871685,0.2773192715870234)
.. controls (-1.6259196305123527,1.622573251067478) and (-3.200548781081991,2.588957006105265) .. (-3.3131500365557374,2.435797801573568)
.. controls (-3.4257512920294837,2.282638597041871) and (-2.0336848476697864,1.0679347078934311) .. (-0.20388260857871685,-0.2773192715870234)
-- cycle;
\draw[->] (-4.5,0) -- (4.5, 0);
\draw (-4,0) -- (-4,-0.1) node[below] {\small $-4$};
\draw (-3,0) -- (-3,-0.1) node[below] {\small $-3$};
\draw (-2,0) -- (-2,-0.1) node[below] {\small $-2$};
\draw (-1,0) -- (-1,-0.1) node[below] {\small $-1$};
\draw (0,0) -- (0,-0.1) node[below] {\small $0$};
\draw (1,0) -- (1,-0.1) node[below] {\small $1$};
\draw (2,0) -- (2,-0.1) node[below] {\small $2$};
\draw (3,0) -- (3,-0.1) node[below] {\small $3$};
\draw (4,0) -- (4,-0.1) node[below] {\small $4$};
\draw[->] (0,-4.5) -- (0,4.5);
\draw (0,-4) -- (-0.1,-4) node[left] {\small $-4$};
\draw (0,-3) -- (-0.1,-3) node[left] {\small $-3$};
\draw (0,-2) -- (-0.1,-2) node[left] {\small $-2$};
\draw (0,-1) -- (-0.1,-1) node[left] {\small $-1$};
\draw (0,0) -- (-0.1,0) node[left] {\small $0$};
\draw (0,1) -- (-0.1,1) node[left] {\small $1$};
\draw (0,2) -- (-0.1,2) node[left] {\small $2$};
\draw (0,3) -- (-0.1,3) node[left] {\small $3$};
\draw (0,4) -- (-0.1,4) node[left] {\small $4$};
\fill (-4.414213562373095,-1.4142135623730951) circle (.1);
\fill (-4.414213562373096,1.0000000000000004) circle (.1);
\fill (-3.8284271247461907,-4.414213562373096) circle (.1);
\fill (-3.8284271247461907,-1.0) circle (.1);
\fill (-3.8284271247461903,1.4142135623730954) circle (.1);
\fill (-3.414213562373095,-2.4142135623730954) circle (.1);
\fill (-3.4142135623730954,3.414213562373096) circle (.1);
\fill (-3.0,2.0000000000000004) circle (.1);
\fill (-2.8284271247461907,-2.0) circle (.1);
\fill (-2.414213562373095,-3.414213562373096) circle (.1);
\fill (-2.4142135623730954,0.00000000000000011102230246251565) circle (.1);
\fill (-2.4142135623730954,2.414213562373096) circle (.1);
\fill (-2.0,-1.4142135623730954) circle (.1);
\fill (-2.0,4.414213562373097) circle (.1);
\fill (-1.4142135623730954,-1.0) circle (.1);
\fill (-1.4142135623730954,1.4142135623730954) circle (.1);
\fill (-0.9999999999999999,-2.4142135623730954) circle (.1);
\fill (-1.0,1.0000000000000002) circle (.1);
\fill (-1.0,3.4142135623730954) circle (.1);
\fill (-0.41421356237309515,-4.414213562373096) circle (.1);
\fill (-0.41421356237309537,-2.0000000000000004) circle (.1);
\fill (-0.41421356237309537,3.8284271247461903) circle (.1);
\fill (0.00000000000000011102230246251565,-2.4142135623730954) circle (.1);
\fill (0.0,0.0) circle (.1);
\fill (-0.00000000000000011102230246251565,2.4142135623730954) circle (.1);
\fill (0.41421356237309537,-3.8284271247461903) circle (.1);
\fill (0.41421356237309537,2.0000000000000004) circle (.1);
\fill (0.41421356237309515,4.414213562373096) circle (.1);
\fill (1.0,-3.4142135623730954) circle (.1);
\fill (1.0,-1.0000000000000002) circle (.1);
\fill (0.9999999999999999,2.4142135623730954) circle (.1);
\fill (1.4142135623730954,-1.4142135623730954) circle (.1);
\fill (1.4142135623730954,1.0) circle (.1);
\fill (2.0,-4.414213562373097) circle (.1);
\fill (2.0,1.4142135623730954) circle (.1);
\fill (2.4142135623730954,-2.414213562373096) circle (.1);
\fill (2.4142135623730954,-0.00000000000000011102230246251565) circle (.1);
\fill (2.414213562373095,3.414213562373096) circle (.1);
\fill (2.8284271247461907,2.0) circle (.1);
\fill (3.0,-2.0000000000000004) circle (.1);
\fill (3.4142135623730954,-3.414213562373096) circle (.1);
\fill (3.414213562373095,2.4142135623730954) circle (.1);
\fill (3.8284271247461903,-1.4142135623730954) circle (.1);
\fill (3.8284271247461907,1.0) circle (.1);
\fill (3.8284271247461907,4.414213562373096) circle (.1);
\fill (4.414213562373096,-1.0000000000000004) circle (.1);
\fill (4.414213562373095,1.4142135623730951) circle (.1);
\fill (-4.121320343559643,-4.121320343559644) circle (.1);
\fill (-4.121320343559643,-1.7071067811865477) circle (.1);
\fill (-4.121320343559643,1.7071067811865483) circle (.1);
\fill (-4.121320343559643,4.121320343559644) circle (.1);
\fill (-3.7071067811865475,0.29289321881345265) circle (.1);
\fill (-3.121320343559643,0.7071067811865478) circle (.1);
\fill (-3.121320343559643,3.121320343559643) circle (.1);
\fill (-2.7071067811865475,-3.121320343559643) circle (.1);
\fill (-2.7071067811865475,-0.7071067811865476) circle (.1);
\fill (-2.707106781186548,2.707106781186548) circle (.1);
\fill (-2.121320343559643,-2.707106781186548) circle (.1);
\fill (-2.121320343559643,-0.2928932188134524) circle (.1);
\fill (-1.7071067811865475,-4.121320343559643) circle (.1);
\fill (-1.7071067811865477,1.707106781186548) circle (.1);
\fill (-1.707106781186548,4.121320343559644) circle (.1);
\fill (-1.292893218813452,-2.121320343559643) circle (.1);
\fill (-1.1213203435596428,-3.707106781186548) circle (.1);
\fill (-0.7071067811865476,-1.707106781186548) circle (.1);
\fill (-0.7071067811865477,0.7071067811865477) circle (.1);
\fill (-0.7071067811865478,4.121320343559644) circle (.1);
\fill (-0.2928932188134522,-3.1213203435596433) circle (.1);
\fill (-0.2928932188134524,2.707106781186548) circle (.1);
\fill (0.2928932188134524,-2.707106781186548) circle (.1);
\fill (0.2928932188134522,3.1213203435596433) circle (.1);
\fill (0.7071067811865478,-4.121320343559644) circle (.1);
\fill (0.7071067811865477,-0.7071067811865477) circle (.1);
\fill (0.7071067811865476,1.707106781186548) circle (.1);
\fill (1.1213203435596428,3.707106781186548) circle (.1);
\fill (1.292893218813452,2.121320343559643) circle (.1);
\fill (1.707106781186548,-4.121320343559644) circle (.1);
\fill (1.7071067811865477,-1.707106781186548) circle (.1);
\fill (1.7071067811865475,4.121320343559643) circle (.1);
\fill (2.121320343559643,0.2928932188134524) circle (.1);
\fill (2.121320343559643,2.707106781186548) circle (.1);
\fill (2.707106781186548,-2.707106781186548) circle (.1);
\fill (2.7071067811865475,0.7071067811865476) circle (.1);
\fill (2.7071067811865475,3.121320343559643) circle (.1);
\fill (3.121320343559643,-3.121320343559643) circle (.1);
\fill (3.121320343559643,-0.7071067811865478) circle (.1);
\fill (3.7071067811865475,-0.29289321881345265) circle (.1);
\fill (4.121320343559643,-4.121320343559644) circle (.1);
\fill (4.121320343559643,-1.7071067811865483) circle (.1);
\fill (4.121320343559643,1.7071067811865477) circle (.1);
\fill (4.121320343559643,4.121320343559644) circle (.1);
\draw (-2.3,1.7) node {B};
\end{tikzpicture}}
  \]
  \caption{The complex grid for three different convex sets $B$. In
    each case, the set $B$ is shown in green, and grid points are
    shown as black dots. (a) $B=[-1,1]^2$. (b) $B=\s{(x,y)\mid
      x^2+y^2\leq 2}$. (c) $B=\s{(x,y)\mid 6x^2 + 16xy + 11y^2 \leq
      2}$.}
  \label{fig-grid-2d}
\end{figure}
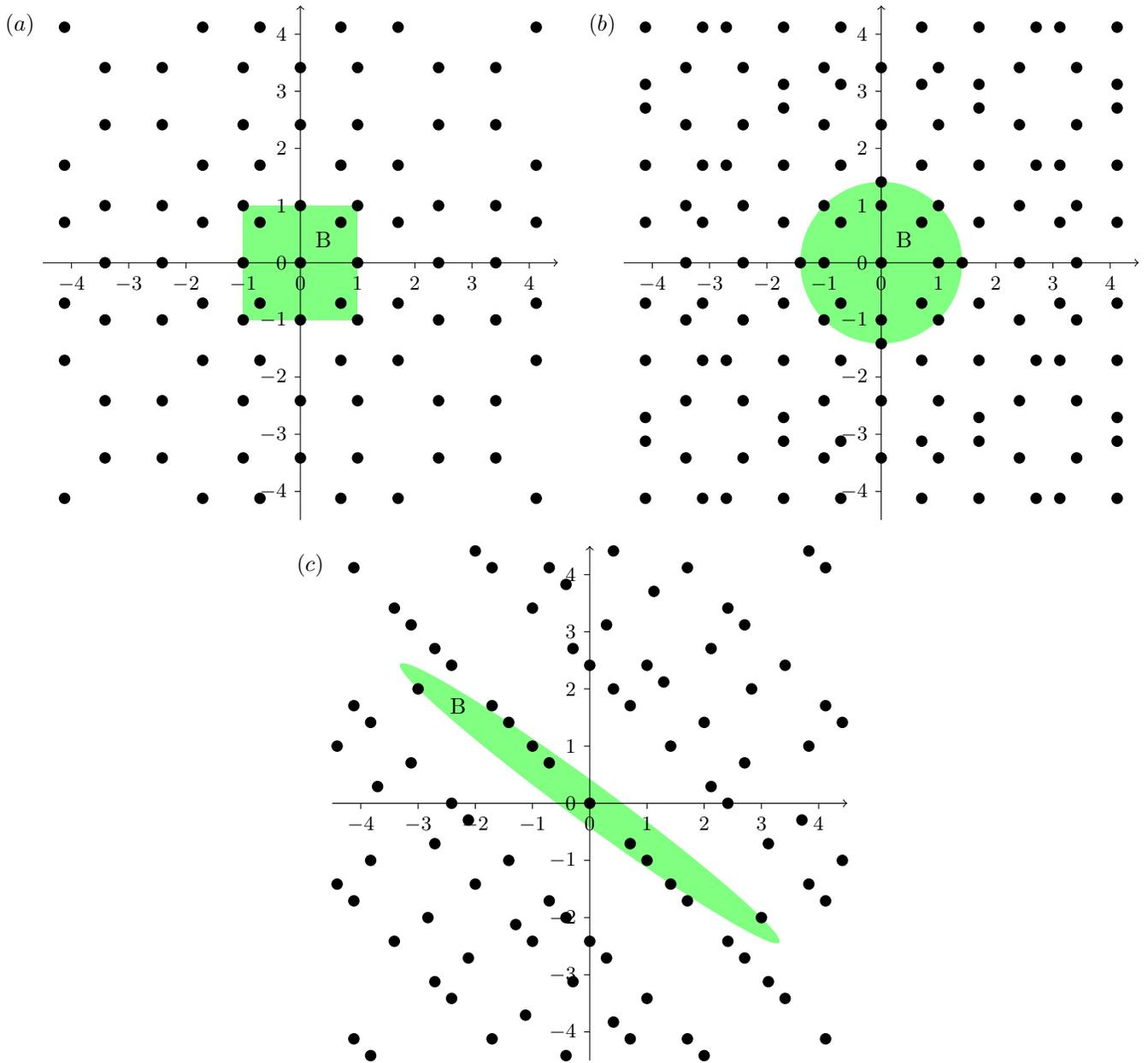
% ......................................................................

\begin{example}
  Figure~\ref{fig-grid-2d} illustrates the complex grids for several
  different convex sets $B$. Note that the grid has a $90$-degree
  symmetry in (a), a $45$-degree symmetry in (b), and a $180$-degree
  symmetry in (c).
\end{example}

\begin{definition}
  Let $A$ and $B$ be subsets of $\R^2$. The {\em two-dimensional grid
    problem} for $A$ and $B$ is the following:
  \begin{equation}\label{eqn-grid-constraint-2d}
    \mbox{{\bf Two-dimensional grid problem:} Find
      $u\in\Z[\omega]$ satisfying $u \in A$ and $u\bul\in B$.}
  \end{equation}
\end{definition}

As in the one-dimensional case, the grid problem can be understood as
looking for points in the intersection of the set $A$ and the grid
for $B$.  Our goal will be to prove a two-dimensional analogue of
Proposition~\ref{prop-algorithm-1d}, namely, that there is an
efficient algorithm which, given two bounded convex subsets $A$ and
$B$ of $\R^2$ with non-empty interior, enumerates all solutions of the
two-dimensional grid problem for $A$ and $B$.

However, the proof is more complicated than in the one-dimensional
case. We will consider several special cases before solving the
general problem in Section~\ref{ssec-proof-prop-algorithm-2d}.

\begin{remark}\label{rem-convex-specify}
  By analogy with Remark~\ref{rem-interval-specify}, we should
  indicate what it means for a bounded convex subset $A$ of $\R^2$ to
  be ``given'' as the input to an algorithm. Again, the details of
  this do not matter much. For our purposes, it will suffice to assume
  that a convex set is given along with the following information:
  \begin{itemize}
  \item a convex polygon enclosing $A$, say with rational vertices,
    and such that the area of the polygon exceeds that of $A$ by at
    most a fixed constant factor;
  \item a method to decide, for any given point of $\D[\omega]$,
    whether it is in $A$ or not; and
  \item a method to compute the intersection of $A$ with any straight
    line in $\D[\omega]$. More precisely, given any straight line
    parameterized as $L(t)=p+tq$, with $p,q\in\D[\omega]$, we can
    effectively determine the interval $\s{t\mid L(t)\in A}$ in the
    sense of Remark~\ref{rem-interval-specify}.
  \end{itemize}
\end{remark}

% ----------------------------------------------------------------------
\subsection{Upright rectangles}\label{subsec-upright-rectangles}

A special case of the two-dimensional grid problem arises when both
$A$ and $B$ are {\em upright rectangles}, by which we mean sets of the
form $[x_0,x_1]\times [y_0,y_1]$.  If $A$ and $B$ are upright
rectangles, then the two-dimensional grid problem can easily be
reduced to the one-dimensional case. We start with a lemma
characterizing $\Z[\omega]$.

\begin{lemma}\label{lem-zomega}
  A complex number $u$ is in $\Z[\omega]$ if and only it can be
  written of the form $u=\alpha+\beta i$ or of the form
  $u=\alpha+\beta i+\omega$, where $\alpha,\beta\in\Z[\sqrt 2]$.
\end{lemma}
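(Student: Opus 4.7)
The plan is to use the explicit $\Z$-basis $\{1, \omega, \omega^2, \omega^3\}$ for $\Z[\omega]$ together with the identity $\omega^2 = i$. Writing $u \in \Z[\omega]$ as $u = a\omega^3 + b\omega^2 + c\omega + d$ with $a,b,c,d \in \Z$, this immediately regroups as $u = d + bi + (c\omega + a\omega^3)$, separating the ``$\Z + \Z i$'' part from the ``$\omega$-part.'' The real work is to show that $c\omega + a\omega^3$ can always be expressed, modulo adjusting by a single copy of $\omega$, as an element of $\Z[\sqrt 2] + \Z[\sqrt 2]\,i$.

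First I would dispatch the ``if'' direction: since $\Z[\sqrt 2]\subseteq \Z[\omega]$ and $i=\omega^2\in\Z[\omega]$ and $\omega\in\Z[\omega]$, any expression $\alpha+\beta i$ or $\alpha+\beta i+\omega$ with $\alpha,\beta\in\Z[\sqrt 2]$ lies in $\Z[\omega]$. This is routine.

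For the ``only if'' direction, I would use the two key identities $\sqrt 2 = \omega - \omega^3$ and $i\sqrt 2 = \omega + \omega^3$ (the latter follows from the former by multiplying by $\omega^2 = i$ and using $\omega^5 = -\omega$). These show that the $\Z$-span of $\{\omega, \omega^3\}$ contains every element of the form $p\sqrt 2 + q i\sqrt 2 = (p+q)\omega + (q-p)\omega^3$ as $p,q$ range over $\Z$. Given target coefficients $c$ (of $\omega$) and $a$ (of $\omega^3$), I want to solve $p+q = c$ and $q-p = a$, giving $q = (a+c)/2$ and $p = (c-a)/2$, which are integers precisely when $a+c$ is even. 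If $a+c$ is odd, I instead solve $p+q = c-1$ and $q-p = a$, which is now integral and absorbs the leftover $\omega$ into the second case of the lemma.

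Concretely, starting from $u = d + bi + c\omega + a\omega^3$, choose $\epsilon \in \{0,1\}$ so that $a + c - \epsilon$ is even, set $p = (c - \epsilon - a)/2$ and $q = (a + c - \epsilon)/2$, and conclude
\[
u = (d + p\sqrt 2) + (b + q\sqrt 2)\,i + \epsilon\,\omega,
\]
with $d + p\sqrt 2,\; b + q\sqrt 2 \in \Z[\sqrt 2]$. The only mild ``obstacle'' is spotting the identities $\sqrt 2 = \omega - \omega^3$ and $i\sqrt 2 = \omega + \omega^3$ and recognizing that the parity of $a+c$ is precisely the obstruction to representability in the first form, which the extra $\omega$ permitted in the second form exactly cancels; hence the two cases cover all of $\Z[\omega]$ (and are in fact disjoint).
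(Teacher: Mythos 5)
Your proof is correct and is essentially the same argument as the paper's: both expand $u=a\omega^3+b\omega^2+c\omega+d$ into real and imaginary parts and use the parity of $a+c$ (equivalently $c-a$) to decide whether the leftover half-integer multiple of $\sqrt2$ forces the extra $\omega$, arriving at literally the same formulas for $\alpha$ and $\beta$. The only cosmetic difference is that you package the computation via the identities $\sqrt2=\omega-\omega^3$ and $i\sqrt2=\omega+\omega^3$, whereas the paper substitutes $\omega=(1+i)/\sqrt2$ directly.
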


\begin{proof}
  The right-to-left implication is trivial. For the left-to-right
  implication, let $u=a\omega^3+b\omega^2+c\omega+d$, where
  $a,b,c,d\in\Z$. Noting that $\omega=\frac{1+i}{\sqrt 2}$, we have
  \[ u = (d + \frac{c-a}{2} \sqrt2) + (b + \frac{c+a}{2}{\sqrt2})i.
  \]
  If $c-a$ (and therefore $c+a$) is even, then $u$ is of the first
  form, with $\alpha=d + \frac{c-a}{2} \sqrt2$ and $\beta=b +
  \frac{c+a}{2}{\sqrt2}$. If $c-a$ (and therefore $c+a$) is odd, then
  $u$ is of the second form, with $\alpha=d + \frac{c-a-1}{2} \sqrt2$
  and $\beta = b + \frac{c+a-1}{2}{\sqrt2}$. 
\end{proof}

\begin{lemma}\label{lem-upright-rectangles}
  There is an algorithm which, given a pair of upright rectangles $A$
  and $B$, enumerates all solutions of the two-dimensional grid
  problem for $A$ and $B$. Moreover, the algorithm requires only a
  constant number of arithmetic operations per solution produced.
\end{lemma}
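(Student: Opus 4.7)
The plan is to combine Lemma~\ref{lem-zomega} with Proposition~\ref{prop-algorithm-1d}, reducing the two-dimensional grid problem for upright rectangles to two pairs of independent one-dimensional grid problems.

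By Lemma~\ref{lem-zomega}, every $u\in\Z[\omega]$ has either the shape $u=\alpha+\beta i$ or the shape $u=\alpha+\beta i+\omega$, with $\alpha,\beta\in\Z[\sqrt 2]$. Since $i^\bullet = i$ and $\omega^\bullet = -\omega$, the corresponding $\sqrt 2$-conjugates are $u^\bullet = \alpha^\bullet + \beta^\bullet i$ in the first case and $u^\bullet = (\alpha^\bullet - \tfrac{1}{\sqrt 2}) + (\beta^\bullet - \tfrac{1}{\sqrt 2})i$ in the second. Writing $A=[x_0,x_1]\times[y_0,y_1]$ and $B=[x_0',x_1']\times[y_0',y_1']$, the two-dimensional grid constraints $u\in A$ and $u^\bullet\in B$ are therefore equivalent to independent interval constraints on $\alpha$ and $\beta$: for example, in the first case, $\alpha\in[x_0,x_1]$, $\alpha^\bullet\in[x_0',x_1']$, $\beta\in[y_0,y_1]$, and $\beta^\bullet\in[y_0',y_1']$, yielding a 1D grid problem for $\alpha$ and a separate 1D grid problem for $\beta$. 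In the second case the same decoupling holds after translating each interval by $\pm 1/\sqrt 2$ to absorb the real and imaginary parts of $\omega$.

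The algorithm is then: for each of the two shapes, invoke Proposition~\ref{prop-algorithm-1d} on the associated one-dimensional grid problem for $\alpha$ and separately for $\beta$, and emit all combinations from the Cartesian product of the two resulting solution streams (as $\alpha+\beta i$ or $\alpha+\beta i+\omega$, respectively). Since each 1D enumeration produces successive solutions in constant time and assembling $u$ from $(\alpha,\beta)$ is itself $O(1)$, the overall arithmetic cost per produced $u$ is constant, as required.

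The only point requiring care is the shift by $\pm 1/\sqrt 2$ in the second shape, but this is simply a translation of real intervals that leaves their lengths and the format required by Remark~\ref{rem-interval-specify} unchanged, so Proposition~\ref{prop-algorithm-1d} applies verbatim. No significant obstacle arises; the content of the lemma is the algebraic decoupling of the two coordinates afforded by the upright-rectangle hypothesis, which is what fails in the general case and motivates the more delicate arguments to come.
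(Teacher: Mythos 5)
Your proof is correct and follows essentially the same route as the paper: split the candidates into the two shapes $\alpha+\beta i$ and $\alpha+\beta i+\omega$ via Lemma~\ref{lem-zomega}, observe that $(-)^\bullet$ acts coordinatewise (with the $\omega$-shape handled by translating the intervals by $\pm 1/\sqrt2$), and enumerate the Cartesian product of the two one-dimensional solution sets from Proposition~\ref{prop-algorithm-1d}. The computation $u^\bullet = (\alpha^\bullet - \tfrac{1}{\sqrt2}) + (\beta^\bullet - \tfrac{1}{\sqrt2})i$ in the second case matches the paper's translated constraints $A-\omega$, $B+\omega$ exactly.
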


\begin{proof}
  By assumption, $A=A_x\times A_y$ and $B=B_x\times B_y$, where $A_x$,
  $A_y$, $B_x$, and $B_y$ are closed intervals. By
  Lemma~\ref{lem-zomega}, any potential solution is of the form
  $u=\alpha+\beta i$ or $u=\alpha+\beta i+\omega$, where
  $\alpha,\beta\in\Z[\sqrt 2]$. When $u=\alpha+\beta i$, then $u\bul =
  \alpha\bul + \beta\bul i$. Therefore, the two-dimensional grid
  constraints $u\in A$ and $u\bul\in B$ are equivalent to the
  one-dimensional constraints $\alpha\in A_x$, $\alpha\bul\in B_x$ and
  $\beta\in A_y$, $\beta\bul\in B_y$.  On the other hand, when
  $u=\alpha+\beta i+\omega$, let $v = u-\omega = \alpha+\beta i$. Then
  $v\bul = u\bul + \omega$, and the constraints $u\in A$ and $u\bul\in
  B$ are equivalent to $v\in A-\omega$ and $v\bul \in B+\omega$, which
  reduces to the one-dimensional constraints $\alpha\in
  A_x-\frac{1}{\sqrt2}$, $\alpha\bul\in B_x+\frac{1}{\sqrt2}$ and
  $\beta\in A_y-\frac{1}{\sqrt2}$, $\beta\bul\in
  B_y+\frac{1}{\sqrt2}$.  In both cases, the solutions to the
  one-dimensional constraints can be efficiently enumerated by
  Proposition~\ref{prop-algorithm-1d}.
\end{proof}

% ----------------------------------------------------------------------
\subsection{Upright sets}\label{ssec-uprightness}

We can generalize the method of
Section~\ref{subsec-upright-rectangles} to convex sets that are {\em
  close} to upright rectangles in a suitable sense.

\begin{definition}\label{def-uprightness}
  Let $A$ be a bounded convex subset of $\R^2$. The {\em bounding box}
  of $A$, denoted $\BBox(A)$, is the smallest set of the form
  $[x_0,x_1]\times [y_0,y_1]$ that contains $A$. The {\em uprightness}
  of $A$, denoted $\up(A)$, is defined to be the ratio of the area of
  $A$ to the area of its bounding box:
  \begin{equation}
    \up(A) = \frac{\area(A)}{\area(\BBox(A))}.
  \end{equation}
  Therefore, the uprightness is a real number between 0 and 1. We say
  that $A$ is {\em $M$-upright} if $\up(A)\geq M$.
\end{definition}

\begin{lemma}\label{lem-upright-set}
  There exists an algorithm which, given a pair $A,B$ of convex
  $M$-upright sets, enumerates all solutions of the two-dimensional
  grid problem for $A$ and $B$. Moreover, the algorithm requires
  $O(1/M^2)$ arithmetic operations per solution produced. In
  particular, when $M>0$ is fixed, it requires only a constant number
  of operations per solution.
\end{lemma}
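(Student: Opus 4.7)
The approach is to reduce to the upright rectangle case (Lemma~\ref{lem-upright-rectangles}) by a straightforward filtering strategy. Given convex $M$-upright sets $A$ and $B$, I would first compute the bounding boxes $A' = \BBox(A)$ and $B' = \BBox(B)$, which are upright rectangles. Since $A \seq A'$ and $B \seq B'$, every solution of the grid problem for $A,B$ is automatically a candidate for $A',B'$. Apply Lemma~\ref{lem-upright-rectangles} to enumerate all $u \in \Z[\omega]$ with $u \in A'$ and $u\bul \in B'$, and for each such $u$ test whether $u \in A$ and $u\bul \in B$ using the membership procedure guaranteed by Remark~\ref{rem-convex-specify}. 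Each test costs $O(1)$, so the total work is proportional to the number of bounding-box candidates.

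Correctness is immediate, since no genuine solution is discarded. The real content of the lemma is the efficiency claim, which reduces to bounding the number $N'$ of bounding-box candidates by $O(1/M^2)$ times the number $N$ of true solutions. The intuition is that the map $u \mapsto (u,u\bul)$ embeds $\Z[\omega]$ as a lattice of constant covolume in $\C \times \C \cong \R^4$, so by a standard Minkowski-style lattice point count, the number of solutions of the two-dimensional grid problem for any sufficiently large convex pair $(X,Y)$ is $\Theta(\area(X) \cdot \area(Y))$. Because $\up(A), \up(B) \geq M$, we have $\area(A') \cdot \area(B') \leq (1/M^2)\, \area(A) \cdot \area(B)$, whence $N' = O(N/M^2)$ in this regime.

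The main obstacle is the degenerate case, where $\area(A) \cdot \area(B)$ is below the threshold at which the asymptotic lattice point count kicks in: here $N$ may be as small as $0$ and the bound $N' = O(N/M^2)$ is vacuous. To cope, I would establish a two-dimensional analogue of Lemma~\ref{lem-grid-bounds}: an absolute upper bound of the form $N' = O(1 + \area(A') \cdot \area(B'))$ on the candidate count, together with the matching lower bound $N = \Omega(\area(A) \cdot \area(B))$ once the area product exceeds a fixed constant. Below that threshold, $N'$ is itself $O(1)$ and the whole algorithm runs in constant time; above it, the density argument of the previous paragraph applies. Combining the two cases with Lemma~\ref{lem-upright-rectangles} yields the claimed $O(1/M^2)$ amortized cost per solution produced.
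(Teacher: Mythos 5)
You take the same route as the paper: enumerate the candidates for $\BBox(A)$ and $\BBox(B)$ via Lemma~\ref{lem-upright-rectangles}, filter each candidate by membership in $A$ and $B$, and argue that $M$-uprightness forces roughly an $M^2$ fraction of candidates to survive. That is exactly the paper's proof, and your identification of the degenerate case as the real difficulty is also on target (the paper waves at it with the phrase ``with the exception of trivial cases, when the number of rows or columns is very small'').

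However, the quantitative lemma you propose to close that gap is false in both directions. The image of $\Z[\omega]$ under $u\mapsto(u,u\bul)$ is indeed a full-rank lattice in $\R^4$, but the regions $X\times Y$ arising here are products, and for elongated products the boundary terms in the lattice-point count swamp the $4$-volume, so neither $N'=O(1+\area(A')\cdot\area(B'))$ nor $N=\Omega(\area(A)\cdot\area(B))$ above a fixed area threshold can hold. Concretely, for $A=B=[0,L]\times[0,10/L]$ (uprightness $1$, area product $100$) the candidate count from Lemma~\ref{lem-upright-rectangles} factors as a product of one-dimensional counts, $\Theta(L^2)\cdot\Theta(1)=\Theta(L^2)$, which is unbounded while the area product stays at $100$; and by translating the two $y$-intervals one can arrange that the one-dimensional grid problem in the imaginary coordinate has no solution at all (it has at most one, since $\delta\Delta=100/L^2<1$), giving $N=0$ with the area product still $100$. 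The correct dichotomy is coordinate-wise rather than by $4$-volume: the candidates form a grid of rows and columns; when the numbers of rows and columns are both large, equidistribution plus convexity plus uprightness gives the $M^2$ density and hence the $O(1/M^2)$ cost per solution; and when one of these numbers is $O(1)$, one must instead solve a one-dimensional grid problem along each of the few rows or columns, as the paper indicates in its discussion of the set $A_2$ in Figure~\ref{fig-upright}.
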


\begin{proof}
  By Lemma~\ref{lem-upright-rectangles}, we can efficiently enumerate
  the solutions of the grid problem for $\BBox(A)$ and
  $\BBox(B)$. Moreover, as shown in the proof of
  Lemma~\ref{lem-upright-rectangles}, the solutions are arranged in
  rows and columns. For each such candidate solution $u$, we only need
  to check whether $u$ is also a solution for $A$ and $B$.
  To establish the efficiency of the algorithm, we need to ensure that
  the total number of solutions is not too small in relation to the
  total number of candidates produced. To see this, note that, with the
  exception of trivial cases, when the number of rows or columns is
  very small, $M$-uprightness and convexity ensure that the proportion
  of candidates $u$ that are solutions for $A$ and $B$ is
  approximately $M^2 : 1$. Therefore, the runtime per solution differs
  from that of Lemma~\ref{lem-upright-rectangles} by at most a
  factor of $O(1/M^2)$.
\end{proof}

% ......................................................................
\begin{figure}
  \[ 
  \mp{0.95}{\begin{tikzpicture}[scale=0.9]
\fill[color=green!50] (1.001,1.001) -- (-1.001,1.001) -- (-1.001,-1.001) -- (1.001,-1.001) -- cycle;
% Debug: 0.2 0.5 0.30000000000000004 (-0.2,-0.0) (-0.0,0.19999999999999996) (-1.414213562373095,-0.0) (-0.0,1.8257418583505536)
\fill[color=red!50] (-4.414213562373095,2.3)
.. controls (-4.414213562373095,3.3083293854947433) and (-3.78104858350254,4.125741858350553) .. (-3.0,4.125741858350553)
.. controls (-2.21895141649746,4.125741858350553) and (-1.585786437626905,3.3083293854947433) .. (-1.585786437626905,2.3)
.. controls (-1.585786437626905,1.2916706145052563) and (-2.21895141649746,0.47425814164944624) .. (-3.0,0.47425814164944624)
.. controls (-3.78104858350254,0.47425814164944624) and (-4.414213562373095,1.2916706145052563) .. (-4.414213562373095,2.3)
-- cycle;
% Debug: 6.92 7.0 8.000000000000007e-2 (-6.92,-0.0) (-0.0,6.92) (-0.3779644730092272,-0.0) (-0.0,3.535533905932736)
\fill[color=red!50] (3.3898024799571416,0.75)
.. controls (3.3898024799571416,2.702621458756349) and (3.55902293854554,4.285533905932736) .. (3.767766952966369,4.285533905932736)
.. controls (3.9765109673871977,4.285533905932736) and (4.145731425975596,2.702621458756349) .. (4.145731425975596,0.75)
.. controls (4.145731425975596,-1.202621458756349) and (3.9765109673871977,-2.785533905932736) .. (3.767766952966369,-2.785533905932736)
.. controls (3.55902293854554,-2.785533905932736) and (3.3898024799571416,-1.202621458756349) .. (3.3898024799571416,0.75)
-- cycle;
% Debug: 16.255674739806633 16.321949925903315 6.627518609668215e-2 (-5.7,-13.921949925903315) (13.921949925903316,-5.7) (-9.378563018651263e-2,-0.22906646442559916) (3.594777835731891,-1.4717933746872234)
\fill[color=red!50] (0.5890570822881064,-2.2190159580867657)
.. controls (2.574398059992575,-3.031864993828518) and (4.225824174901229,-3.588252783348306) .. (4.27762054820651,-3.4617428683483897)
.. controls (4.3294169215117915,-3.3352329533484735) and (2.7619693203656004,-2.57373206497732) .. (0.7766283426611317,-1.7608830292355675)
.. controls (-1.2087126350433368,-0.9480339934938151) and (-2.8601387499519904,-0.39164620397402683) .. (-2.9119351232572717,-0.5181561189739432)
.. controls (-2.963731496562553,-0.6446660339738596) and (-1.3962838954163623,-1.4061669223450133) .. (0.5890570822881064,-2.2190159580867657)
-- cycle;
\draw[color=red] (-4.414213562373095,0.474258141649446) -- (-1.5857864376269049,0.474258141649446) -- (-1.5857864376269049,4.125741858350554) -- (-4.414213562373095,4.125741858350554) -- cycle;
\draw[color=red] (3.3898024799571416,-2.7855339059327373) -- (4.145731425975596,-2.7855339059327373) -- (4.145731425975596,4.285533905932738) -- (3.3898024799571416,4.285533905932738) -- cycle;
\draw[color=red] (-2.913158320992765,-3.4794618913596915) -- (4.278843745942003,-3.4794618913596915) -- (4.278843745942003,-0.5004370959626416) -- (-2.913158320992765,-0.5004370959626416) -- cycle;
\draw (0.4,0.4) node {$B$};
\draw (-3.0,3.0) node {$A_1$};
\draw (3.767766952966369,3.0) node {$A_2$};
\draw (1.7,-2.4) node {$A_3$};
\draw[->] (-4.8,0) -- (4.8, 0);
\draw (-4,0) -- (-4,-0.1) node[below] {\small $-4$};
\draw (-3,0) -- (-3,-0.1) node[below] {\small $-3$};
\draw (-2,0) -- (-2,-0.1) node[below] {\small $-2$};
\draw (-1,0) -- (-1,-0.1) node[below] {\small $-1$};
\draw (0,0) -- (0,-0.1) node[below] {\small $0$};
\draw (1,0) -- (1,-0.1) node[below] {\small $1$};
\draw (2,0) -- (2,-0.1) node[below] {\small $2$};
\draw (3,0) -- (3,-0.1) node[below] {\small $3$};
\draw (4,0) -- (4,-0.1) node[below] {\small $4$};
\draw[->] (0,-4.5) -- (0,4.5);
\draw (0,-4) -- (-0.1,-4) node[left] {\small $-4$};
\draw (0,-3) -- (-0.1,-3) node[left] {\small $-3$};
\draw (0,-2) -- (-0.1,-2) node[left] {\small $-2$};
\draw (0,-1) -- (-0.1,-1) node[left] {\small $-1$};
\draw (0,0) -- (-0.1,0) node[left] {\small $0$};
\draw (0,1) -- (-0.1,1) node[left] {\small $1$};
\draw (0,2) -- (-0.1,2) node[left] {\small $2$};
\draw (0,3) -- (-0.1,3) node[left] {\small $3$};
\draw (0,4) -- (-0.1,4) node[left] {\small $4$};
\fill (-3.414213562373095,-3.414213562373096) circle (.1);
\fill (-3.414213562373095,-2.4142135623730954) circle (.1);
\fill (-3.4142135623730954,-1.0) circle (.1);
\fill (-3.4142135623730954,0.00000000000000011102230246251565) circle (.1);
\fill (-3.4142135623730954,1.0000000000000004) circle (.1);
\fill (-3.4142135623730954,2.414213562373096) circle (.1);
\fill (-3.4142135623730954,3.414213562373096) circle (.1);
\fill (-2.414213562373095,-3.414213562373096) circle (.1);
\fill (-2.414213562373095,-2.4142135623730954) circle (.1);
\fill (-2.4142135623730954,-1.0) circle (.1);
\fill (-2.4142135623730954,0.00000000000000011102230246251565) circle (.1);
\fill (-2.4142135623730954,1.0000000000000004) circle (.1);
\fill (-2.4142135623730954,2.414213562373096) circle (.1);
\fill (-2.4142135623730954,3.414213562373096) circle (.1);
\fill (-0.9999999999999999,-3.4142135623730954) circle (.1);
\fill (-0.9999999999999999,-2.4142135623730954) circle (.1);
\fill (-1.0,-1.0000000000000002) circle (.1);
\fill (-1.0,0.0) circle (.1);
\fill (-1.0,1.0000000000000002) circle (.1);
\fill (-1.0,2.4142135623730954) circle (.1);
\fill (-1.0,3.4142135623730954) circle (.1);
\fill (0.00000000000000011102230246251565,-3.4142135623730954) circle (.1);
\fill (0.00000000000000011102230246251565,-2.4142135623730954) circle (.1);
\fill (0.0,-1.0000000000000002) circle (.1);
\fill (0.0,0.0) circle (.1);
\fill (0.0,1.0000000000000002) circle (.1);
\fill (-0.00000000000000011102230246251565,2.4142135623730954) circle (.1);
\fill (-0.00000000000000011102230246251565,3.4142135623730954) circle (.1);
\fill (1.0,-3.4142135623730954) circle (.1);
\fill (1.0,-2.4142135623730954) circle (.1);
\fill (1.0,-1.0000000000000002) circle (.1);
\fill (1.0,0.0) circle (.1);
\fill (1.0,1.0000000000000002) circle (.1);
\fill (0.9999999999999999,2.4142135623730954) circle (.1);
\fill (0.9999999999999999,3.4142135623730954) circle (.1);
\fill (2.4142135623730954,-3.414213562373096) circle (.1);
\fill (2.4142135623730954,-2.414213562373096) circle (.1);
\fill (2.4142135623730954,-1.0000000000000004) circle (.1);
\fill (2.4142135623730954,-0.00000000000000011102230246251565) circle (.1);
\fill (2.4142135623730954,1.0) circle (.1);
\fill (2.414213562373095,2.4142135623730954) circle (.1);
\fill (2.414213562373095,3.414213562373096) circle (.1);
\fill (3.4142135623730954,-3.414213562373096) circle (.1);
\fill (3.4142135623730954,-2.414213562373096) circle (.1);
\fill (3.4142135623730954,-1.0000000000000004) circle (.1);
\fill (3.4142135623730954,-0.00000000000000011102230246251565) circle (.1);
\fill (3.4142135623730954,1.0) circle (.1);
\fill (3.414213562373095,2.4142135623730954) circle (.1);
\fill (3.414213562373095,3.414213562373096) circle (.1);
\fill (-4.121320343559643,-4.121320343559644) circle (.1);
\fill (-4.121320343559643,-1.7071067811865477) circle (.1);
\fill (-4.121320343559643,-0.7071067811865475) circle (.1);
\fill (-4.121320343559643,0.7071067811865478) circle (.1);
\fill (-4.121320343559643,1.7071067811865483) circle (.1);
\fill (-4.121320343559643,4.121320343559644) circle (.1);
\fill (-1.7071067811865475,-4.121320343559643) circle (.1);
\fill (-1.7071067811865475,-1.707106781186548) circle (.1);
\fill (-1.7071067811865475,-0.7071067811865476) circle (.1);
\fill (-1.7071067811865477,0.7071067811865477) circle (.1);
\fill (-1.7071067811865477,1.707106781186548) circle (.1);
\fill (-1.707106781186548,4.121320343559644) circle (.1);
\fill (-0.7071067811865475,-4.121320343559643) circle (.1);
\fill (-0.7071067811865476,-1.707106781186548) circle (.1);
\fill (-0.7071067811865476,-0.7071067811865476) circle (.1);
\fill (-0.7071067811865477,0.7071067811865477) circle (.1);
\fill (-0.7071067811865477,1.707106781186548) circle (.1);
\fill (-0.7071067811865478,4.121320343559644) circle (.1);
\fill (0.7071067811865478,-4.121320343559644) circle (.1);
\fill (0.7071067811865477,-1.707106781186548) circle (.1);
\fill (0.7071067811865477,-0.7071067811865477) circle (.1);
\fill (0.7071067811865476,0.7071067811865476) circle (.1);
\fill (0.7071067811865476,1.707106781186548) circle (.1);
\fill (0.7071067811865475,4.121320343559643) circle (.1);
\fill (1.707106781186548,-4.121320343559644) circle (.1);
\fill (1.7071067811865477,-1.707106781186548) circle (.1);
\fill (1.7071067811865477,-0.7071067811865477) circle (.1);
\fill (1.7071067811865475,0.7071067811865476) circle (.1);
\fill (1.7071067811865475,1.707106781186548) circle (.1);
\fill (1.7071067811865475,4.121320343559643) circle (.1);
\fill (4.121320343559643,-4.121320343559644) circle (.1);
\fill (4.121320343559643,-1.7071067811865483) circle (.1);
\fill (4.121320343559643,-0.7071067811865478) circle (.1);
\fill (4.121320343559643,0.7071067811865475) circle (.1);
\fill (4.121320343559643,1.7071067811865477) circle (.1);
\fill (4.121320343559643,4.121320343559644) circle (.1);
\end{tikzpicture}}
  \]
  \caption{Grid problems for upright and non-upright sets}\label{fig-upright}
  \rule{\textwidth}{0.1mm}
\end{figure}
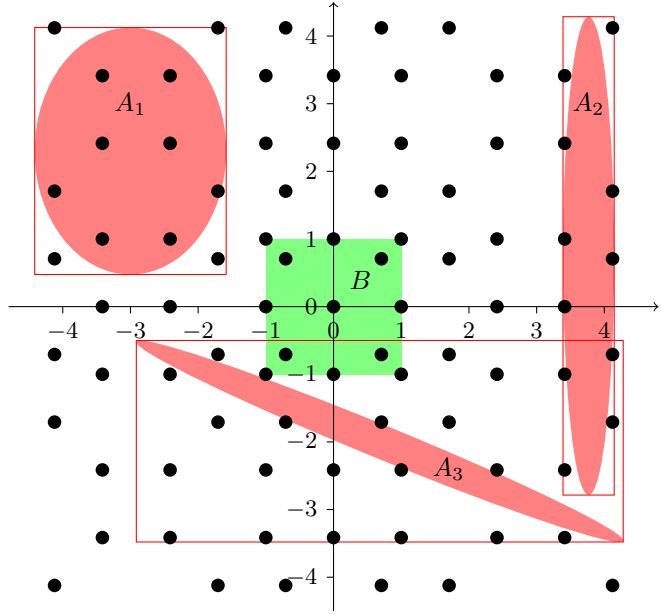
%......................................................................

\begin{example}
  Figure~\ref{fig-upright} shows three different examples of grid
  problems. Each example uses the same set $B=[-1,1]\times[-1,1]$, shown in
  green. The grid for $B$ is shown as black dots. The sets $A_i$ are
  shown in red, for $i=1,2,3$, and their bounding boxes are shown in
  outline.  The typical case of an upright set is $A_1$. Here, a fixed
  proportion of grid points from the bounding box of $A_1$ are
  elements of $A_1$. The exceptional case of an upright set is
  $A_2$. Its bounding box spans only two columns of the
  grid. Therefore, although the bounding box contains many grid
  points, $A_2$ does not. However, this case is easily dealt with, by
  solving a one-dimensional grid problem for each of the grid columns
  separately. Finally, the set $A_3$ is not upright. In this case,
  Lemma~\ref{lem-upright-set} is not helpful, and a priori, it could
  be a difficult problem to find grid points in $A_3$.
\end{example}

% ----------------------------------------------------------------------
\subsection{Grid operators}\label{ssec-gridops}

The method of Section~\ref{ssec-uprightness} can be further
generalized by using certain linear transformations to turn
non-upright sets into upright sets. The linear transformations that
are useful for this purpose are {\em special grid operators}:

\begin{definition}
  As before, we regard $\Z[\omega]$ as a subset of $\R^2$. A real linear
  operator $\G:\R^2 \to \R^2$ is called a {\em grid operator} if
  $\G(\Z[\omega]) \seq \Z[\omega]$. Moreover, a grid operator $\G$ is
  called {\em special} if it has determinant $\pm 1$.
\end{definition}

Grid operators are characterized by the following lemma.

\begin{lemma}\label{lem-gridoperators}
  Let $\G:\R^2\to\R^2$ be a linear operator, which we can identify
  with a real $2\times 2$-matrix with real entries. Then $\G$ is a
  grid operator if and only if it is of the form
  \begin{equation}\label{eqn-gridoperator}
    \G =
    \left[
      \begin{array}{cc}
        a+\frac{a'}{\sqrt{2}} & b+\frac{b'}{\sqrt{2}}\\
        c+\frac{c'}{\sqrt{2}} & d+\frac{d'}{\sqrt{2}}
      \end{array}
    \right],
  \end{equation}
  where $a,b,c,d,a',b',c',d'$ are integers satisfying $a+b+c+d \equiv
  0\mmod{2}$ and $a'\equiv b'\equiv c' \equiv d'\mmod{2}$.
\end{lemma}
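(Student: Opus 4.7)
The plan is to observe that $\Z[\omega]$ is additively generated, as an abelian group, by $\{1,\omega,i,i\omega\}$. Since $\G$ is $\R$-linear, $\G(\Z[\omega])\seq\Z[\omega]$ holds if and only if $\G$ sends each of these four generators into $\Z[\omega]$. So I would write
\[
  \G = \begin{pmatrix} g_{11} & g_{12} \\ g_{21} & g_{22} \end{pmatrix}
\]
with real entries, and translate each of the four conditions $\G(1),\G(i),\G(\omega),\G(i\omega)\in\Z[\omega]$ into constraints on the $g_{ij}$ using Lemma~\ref{lem-zomega}, which says that a complex number is in $\Z[\omega]$ exactly when its real and imaginary parts either both lie in $\Z[\sqrt2]$ or both lie in $\frac{1}{\sqrt2}+\Z[\sqrt2]$.

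First I would handle $\G(1)=g_{11}+g_{21}i$ and $\G(i)=g_{12}+g_{22}i$. Applying Lemma~\ref{lem-zomega} to each, the entries must be writable as $g_{11}=a+a'/\sqrt2$, $g_{21}=c+c'/\sqrt2$, $g_{12}=b+b'/\sqrt2$, $g_{22}=d+d'/\sqrt2$ with $a,\ldots,d'\in\Z$, subject to $a'\equiv c'\mmod 2$ and $b'\equiv d'\mmod 2$. This already puts $\G$ in the form~\eqref{eqn-gridoperator}; the remaining work is to pin down how the parities of $a',b',c',d'$ and $a,b,c,d$ interact.

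Next I would impose the conditions coming from $\G(\omega)$ and $\G(i\omega)$. A direct computation gives
\[
  \G(\omega) = \tfrac{a'+b'}{2} + \tfrac{a+b}{\sqrt2} + \Bigl(\tfrac{c'+d'}{2} + \tfrac{c+d}{\sqrt2}\Bigr)i,
\]
and a similar formula for $\G(i\omega)$ with $a,a',c,c'$ replaced by their negatives. For the real and imaginary parts to lie in $\Z[\sqrt2]\cup(\frac{1}{\sqrt2}+\Z[\sqrt2])=\Z+\frac{1}{\sqrt2}\Z$ at all, I need $a'+b'$ and $c'+d'$ to be even, which combined with the earlier constraints yields $a'\equiv b'\equiv c'\equiv d'\mmod 2$. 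Then for the real and imaginary parts of $\G(\omega)$ to fall into the \emph{same} case of Lemma~\ref{lem-zomega}, the parities of $a+b$ and $c+d$ must agree, i.e.\ $a+b+c+d\equiv 0\mmod 2$. The check for $\G(i\omega)$ gives back the same condition, so no new constraint appears.

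For the converse, I would verify that any $\G$ of the stated form satisfies $\G(v)\in\Z[\omega]$ for $v\in\{1,i,\omega,i\omega\}$ by reversing the computations above, and then conclude that $\G(\Z[\omega])\seq\Z[\omega]$ by $\Z$-linearity. The main obstacle is purely bookkeeping: carefully tracking which combinations of parities keep a complex number inside the two-component set of Lemma~\ref{lem-zomega}, and making sure the conditions extracted from $\G(\omega)$ and $\G(i\omega)$ are redundant with (rather than stronger than) those from $\G(1)$ and $\G(i)$.
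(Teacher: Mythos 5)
Your proposal is correct and follows essentially the same route as the paper: both reduce membership in $\Z[\omega]$ to the two-case description of Lemma~\ref{lem-zomega} and extract the congruences by applying $\G$ to a few specific points (the paper uses $1$, $i$, and $\omega$; you use the four additive generators $1,i,\omega,i\omega$, with the fourth check correctly noted to be redundant). Making the generator argument explicit is a slight tidying of the paper's ``simple computation'' for the converse, but the substance is identical.
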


\begin{proof}
  By Lemma~\ref{lem-zomega}, we know that a vector $u\in\R^2$ is in
  $\Z[\omega]$ if and only if it can be written of the form
  \begin{equation}\label{eqn-zomega-point}
    u = \left[\begin{matrix} x_1+\frac{x_2}{\sqrt{2}} \\ y_1+\frac{y_2}{\sqrt{2}}
      \end{matrix}\right],
  \end{equation}
  where $x_1,x_2,y_1,y_2$ are integers and $x_2\equiv y_2\mmod{2}$.
  A simple computation then shows that every operator of the
  form {\eqref{eqn-gridoperator}} is a grid operator. For the
  converse, consider an arbitrary grid operator $\G$. We prove the
  claim by applying $\G$ to the three particular points
  $\begin{bsmallmatrix}1\\0\end{bsmallmatrix}$,
  $\begin{bsmallmatrix}0\\1\end{bsmallmatrix}$, and
  $\frac{1}{\sqrt{2}}\begin{bsmallmatrix}1\\1\end{bsmallmatrix} \in
  \Z[\omega]$.  From
  $\G\begin{bsmallmatrix}1\\0\end{bsmallmatrix}\in\Z[\omega]$ and
  $\G\begin{bsmallmatrix}0\\1\end{bsmallmatrix}\in\Z[\omega]$, it
  follows that the columns of $\G$ are of the form
  {\eqref{eqn-zomega-point}}, so that $\G$ is of the form
  {\eqref{eqn-gridoperator}}, with integers
  $a,b,c,d,a',b',c',d'$ satisfying $a'\equiv c'\mmod{2}$ and
  $b'\equiv d'\mmod{2}$. Moreover, we have
  \[ \G\begin{bsmallmatrix}1/\sqrt 2\\1/\sqrt 2\end{bsmallmatrix} =
  \begin{bmatrix}
    \frac{a'+b'}{2}+\frac{a+b}{\sqrt2}\\
    \frac{c'+d'}{2}+\frac{c+d}{\sqrt2}
  \end{bmatrix}\in\Z[\omega],
  \]
  which implies $a+b\equiv c+d\mmod{2}$ and $a'+b'\equiv
  c'+d'\equiv 0\mmod{2}$. Together, these conditions imply
  $a+b+c+d \equiv 0\mmod{2}$ and $a'\equiv b'\equiv c' \equiv
  d'\mmod{2}$, as claimed.
\end{proof}

\begin{remark}\label{gridoperatorcomposition}
  The composition of two (special) grid operators is again a (special)
  grid operator. If $\G$ is a special grid operator, then $\G$ is
  invertible and $\G^{-1}$ is a special grid operator. If $\G$ is a
  (special) grid operator, then $\G\bul$ is a (special) grid operator,
  defined by applying $(-)\bul$ separately to each matrix entry, and
  satisfying $\G\bul u\bul = (\G u)\bul$.
\end{remark}

The interest of special grid operators lies in the following fact:

\begin{proposition}\label{prop-operator-on-constraint}
  Let $\G$ be a special grid operator, and let $A$ and $B$ be subsets
  of $\R^2$. Define
  \[ \begin{array}{r@{~}c@{~}l}
    \G(A) &=& \s{\G u \mid u\in A}, \\
    \G\bul(B) &=& \s{\G\bul u \mid u\in B}.
  \end{array}
  \]
  Then $u\in\Z[\omega]$ is a solution to the two-dimensional grid
  problem for $A$ and $B$ if and only if $\G u$ is a solution to the
  two-dimensional grid problem for $\G(A)$ and $\G\bul(B)$. In
  particular, the two-dimensional grid problem for $A$ and $B$ is
  computationally equivalent to that for $\G(A)$ and $\G\bul(B)$.
\end{proposition}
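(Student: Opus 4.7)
The plan is to verify the biconditional directly by translating each of the two grid conditions through the operator $\G$, and then to observe that the translation can be computed efficiently in both directions.

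First I would establish the underlying bijection. Since $\G$ is a special grid operator, Remark~\ref{gridoperatorcomposition} gives that $\G$ is invertible with inverse $\G^{-1}$ again a special grid operator; in particular $\G(\Z[\omega]) = \Z[\omega]$, so the map $u \mapsto \G u$ is a bijection of $\Z[\omega]$ onto itself. Thus $u \in \Z[\omega]$ if and only if $\G u \in \Z[\omega]$, and I may restrict attention to showing the two grid constraints are preserved.

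Next I would handle each constraint separately. For the first constraint, by definition of $\G(A)$ and the injectivity of $\G$, we have $u \in A \iff \G u \in \G(A)$. For the second constraint, the key ingredient is the identity $(\G u)\bul = \G\bul u\bul$ from Remark~\ref{gridoperatorcomposition}. Since $\G\bul$ is also a special grid operator, hence bijective, $u\bul \in B \iff \G\bul u\bul \in \G\bul(B) \iff (\G u)\bul \in \G\bul(B)$. Combining the two equivalences yields that $u$ solves the grid problem for $(A,B)$ iff $\G u$ solves the grid problem for $(\G(A), \G\bul(B))$.

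For the ``in particular'' clause on computational equivalence, I would note that $\G$, $\G\bul$, $\G^{-1}$, and $(\G^{-1})\bul$ are all special grid operators, so each can be applied in a constant number of arithmetic operations on points of $\D[\omega]$. Therefore the transformations $A \mapsto \G(A)$, $B \mapsto \G\bul(B)$, and their inverses preserve the ``effective'' specification of convex sets as described in Remark~\ref{rem-convex-specify} (membership queries, intersections with lines, and enclosing polygons all transport through an invertible affine map of $\D[\omega]$), and solutions transport by $u \leftrightarrow \G u$. There is no real obstacle here: the only point requiring care is making sure one uses $\G\bul$ rather than $\G$ on the $B$-side, which is forced by the identity $(\G u)\bul = \G\bul u\bul$.
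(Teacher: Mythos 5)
Your proof is correct and follows essentially the same route as the paper's: both reduce the claim to the two equivalences $u\in A \iff \G u\in\G(A)$ and $u\bul\in B \iff (\G u)\bul\in\G\bul(B)$, the latter via the identity $(\G u)\bul = \G\bul u\bul$ from Remark~\ref{gridoperatorcomposition}. You spell out the bijectivity and the computational-equivalence bookkeeping a bit more explicitly than the paper does, but there is no substantive difference.
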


\begin{proof}
  Let $u\in\Z[\omega]$. Then $u$ is a solution to the grid
  problem for $A$ and $B$ if and only if $u\in A$ and $u\bul\in B$,
  if and only if $\G u\in \G(A)$ and $\G\bul u\bul = (\G u)\bul \in \G\bul(B)$,
  if and only if $\G u$ is a solution to the grid problem for $\G(A)$ and
  $\G\bul(B)$.
\end{proof}

% ......................................................................
\begin{figure}
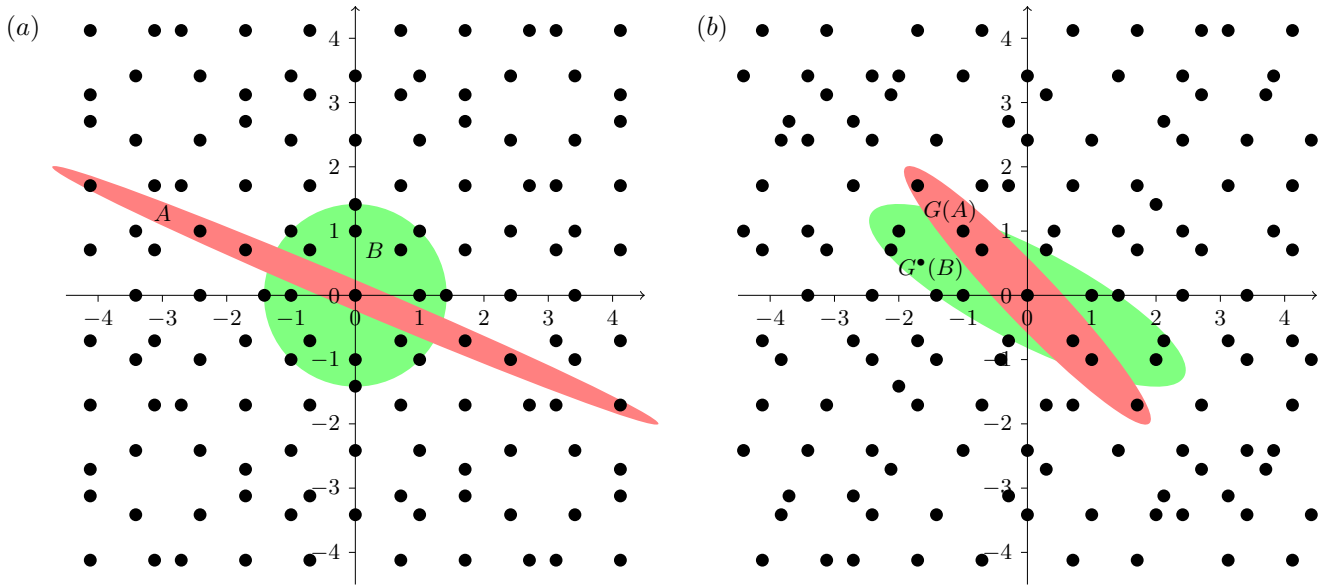

  \[ (a)~
  \mp{0.95}{\scalebox{0.95}{\input{grid-4-a.tex}}}
  \quad  
  (b)~
  \mp{0.95}{\scalebox{0.95}{\input{grid-4-b.tex}}}
  \]
  \caption{(a) The grid problem for two sets $A$ and $B$. (b) The
    grid problem with $\G(A)$ and $\G\bul(B)$. Note that the
    solutions of (a), which are the grid points in the set $A$, are
    in one-to-one correspondence with the solutions of (b), which
    are the grid points in the set $\G(A)$.}
  \label{fig-grid-4}
  \rule{\textwidth}{0.1mm}
\end{figure}
% ......................................................................

\begin{example}
  Figure~\ref{fig-grid-4}(a) illustrates the grid problem for a pair
  of sets $A$ and $B$. As before, the set $B$ is shown in green, and
  $\Grid(B)$ is shown as black dots. The set $A$ is shown in red, and
  the solutions to the grid problem are the seven grid points that lie
  in $A$.  Figure~\ref{fig-grid-4}(b) shows the grid problem for the
  sets $\G(A)$ and $\G\bul(B)$, where $\G$ is the special grid operator
  \[ \G = \left[\begin{matrix} 1 & \sqrt2 \\ 0 & 1\end{matrix}\right].
  \]
  Note that, as predicted by
  Proposition~\ref{prop-operator-on-constraint}, the solutions of the
  transformed grid problem are in one-to-one correspondence with those
  of the original problem; namely, in each case, there are seven
  solutions.
\end{example}

% ----------------------------------------------------------------------
\subsection{Ellipses}\label{ssec-ellipses}

Combining the results of Sections~\ref{ssec-uprightness} and
{\ref{ssec-gridops}}, we know that the grid problem for convex sets
$A$ and $B$ can be solved efficiently, provided that we can find a
grid operator $\G$ such that $\G(A)$ and $\G\bul(B)$ are sufficiently
upright. Our key technical result is that in case $A$ and $B$ are
ellipses, this is always the case.

\begin{definition}
\label{def-ellipses}
  Let $D$ be a positive definite real $2\times 2$-matrix with non-zero
  determinant, and let $p\in \R^2$ be a point. The {\em ellipse}
  defined by $D$ and centered at $p$ is the set
  \[ E = \s{u\in\R^2 \mid (u-p)\da D(u-p) \leq 1}.
  \]
\end{definition}

\begin{theorem}\label{thm-ellipse}
  Suppose $A,B\subseteq \R^2$ are ellipses. Then there exists a grid
  operator $\G$ such that $\G(A)$ and $\G^\bullet (B)$ are
  $1/\formula{\oneoverM}$-upright. Moreover, if $A$ and $B$ are
  $M$-upright, then $\G$ can be efficiently computed in $O(\log(1/M))$
  arithmetic operations.
\end{theorem}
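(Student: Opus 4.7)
The plan is to prove the theorem by an iterative ``skew reduction'' procedure that uses a small finite family of explicit grid operators to progressively make the pair of ellipses more upright. I would proceed as follows.

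First, I would reduce the problem to a single numerical quantity. Each ellipse $A$ is determined (up to translation, which is irrelevant to uprightness) by its positive definite matrix $D_A$. A short direct computation, using that the area of an ellipse with matrix $D$ is $\pi/\sqrt{\det D}$ and that the bounding box has sides determined by the diagonal entries of $D^{-1}$, shows that the uprightness of $A$ is a fixed monotonic function of the ratio $\det(D_A)/(D_A^{11}\,D_A^{22})$; in particular uprightness is bounded below by a positive constant as soon as the off-diagonal entry of $D_A$ is small relative to the diagonal ones. It is therefore natural to define a joint ``skew'' quantity $\sk(A,B)$ of the pair, for instance as a suitable combination of $\tr(D_A)\tr(D_B^\bullet)$ with $\det(D_A)\det(D_B^\bullet)$, so that $\sk(A,B)$ bounded above by a constant forces both $A$ and $B$ to be $1/\formula{\oneoverM}$-upright. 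The point of working with both ellipses simultaneously is that grid operators transform $A$ via $G$ and $B$ via $G^\bullet$, and these two actions are linked.

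Second, I would study how $\sk(A,B)$ changes under a grid operator $\G$. Under the transformation $D_A \mapsto \G^{-\top} D_A \G^{-1}$ and $D_B^\bullet \mapsto (\G^\bullet)^{-\top} D_B^\bullet (\G^\bullet)^{-1}$, the determinants of $D_A$ and $D_B^\bullet$ are preserved (because $\det\G=\pm 1$ forces $\det \G^\bullet=\pm 1$ as well), but the traces change. I would then exhibit a small finite catalog of candidate special grid operators built from $\lambda$ and $\omega$, such as
\[
R=\begin{bmatrix}1&\sqrt{2}\\0&1\end{bmatrix},\quad
R^\top,\quad
K=\begin{bmatrix}\lambda&0\\0&\lambda^{-1}\end{bmatrix},\quad
\begin{bmatrix}0&-1\\1&0\end{bmatrix},
\]
together with their conjugates by $\omega$-multiplication, and prove a case analysis: whatever the current shape of $(D_A, D_B^\bullet)$, at least one operator from this catalog sends $\sk(A,B)$ to at most $\lambda^{-1}\sk(A,B)$, provided $\sk(A,B)$ exceeds the target threshold. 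This is the heart of the argument and the main obstacle: one must check, typically by partitioning the parameter space of $(D_A,D_B^\bullet)$ according to which eigendirections are most elongated and whether the elongations of the two ellipses are ``aligned'' or ``misaligned'' under $\bullet$, that the corresponding operator from the catalog achieves the geometric decrease. The detailed case analysis is what Appendix~\ref{appendix-skew-red} is presumably devoted to.

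Third, I would iterate. Starting from $(A,B)$, compute $\sk(A,B)$; if it exceeds the threshold, pick the appropriate operator from the catalog, apply it, and update. Each step multiplies $\sk$ by at most $\lambda^{-1}$, so after $n$ steps the skew is at most $\lambda^{-n}\sk(A_0,B_0)$. Taking the product of all chosen operators yields the desired $\G$, which is a special grid operator by Remark~\ref{gridoperatorcomposition}.

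Finally, I would bound the number of iterations. Using that $A$ and $B$ are each $M$-upright, one gets an upper bound on the eccentricities of $A$ and $B$ in terms of $M$, hence an upper bound of the form $\sk(A_0,B_0) = O(\poly(1/M))$. Combined with the geometric decrease by factor $\lambda^{-1}$, the process terminates after $O(\log(1/M))$ steps, and each step requires only a constant number of arithmetic operations (choosing the operator from a fixed finite catalog and updating the two $2\times 2$ matrices). This gives the claimed $O(\log(1/M))$ bound. I expect the book-keeping of the case analysis in the second step — verifying that the fixed catalog really suffices to reduce $\sk$ in every configuration — to be by far the most delicate part of the proof.
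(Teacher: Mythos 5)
Your overall architecture is the same as the paper's: normalize each ellipse to a determinant-one matrix, measure non-uprightness by a scalar ``skew'' of the pair, prove a Step Lemma saying one of a catalog of special grid operators reduces the skew by a fixed factor whenever it is above a threshold, and iterate $O(\log(1/M))$ times. However, there are two genuine gaps. First, the entire mathematical content of the theorem is the Step Lemma, and you do not prove it --- you assert that ``at least one operator from this catalog sends $\sk$ to at most $\lambda^{-1}\sk$'' and defer the case analysis. That case analysis is not book-keeping; it occupies the whole of Appendix~A (the Shift, $R$, $K$, $A$, and $B$ Lemmas), and without it nothing is established.

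Second, and more seriously, the claim as you state it is \emph{false} for a fixed finite catalog. Writing a determinant-one matrix as $D=\bigl[\begin{smallmatrix} e\lambda^{-z} & b\\ b & e\lambda^{z}\end{smallmatrix}\bigr]$ with $e^2=b^2+1$, the uprightness is $\pi/(4e^2)$, so it depends only on $b$ and not on $z$; hence $z$ and $\zeta$ are completely unconstrained by the hypothesis that $A$ and $B$ are $M$-upright. When $z$ is large, a bounded shear such as $\bigl[\begin{smallmatrix}1&\sqrt2\\0&1\end{smallmatrix}\bigr]$ changes $b$ only by $\sqrt2\,e\lambda^{-z}$, which is exponentially small, so no fixed operator can achieve a constant-factor skew reduction in one step, and the number of steps cannot be bounded by any function of $M$ alone. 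This is exactly why the paper's catalog contains \emph{state-dependent} elements: the shears $A^n$ and $B^n$ with $n\approx\lambda^{c}$ chosen from the current state, and the conjugated operators $\sigma^k\G\sigma^k$ used to normalize the bias $\zeta-z$ into $[-1,1]$. These are still single grid operators computable with $O(1)$ arithmetic operations (counting exponentiation as one operation), which is how the paper salvages the complexity bound. A related smaller issue: your proposed skew functional built from $\tr(D_A)$ would not work, since the trace involves $\lambda^{\pm z}$ and so does not control uprightness; the correct invariant (after normalizing $\det=1$) is the sum of the squared off-diagonal entries $b^2+\beta^2$, equivalently the product of the diagonal entries.
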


Since the proof is long and technical, we give it in
Appendix~\ref{appendix-skew-red}.

% ----------------------------------------------------------------------
\subsection{The enclosing ellipse of a bounded convex set}

Our final step in the solution of the two-dimensional grid problem is
to generalize Theorem~\ref{thm-ellipse} from ellipses to arbitrary
bounded convex sets with non-empty interior.  This can be done because
every such set $A$ can be inscribed in an ellipse whose area is not
much greater than that of $A$, as stated in the following proposition.

\begin{proposition}\label{prop-enclosing-ellipse}
  Let $A$ be a bounded convex subset of $\R^2$ with non-empty
  interior. Then there exists an ellipse $E$ such that $A\seq E$, and
  such that 
  \[ \area(E) \leq \frac{4\pi}{3\sqrt 3}\area(A). 
  \]
\end{proposition}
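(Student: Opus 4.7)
The plan is to let $E$ be the L\"owner ellipse of $A$---an ellipse of minimum area containing $A$. Existence follows from a routine compactness argument: parametrizing ellipses by their center and positive-definite defining matrix, the set of those enclosing $A$ with bounded area is compact and $\area$ is continuous, so a minimizer exists. Since any invertible affine transformation of $\R^2$ multiplies all planar areas by a common nonzero factor, the ratio $\area(E)/\area(A)$ is affinely invariant; I may therefore apply an affine map $T$ to reduce to the case $T(E)=\D$, the closed unit disk. Writing $A'=T(A)$, the claim becomes $\area(A') \geq \tfrac{3\sqrt{3}}{4}$, for then $\area(E)/\area(A) = \pi/\area(A') \leq 4\pi/(3\sqrt{3})$. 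Note that $\tfrac{3\sqrt{3}}{4}$ is exactly the area of the equilateral triangle inscribed in $\partial\D$, which is consistent with the bound being tight when $A$ is a triangle.

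To establish the lower bound on $\area(A')$, I invoke the standard optimality characterization of the L\"owner ellipse (the KKT conditions for the ellipse-fitting problem, equivalently John's decomposition of the identity): there exist contact points $p_1,\ldots,p_k \in A' \cap \partial\D$ and positive weights $\lambda_1,\ldots,\lambda_k$ satisfying
\[
  \sum_i \lambda_i\, p_i \;=\; 0 \qquad\text{and}\qquad \sum_i \lambda_i\, p_i p_i^{\top} \;=\; c\, I
\]
for some $c>0$. A dimension count rules out $k\leq 2$, and by Carath\'eodory's theorem we may take $k$ to be a small constant (at most $5$, since the affine hull of $\{(p,pp^{\top}):|p|=1\}\subseteq\R^2\times\operatorname{Sym}^2(\R^2)$ has dimension $4$). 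Since $A'$ is convex, $A' \supseteq \operatorname{conv}\{p_1,\ldots,p_k\} =: P$, so it suffices to show $\area(P) \geq \tfrac{3\sqrt{3}}{4}$.

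The main obstacle is this final geometric estimate. Writing $p_j=(\cos\theta_j,\sin\theta_j)$, the two optimality conditions translate (via the double-angle identities $\cos^2\theta=(1+\cos 2\theta)/2$, $\sin^2\theta=(1-\cos 2\theta)/2$, $\cos\theta\sin\theta=\tfrac{1}{2}\sin 2\theta$) to the two complex equations $\sum_j \lambda_j e^{i\theta_j}=0$ and $\sum_j \lambda_j e^{2i\theta_j}=0$. For $k=3$, an elementary computation---separating real and imaginary parts and using positivity of the $\lambda_j$---forces $\{\theta_1,\theta_2,\theta_3\}$ to be the angles of an equilateral triangle with $\lambda_1=\lambda_2=\lambda_3$; hence $P$ is this triangle, of area exactly $\tfrac{3\sqrt{3}}{4}$. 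For $k=4$ and $k=5$ the same two complex equations, combined with the positivity of the weights, admit a richer set of solutions but the resulting polygons are strictly larger: for instance, $k=4$ with equal weights forces the inscribed square, of area $2$. A finite case analysis confirms $\area(P)\geq\tfrac{3\sqrt{3}}{4}$ in every case, which completes the proof.
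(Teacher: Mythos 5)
Your route is genuinely different from the paper's: you minimize the area of an enclosing ellipse (the L\"owner ellipse) and then invoke John's optimality conditions at the contact points, whereas the paper maximizes the area of an affine image of $A$ inside the unit disk and gets the needed property of the contact set by a direct perturbation argument, never using John's decomposition. Both set-ups funnel into the same endgame: one must show the contact points are spread out enough that each sector of the disk they cut out is filled to a fraction at least $\frac{3\sqrt3}{4\pi}$ by the corresponding inscribed triangle.

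It is exactly at this endgame that your proposal has a genuine gap. The $k=3$ computation is correct (the two moment conditions do force an equilateral triangle with equal weights), but for $k=4$ and $k=5$ the set of configurations satisfying $\sum_j\lambda_j e^{i\theta_j}=0$ and $\sum_j\lambda_j e^{2i\theta_j}=0$ with positive weights is a positive-dimensional continuum, not a finite list, so there is no ``finite case analysis'' to perform. For instance, for every $\phi\in(\pi/3,2\pi/3)$ the four points $1,e^{i\phi},-1,e^{-i\phi}$ admit positive John weights (take weights proportional to $-\cos\phi-\cos2\phi,\ 1,\ \cos\phi-\cos2\phi,\ 1$), giving a one-parameter family of kites of area $2\sin\phi$; your claim that $k=4$ forces the inscribed square is therefore false once unequal weights are allowed. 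Worse, the estimate you are trying to verify over this continuum --- that the contact polygon $P$ has area at least $\frac{3\sqrt3}{4}$ --- is itself an instance of the proposition being proved, since the unit disk is the L\"owner ellipse of $P$; so the reduction to contact points has not discharged the difficulty. The gap is fillable: one shows from the moment conditions that consecutive contact points are at most $2\pi/3$ apart. Indeed, if all contact angles avoided an open arc of length $g>2\pi/3$, say $(-g/2,g/2)$, then $T(\theta)=(\cos(g/2)-\cos\theta)(1+\cos\theta)$ is a degree-2 trigonometric polynomial that is nonnegative at every contact point, while (normalizing $\sum_j\lambda_j=1$) the moment conditions give $\sum_j\lambda_jT(\theta_j)=\cos(g/2)-\tfrac12<0$, a contradiction. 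With all gaps $g_i\leq 2\pi/3$, comparing each triangle of area $\tfrac12\sin g_i$ with its sector of area $\tfrac12 g_i$ and using that $\sin(g)/g$ is decreasing gives $\area(P)\geq\frac{3\sqrt3}{4\pi}\area(\D)$, which is the paper's final step as well. As written, though, your proof stops short of this, and it also leans on John's decomposition theorem as a black box where the paper is self-contained.
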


The proof is in Appendix~\ref{app-enclosing-ellipse}.  Note that
$\frac{4\pi}{3\sqrt 3} \approx 2.4184$.  We remark that the bound in
Proposition~\ref{prop-enclosing-ellipse} is sharp; the bound is
attained in case $A$ is an equilateral triangle. In this case, the
enclosing ellipse is a circle, and the ratio of the areas is exactly
$\frac{4\pi}{3\sqrt 3}$.
\[ 
\m{\begin{tikzpicture}[scale=0.7]
    \draw[fill=yellow!20] (0,0) circle (1);
    \draw[fill=blue!10] (1,0) -- (-0.5,.8660254037) --
    (-0.5,-.8660254037) -- cycle;
  \end{tikzpicture}}
\]

%----------------------------------------------------------------------
\subsection{General solution of the two-dimensional grid problem}
\label{ssec-proof-prop-algorithm-2d}

We are finally in a position to solve the two-dimensional grid problem
for arbitrary bounded convex sets of non-empty interior.  

\begin{theorem}\label{thm-main}
  There is an algorithm which, given two bounded convex subset $A$ and
  $B$ of $\R^2$ with non-empty interior, enumerates all solutions of
  the two-dimensional grid problem for $A$ and $B$. Moreover, if $A$
  and $B$ are $M$-upright, then the algorithm requires $O(\log(1/M))$
  arithmetic operations overall, plus a constant number of arithmetic
  operations per solution produced.
\end{theorem}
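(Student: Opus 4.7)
The plan is to reduce the general two-dimensional grid problem to the $M$-upright case by enclosing $A$ and $B$ in ellipses and then applying a grid operator that uprights those ellipses. Given convex $M$-upright sets $A$ and $B$, I would first invoke Proposition~\ref{prop-enclosing-ellipse} to produce enclosing ellipses $E_A\supseteq A$ and $E_B\supseteq B$ with $\area(E_A)\leq c\cdot\area(A)$ and $\area(E_B)\leq c\cdot\area(B)$, where $c=4\pi/(3\sqrt 3)$. Theorem~\ref{thm-ellipse} applied to $E_A$ and $E_B$ then yields a special grid operator $\G$ such that both $\G(E_A)$ and $\G\bul(E_B)$ are $1/M_0$-upright for a fixed constant $M_0$. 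By Proposition~\ref{prop-operator-on-constraint}, the original grid problem for $A$ and $B$ is computationally equivalent to the grid problem for $\G(A)$ and $\G\bul(B)$, with each solution $v$ of the transformed problem translating back to a solution $u=\G\inv v$ of the original (which lies in $\Z[\omega]$ since $\G\inv$ is a special grid operator).

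Next I would verify that $\G(A)$ and $\G\bul(B)$ are themselves upright by a fixed constant, so that Lemma~\ref{lem-upright-set} enumerates solutions to the transformed problem at constant cost per solution. Since $\G$ is area-preserving and $\G(A)\seq \G(E_A)$, we have
\[
\area(\BBox(\G(A)))\leq\area(\BBox(\G(E_A)))\leq M_0\cdot\area(\G(E_A))=M_0\cdot\area(E_A)\leq M_0 c\cdot\area(\G(A)),
\]
so $\up(\G(A))\geq 1/(M_0 c)$, and analogously $\up(\G\bul(B))\geq 1/(M_0 c)$. Thus the complete algorithm is: compute $E_A,E_B$; compute $\G$; enumerate the solutions of the transformed problem using Lemma~\ref{lem-upright-set} with uprightness parameter $1/(M_0c)$; translate each solution back through $\G\inv$.

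For the complexity, the only step whose cost depends on $M$ is the computation of $\G$ via Theorem~\ref{thm-ellipse}, which takes $O(\log(1/M_E))$ arithmetic operations when its inputs are $M_E$-upright. The remaining steps consume only constant time up front plus constant time per solution produced. The main obstacle is therefore to justify $M_E = \Omega(M)$, so that $O(\log(1/M_E))=O(\log(1/M))$ as claimed. This reduces to controlling not merely the area but also the bounding box of the enclosing ellipse relative to that of $A$; for the John-type enclosing ellipse underlying Proposition~\ref{prop-enclosing-ellipse} one has $E_A\seq 2A$ after centering, which yields $\area(\BBox(E_A))\leq 4\area(\BBox(A))$ and hence $\up(E_A)\geq \up(A)/4$, giving exactly the bound needed to close the complexity argument.
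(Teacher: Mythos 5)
Your proposal is correct and follows essentially the same route as the paper: enclose $A$ and $B$ in ellipses via Proposition~\ref{prop-enclosing-ellipse}, apply Theorem~\ref{thm-ellipse} to obtain a special grid operator making the ellipses (hence the sets) constant-upright, enumerate with Lemma~\ref{lem-upright-set}, and pull solutions back through $\G\inv$ using Proposition~\ref{prop-operator-on-constraint}. If anything, you are more careful than the paper on the complexity step, since you explicitly justify that the enclosing ellipse has uprightness $\Omega(M)$ (via the John-type containment $E_A\seq 2A$ after centering), a point the paper's proof leaves implicit when it invokes the $O(\log(1/M))$ bound of Theorem~\ref{thm-ellipse}.
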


\begin{proof}
  Given two such sets $A$ and $B$, we can first find ellipses $A'$ and
  $B'$ containing $A$ and $B$, respectively, and whose areas exceed
  those of $A$ and $B$ by at most a fixed constant factor $N$. Such
  ellipses exist by Proposition~\ref{prop-enclosing-ellipse}; moreover,
  it is not hard to see that they can be found efficiently if $A$ and
  $B$ are polygons with rational vertices. Since we assume that each
  given convex set is equipped with such an enclosing rational polygon
  (Remark~\ref{rem-convex-specify}), $A'$ and $B'$ can be found
  efficiently for arbitrary given $A$ and $B$.
  
  Next, by Theorem~\ref{thm-ellipse}, we can use $O(\log(1/M))$
  arithmetic operations to find a grid operator $\G$ such that
  $\G(A')$ and $\G\bul(B')$ are $1/\formula{\oneoverM}$-upright. It
  follows that $\G(A)$ and $\G\bul(B)$ are
  $N/\formula{\oneoverM}$-upright. By Lemma~\ref{lem-upright-set}, we
  can efficiently enumerate all solutions $u$ of the grid problem for
  $\G(A)$ and $\G\bul(B)$. By
  Proposition~\ref{prop-operator-on-constraint}, $\G\inv u$ then
  enumerates the solutions to the grid problem for $A$ and $B$.
\end{proof}

\begin{remark}
  Note that the complexity of $O(\log(1/M))$ overall operations in
  Theorem~\ref{thm-main} is exponentially better than the
  complexity of $O(1/M^2)$ per candidate we obtained in
  Lemma~\ref{lem-upright-set}. This improvement is entirely due to the
  use of grid operators in Theorem~\ref{thm-ellipse}.
\end{remark}

% ----------------------------------------------------------------------
\subsection{Scaled grid problems}

Sometimes we want to find solutions to a grid problem where the points
are taken in $\D[\omega]$ instead of $\Z[\omega]$. There are two
variants of this problem: we may either enumerate the solutions for a
{\em fixed} denominator exponent, or enumerate all solutions in order
of {\em increasing} least denominator exponent.

\begin{definition}
  Let $A$ and $B$ be subsets of $\R^2$. The {\em two-dimensional
    scaled grid problem for fixed $k\geq 0$} is to find
  $u\in\rtt{k}\Z[\omega]$ satisfying $u\in A$ and $u\bul\in B$.  The
  {\em two-dimensional scaled grid problem for arbitrary $k\geq 0$} is
  to find $u\in\D[\omega]$ satisfying $u\in A$ and $u\bul\in
  B$.
\end{definition}

\begin{proposition}\label{prop-scaled1}
  There is an algorithm which, given two bounded convex subsets $A$
  and $B$ of $\R^2$ with non-empty interior and an integer $k\geq 0$,
  enumerates all solutions of the two-dimensional scaled grid problem
  for $A$, $B$, and $k$. Moreover, if $A$ and $B$ are $M$-upright,
  then the algorithm requires $O(\log(1/M))$ arithmetic operations
  overall, plus a constant number of arithmetic operations per
  solution produced.
\end{proposition}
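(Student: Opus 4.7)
The plan is to reduce the scaled grid problem for fixed $k$ to the unscaled two-dimensional grid problem already handled by Theorem~\ref{thm-main}, via the substitution $v = \rt{k}u$. Under this substitution, $u \in \rtt{k}\Z[\omega]$ if and only if $v \in \Z[\omega]$, and the constraint $u \in A$ is equivalent to $v \in \rt{k}A$.

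The only subtle point is computing the image of the second constraint. Since $(\sqrt 2)^\bullet = -\sqrt 2$, we have $v^\bullet = (\rt{k}u)^\bullet = (-1)^k\rt{k}u^\bullet$, so the constraint $u^\bullet \in B$ is equivalent to $v^\bullet \in B'$, where $B' = \rt{k}B$ if $k$ is even and $B' = -\rt{k}B$ if $k$ is odd. In either case, $B'$ is the image of $B$ under an invertible linear isometry composed with a uniform scaling by $\rt{k}$. Similarly, $A' = \rt{k}A$ is a uniform scaling of $A$.

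Next I would check that $A'$ and $B'$ remain $M$-upright whenever $A$ and $B$ are. This is immediate from Definition~\ref{def-uprightness}: uniform scaling by a factor $s$ multiplies both $\area(A)$ and $\area(\BBox(A))$ by $s^2$, leaving the ratio unchanged, and a central reflection maps the bounding box to the bounding box of the reflected set, so uprightness is also preserved. Moreover, $A'$ and $B'$ are still bounded convex sets with non-empty interior, and an enclosing rational polygon for $A$ (resp.\ $B$) yields one for $A'$ (resp.\ $B'$) in the sense of Remark~\ref{rem-convex-specify}.

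Finally, I would invoke Theorem~\ref{thm-main} on the pair $(A', B')$ to enumerate all $v \in \Z[\omega]$ with $v \in A'$ and $v^\bullet \in B'$, using $O(\log(1/M))$ arithmetic operations overall plus a constant per solution, and then output $u = v/\rt{k}$ for each such $v$. By the equivalence established above, these are exactly the solutions of the scaled grid problem. No step is a real obstacle here; the only care needed is the parity check in the computation of $v^\bullet$, and the observation that uprightness is preserved under the affine transformation induced by $(-)^\bullet$ acting on $\rt{k}$.
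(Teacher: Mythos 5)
Your proposal is correct and is essentially the paper's own proof: the paper likewise rescales via $v=\rt{k}u$, observes that the scaled problem is equivalent to the unscaled grid problem for $\rt{k}A$ and $(-\sqrt2)^kB$, and invokes Theorem~\ref{thm-main}. Your additional remarks on the sign from $(\sqrt2)^\bullet=-\sqrt2$ and the invariance of uprightness under scaling and reflection are details the paper leaves implicit, and they are right.
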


\begin{proof}
  Note that $u=\rtt{k}v$ is a solution to the scaled grid problem for
  $A$, $B$, and $k$ if and only if $v$ is a solution to the (unscaled)
  grid problem for $\rt{k}A$ and $(-\sqrt2)^kB$. The claim then immediately
  follows from Proposition~\ref{thm-main}.
\end{proof}

\begin{proposition}\label{prop-scaled-increasing}
  There is an algorithm which, given two bounded convex subsets $A$
  and $B$ of $\R^2$ with non-empty interior, enumerates (the infinite
  sequence of) all solutions $u$ of the two-dimensional scaled grid
  problem for $A$, $B$, and arbitrary $k\geq 0$, in order of
  increasing $k$. Moreover, if $A$ and $B$ are $M$-upright, then the
  algorithm requires $O(\log(1/M))$ arithmetic operations overall,
  plus a constant number of arithmetic operations per solution
  produced.
\end{proposition}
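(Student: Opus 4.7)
The plan is to iterate over $k = 0, 1, 2, \ldots$, apply Proposition~\ref{prop-scaled1} at each level, and filter so that each solution is emitted at the level equal to its least denominator exponent. The key point is that the $O(\log(1/M))$ preprocessing cost, which comes from finding a grid operator $\G$ via Theorem~\ref{thm-ellipse}, can be paid once and reused at every level.

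Concretely, I first apply Theorem~\ref{thm-ellipse} once to (enclosing ellipses of) $A$ and $B$ to obtain a grid operator $\G$ such that $\G(A)$ and $\G\bul(B)$ are upright with uprightness bounded below by a fixed constant. This same $\G$ works at every level because uprightness is invariant under scalar dilation: since $\G$ is linear, $\G(\rt{k}A) = \rt{k}\,\G(A)$ and $\G\bul((-\sqrt{2})^k B) = (-\sqrt{2})^k\,\G\bul(B)$, and scaling a set by a nonzero real multiplies its area and the area of its bounding box by the same factor. At each level $k$, I then apply Lemma~\ref{lem-upright-set} to $\G(\rt{k}A)$ and $\G\bul((-\sqrt{2})^k B)$ and pull back through $\G\inv$ and $\rtt{k}$ to produce the set $S_k$ of all $u = \rtt{k}\G\inv v$ with $v\in\Z[\omega]$, $u\in A$, and $u\bul\in B$, in the spirit of the proof of Proposition~\ref{prop-scaled1}.

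To filter, I output $u$ iff its least denominator exponent equals $k$, i.e.\ iff $\rt{k}u = \G\inv v \notin \sqrt{2}\,\Z[\omega]$. Since $\G$ is a special grid operator, it acts bijectively on $\Z[\omega]$ and commutes with multiplication by the scalar $\sqrt{2}$, so $\G\inv(\sqrt 2\,\Z[\omega]) = \sqrt 2\,\Z[\omega]$, and the test reduces to the equivalent condition $v \notin \sqrt{2}\,\Z[\omega]$, which is an $O(1)$ check on the coefficient representation of $v$. Solutions failing the test were already emitted at an earlier level, so correctness and the required ordering on $k$ follow immediately.

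The main obstacle is the amortization. A standard lattice-covolume count (extending the one-dimensional density argument implicit in Proposition~\ref{prop-algorithm-1d}) yields $|S_k| = \Theta(4^k\,\area(A)\area(B))$ once $k$ is large enough for any solutions to exist, so $|S_k\setminus S_{k-1}| = \Theta(|S_k|)$, and the per-level enumeration cost $O(|S_k|)$ promised by Lemma~\ref{lem-upright-set} amortizes to $O(1)$ per new output. The only remaining concern is the finitely many initial levels with $S_k = \emptyset$: Lemma~\ref{lem-upright-set} recognizes an empty problem in $O(1)$ time per level, and this overhead can be absorbed into the setup either by folding it in directly or by precomputing the smallest $k_0$ for which the rescaled bounding boxes are large enough to contain a grid point and then starting the iteration at $k = k_0$, preserving the overall $O(\log(1/M))$ setup bound plus $O(1)$ per solution.
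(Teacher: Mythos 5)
Your proposal is correct and follows essentially the same route as the paper: iterate $k=0,1,2,\ldots$ through Proposition~\ref{prop-scaled1}, reuse the single grid operator $\G$ across all levels because uprightness is scale-invariant, and suppress duplicates with an $O(1)$ test for whether the least denominator exponent is exactly $k$ (the paper states this test as the parity condition on $a-c$ and $b-d$, which is equivalent to your $v\notin\sqrt2\,\Z[\omega]$ criterion). Your explicit amortization argument via $|S_k|=\Theta(4^k)$ is a welcome extra justification of the per-solution cost that the paper leaves implicit.
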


\begin{proof}
  This can be done by applying Lemma~\ref{prop-scaled1} to each
  $k=0,1,2,\ldots$, in increasing order. In principle, this method
  enumerates each solution multiple times, since each solution for $k$
  is also a solution for $k+1$. As a slight optimization, such
  duplicate enumeration can be avoided by noting that for $k>0$,
  $u=\rtt{k}(a\omega^3+b\omega^2+c\omega+d)$ is an element of
  $\Z[\omega]/\rt{k} - \Z[\omega]/\rt{\,k-1}$ if and only if $a-c$ or
  $b-d$ (or both) are odd.
  Finally, we note that, because uprightness is invariant under
  scaling, the grid operator $\G$ in the proof of
  Proposition~\ref{thm-main} only needs to be computed once,
  rather than once for every $k$.
\end{proof}

We end this section with some lower bounds on the number of
solutions to two-dimensional scaled grid problems.

\begin{lemma}\label{lem-2-solutions}
  Let $A$ and $B$ be convex subsets of $\R^2$, and let $k\geq 0$.
  Assume $A$ contains a circle of radius $r$ and $B$ contains a circle
  of radius $R$, such that $rR\geq \frac{1}{2^k}(1+\sqrt2)^2$. Then
  the scaled grid problem for $k$ has at least $2$ solutions.
\end{lemma}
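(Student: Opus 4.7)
The plan is to reduce the scaled grid problem to the upright-rectangle setting of Lemma~\ref{lem-upright-rectangles}, and then invoke Lemma~\ref{lem-grid-bounds} on two disjoint families of candidates. First, as in the proof of Proposition~\ref{prop-scaled1}, I would substitute $u = \rtt{k} v$, reducing the scaled grid problem for $(A,B,k)$ to the unscaled two-dimensional grid problem for $A' := \rt{k} A$ and $B' := (-\sqrt{2})^k B$. Under this scaling, $A'$ contains a disk of radius $r\rt{k}$ and $B'$ contains a disk of radius $R\rt{k}$.

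Next, I would replace these disks by inscribed axis-aligned squares: any disk of radius $\rho$ contains the closed axis-aligned square of side $\rho\sqrt{2}$ centered at its center. Thus $A'$ contains a square $A_0'$ of side $r\rt{k+1}$ and $B'$ contains a square $B_0'$ of side $R\rt{k+1}$. Since $A_0'\seq A'$ and $B_0'\seq B'$, it suffices to exhibit two solutions to the unscaled grid problem for the upright rectangles $A_0'$ and $B_0'$.

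Following the decomposition used in the proof of Lemma~\ref{lem-upright-rectangles}, I would split the candidate set according to Lemma~\ref{lem-zomega} into two disjoint types: $v = \alpha + \beta i$ (type~1) and $v = \alpha + \beta i + \omega$ (type~2), with $\alpha,\beta \in \Z[\sqrt 2]$. Each case reduces to a pair of one-dimensional grid problems for $\alpha$ and $\beta$ independently, and in both types the resulting 1D intervals have widths $\delta = r\rt{k+1}$ on the $A$-side and $\Delta = R\rt{k+1}$ on the $B$-side, since passing from type~1 to type~2 only shifts the intervals by $\pm 1/\sqrt 2$ without changing their lengths. The hypothesis $rR \geq (1+\sqrt 2)^2/2^k$ then gives
\[
\delta\Delta \;=\; rR\cdot 2^{k+1} \;\geq\; 2(1+\sqrt 2)^2 \;\geq\; (1+\sqrt 2)^2,
\]
so by Lemma~\ref{lem-grid-bounds} each of the four 1D problems has at least one solution. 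Hence there is at least one type~1 solution and at least one type~2 solution. These two solutions are distinct because the two types correspond to different parities in the $\omega$-expansion of $v$ (as per Lemma~\ref{lem-zomega}), and therefore we obtain at least $2$ distinct solutions overall.

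There is no real obstacle here; the argument is essentially a quantitative bookkeeping of the existence claim of Lemma~\ref{lem-grid-bounds}, used twice, and the two parity classes from Lemma~\ref{lem-zomega} automatically guarantee that the solutions produced are distinct. The only point worth checking carefully is that the translated $B$-interval arising in the type~2 reduction has the same width $\Delta$ as in the type~1 case, which is immediate since translation does not alter width.
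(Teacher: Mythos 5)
Your proof is correct, but the mechanism you use to obtain \emph{two} distinct solutions differs from the paper's in an interesting way. The paper also rescales to reduce to $k=0$ and reduces to coordinate-wise applications of Lemma~\ref{lem-grid-bounds}, but it inscribes \emph{two disjoint} squares of side $\delta=r/\sqrt2$ side by side in the $A$-disk (and one square of side $\Delta=R\sqrt2$ in the $B$-disk), producing two solutions $\alpha+i\beta$ and $\alpha'+i\beta$ of the \emph{same} coset type that are distinguished spatially, via $\alpha\neq\alpha'$; note that $\delta\Delta=rR$ there, so the hypothesis is used exactly at the threshold of Lemma~\ref{lem-grid-bounds}. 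You instead inscribe the single full square of side $\rho\sqrt2$ in each disk and distinguish the two solutions algebraically, by drawing one from each of the two cosets $\Z[\sqrt2]+\Z[\sqrt2]i$ and $\Z[\sqrt2]+\Z[\sqrt2]i+\omega$ of Lemma~\ref{lem-zomega}; these cosets are genuinely disjoint (equality would force $1/\sqrt2\in\Z[\sqrt2]$), so distinctness is automatic. Your version has a factor of $2$ to spare, since $\delta\Delta=2^{k+1}rR\geq 2(1+\sqrt2)^2$, which means your argument would in fact prove the lemma under the weaker hypothesis $rR\geq\frac{1}{2^{k+1}}(1+\sqrt2)^2$; the paper's spatial-separation argument is the one that needs the stated constant. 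Both are complete proofs of the lemma as stated.
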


\begin{proof}
  By scaling the problem by a factor of $\rt{k}$, we can assume
  without loss of generality that $k=0$. Let $\delta={r}/{\sqrt2}$
  and $\Delta=R\sqrt2$, and inscribe two squares of size
  $\delta\times\delta$ in the first circle, and one square of size
  $\Delta\times\Delta$ in the second circle, as shown here:
  \[ \begin{tikzpicture}[scale=1.5] 
    \draw[fill=red!25] (0,0) circle (1);
    \draw[shift={(0.48,0)}, scale=0.5] (0,.707) node[above] {\small $\delta$};
    \draw[shift={(0.48,0)}, scale=0.5, fill=yellow!20] (-.707,-.707) -- (.707,-.707) -- (.707,.707) -- (-.707,.707) -- cycle;
    \draw[shift={(-0.48,0)}, scale=0.5] (0,.707) node[above] {\small $\delta$};
    \draw[shift={(-0.48,0)}, scale=0.5, fill=yellow!20] (-.707,-.707) -- (.707,-.707) -- (.707,.707) -- (-.707,.707) -- cycle;
    \draw[shift={(-0.48,0)}, scale=0.5, dotted] (-.707,-2.7) node[above, anchor=base] {\small $x_0$} +(0, 0.5) -- (-.707,-.707);
    \draw[shift={(-0.48,0)}, scale=0.5, dotted] (.707,-2.7) node[above, anchor=base] {\small $x_1$} +(0, 0.5) -- (.707,-.707);
    \draw[shift={(0.48,0)}, scale=0.5, dotted] (-.707,-2.7) node[above, anchor=base] {\small $x_2$} +(0, 0.5) -- (-.707,-.707);
    \draw[shift={(0.48,0)}, scale=0.5, dotted] (.707,-2.7) node[above, anchor=base] {\small $x_3$} +(0, 0.5) -- (.707,-.707);
    \draw[dotted] (-0.6, -1.35) node[above, anchor=base] {\small $\alpha$} +(0, 0.25) -- (-0.6, -.1);
    \draw[dotted] (0.5, -1.35) node[above, anchor=base] {\small $\alpha'$} +(0, 0.25) -- (0.5, -.1);
    \draw[shift={(-0.45,0)}, scale=0.5, dotted] (-1.3,-.707) node[left] {\small $y_0$} -- (-.707,-.707);
    \draw[shift={(-0.45,0)}, scale=0.5, dotted] (-1.3,.707) node[left] {\small $y_1$} -- (-.707,.707);
    \draw[dotted] (-1.1, -.1) node[left] {\small $\beta$} -- (0.5, -.1);
    \def\dot#1{\draw (#1) node {$\bullet$};}
    \dot{-0.6, -.1} 
    \dot{0.5, -.1} 

    \begin{scope}[shift={(2.5,0)}, scale=0.8]
    \draw[fill=green!25] (0,0) circle (1);
    \draw[fill=yellow!20, line join=round] (-.707,-.707) -- (.707,-.707) -- (.707,.707) -- (-.707,.707) -- cycle;
    \draw (0,.707) +(0,-0.05) node[above] {\small $\Delta$};
    \draw[dotted] (-.707,-1.6875) node[above, anchor=base] {\small $z_0$} +(0, 0.3) -- (-.707,-.707);
    \draw[dotted] (.707,-1.6875) node[above, anchor=base] {\small $z_1$} +(0, 0.3) -- (.707,-.707);
    \draw[dotted] (1.1,.707) node[right] {\small $w_0$} -- (.707,.707);
    \draw[dotted] (1.1,-.707) node[right] {\small $w_1$} -- (.707,-.707);
    \def\dot#1{\draw (#1) node {$\bullet$};}
    \dot{0.1, 0.2}
    \draw[dotted] (0.1, -1.6875) node[above, anchor=base] {\small $\alpha\bul$} +(0, 0.3) -- (0.1, 0.2);
    \dot{-0.3, 0.2} 
    \draw[dotted] (-0.3, -1.6875) node[above, anchor=base] {\small $\alpha'^{\bullet}$} +(0, 0.3) -- (-0.3, 0.2);
    \draw[dotted] (1.1, 0.2) node[right] {\small $\beta\bul$} -- (-0.3, 0.2);
    \end{scope}
  \end{tikzpicture}
  \]
  Since $\delta\Delta=rR\geq (1+\sqrt2)^2$, by
  Lemma~\ref{lem-grid-bounds}, we can find
  $\alpha,\alpha',\beta\in\Z[\sqrt2]$ such that
  $\alpha\in[x_0,x_1]$, $\alpha\bul\in[z_0,z_1]$,
  $\alpha'\in[x_2,x_3]$, $\alpha'^{\bullet}\in[z_0,z_1]$,
  $\beta\in[y_0,y_1]$, and $\beta\bul\in[w_0,w_1]$.
  Then $u=\alpha+i\beta$ and $v=\alpha'+i\beta$ are two different
  solutions to the two-dimensional grid problem as claimed.
\end{proof}

\begin{lemma}\label{lem-exponential-grid}
  Let $A$ and $B$ be convex subsets of $\R^2$, and assume that the
  two-dimensional scaled grid problem for $k$ has at least two
  distinct solutions. Then for all $\ell\geq 0$, the scaled grid
  problem for $k+2\ell$ has at least $2^\ell+1$ solutions.
\end{lemma}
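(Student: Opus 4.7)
The plan is to take two given solutions $u_1 \neq u_2$ at scale $k$ and produce the desired $2^{\ell}+1$ solutions at scale $k+2\ell$ by forming equally spaced convex combinations along the segment from $u_2$ to $u_1$. Concretely, I would define, for each $j = 0, 1, \ldots, 2^{\ell}$,
\[
  c_j \;=\; u_2 + \frac{j}{2^{\ell}}(u_1 - u_2) \;=\; \frac{j}{2^{\ell}}\,u_1 + \Bigl(1-\frac{j}{2^{\ell}}\Bigr) u_2,
\]
and argue that each $c_j$ is a valid solution to the scaled grid problem at scale $k+2\ell$, and that the $2^{\ell}+1$ values $c_0,\dots,c_{2^{\ell}}$ are pairwise distinct.

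There are three things to check, each of which is short. First, membership in the ring: since $u_1,u_2 \in \rtt{k}\Z[\omega]$, we have $u_1-u_2 \in \rtt{k}\Z[\omega]$, and $\tfrac{1}{2^{\ell}}(u_1-u_2) \in \rtt{k+2\ell}\Z[\omega]$ because $2^{\ell} = (\sqrt 2)^{2\ell}$; since $u_2 \in \rtt{k}\Z[\omega] \seq \rtt{k+2\ell}\Z[\omega]$ as well, each $c_j$ lies in $\rtt{k+2\ell}\Z[\omega]$. Second, the grid constraints: $c_j \in A$ because $A$ is convex and $c_j$ is a convex combination of $u_1,u_2 \in A$; the key observation for the $\bullet$-constraint is that the coefficient $j/2^{\ell}$ is a rational number and is therefore fixed by $(-)\bul$, so
\[
  c_j\bul \;=\; \frac{j}{2^{\ell}}\,u_1\bul + \Bigl(1-\frac{j}{2^{\ell}}\Bigr) u_2\bul,
\]
which is a convex combination of $u_1\bul,u_2\bul \in B$ and hence lies in $B$ by convexity of $B$. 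Third, distinctness: $c_j - c_{j'} = \tfrac{j-j'}{2^{\ell}}(u_1-u_2)$ is nonzero whenever $j \neq j'$, since $u_1 \neq u_2$.

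Putting these three observations together immediately gives $2^{\ell}+1$ distinct solutions of the scaled grid problem for $k+2\ell$, completing the proof. There is no real obstacle here; the only subtlety worth flagging is the $\bullet$-invariance of the rational coefficients $j/2^{\ell}$, which is what makes the $\bullet$-side of the constraint reduce to a genuine convex combination inside $B$. The factor of $2$ in the exponent $k+2\ell$ is exactly what is needed so that $\tfrac{1}{2^{\ell}}$ contributes $2\ell$ to the denominator exponent in $\D[\omega]$.
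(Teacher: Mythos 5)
Your proposal is correct and follows essentially the same route as the paper's proof: equally spaced convex combinations $\phi u_1 + (1-\phi)u_2$ with $\phi = j/2^{\ell}$, using $\phi\bul = \phi$ to handle the $B$-constraint and the identity $2^{\ell} = (\sqrt 2)^{2\ell}$ for the denominator exponent. You spell out the ring-membership and distinctness checks slightly more explicitly than the paper does, but the argument is the same.
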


\begin{proof}
  Let $u\neq v$ be solutions of the scaled grid problem 
  for $k$. For each $j=0,1,\ldots,2^\ell$, let $\phi=\frac{j}{2^\ell}$, 
  and consider $u_j = \phi u + (1-\phi) v$. Then $u_j$ has denominator
  exponent $k+2\ell$. Also, $u_j$ is a convex combination of $u$ and
  $v$; moreover, since $\phi\bul=\phi$, we also know that $u_j\bul =
  \phi u\bul + (1-\phi)v\bul$ is a convex combination of $u\bul$ and
  $v\bul$. Since $A$ and $B$ are convex, it follows that $u_j$ is a
  solution of the scaled grid problem for $k+2\ell$, yielding
  $2^\ell+1$ distinct solutions.
\end{proof}

\begin{remark}
  The bound in Lemma~\ref{lem-exponential-grid} is sufficient for our
  purposes, but it is not tight. In fact, the number of solutions
  grows as $O(4^k)$.
\end{remark}

% ----------------------------------------------------------------------
\section{Solving a Diophantine equation}
\label{sec-diophantine}

We will be interested in solving equations of the following form:
given $\xi\in\D[\sqrt2]$, find $t\in\D[\omega]$ such that
\begin{equation}\label{eqn-tdatxi}
  t\da t = \xi.
\end{equation}
The following necessary condition is immediate:

\begin{lemma}[Necessary condition]\label{lem-nec-cond}
  The equation {\eqref{eqn-tdatxi}} has a solution only if $\xi\geq 0$
  and $\xi\bul\geq 0$.
\end{lemma}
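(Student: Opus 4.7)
The plan is to observe that both required inequalities follow from the fact that $t^\dagger t$ equals $|t|^2$, and this expression is preserved (up to applying $\bullet$ to $t$) by $\sqrt{2}$-conjugation.

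First I would handle $\xi \geq 0$ directly. By definition, $(-)^\dagger$ is complex conjugation, so for any $t \in \D[\omega] \subseteq \C$, we have $t^\dagger t = |t|^2$, which is a non-negative real number. Hence if $t^\dagger t = \xi$, then $\xi = |t|^2 \geq 0$.

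Next I would handle $\xi^\bullet \geq 0$ by applying the automorphism $(-)^\bullet$ to both sides of \eqref{eqn-tdatxi}. Since $(-)^\bullet$ is a ring automorphism of $\D[\omega]$, we get $(t^\dagger)^\bullet \cdot t^\bullet = \xi^\bullet$. The key small lemma I need is that $\bullet$ and $\dagger$ commute on $\D[\omega]$, so that $(t^\dagger)^\bullet = (t^\bullet)^\dagger$. This is a one-line check using the explicit formulas
\[
(a\omega^3+b\omega^2+c\omega+d)^{\dagger\bullet} = c\omega^3 - b\omega^2 + a\omega + d = (a\omega^3+b\omega^2+c\omega+d)^{\bullet\dagger}.
\]
With this in hand, $\xi^\bullet = (t^\bullet)^\dagger \, t^\bullet = |t^\bullet|^2 \geq 0$.

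The proof is genuinely short and there is no real obstacle; the only step worth explicitly pointing out is that $\dagger$ and $\bullet$ commute, which is immediate from the formulas defining them but which makes the reduction of the second inequality to the first completely transparent. Both conclusions $\xi \geq 0$ and $\xi^\bullet \geq 0$ are real-valued because $t^\dagger t$ and $(t^\bullet)^\dagger t^\bullet$ both land in the fixed field of complex conjugation, which is consistent with the assumption $\xi \in \D[\sqrt{2}] \subseteq \R$.
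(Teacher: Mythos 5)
Your proof is correct and follows the same route as the paper: $\xi = t^\dagger t = |t|^2 \geq 0$, and then $\xi^\bullet = (t^\bullet)^\dagger t^\bullet \geq 0$ since $t^\bullet$ is again a complex number. The paper leaves the commutation of $\dagger$ and $\bullet$ implicit; your explicit one-line verification of it is a harmless (and correct) addition.
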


\begin{proof}
  Assume $t\da t = \xi$. Since $t$ is a complex number, we have
  $\xi=t\da t\geq 0$. Similarly, since $t\bul$ is a complex number, we
  have $\xi\bul = (t\bul)\da(t\bul)\geq 0$.
\end{proof}

The following theorem states that the problem of solving the equation
{\eqref{eqn-tdatxi}} can be reduced to the prime factorization problem
for integers. 

\begin{theorem}\label{thm-diophantine}
  Let $\xi\in\D[\sqrt 2]$. Note that $\xi\bul\xi\in\D$, so we can
  write $\xi\bul\xi=\frac{n}{2^\ell}$ for some $n\in\Z$ and $\ell\in\N$. 
  There exists a probabilistic algorithm which, given $\xi$ and, in 
  case $n\neq 0$, a prime factorization of $n$, determines whether or 
  not the equation {\eqref{eqn-tdatxi}} has a solution, and finds a 
  solution if there is one. Moreover, the expected runtime of this 
  algorithm is polynomial in the size of $n$.
\end{theorem}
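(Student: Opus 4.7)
The plan is to reduce the norm equation $t\da t = \xi$ to a collection of prime-by-prime norm equations in $\Z[\omega]$ over $\Z[\sqrt 2]$, and ultimately to square-root extractions modulo odd rational primes, which admit a probabilistic polynomial-time algorithm (e.g., Tonelli--Shanks). First I would reduce to the integer case: since $\sqrt 2 = \omega + \omega\da \in \Z[\omega]$ and $(\sqrt 2)\da \sqrt 2 = 2$, multiplying a candidate $t$ by $\rt{k}$ scales the right-hand side by $2^k$, so after clearing denominators one may assume $\xi\in\Z[\sqrt 2]$ and look for $t\in\Z[\omega]$. The sign conditions of Lemma~\ref{lem-nec-cond} are easy to test, and if they fail the algorithm reports unsolvability.

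Next, from the given prime factorization of $n\in\Z$ I would derive a prime factorization $\xi = u\prod_i \pi_i^{e_i}$ in $\Z[\sqrt 2]$. Every prime of $\Z[\sqrt 2]$ lies above some rational prime dividing $n$, and the behavior of $p$ in $\Z[\sqrt 2]$ depends only on $p \bmod 8$: ramified if $p=2$, split if $p\equiv\pm 1\pmod{8}$, inert if $p\equiv\pm 3\pmod{8}$. Once such a $p$ is located, an explicit prime $\pi$ above $p$ is easy to write down, and the exponent $e_i$ is determined by the $p$-adic valuation of $n$ together with the residue degree. This step is deterministic and polynomial in $\log n$.

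Now the main step. Because $s\mapsto s\da s$ is multiplicative on $\Z[\omega]$, solving $s\da s = \xi$ reduces to solving $\rho\da \rho = \pi\cdot(\textrm{unit})$ at each prime $\pi_i$ and then multiplying the partial solutions. For each such $\pi$ I would classify its behavior in $\Z[\omega]$: if $\pi$ splits or ramifies, a preimage $\rho$ under the norm map exists; if $\pi$ is inert, $\pi$ itself is not a norm but $\pi^2 = \pi\da\pi$ is, so solvability forces the corresponding $e_i$ to be even and contributes $\pi^{e_i/2}$ to $t$. Combined with a final check that the residual unit $u$ is itself a norm (a finite computation in $\Z[\sqrt 2]^{\times}$ modulo the image of the norm, exploiting the explicit units $\lambda = 1+\sqrt 2$ and the identity $\delta\da\delta = \lambda\sqrt 2$), this determines solvability and, when solvable, produces $t$ up to an overall unit which is then corrected.

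The hard part, and the sole source of probabilistic runtime, is producing a preimage $\rho\in\Z[\omega]$ of a split prime $\pi$ of $\Z[\sqrt 2]$ that lies above an odd rational prime $p$. This reduces to finding an integer $x$ with $x^2 \equiv a \pmod{p}$ for an explicit $a\in\{-1,2\}$, depending on which quadratic subfield of $\mathbb{Q}(\omega)$ splits at $p$, and then reading $\rho$ off by an Euclidean-style gcd computation in $\Z[\omega]$. Tonelli--Shanks computes such a square root in expected time polynomial in $\log p$, while all remaining operations (factor lifting, the unit-group check, and inverting the $\rt{k}$ scaling) are deterministic and polynomial in the input size, giving the claimed expected runtime polynomial in the size of $n$.
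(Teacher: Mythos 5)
Your proposal follows essentially the same route as the paper's Appendix~C: reduce to $\xi\in\Z[\sqrt 2]$ by clearing powers of $\sqrt 2$, derive the prime factorization of $\xi$ in $\Z[\sqrt 2]$ from the given factorization of $n$ using the splitting behavior of rational primes modulo $8$, solve the relative norm equation prime by prime via multiplicativity (with the non-norm primes over $p\equiv 7\pmod{8}$ forcing even exponents), and correct the residual unit, with square-root extraction modulo $p$ as the only probabilistic ingredient. Two small corrections: writing down a prime of $\Z[\sqrt2]$ above a split $p\equiv\pm1\pmod{8}$ itself requires a square root of $2$ modulo $p$, so that step is also probabilistic rather than deterministic; and the square roots needed for the norm preimages in $\Z[\omega]$ are of $-1$ (for $p\equiv 1\pmod 4$) and $-2$ (for $p\equiv 3\pmod 8$), so the relevant set is $\{-1,-2\}$ rather than $\{-1,2\}$.
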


Theorem~\ref{thm-diophantine} is a well-known result in computational
algebraic number theory. For the benefit readers who are not experts
in number theory, we give an elementary and more or less
self-contained proof in Appendix~\ref{app-rings}.

% ----------------------------------------------------------------------
\section{The approximate synthesis algorithm}
\label{sec-algorithm}

% ----------------------------------------------------------------------
\subsection{The approximate synthesis problem}
\label{ssec-app-synth-problem}

Recall that the $z$-rotation by angle $\theta$ is the unitary operator
\[ \Rz(\theta) = e^{-i\theta Z/2} = \zmatrix{cc}{e^{-i\theta/2} & 0 \\
  0 & e^{i\theta/2}}.
\]
\begin{definition}\label{def-synthesis-problem}
  Given $\theta$ and a precision $\epsilon>0$, the {\em approximate
    synthesis problem for $z$-rotations} is to find an operator $U$
  expressible in the single-qubit Clifford+$T$ gate set, such that
  \begin{equation}\label{eqn-norm}
    \norm{\Rz(\theta) - U} \leq \epsilon.
  \end{equation}
  Moreover, we want the $T$-count of the operator $U$ to be as small
  as possible; here, the $T$-count of a Clifford+$T$ circuit is the
  number of $T$-gates appearing in it. The norm in {\eqref{eqn-norm}}
  is the operator norm.
\end{definition}

It is known from {\cite{Kliuchnikov-etal}} that a single-qubit
operator can be exactly represented over the Clifford+$T$ gate set if
and only if it can be written of the form
\begin{equation}\label{eqn-u-ell}
 U = \zmatrix{cc}{u & -t\da\omega^\ell \\ t & u\da\omega^\ell},
\end{equation}
where $u,t\in\D[\omega]$ and $\ell$ is an integer. The following lemma
shows that, for the purposes of approximate synthesis, we may assume
without loss of generality that $\ell=0$.

\begin{lemma}\label{lem-ell}
  If $\epsilon<|1-e^{i\pi/8}|$, then all solutions of the approximate
  synthesis problem have the form
  \begin{equation}\label{eqn-u}
    U = \zmatrix{cc}{u & -t\da \\ t & u\da}.
  \end{equation}
  If $\epsilon\geq|1-e^{i\pi/8}|$, then there exists a solution of
  $T$-count 0 (i.e., a Clifford operator), and it is also of the form
  {\eqref{eqn-u}}.
\end{lemma}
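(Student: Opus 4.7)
The plan is to extract a constraint on $\ell$ from the determinant. For $U$ of the form (\ref{eqn-u-ell}), a direct expansion using unitarity ($|u|^2 + |t|^2 = 1$) gives $\det U = \omega^\ell$, whereas $\det \Rz(\theta) = 1$. Both halves of the lemma thus reduce to estimating how close a $2\times 2$ unitary of determinant $\omega^\ell$ can come in operator norm to one of determinant $1$.

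For the first half, I would set $V = \Rz(\theta)\,U\inv$, which is unitary with $\det V = \omega^{-\ell}$, and use unitary invariance of the operator norm to write $\|\Rz(\theta) - U\| = \|V - I\|$. Since $V$ is normal, it is diagonalizable with eigenvalues $\lambda_1 = e^{i\phi_1}, \lambda_2 = e^{i\phi_2}$ on the unit circle, so $\|V - I\| = \max_j|1-\lambda_j|$; the constraint $\lambda_1\lambda_2 = \omega^{-\ell}$ forces $\phi_1 + \phi_2 \equiv -\ell\pi/4 \pmod{2\pi}$. A short extremal argument, writing $\phi_j = a \pm \psi$ and observing that $\max(|1 - e^{i(a+\psi)}|,|1 - e^{i(a-\psi)}|)$ is minimized at $\psi = 0$, yields
\[
  \|V - I\| \;\geq\; 2\,|\sin(a/2)|,
\]
where $a \in (-\pi/2,\pi/2]$ is the representative of $-\ell\pi/8$ modulo $\pi$ closest to $0$. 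Evaluating this for $\ell \in \{1,\ldots,7\}$, the minimum is $2\sin(\pi/16) = |1 - e^{i\pi/8}|$, attained at $\ell = \pm 1$. Hence the hypothesis $\epsilon < |1 - e^{i\pi/8}|$ forces $\omega^\ell = 1$, and $U$ takes the form (\ref{eqn-u}).

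For the second half, I would exhibit the Clifford rotations $\Rz(k\pi/2) = \mathrm{diag}(\omega^{-k},\omega^k) = \omega^{-k}S^k$ for $k \in \{0,1,2,3\}$. Each is a Clifford operator (hence of $T$-count $0$) and fits the form (\ref{eqn-u}) with $u = \omega^{-k} \in \Z[\omega]$ and $t = 0$. For any $\theta$, rounding to the nearest multiple of $\pi/2$ produces $k$ with $|\theta - k\pi/2| \leq \pi/4$, and a direct calculation gives
\[
  \|\Rz(\theta) - \Rz(k\pi/2)\| \;=\; |1 - e^{i(\theta - k\pi/2)/2}| \;\leq\; |1 - e^{i\pi/8}| \;\leq\; \epsilon,
\]
which yields a solution of the required form.

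The main obstacle is the extremal argument in the first half: one must carefully reduce $\arg(\omega^{-\ell})$ modulo $2\pi$ to a canonical representative and then verify that the minimum of the lower bound over $\ell \in \{1,\ldots,7\}$ really is $|1 - e^{i\pi/8}|$, which is what pins down the specific threshold in the statement. Everything else is routine computation.
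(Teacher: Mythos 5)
Your proposal is correct and follows essentially the same route as the paper: both arguments reduce the first claim to the eigenvalues of the unitary $U\Rz(\theta)\inv$ (equivalently your $V=\Rz(\theta)U\inv$), use $\norm{V-I}=\max_j|1-e^{i\phi_j}|$ together with $\det = \omega^{\mp\ell}$ to force $\omega^\ell=1$ below the threshold $|1-e^{i\pi/8}|=2\sin(\pi/16)$, and both exhibit the same Clifford rotation $u=\omega^{-j}$, $t=0$ with $j$ the nearest integer to $2\theta/\pi$ for the second claim. The only cosmetic difference is that you phrase the eigenvalue step as an extremal lower bound minimized over $\ell\neq 0$, whereas the paper argues directly that $|1-e^{i\phi_j}|<|1-e^{i\pi/8}|$ forces $|1-\omega^\ell|<|1-\omega|$.
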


\begin{proof}
  To prove the first claim, assume $\epsilon<|1-e^{i\pi/8}|$.  Let $U$
  be of the form {\eqref{eqn-u-ell}}, satisfying
  {\eqref{eqn-norm}}. Let $e^{i\phi_1}$ and $e^{i\phi_2}$ be the
  eigenvalues of $U\Rz(\theta)\inv$, with
  $\phi_1,\phi_2\in[-\pi,\pi]$.  Using {\eqref{eqn-norm}}, we have
  $\norm{I-U\Rz(\theta)\inv}\leq\epsilon< |1-e^{i\pi/8}|$.  On the
  other hand, $\norm{I-U\Rz(\theta)\inv} = \max\s{|1-e^{i\phi_1}|,
    |1-e^{i\phi_2}|}$. It follows that $|1-e^{i\phi_j}| <
  |1-e^{i\pi/8}|$ for $j=1,2$, hence $-\pi/8<\phi_j<\pi/8$, hence
  $-\pi/4<\phi_1+\phi_2<\pi/4$, so $|1-e^{i(\phi_1+\phi_2)}| <
  |1-e^{i\pi/4}| = |1-\omega|$.  On other hand, we have
  $e^{i(\phi_1+\phi_2)} = \det(U\Rz(\theta)\inv) = \omega^\ell$, hence
  $|1-\omega^\ell|<|1-\omega|$, which implies
  $\omega^\ell=1$. Therefore, $U$ is of the form {\eqref{eqn-u}}.  To
  prove the second claim, assume $\epsilon\geq|1-e^{i\pi/8}|$. Let $j$
  be the integer closest to $\frac{2\theta}{\pi}$, so that
  $|j-\frac{2\theta}{\pi}|\leq\frac{1}{2}$, or equivalently,
  $|j\frac{\pi}{4}-\frac{\theta}{2}| \leq \frac{\pi}{8}$. Let $U$ be
  as in {\eqref{eqn-u}}, with $u=\omega^{-j}$ and $t=0$.  Then
  $\norm{\Rz(\theta) - U} = |e^{i\theta/2} - u\da| = |1 - u\da
  e^{-i\theta/2}| = |1 - e^{i(j\frac{\pi}{4}-\frac{\theta}{2})}| \leq
  |1 - e^{i\frac{\pi}{8}}| \leq \epsilon$. So {\eqref{eqn-norm}}
  holds. But $U=S^j\omega^{-j}$ is a Clifford operator, so has
  $T$-count 0.
\end{proof}

Our strategy is therefore to approximate $\Rz(\theta)$ by an operator 
$U$ of the form {\eqref{eqn-u}}, with $u,t\in\D[\omega]$, and then use 
the exact synthesis algorithm of {\cite{Kliuchnikov-etal}} to synthesize 
$U$ into a sequence of Clifford+$T$ gates with minimal $T$-count. The 
following lemma relates the $T$-count of the resulting circuit to the 
least denominator exponent $k$ of $u$.

\begin{lemma}\label{lem-2k-2}
  Let $U$ be a unitary operator as in {\eqref{eqn-u}}, where
  $u,t\in\D[\omega]$, and let $k$ be the least denominator exponent of
  $u$. Then the $T$-count of $U$ is either $2k-2$ or $2k$. Moreover,
  if $k>0$ and $U$ has $T$-count $2k$, then $U'=TUT\da$ has $T$-count
  $2k-2$.  We further note that $\norm{\Rz(\theta) - U'} =
  \norm{\Rz(\theta) - U}$, so for the purpose of solving
  {\eqref{eqn-norm}}, it does not matter whether $U$ or $U'$ is
  used. Hence, without loss of generality, we may assume that $U$ as
  in {\eqref{eqn-u}} always has $T$-count exactly $2k-2$ when $k>0$,
  and $0$ when $k=0$.
\end{lemma}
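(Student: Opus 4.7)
The plan is to invoke the exact synthesis theorem of Kliuchnikov, Maslov, and Mosca \cite{Kliuchnikov-etal} as a black box. That theorem produces, for any unitary matrix with entries in $\D[\omega]$, a minimal-$T$-count Clifford+$T$ circuit, and shows that the minimal $T$-count is a simple function of the least denominator exponent of the matrix. For $U$ of the form {\eqref{eqn-u}}, the four entries are $u$, $u\da$, $t$, $-t\da$, and because $(-)\da$ preserves denominator exponents together with the unitarity relation $u\da u + t\da t = 1$, the least denominator exponent of $U$ as a matrix is exactly $k$, the least denominator exponent of $u$. The KMM analysis then gives $T$-count $0$ when $k=0$, and one of the two values $2k-2$ or $2k$ when $k \geq 1$; the two cases are distinguished by a mod-$\sqrt{2}$ congruence on the first column of $\rt{k} U$, viewed as a pair in $\Z[\omega]$.

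For the reduction claim, a direct matrix computation with $T = \zmatrix{cc}{1 & 0 \\ 0 & \omega}$ gives
\[
U' = T U T\da = \zmatrix{cc}{u & -(\omega t)\da \\ \omega t & u\da},
\]
which is again of the form {\eqref{eqn-u}} with $u$ unchanged and $t$ replaced by $\omega t$. In particular the least denominator exponent of $u$ is still $k$, so $U'$ also has $T$-count $2k$ or $2k-2$. The key observation is that multiplication of $t$ by $\omega$ flips the mod-$\sqrt{2}$ congruence invariant from the preceding paragraph, so whenever $U$ falls in the ``$2k$'' case, $U'$ falls in the ``$2k-2$'' case. I would verify this flip explicitly by tracking the first reduction step of the KMM normal-form procedure; this is the one technical point of the lemma that is not immediate.

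The norm equality is straightforward: $\Rz(\theta)$ is diagonal and hence commutes with $T$, so by unitary invariance of the operator norm,
\[
\norm{\Rz(\theta) - U'} = \norm{T(\Rz(\theta) - U)T\da} = \norm{\Rz(\theta) - U}.
\]
The ``without loss of generality'' conclusion is then immediate: whenever a candidate $U$ of the form {\eqref{eqn-u}} arises with $T$-count $2k > 0$, we replace it by $U'$, which approximates $\Rz(\theta)$ to the same precision and has the optimal $T$-count $2k - 2$. The main obstacle is the parity-flipping verification inside the KMM reduction; everything else is either a black-box citation or a direct calculation.
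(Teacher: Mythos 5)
Your proposal is correct and takes essentially the same route as the paper: both reduce the $T$-count claims to the residue/normal-form analysis behind exact synthesis (the paper discharges exactly the verification you defer by inspecting Figure~2 of {\cite{ma-remarks}}, which catalogues the possible $k$-residues of operators of the form {\eqref{eqn-u}} together with their $T$-counts and the effect of conjugation by $T$), and both obtain the norm identity from the commutation of $T$ with $\Rz(\theta)$. Two points to watch when you carry out the deferred check: multiplication by $\omega$ only swaps the residues $1\leftrightarrow\omega$ modulo $\sqrt2$ and fixes the residue $1+\omega$, so you must first use the special form {\eqref{eqn-u}} to exclude that residue (it is the one corresponding to odd $T$-count); and the claim that $u$ and $t$ have the same least denominator exponent, which you assert in passing, is proved in the paper via the fact that $\sqrt2\divides s$ in $\Z[\omega]$ if and only if $2\divides s\da s$.
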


\begin{proof}
  Because $U$ is unitary, we have $t\da t+u\da u=1$. We first claim
  that $t$ and $u$ have the same least denominator exponent. Indeed,
  in the ring $\Z[\omega]$, an element $s$ is divisible by $\sqrt2$ if and
  only if ${s}\da s$ is divisible by 2. The left-to-right implication
  is obvious, and the right-to-left implication follows, e.g., from
  Lemma~2 of {\cite{Giles-Selinger}}. Then for any $k\geq 0$, we have
  $\rt{k}u\in\Z[\omega]$ iff $2^k{u}\da u\in\Z[\omega]$ iff
  $2^k(1-{t}\da t)\in\Z[\omega]$ iff $2^k{t}\da {t}\in\Z[\omega]$ iff
  $\rt{k}{t}\in\Z[\omega]$. This proves that $u$ and $t$ have the same
  denominator exponents, and in particular, the same least denominator
  exponent.

  The claims about the $T$-counts of $U$ and $U'$ follow by inspection
  of Figure~2 of {\cite{ma-remarks}}. Using the terminology of
  Definitions~7.4 and 7.6 of {\cite{ma-remarks}}, this figure shows
  every possible $k$-residue of a Clifford+$T$ operator, modulo a
  right action of the group $\groupspan{S,X,\omega}$. Because $U$ is
  of the form {\eqref{eqn-u}}, only a subset of the $k$-residues is
  actually possible, and the figure shows that for this subset, the
  $T$-count is $2k$ or $2k-2$. Moreover, in each of the possible cases
  where $k>0$ and $U$ has $T$-count $2k$, the figure also shows that
  $U'=TUT\da$ has $T$-count $2k-2$.

  For the final claim, we have $\norm{\Rz(\theta) -
    U}=\norm{T\Rz(\theta)T\da - TUT\da} = \norm{\Rz(\theta) - U'}$
  because $\Rz(\theta)$ and $T$ commute.
\end{proof}

So our task is to find an operator $U$ of the form {\eqref{eqn-u}},
satisfying {\eqref{eqn-norm}}, and such that the denominator exponent 
of $u$ is as small as possible. It is useful to first re-express 
{\eqref{eqn-norm}} as a property of $u$. Let $z=e^{-i\theta/2}$. Using
$u\da u+t\da t=1$ and $z\da z=1$, we have
\[ \norm{\Rz(\theta) - U}^2
  = \norm{u-z}^2 + \norm{t}^2
  = (u-z)\da(u-z) + {t}\da t
  = {u}\da u+{t}\da t - z\da u - {u}\da z + z\da z
  = 2 - 2\Realpart(z\da u).
\]
So {\eqref{eqn-norm}} is equivalent to $2 - 2\Realpart(z\da u)\leq
\epsilon^2$, or equivalently, $\Realpart(z\da u) \geq 1 -
\frac{\epsilon^2}{2}$. If we identify the complex numbers $z=x+yi$ and
$u=a+bi$ with 2-dimensional real vectors $\vec z = (x,y)^T$ and $\vec u
= (a,b)^T$, then $\Realpart(z\da u)$ is just their inner product $\vec
z\cdot \vec u$, and therefore {\eqref{eqn-norm}} is equivalent to
\begin{equation}\label{eqn-zu}
  \vec z\cdot \vec u \geq 1 - \frac{\epsilon^2}{2}.
\end{equation}
In summary, the approximate synthesis problem reduces to the following:

\begin{problem}\label{prob-main}
  Given an angle $\theta$ and a precision $\epsilon>0$, find
  $u,t\in\D[\omega]$ such that
  \begin{enumerate}\alphalabels
  \item ${t}\da t+{u}\da u=1$,
  \item $\vec z\cdot \vec u \geq 1 - \epsilon^2/2$, where
    $z=e^{-i\theta/2}$, with notation as above,
  \item and such that $u$ has the smallest denominator exponent we can find.
  \end{enumerate}
\end{problem}

% ----------------------------------------------------------------------
\subsection{Reduction to a grid problem and a Diophantine equation}
\label{ssec-reduction}

As we will now show, Problem~\ref{prob-main} can be reduced to a
scaled grid problem and a Diophantine equation, and therefore it can
be efficiently solved by the methods of Sections~\ref{sec-grid-2d} and
{\ref{sec-diophantine}}. Let $\Disk$ be the closed unit disk, regarded
either as a subset of $\C$ or of $\R^2$.

\begin{lemma}\label{lem-reduction-a}
  If $u,t\in\D[\omega]$ and ${t}\da {t}+{u}\da u=1$, then $u\in\Disk$
  and ${u}\bul\in\Disk$.
\end{lemma}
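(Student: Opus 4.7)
The plan is to prove both containments by exploiting the fact that $t^\dagger t$ and $u^\dagger u$ are nonnegative real numbers (namely, the squared complex moduli $|t|^2$ and $|u|^2$), together with the compatibility of the two automorphisms $(-)^\dagger$ and $(-)^\bullet$.

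First, I would dispose of the containment $u \in \Disk$. Since $t$ is a complex number, $t^\dagger t = |t|^2 \geq 0$, so the hypothesis $t^\dagger t + u^\dagger u = 1$ forces $|u|^2 = u^\dagger u \leq 1$, which is exactly $u \in \Disk$.

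For $u^\bullet \in \Disk$, my plan is to apply the $\sqrt{2}$-conjugation $(-)^\bullet$ to both sides of the equation $t^\dagger t + u^\dagger u = 1$. Since $(-)^\bullet$ is a ring automorphism of $\D[\omega]$, it is additive and multiplicative and fixes $1$, so it will suffice to know that $(-)^\bullet$ and $(-)^\dagger$ commute; then $(t^\dagger t)^\bullet = (t^\bullet)^\dagger t^\bullet$ and similarly for $u$, yielding $(t^\bullet)^\dagger t^\bullet + (u^\bullet)^\dagger u^\bullet = 1$. Since $t^\bullet$ and $u^\bullet$ are complex numbers, this reads $|t^\bullet|^2 + |u^\bullet|^2 = 1$, so $|u^\bullet|^2 \leq 1$ and $u^\bullet \in \Disk$.

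The only step requiring any verification is that $(-)^\dagger$ and $(-)^\bullet$ commute on $\D[\omega]$, which is a one-line direct computation on the generators: applying the explicit formulas from the definitions of the two automorphisms to $a\omega^3 + b\omega^2 + c\omega + d$ in either order produces $c\omega^3 - b\omega^2 + a\omega + d$. I do not anticipate any real obstacle here; the whole argument is essentially two applications of the identity $s^\dagger s = |s|^2 \geq 0$, with the $\bullet$-conjugation used to transfer the unitarity condition $t^\dagger t + u^\dagger u = 1$ to its conjugate counterpart.
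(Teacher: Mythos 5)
Your proof is correct and follows essentially the same route as the paper's: the paper likewise observes $u\da u = 1 - t\da t \leq 1$ and then $(u\bul)\da(u\bul) = 1 - (t\bul)\da(t\bul) \leq 1$, implicitly using the same commutation of $(-)\da$ and $(-)\bul$ that you verify explicitly. Your generator computation confirming that both composites send $a\omega^3+b\omega^2+c\omega+d$ to $c\omega^3-b\omega^2+a\omega+d$ is accurate.
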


\begin{proof}
  Note that ${u}\da u = 1-{t}\da t \leq 1$, so $u\in\Disk$.
  Similarly, $({u}\bul)\da({u}\bul) = 1-({t}\bul)\da({t}\bul) \leq 1$,
  so ${u}\bul\in\Disk$.
\end{proof}

The condition $\vec z\cdot \vec u \geq 1 - \epsilon^2/2$ from
Problem~\ref{prob-main}(b) defines a certain subset of the unit disk,
which we call the {\em $\epsilon$-region} for $\theta$:
\begin{equation}\label{eqn-eps-region}
  \Repsilon = \s{\vec u\in\Disk \mid \vec u\cdot \vec z\geq 1 -
    \frac{\epsilon^2}{2}}.
\hspace{2cm}
\begin{tikzpicture}[scale=1.8, baseline=0]
  \fill[fill=blue!10, yscale=-1, rotate=35] (cos 50, sin 50) -- (cos 50, -sin 50) arc (-50:50:1) -- cycle;
  \path[color=gray] (-.3,.4) node {$\Disk$};
  \draw[color=gray,->] (-1.2,0) -- (1.4,0);
  \draw[color=gray,->] (0,-1.1) -- (0,1.2);
  \path[color=gray] (0,1) node[above left] {$i$};
  \path[color=gray] (1,0) node[above=6pt,right=-2pt] {$1$};
  \draw[color=gray] (0,0) circle (1);
  \draw[yscale=-1, rotate=35] (cos 50, -sin 50) arc (-50:50:1);
  \draw[->, yscale=-1, rotate=35] (0,0) -- (1,0) node[right] {$\vec z$};
  \draw[yscale=-1, rotate=35] (cos 50, sin 50) -- (cos 50,-1);
  \draw[yscale=-1, rotate=35] (1,sin 50) -- (1,-1);
  \draw[yscale=-1, rotate=35] (cos 50, sin 50) -- (1, sin 50);
  \draw[yscale=-1, rotate=35] (cos 50, -sin 50) -- (1, -sin 50);
  \draw[<->, yscale=-1, rotate=35] (cos 50,-0.9) -- (1,-0.9);
  \draw[yscale=-1, rotate=35] (0.9, -0.9) node[above] {$\frac{\epsilon^2}{2}$};
  \path[yscale=-1, rotate=35] (0.8,-0.2) node {$\Repsilon$};
  \draw (0.5,0) arc (0:-35:0.5);
  \path (-17.5:0.4) node {\small $\frac{\theta}{2}$};
\end{tikzpicture}
\end{equation}
By Lemma~\ref{lem-reduction-a} and Problem~\ref{prob-main}(b), a {\em
  necessary} condition for a solution to Problem~\ref{prob-main} is
that $u\in\Repsilon$ and ${u}\bul\in\Disk$. This is an instance of a
scaled two-dimensional grid problem; note that both the
$\epsilon$-region and the unit disk are convex, and are effectively
given in the sense of Remark~\ref{rem-convex-specify}. Therefore, by
Proposition~\ref{prop-scaled-increasing}, there exists an efficient
algorithm that enumerates all such $u$ in increasing order of least
denominator exponent.

For each $u\in\D[\omega]$ satisfying the grid problem, it remains to
check whether the equation ${t}\da t+{u}\da u=1$ has a solution
$t\in\D[\omega]$. This is equivalent to solving the Diophantine equation
\[ t\da t = 1 - u\da u,
\]
which is of the form {\eqref{eqn-tdatxi}}. Let $\xi=1 - u\da
u\in\D[\sqrt2]$, and write $\xi\bul\xi=\frac{n}{2^\ell}$, where
$n\in\Z$ and $\ell\in\N$. By Theorem~\ref{thm-diophantine}, there is
an efficient algorithm that can solve this equation (or determine that
no solution exists), given a prime factorization of $n$.

% ----------------------------------------------------------------------
\subsection{The main algorithm}

Putting together the results of Sections~\ref{ssec-app-synth-problem}
and {\ref{ssec-reduction}}, we obtain the following algorithm for
solving the approximate synthesis problem:

\begin{algorithm}\label{alg-main}
  Given $\theta$ and $\epsilon$, let $A=\Repsilon$ be the
  $\epsilon$-region, and let $B=\Disk$ be the unit
  disk. 
  \begin{enumerate}
  \item[1.] Use Proposition~\ref{prop-scaled-increasing} to enumerate
    the infinite sequence of solutions to the scaled grid problem
    $u\in A$ and ${u}\bul\in B$, where $u\in\D[\omega]$, in the order
    of increasing least denominator exponent $k$. 
  \item[2.] For each such solution $u$:
    \begin{enumerate}
    \item[(a)] Let $\xi=1 - u\da u\in\D[\sqrt2]$, and write
      $\xi\bul\xi=\frac{n}{2^\ell}$, where $n\in\Z$ and $\ell\geq 0$
      is minimal.
    \item[(b)] Attempt to find a prime factorization of
      $n$. If $n\neq 0$ but no prime factorization is found, skip step
      2(c) and continue with the next $u$.
    \item[(c)] Use the algorithm of Theorem~\ref{thm-diophantine} to
      solve the equation $t\da t = \xi$. If a solution $t$ exists,
      go to step 3; otherwise, continue with the next $u$.
    \end{enumerate}
  \item[3.] Define $U$ as in equation {\eqref{eqn-u}}, let
      $U'=TUT\da$, and use the exact synthesis algorithm of
      {\cite{Kliuchnikov-etal}} to find a Clifford+$T$ circuit
      implementing either $U$ or $U'$, whichever has smaller
      $T$-count. Output this circuit and stop.
  \end{enumerate}
\end{algorithm}

%----------------------------------------------------------------------
\section{Analysis of the algorithm}
\label{sec-analysis}

% ----------------------------------------------------------------------
\subsection{Correctness}

\begin{proposition}[Correctness]
  If Algorithm~\ref{alg-main} terminates, then it yields a valid
  solution to the approximate synthesis problem, i.e., it yields a
  Clifford+$T$ circuit approximating $\Rz(\theta)$ up to $\epsilon$.
\end{proposition}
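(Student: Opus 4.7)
The plan is to trace the algorithm's output backwards through the reductions established in Sections~\ref{ssec-app-synth-problem} and \ref{ssec-reduction}, verifying that each property required of a valid solution to the approximate synthesis problem is already enforced by the corresponding step of Algorithm~\ref{alg-main}.

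First I would establish unitarity of the matrix $U$ constructed in step~3. When the algorithm reaches step~3, step~2(c) has produced $t \in \D[\omega]$ satisfying $t\da t = 1 - u\da u$, so $u\da u + t\da t = 1$. A direct computation using the form (\ref{eqn-u}) then gives $U\da U = I$, and consequently $U' = TUT\da$ is unitary as well.

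Second I would verify the approximation bound. The grid enumeration in step~1 guarantees $u \in \Repsilon$, which by definition (\ref{eqn-eps-region}) is equivalent to $\vec z \cdot \vec u \geq 1 - \epsilon^2/2$. The algebraic identity derived just before Problem~\ref{prob-main}, namely $\norm{\Rz(\theta) - U}^2 = 2 - 2\Realpart(z\da u)$, then yields $\norm{\Rz(\theta) - U} \leq \epsilon$. The equality $\norm{\Rz(\theta) - U'} = \norm{\Rz(\theta) - U}$ from Lemma~\ref{lem-2k-2} shows that $U'$ is an $\epsilon$-approximation as well, so whichever of $U, U'$ the algorithm outputs satisfies (\ref{eqn-norm}).

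Third I would verify exact representability in Clifford+$T$. Since $u, t \in \D[\omega]$, the matrix $U$ in (\ref{eqn-u}) is a unitary whose entries all lie in $\D[\omega]$; by the Kliuchnikov--Maslov--Mosca exact synthesis theorem cited in Section~\ref{ssec-app-synth-problem}, $U$ admits an exact Clifford+$T$ decomposition, and the algorithm of \cite{Kliuchnikov-etal} invoked in step~3 produces such a circuit. The same applies to $U' = TUT\da$. The algorithm outputs this circuit, which therefore is a valid solution to the approximate synthesis problem.

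There is no genuine obstacle; the proposition is a bookkeeping exercise assembling properties already supplied by the earlier machinery. The non-trivial content --- correctness of the grid enumeration, correctness of the Diophantine solver, and exactness of Clifford+$T$ synthesis over $\D[\omega]$ --- is discharged by Proposition~\ref{prop-scaled-increasing}, Theorem~\ref{thm-diophantine}, and the result of \cite{Kliuchnikov-etal}, respectively, so the proof reduces to checking that steps 1, 2, and 3 enforce exactly the hypotheses these results require.
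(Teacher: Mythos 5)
Your proof is correct and follows essentially the same route as the paper's: unitarity from steps 2(a) and 2(c), the error bound from $u\in\Repsilon$ together with the identity $\norm{\Rz(\theta)-U}^2=2-2\Realpart(z\da u)$, and exact representability via the Kliuchnikov--Maslov--Mosca result. The paper's own proof is just a terser version of the same bookkeeping.
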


\begin{proof}
  By construction. By steps 2(a) and 2(c) of the algorithm, we have
  ${t}\da t+{u}\da u=1$, so $U$ is unitary. By step 1 of the
  algorithm, $u$ belongs to the $\epsilon$-region, so
  {\eqref{eqn-zu}} holds. This implies that $U$ satisfies
  {\eqref{eqn-norm}}. Moreover, as noted in Lemma~\ref{lem-2k-2}, $U'$
  also satisfies {\eqref{eqn-norm}}, so whichever of these operators
  is returned approximates $\Rz(\theta)$ up to $\epsilon$.
\end{proof}

%----------------------------------------------------------------------
\subsection{Optimality}

The optimality of the algorithm hinges on step 2(b), ``attempt to find
a prime factorization of $n$''. In the presence of a factoring oracle
(such as a quantum computer), this can always be done. In this case,
Algorithm~\ref{alg-main} is guaranteed to find an optimal solution to
the approximate synthesis problem. In the absence of a factoring
oracle, we must potentially discard some candidate solutions $u$,
until we find one for which $n$ can be factored. We analyze these two
situations in Propositions~\ref{prop-optimality} and
{\ref{prop-near-optimality}}.

\begin{proposition}[Optimality in the presence of a factoring oracle]
  \label{prop-optimality}
  In the presence of an oracle for integer factoring, the circuit
  returned by Algorithm~\ref{alg-main} has the smallest $T$-count of
  any single-qubit Clifford+$T$ circuit approximating $\Rz(\theta)$ up
  to $\epsilon$.
\end{proposition}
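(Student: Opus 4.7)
The plan is to show that Algorithm~\ref{alg-main} returns a circuit whose T-count equals the minimum T-count of any Clifford+T approximation of $\Rz(\theta)$ to precision $\epsilon$. The strategy is to match the algorithm's output against a lower bound derived from the structure of exact Clifford+T synthesis, in three steps.

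First, I would establish that every valid approximation $U$ corresponds to a solution $u\in\D[\omega]$ of the scaled grid problem for $A=\Repsilon$ and $B=\Disk$. When $\epsilon<|1-e^{i\pi/8}|$, Lemma~\ref{lem-ell} forces $U$ to take the shape (\ref{eqn-u}); when $\epsilon\geq|1-e^{i\pi/8}|$, the same lemma exhibits a Clifford solution of this form whose top-left entry $u=\omega^{-j}$ has least denominator exponent $0$. In either case, rewriting (\ref{eqn-norm}) as (\ref{eqn-zu}) places $u\in\Repsilon$, and Lemma~\ref{lem-reduction-a} puts $u\bul\in\Disk$, so $u$ is a valid grid solution. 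Conversely, any such $u$ together with a Diophantine partner $t$ satisfying $t\da t = 1-u\da u$ yields a valid approximation $U$ of the form (\ref{eqn-u}).

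Next, I would invoke Lemma~\ref{lem-2k-2} to read off the T-count of any $U$ of the form (\ref{eqn-u}): it equals $2k-2$ when $k>0$ and $0$ when $k=0$, where $k$ is the least denominator exponent of $u$, using the free substitution $U\mapsto TUT\da$ (which preserves the approximation error) to reduce a T-count of $2k$ to $2k-2$ whenever necessary. Consequently, letting $k^*$ be the smallest least denominator exponent among all grid solutions $u$ for which the Diophantine equation $t\da t=1-u\da u$ is solvable, no Clifford+T approximation of $\Rz(\theta)$ up to $\epsilon$ can have T-count less than $\max(2k^*-2,0)$. Since Proposition~\ref{prop-scaled-increasing} enumerates grid solutions in order of non-decreasing least denominator exponent, and Theorem~\ref{thm-diophantine} together with the factoring oracle makes step 2(c) decide correctly whether each candidate admits a partner $t$, the algorithm halts at the first grid solution with least denominator exponent exactly $k^*$. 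Step 3 then outputs (via $U$ or $U'=TUT\da$, whichever is shorter) a circuit of T-count $\max(2k^*-2,0)$, matching the lower bound.

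The most delicate ingredient is the lower-bound half of Lemma~\ref{lem-2k-2}, which rests on the exact synthesis result of \cite{Kliuchnikov-etal}: without knowing that the T-count of every Clifford+T implementation of a fixed unitary $U$ of the form (\ref{eqn-u}) is pinned down (up to the $\pm 2$ reduction) by the denominator exponent of $u$, one could not exclude the possibility that some clever implementation of a $U$ with large $k$ uses fewer T-gates than the algorithm's output. Once Lemma~\ref{lem-2k-2} is granted, the remainder of the proof is essentially bookkeeping.
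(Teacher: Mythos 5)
Your proof is correct and follows essentially the same route as the paper: reduce validity of an approximation to membership of $u$ in the scaled grid problem for $\Repsilon$ and $\Disk$ (via Lemma~\ref{lem-ell} and Lemma~\ref{lem-reduction-a}), use Lemma~\ref{lem-2k-2} to tie the $T$-count to the least denominator exponent, and conclude by the ordered enumeration of Proposition~\ref{prop-scaled-increasing} together with the oracle-backed completeness of the Diophantine step. Your closing remark correctly identifies Lemma~\ref{lem-2k-2} (resting on the exact-synthesis characterization) as the load-bearing lower-bound ingredient, which is exactly where the paper's argument also places its weight.
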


\begin{proof}
  By construction, step 1 of the algorithm enumerates all solutions
  $u$ of the scaled grid problem in order of increasing denominator
  exponent $k$. Step 2(a) always succeeds, and step 2(b) always
  succeeds by using the factoring oracle. By
  Theorem~\ref{thm-diophantine}, step 2(c) succeeds if and only if the
  equation $t\da t+u\da u=1$ has a solution. Therefore, when step 2(c)
  succeeds, the algorithm has found a solution of
  Problem~\ref{prob-main} for which $u$ has the lowest possible
  denominator exponent $k$.  Let $m$ be the $T$-count of the final
  solution. By Lemma~\ref{lem-2k-2}, we have $m=2k-2$, except when
  $k=0$, in which case $m=0$.
  
  To show that this $T$-count is minimal, let $\bar U$ be any solution
  of the approximate synthesis problem with $T$-count $\bar m$. By
  Lemma~\ref{lem-ell}, we may assume without loss of generality that
  \[ \bar U = \zmatrix{cc}{\bar u & -\bar t\da \\ \bar t & \bar u\da}.
  \]
  Let $\bar k$ be the denominator exponent of $\bar u$. By minimality
  of $k$, we have $k\leq \bar k$, hence $m\leq \bar m$. 
\end{proof}

We emphasize that the optimality in Proposition~\ref{prop-optimality}
is {\em absolute}, i.e., not merely asymptotic or up to an additive
constant.  Of all the Clifford+$T$ operators approximating
$\Rz(\theta)$ to within $\epsilon$, the algorithm finds one with the
lowest $T$-count.

To analyze the algorithm in the absence of a factoring oracle, we must
address the question of how many candidates must be generated before
steps 2(b) and 2(c) of the algorithm succeed. In this case, the
algorithm may still use any classical algorithm to try to factor the
number $n$ in step 2(b), but the amount of effort extended on any
particular $n$ must be capped. In our complexity analysis for this
case, in Proposition~\ref{prop-near-optimality} below, we
conservatively assume that the only $n$ that the algorithm can
successfully factor are those $n$ that are already prime. In reality
the algorithm might do a little better. In order to complete the
analysis, we must rely on a number-theoretic assumption about the
distribution of prime numbers.

\begin{hypothesis}\label{hyp-numbers}
  Each number $n$ produced in step 2(a) of Algorithm~\ref{alg-main} is
  asymptotically as likely to be prime as a randomly chosen odd number
  of comparable size. Moreover, the primality of each $n$ can be
  modelled as an independent random event.
\end{hypothesis}

\begin{lemma}\label{lem-n}
  Each of the numbers $n$ produced in step 2(a) of
  Algorithm~\ref{alg-main} satisfies $n\geq 0$, and either $n=0$ or
  $n\equiv1\mmod{8}$.
\end{lemma}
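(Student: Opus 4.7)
\medskip\noindent
\textbf{Plan.} The bound $n\geq 0$ is immediate: since $u, u^\bullet \in \Disk$, both $\xi = 1 - u^\dagger u \geq 0$ and $\xi^\bullet = 1 - u^{\bullet\dagger}u^\bullet \geq 0$, so $n = 2^\ell \xi\xi^\bullet \geq 0$. For the congruence $n \equiv 1 \pmod 8$ when $n \neq 0$, the plan is to compute $\xi\xi^\bullet$ explicitly as an integer over a power of $2$ and then carry out a case analysis based on Lemma~\ref{lem-zomega}.

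First, let $k$ be the least denominator exponent of $u$, set $v = \rt{k}u \in \Z[\omega]$, and write $m = v^\dagger v = a + b\sqrt 2$ with $a, b \in \Z$. A direct computation gives
\[
  \xi\xi^\bullet \;=\; \frac{(2^k - m)(2^k - m^\bullet)}{2^{2k}} \;=\; \frac{P}{2^{2k}}, \qquad P = (2^k - a)^2 - 2b^2 \in \Z,
\]
so the problem reduces to pinning down $P$ and its $2$-adic valuation. For $k \geq 1$, the minimality of $k$ is equivalent to $v$ not being divisible by $\sqrt 2$ in $\Z[\omega]$, which by the Giles--Selinger-type fact invoked in the proof of Lemma~\ref{lem-2k-2} is equivalent to $m$ not being divisible by $2$ in $\Z[\sqrt 2]$, i.e.\ $(a, b) \not\equiv (0, 0) \pmod 2$.

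Next I would apply Lemma~\ref{lem-zomega} to write $v$ as either $\alpha + \beta i$ or $\alpha + \beta i + \omega$ with $\alpha = p + q\sqrt 2$, $\beta = r + s\sqrt 2 \in \Z[\sqrt 2]$, and expand $v^\dagger v$ to read off $a, b$ and their parities. In the first form one finds $b = 2(pq + rs)$ is always even; in the second form, $a$ is odd iff $p+r$ is even, and $b$ has the opposite parity. Combined with $m \not\equiv 0 \pmod 2$, this leaves exactly two sub-cases: (A) $a$ odd and $b$ even, or (B) $a \equiv 2 \pmod 4$ and $b$ odd. In sub-case (A), $(2^k - a)^2 \equiv 1 \pmod 8$ and $2b^2 \equiv 0 \pmod 8$ for every $k \geq 1$, so $P \equiv 1 \pmod 8$ is odd and $n = P \equiv 1 \pmod 8$.

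The hard part will be sub-case (B), where $P$ is even so the mod-$8$ analysis of (A) does not suffice. Here one has to push the analysis to modulus $16$: $(2^k - a)^2 \equiv 4 \pmod{16}$ and $2b^2 \equiv 2 \pmod{16}$, giving $P \equiv 2 \pmod{16}$, so $P/2$ is odd with $P/2 \equiv 1 \pmod 8$ and hence $n = P/2 \equiv 1 \pmod 8$. I will also need to rule out sub-case (B) when $k \leq 1$: the disk constraints give $m, m^\bullet \leq 2^k$, hence $a \leq 2^k$ and $|b|\sqrt 2 \leq 2^k - a$; combined with $a \equiv 2 \pmod 4$ and $a \geq 2$, this forces $a = 2$ and $b = 0$, contradicting $b$ odd. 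Finally, the edge case $k = 0$ is settled by direct enumeration: the elements of $\Z[\omega] \cap \Disk$ whose $\bullet$-conjugate also lies in $\Disk$ are precisely $\{0\} \cup \{\omega^j : 0 \leq j < 8\}$, all of which give $n \in \{0, 1\}$. The main obstacle is the bookkeeping in sub-case (B), where one must work modulo $16$ rather than $8$ and separately rule out the small-$k$ configurations by a geometric argument.
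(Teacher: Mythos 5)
Your proof is correct, but it takes a genuinely different route from the paper's. The paper works with the least $\delta$-exponent of $u$, where $\delta=1+\omega$: writing $u=\delta^{-k}(a\omega^3+b\omega^2+c\omega+d)$ with $k$ minimal forces $a+b+c+d$ to be odd, hence $u\da u=(\delta\da\delta)^{-k}(A+B\sqrt2)$ with $A=a^2+b^2+c^2+d^2$ odd and $B$ even; since $(\delta\da\delta)^k=(\lambda\sqrt2)^k$ has both coordinates even for $k\geq 2$, the numerator $x+y\sqrt2$ of $\xi$ has $x$ odd and $y$ even, and $n=x^2-2y^2\equiv 1\mmod{8}$ drops out in a single case (the finitely many $u$ of $\delta$-exponent $\leq 1$ are enumerated directly). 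You instead use the $\sqrt2$-denominator exponent together with Lemma~\ref{lem-zomega}; the price is that your numerator $P=(2^k-a)^2-2b^2$ of $\xi\bul\xi$ over $2^{2k}$ need not be odd, which forces sub-case (B) with its mod-$16$ computation, the auxiliary claim that $b$ odd implies $a\equiv 2\mmod{4}$ (this is true and follows from expanding $v=\alpha+\beta i+\omega$ with $p+r$ odd, using that $q^2+q$ and $s^2+s$ are always even, but it is the one assertion you state without derivation and must write out), and the separate exclusion of (B) for $k\leq 1$ --- which is genuinely needed, since for $k=1$ the mod-$16$ computation would give $P\equiv 14$ and hence $n\equiv 7\mmod 8$. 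The paper's choice of $\delta$ rather than $\sqrt2$ as the uniformizer is exactly what makes the numerator automatically odd and collapses your two sub-cases into one; your version is arguably more self-contained in that it reuses Lemma~\ref{lem-zomega} and the same denominator exponent and divisibility fact already deployed in Lemma~\ref{lem-2k-2}, at the cost of more congruence bookkeeping. All of your claimed congruences and the small-$k$ exclusions check out, so filling in the omitted computations yields a complete proof.
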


\begin{proof}
  See Appendix~\ref{appendix-n}.
\end{proof}

\begin{lemma}\label{lem-n2}
  Let $u$ be a candidate produced in step 1 of
  Algorithm~\ref{alg-main}, let $k$ be its least denominator exponent,
  and let $n$ be the integer computed in step 2(a).
  Then $n\leq 4^k$.
\end{lemma}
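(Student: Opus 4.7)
The plan is to trace denominator exponents through the computation of $n$, combined with simple upper bounds on $\xi$ coming from the grid constraints.

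First I would record that since $u$ has least denominator exponent $k$, we have $\sqrt{2}^{\,k} u \in \Z[\omega]$, and therefore $2^{k} u^\dagger u = (\sqrt{2}^{\,k} u)^\dagger (\sqrt{2}^{\,k} u)$ lies in $\Z[\omega]$. Since $u^\dagger u$ is also fixed by complex conjugation and hence lies in $\D[\sqrt{2}]$, we in fact get $2^{k} u^\dagger u \in \Z[\sqrt{2}]$. Consequently $2^{k}\xi = 2^{k} - 2^{k} u^\dagger u \in \Z[\sqrt{2}]$, and so $N := (2^{k}\xi)^\bullet(2^{k}\xi) = 2^{2k}\,\xi^\bullet \xi$ is an integer.

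Next I would bound this integer $N$ from above. Because $u$ is a candidate produced by step~1, it satisfies the grid constraints $u \in A = \Repsilon \subseteq \Disk$ and $u^\bullet \in B = \Disk$. Hence $u^\dagger u \leq 1$ and $(u^\bullet)^\dagger u^\bullet \leq 1$, which gives $0 \leq \xi \leq 1$ and $0 \leq \xi^\bullet \leq 1$. Multiplying these yields $0 \leq \xi^\bullet \xi \leq 1$, and therefore $0 \leq N \leq 2^{2k} = 4^{k}$.

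Finally I would compare $N$ to the reduced form. By construction in step~2(a), $\xi^\bullet \xi = n/2^\ell$ with $\ell$ minimal, so $n$ is either $0$ or odd. On the other hand, from $\xi^\bullet \xi = N/2^{2k}$ we see that $\ell \leq 2k$ and $n = N/2^{\,2k-\ell}$, so $n \leq N \leq 4^{k}$, as required.

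There is no real obstacle here; the only thing to be careful with is the bookkeeping of denominator exponents, namely that passing from $u$ at denominator exponent $k$ to $\xi^\bullet\xi$ multiplies the exponent by at most $2$. Everything else follows from the two facts that $\xi,\xi^\bullet \in [0,1]$ (thanks to the grid constraints on $u$ and $u^\bullet$) and that passing to lowest terms can only decrease the numerator.
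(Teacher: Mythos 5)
Your proof is correct and follows essentially the same route as the paper: write $2^k\xi$ as an element of $\Z[\sqrt2]$, deduce $\ell\leq 2k$ from minimality, and combine with $0\leq\xi,\xi\bul\leq 1$ (from the grid constraints) to get $n\leq 2^\ell\leq 4^k$. The parenthetical remark that minimality of $\ell$ forces $n$ to be $0$ or odd is not quite accurate when $\ell=0$, but it plays no role in the argument.
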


\begin{proof}
  By assumption, we can write $u={v}/{\rt{k}}$, where
  $v\in\Z[\omega]$.  From step 2(a) of the algorithm, we have $\xi =
  1-u\da u = \frac{1}{2^k}(2^k-v\da v) = \frac{\alpha}{2^k}$, where
  $\alpha\in\Z[\sqrt2]$. Therefore $\xi\bul\xi =
  \frac{\alpha\bul\alpha}{2^{2k}} = \frac{n}{2^\ell}$. Since
  $\alpha\bul\alpha$ is an integer and $\ell$ is minimal, we have
  $\ell\leq 2k$. Also, by assumption, both $u$ and $u\bul$ are in the
  unit disk, so $u\da u\leq 1$ and $(u\bul)\da(u\bul)\leq 1$. It
  follows that $0\leq \xi,\xi\bul\leq 1$, hence $\xi\bul\xi\leq 1$.
  Therefore, $\frac{n}{2^\ell}\leq 1$, which implies $n\leq 2^\ell
  \leq 4^k$ as claimed.
\end{proof}

\begin{lemma}\label{lem-summation}
  Let $b>0$ be an arbitrary fixed constant. Then for $a\geq 1$,
  \[ \sum_{x=1}^{\infty} \bigparen{1-\frac{1}{a+b\ln x}}^x = O(a).
  \]
\end{lemma}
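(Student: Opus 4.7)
The plan is to first apply the elementary inequality $(1-y)^x \leq e^{-xy}$, valid for $y\in[0,1]$ and $x\geq 0$, to replace the summand by $\exp(-x/(a+b\ln x))$. This is applicable since for $a\geq 1$ and $x\geq 1$ we have $1/(a+b\ln x)\in[0,1]$. Thus it suffices to show
$$\sum_{x=1}^{\infty}\exp\!\left(-\frac{x}{a+b\ln x}\right)=O(a).$$

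I would then partition the summation range into three consecutive pieces, determined by whether the constant $a$ or the growing term $b\ln x$ dominates in the denominator. On the initial range $1\leq x\leq 2a$, I will use the trivial bound $\exp(\cdot)\leq 1$, contributing at most $2a$. On the middle range $2a<x\leq e^{a/b}$ (which may be empty), the inequality $b\ln x\leq a$ gives $a+b\ln x\leq 2a$, so each summand is at most $e^{-x/(2a)}$, and a direct geometric-series estimate
$$\sum_{x>2a} e^{-x/(2a)} \;\leq\; \frac{1}{1-e^{-1/(2a)}}$$
yields $O(a)$ since the right-hand side is $\sim 2a$ as $a\to\infty$ and bounded by a constant for small $a$. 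Finally, on the tail range $x>\max(2a,e^{a/b})$, we have $a\leq b\ln x$, so $a+b\ln x\leq 2b\ln x$, and the summand is bounded by $e^{-x/(2b\ln x)}$, a quantity that no longer depends on $a$.

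The key point, and really the only non-trivial step, will be showing that the tail series $\sum_{x\geq 2}e^{-x/(2b\ln x)}$ converges to a constant depending only on $b$. This will follow from the observation that $x/\ln x$ eventually exceeds any fixed root such as $\sqrt{x}$, so the terms decay faster than $e^{-\sqrt x}$ and the integral test applies. Combining the three contributions gives $O(a)+O(a)+O(1)=O(a)$; the degenerate case $e^{a/b}<2a$ (which can occur when $a$ is small relative to $b$) collapses the middle range but does not affect the final bound, since $a\geq 1$ absorbs any leftover constants.
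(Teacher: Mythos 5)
Your argument is correct, and it differs from the paper's in how the sum is decomposed. The paper keeps the summand in the form $\bigparen{1-\frac{1}{a+b\ln x}}^x$ and establishes a single threshold $x_0=16a^2+144b^2$ beyond which the summand is at most $1/x^2$ (via the inequalities $\sqrt x\geq\ln x$ and $3\sqrt x\geq(\ln x)^2$, which force $x\geq 2\ln x\,(a+b\ln x)$); the head $x\leq x_0$ is then dominated by a full geometric series summing to $a+b\ln x_0$, and the tail by $\pi^2/6$. You instead pass immediately to $e^{-x/(a+b\ln x)}$ and split three ways according to which of $a$ or $b\ln x$ dominates the denominator: a trivial count of $O(a)$ terms, a geometric series with ratio $e^{-1/(2a)}$ contributing $O(a)$, and an $a$-independent tail $\sum_x e^{-x/(2b\ln x)}\leq\sum_x e^{-\sqrt x/(2b)}=O_b(1)$, using the same fact $\sqrt x\geq \ln x$ that the paper uses. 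Both proofs are elementary and of comparable length; your decomposition is perhaps more transparent about \emph{why} the answer is $O(a)$ (the only $a$-dependent contributions come from ranges of length or decay scale $O(a)$), while the paper's single-threshold version yields the cleaner explicit bound $a+b\ln(16a^2+144b^2)+\pi^2/6$. The only points worth making fully explicit when writing yours up are the quantitative step $1-e^{-1/(2a)}\geq \frac{1}{2}\cdot\frac{1}{2a}$ for $a\geq 1$ (so that the middle range really gives $O(a)$ and not just a bound that is asymptotically $\sim 2a$), and the observation that in the third range one always has $x\geq 3$, so $\ln x>0$ and the bound $a+b\ln x\leq 2b\ln x$ is legitimate.
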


\begin{proof}
  See Appendix~\ref{app-d}.
\end{proof}

\begin{definition}
  Let $U',U''$ be two solutions of the approximate synthesis problem 
  of the form
  \begin{equation}\label{eqn-u1-u2}
    U' = \zmatrix{cc}{u' & -t'^{\dagger} \\ t' & u'^{\dagger}},
    \quad
    U'' = \zmatrix{cc}{u'' & -t''^{\dagger} \\ t'' & u''^{\dagger}}.
  \end{equation}
  We say that $U'$ and $U''$ are {\em equivalent solutions} if
  $u'=u''$.
\end{definition}

\begin{proposition}[Near-optimality in the absence of a factoring oracle]
  \label{prop-near-optimality}
  Let $m$ be the $T$-count of the solution of the approximate
  synthesis problem found by Algorithm~\ref{alg-main} in the absence
  of an oracle for integer factoring. Then
  \begin{enumerate}\alphalabels
  \item The approximate synthesis problem has an expected number of at
    most $O(\log(1/\epsilon))$ non-equivalent solutions with $T$-count
    less than $m$.
  \item The expected value of $m$ is $m'' +
    O(\log(\log(1/\epsilon)))$, where $m'$ and $m''$ are the
    $T$-counts of the optimal and second-to-optimal solutions of the
    approximate synthesis problem (up to equivalence), and $m'\leq m''$.
  \end{enumerate}
\end{proposition}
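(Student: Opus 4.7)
The plan is to combine Lemma~\ref{lem-exponential-grid} (exponential growth in the number of grid candidates at each level) with Lemma~\ref{lem-summation} (a tailored sum estimate), via a probabilistic analysis under Hypothesis~\ref{hyp-numbers}. Both parts will follow from a single bound on the expected number of candidates the algorithm inspects before succeeding.

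\textbf{Setup.} I would enumerate the candidates $u_1, u_2, \ldots$ produced by step 1 of Algorithm~\ref{alg-main} in the order they are generated, with denominator exponents $k_1 \leq k_2 \leq \cdots$ and associated integers $n_1, n_2, \ldots$ computed in step 2(a). By Lemma~\ref{lem-n2}, $n_i \leq 4^{k_i}$, hence $\ln n_i = O(k_i)$; by Lemma~\ref{lem-n}, each nonzero $n_i$ satisfies $n_i \equiv 1 \pmod{8}$. Under Hypothesis~\ref{hyp-numbers}, combined with the prime number theorem for the residue class $1 \bmod 8$, the events $\pi_i = [n_i \text{ is prime}]$ are mutually independent with $\Pr[\pi_i = 1] \geq c_1/k_i$ for some $c_1>0$. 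Moreover, when such an $n_i$ is prime, the algebraic analysis underlying Theorem~\ref{thm-diophantine} (carried out in Appendix~\ref{app-rings}) ensures that $t^\dagger t = 1 - u_i^\dagger u_i$ is solvable, so the algorithm accepts $u_i$. Letting $K = \min\{i : \pi_i = 1\}$, we have $m = 2 k_K - 2$.

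\textbf{Bounding $E[K]$.} Since the optimal and second-to-optimal solutions are two distinct grid candidates at level $\leq k''$, Lemma~\ref{lem-exponential-grid} guarantees at least $2^\ell + 1$ candidates at level $\leq k'' + 2\ell$; inverting, $k_i \leq k'' + 2\log_2 i + O(1)$. Since \cite{Selinger-newsynth} provides solutions of $T$-count $O(\log(1/\epsilon))$, we have $k'' = O(\log(1/\epsilon))$. Writing $k_i \leq a + b\ln i$ with $a = O(\log(1/\epsilon))$ and $b = O(1)$, and using $k_i \leq k_x$ for $i \leq x$, we obtain $\prod_{i<x}(1 - c_1/k_i) \leq (1 - c_2/(a + b\ln x))^x$ for an appropriate $c_2 > 0$. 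By Lemma~\ref{lem-summation},
\[ E[K] = \sum_{x\geq 1}\Pr[K \geq x] \leq \sum_{x\geq 1} \bigparen{1 - \frac{c_2}{a + b\ln x}}^x = O(a) = O(\log(1/\epsilon)). \]

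\textbf{Conclusion and main obstacle.} For Part (a), the number of non-equivalent solutions with $T$-count strictly less than $m$ is bounded by $K - 1$ (each such solution corresponds to some $u_i$ with $i < K$), so its expectation is $O(\log(1/\epsilon))$. For Part (b), the bound $k_K \leq k'' + 2\log_2 K + O(1)$ gives $m \leq m'' + 4\log_2 K + O(1)$; Jensen's inequality applied to the concave function $\log_2$ then yields $E[m] \leq m'' + 4\log_2 E[K] + O(1) = m'' + O(\log\log(1/\epsilon))$. The most delicate ingredient is the number-theoretic claim that primes $n \equiv 1 \pmod{8}$ always yield solvable Diophantine equations, which must be extracted from the splitting behavior of such primes in $\Z[\omega]$; should only a positive fraction of such primes suffice, the argument still goes through with worse multiplicative constants but the same asymptotics.
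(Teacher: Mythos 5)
Your overall architecture mirrors the paper's proof: enumerate candidates, use Lemma~\ref{lem-n2} and Hypothesis~\ref{hyp-numbers} to get success probability $\gtrsim 1/k_i$, seed Lemma~\ref{lem-exponential-grid} to get $k_i \leq a + 2\log_2 i + O(1)$, bound $E[K]$ via Lemma~\ref{lem-summation}, and finish part (b) with Jensen. Your "delicate ingredient" is also not actually a gap: Lemma~\ref{lem-n} gives $n\equiv 1\mmod 8$ and Proposition~\ref{prop-prime-1mod8} then guarantees solvability whenever $n$ is prime, exactly as you hoped.

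The genuine gap is the step "$k'' = O(\log(1/\epsilon))$ because {\cite{Selinger-newsynth}} provides solutions of $T$-count $O(\log(1/\epsilon))$." That reference produces \emph{one} solution, which bounds $k'$ but says nothing about $k''$: the second-to-optimal solution is the best one among those \emph{not equivalent} to the optimal, and it must itself satisfy the Diophantine equation, so a priori its denominator exponent could be much larger. Since your bound is $E[K]=O(a)$ with $a\approx k''$, part (a) needs $k''=O(\log(1/\epsilon))$ outright, and part (b) needs at least $\log k'' = O(\log\log(1/\epsilon))$; neither follows from your citation, and you cannot bound $k''$ by the very probabilistic argument you are in the middle of running without circularity. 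The fix is to seed Lemma~\ref{lem-exponential-grid} not with the optimal and second-to-optimal \emph{synthesis} solutions but with the first two \emph{grid} candidates $u_1,u_2$ (no Diophantine solvability required), giving $k_i\leq k_2+2(1+\log_2 i)$; then bound $k_2=O(\log(1/\epsilon))$ deterministically via Lemma~\ref{lem-2-solutions}, using that $\Repsilon$ contains a disk of radius $\epsilon^2/4$ and $\Disk$ one of radius $1$, so the grid problem has two solutions once $2^k\geq 4(1+\sqrt 2)^2/\epsilon^2$. For part (b) one converts back using the inequality $k_2\leq k''$, which goes in the harmless direction.
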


\begin{proof}
  If $\epsilon\geq|1-e^{i\pi/8}|$, then by Lemma~\ref{lem-ell}, there
  is a solution with $T$-count 0, and the algorithm easily finds
  it. In this case, there is nothing to show. So assume without loss
  of generality that $\epsilon<|1-e^{i\pi/8}|$. Then by
  Lemma~\ref{lem-ell}, all solutions are of the form {\eqref{eqn-u}}.

  (a) Consider the list $u_1,u_2,\ldots$ of candidates generated in
  step 1 of the algorithm. Let $k_1,k_2,\ldots$ be their respective
  least denominator exponents, and let $n_1,n_2,\ldots$ be the
  corresponding integers calculated in step 2(a). By
  Lemma~\ref{lem-n2}, we have $n_j\leq 4^{k_j}$ for all $j$. By
  Hypothesis~\ref{hyp-numbers}, the probability that $n_j$ is prime is
  asymptotically no smaller than that of a randomly chosen odd integer
  less than $4^{k_j}$, which, by the well-known prime number theorem,
  is greater than
  \begin{equation}\label{eqn-pj}
    p_j:=\frac{2}{\ln (4^{k_j})} = \frac{1}{k_j\ln 2}.
  \end{equation}
  Note that $u_1$ and $u_2$ are two distinct solutions to the scaled
  grid problem of step 1 of the algorithm. Since the candidates are
  enumerated in order of increasing denominator exponent, $k_2$ is a
  denominator exponent for both $u_1$ and $u_2$. It follows by
  Lemma~\ref{lem-exponential-grid} that there are at least $2^\ell+1$
  distinct candidates of denominator exponent $k_2+2\ell$, for all
  $\ell\geq 0$. In other words, for all $j$, if $j\leq 2^\ell+1$, we
  have $k_j\leq k_2+2\ell$. In particular, this holds for
  $\ell=\floor{1+\log_2 j}$, and therefore, 
  \begin{equation}\label{eqn-kj}
    k_j\leq k_2+2(1+\log_2 j).
  \end{equation}
  Combining {\eqref{eqn-kj}} with {\eqref{eqn-pj}}, we have
  \begin{equation}\label{eqn-pj2}
    p_j \geq \frac{1}{(k_2+2(1+\log_2 j))\ln 2}
    = \frac{1}{(k_2+2)\ln 2 + 2\ln j}
  \end{equation}
  Let $j_0$ be the smallest index such that $n_{j_0}$ is prime.
  By Hypothesis~\ref{hyp-numbers}, we can treat the primality of each
  $n_j$ as an independent random event. Therefore, 
  \begin{eqnarray}
    P(j_0 > j) &=& P(\mbox{$n_1,\ldots,n_j$ are not prime})  \nonumber\\
    &\leq& (1-p_1)(1-p_2)\cdots(1-p_j) \nonumber\\
    &\leq& (1-p_j)^j \nonumber\\
    &\leq& \bigparen{1-\frac{1}{(k_2+2)\ln 2 + 2\ln j}}^j. \nonumber
  \end{eqnarray}
  The expected value of $j_0$ is
  \begin{equation}\label{eqn-ej0}
    E(j_0) ~=~ \sum_{j=0}^{\infty} P(j_0 > j)
    ~\leq~ 1+\sum_{j=1}^{\infty} \bigparen{1-\frac{1}{(k_2+2)\ln 2 + 2\ln
        j}}^j
    ~=~ O(k_2),
  \end{equation}
  where we have used Lemma~\ref{lem-summation} to estimate the sum.

  Next, we will estimate $k_2$.  The width of the $\epsilon$-region
  $\Repsilon$, as shown in {\eqref{eqn-eps-region}}, is
  $\epsilon^2/2$ at the widest point, and we can inscribe a disk of
  radius $r={\epsilon^2}/{4}$ in it. Also, the closed unit disk
  $\Disk$ has radius $R=1$. It follows by Lemma~\ref{lem-2-solutions}
  that the scaled grid problem for $\Repsilon$ and $\Disk$, as in step
  1 of the algorithm, has at least two solutions, provided that
  \begin{equation}\label{eqn-rR}
    rR = \frac{\epsilon^2}{4} \geq \frac{(1+\sqrt2)^2}{2^k},
  \end{equation}
  or equivalently, provided that
  \begin{equation}\label{eqn-kgeq}
    k\geq 2 + 2 \log_2(1+\sqrt2) + 2\log_2 ({1}/{\epsilon}).
  \end{equation}
  We therefore have $k_2\leq k$ for {\em all} $k$ satisfying
  {\eqref{eqn-kgeq}}. It follows that
  \begin{equation}\label{eqn-k2}
    k_2=O(\log({1}/{\epsilon})),
  \end{equation}
  and therefore, using {\eqref{eqn-ej0}}, also
  \begin{equation}\label{eqn-ej0-eps}
    E(j_0)=O(\log({1}/{\epsilon})).
  \end{equation}
  
  To finish the proof of part (a), recall that $j_0$ was defined to be
  the smallest index such that $n_{j_0}$ is prime. This ensures that
  step 2(b) of the algorithm succeeds for the candidate $u_{j_0}$.
  Furthermore, we have $n\equiv 1\mmod{8}$ by Lemma~\ref{lem-n}, and
  therefore the equation $t\da t=\xi$ has a solution by
  Proposition~\ref{prop-prime-1mod8}. Hence the remaining steps of the
  algorithm also succeed for $u_{j_0}$.

  Now let $r$ be the number of non-equivalent solutions of the
  approximate synthesis problem of $T$-count strictly less than
  $m$. As noted above, any such solution $U$ is of the form
  {\eqref{eqn-u}}. Then the least denominator exponent of $u$ is
  strictly smaller than $k_{j_0}$, so that $u=u_j$ for some
  $j<j_0$. In this way, each of the $r$ non-equivalent solutions is
  mapped to a different index $j<j_0$. It follows that $r<j_0$, and
  hence $E(r)\leq E(j_0)=O(\log({1}/{\epsilon}))$, as was to be shown.

  (b) Let $U'$ be an optimal solution of the approximate synthesis
  problem, and let $U''$ be optimal among the solutions that are not
  equivalent to $U'$. Let $u'$ and $u''$ be as in {\eqref{eqn-u1-u2}},
  and let $k'$, $k''$ be the least denominator exponents of $u'$ and
  $u''$, respectively, with $k'\leq k''$. Let $k_2$ and $j_0$ be as in
  the proof of part (a). Note that, by definition, $k_2\leq k''$. Let
  $k$ be the least denominator exponent of the solution of the
  approximate synthesis problem found by the algorithm. Then $k\leq
  k_{j_0}$.  Using {\eqref{eqn-kj}}, we have
  \[ k \leq k_{j_0} \leq k_2 + 2(1+\log_2 j_0) \leq k'' +  2(1+\log_2 j_0).
  \]
  Recall from Lemma~\ref{lem-2k-2} that $2k-2 \leq m \leq 2k$, and
  similarly $2k''-2\leq m''\leq 2k''$. Hence $m \leq 2k\leq
  2k''+4(1+\log_2 j_0)\leq m'' + 6 + 4\log_2
  j_0$. These calculations apply to any one run of the
  algorithm. Taking expected values over many randomized runs, we
  therefore have
  \begin{equation}\label{eqn-em}
    E(m) \leq m'' + 6 + 4 E(\log_2 j_0) 
    \leq m'' + 6 + 4\log_2 E(j_0).
  \end{equation}
  Note that we have used the law $E(\log j_0)\leq \log(E(j_0))$, which
  holds because $\log$ is a concave function. Combining {\eqref{eqn-em}}
  with {\eqref{eqn-ej0-eps}}, we therefore have the desired result:
  \begin{equation}
    E(m) = m'' + O(\log(\log(1/\epsilon))).
  \end{equation}
\end{proof}

\begin{remark}\label{rem-lower-bound}
  In the near-optimal case of Proposition~\ref{prop-near-optimality},
  our algorithm can additionally be used to compute a firm lower bound
  for the $T$-count of any solution of the approximate synthesis
  problem for the given $\theta$ and $\epsilon$. Namely, the algorithm
  can consider the first candidate $u_j$ for which the Diophantine
  equation solver does not fail --- i.e., either it solves the
  equation or it times out. If $k_j$ is the least denominator exponent
  of $u_j$, then a lower bound for the $T$-count is $2k_j-2$ (or $0$
  when $k_j=0$).  Note that this is not the usual
  information-theoretic lower bound that applies in the average case,
  but an actual lower bound for each particular problem instance.
\end{remark}

%----------------------------------------------------------------------
\subsection{Worst-case behavior}

In {\cite{Selinger-newsynth}}, an approximate synthesis algorithms for
$z$-rotations was given that always returns a solution of $T$-count at
most $K+4\log_2(1/\epsilon)$, where $K$ is a constant approximately
equal to $10$.  We note that Algorithm~\ref{alg-main} enumerates all
the solutions of the grid problem for the $\epsilon$-region, whereas
the algorithm of {\cite{Selinger-newsynth}} only enumerates a subset
of the solutions.  Also, Algorithm~\ref{alg-main} can solve the
Diophantine equation in all the cases in which the algorithm of
{\cite{Selinger-newsynth}} can solve it. It follows that in all cases,
the solution returned by Algorithm~\ref{alg-main} is at least as good
as that returned by the algorithm of {\cite{Selinger-newsynth}}. In
particular, Algorithm~\ref{alg-main} always returns a solution of
$T$-count at most $K+4\log_2(1/\epsilon)$.  Moreover, it is known from
{\cite[Section 9]{Selinger-newsynth}}, that there are certain
combinations of $\theta$ and $\epsilon$ for which this $T$-count is
actually optimal to within a constant number of gates. Thus our
algorithm's performance is $K+4\log_2(1/\epsilon)$ in the worst case,
but this worst case behavior is only achieved in those cases where it
is actually optimal.

One may ask whether a more precise statement can be made about the
angles $\theta$ for which this worst-case behavior occurs. We believe
that this is indeed possible. Let $\Q(\sqrt 2)$ be the field of
rational numbers extended with $\sqrt 2$.

\begin{conjecture}\label{con-theta}
  Fix an angle $\theta$, and consider the $T$-count as
  $\epsilon\to 0$. Then the $T$-count behaves as
  $K+4\log_2(1/\epsilon)$ when $\tan\frac{\theta}{2}\in \Q(\sqrt 2)$,
  and as $K+3\log_2(1/\epsilon)$ otherwise.
\end{conjecture}

Giving a rigorous proof of this conjecture is beyond the scope of this
paper and probably difficult, but we will sketch a plausibility
argument. We believe that the first part, showing that the $T$-count
behaves no better than $K+4\log_2(1/\epsilon)$ when
$\tan\frac{\theta}{2}\in \Q(\sqrt 2)$, could be made rigorous without
too much difficulty. In addition, we also offer experimental evidence
supporting Conjecture~\ref{con-theta} in
Section~\ref{sec-experimental}.

%......................................................................
\begin{figure}
  \[ (a)~
  \mp{0.95}{\input{figure-x-a.tex}}
  \quad (b)~
  \mp{0.95}{\begin{tikzpicture}[scale=0.9]
\def\gridline#1{\begin{scope}[on background layer]\draw[thin,gray] #1 -- +(5*2,-5*3);%
\draw[thin,gray] #1 -- +(-5*2,5*3);
\end{scope}
}
\def\blackgridline#1{\draw[thin] #1 -- +(5*2,-5*3);%
\draw[thin] #1 -- +(-5*2,5*3);}
\def\mycircle{circle (0.1/0.8);}
\path[clip] (-2.0,-2.1) -- (-2.0,4.4) -- (4.5,4.4) -- (4.5,-2.1) -- cycle;
\begin{scope}[on background layer]
\path[clip] (-2.0,-2.1) -- (-2.0,4.4) -- (4.5,4.4) -- (4.5,-2.1) -- cycle;
\end{scope}
\draw[->] (-2.0,1.25) -- (4.5, 1.25);
\draw[->] (1.25,-2.1) -- (1.25,4.4);
\begin{scope}[shift={(1.25,1.25)}]
\begin{scope}[scale=0.8]
\fill[color=green!50] (2,-0.5) .. controls (3,-0.5) and (3,-1) .. (3,-1.5)
               .. controls (3,-2) and (2,-2.5) .. (1.3,-2.5)
               .. controls (0.9,-2.5) and (0.8,-1.5) .. (0.8,-1)
               .. controls (0.8,-0.6) and (1.3,-0.5) .. (2,-0.5) -- cycle;
\fill (2.5,-1) \mycircle;
\path (2.5,-1) node[above right] {$u\bul$};
\foreach\x in {0.20958187289913877,0.3141695875350399,0.418757302170941,0.5666666666666647,0.6712543813025724,0.7758420959384611,0.8191637457982817,0.923751460434191,1.0283391750700797,1.176248539565803,1.2808362542017107,1.3854239688376246,1.4287456186974248,1.5333333333333292,1.6379210479692432,1.7858304124649476,1.8904181271008618,1.995005841736763,2.0383274915965663,2.1429152062324675,2.247502920868382,2.2908245707281876,2.3954122853640865,2.5,2.6479093644957112,2.752497079131619,2.85708479376752,2.90040644362733,3.0049941582632376} {
  \gridline{(\x,-1)};
}
\draw[red,fill=pink] 
   (2.5-0.5*2,-1+0.5*3) --
   (2.5-0.95*2,-1+0.95*3) --
   (2.5-0.95*2,-1+0.5*3) --
   cycle;
\path[black] (2.5-0.8*2,-1+0.65*3) node {$r\bul$};
\blackgridline{(2.5,-1)};
\path (1.8,-1.5) node {$B$};
\draw[black] (0,2.75) -- (0.7,2.75) node [right] {$b\bul$};
\draw[thick,color=green!80!black] (0,3.5) +(-0.4,-0.2) -- +(-0.4,0) -- +(0.4,0) -- +(0.4,-0.2);
\draw[thick,color=green!80!black] (0,-0.7) +(-0.4,0.2) -- +(-0.4,0) -- +(0.4,0) -- +(0.4,0.2);
\draw[thick,color=green!80!black] (0,3.5) -- (0,-0.7);
\path (0,1.4) node [left] {$B'$};
\end{scope}
\end{scope}

\end{tikzpicture}}
  \]
  \caption{Part (a) shows the grid points and grid lines (of slope
    $r\in\Q(\sqrt 2)$) for a convex set $B$. Part (b) shows the set
    $B$ and the dual grid lines (of slope $r\bul$), all of whose
    $y$-intercepts lie in an interval $B'$.  }
  \label{fig-x}
\end{figure}
% ......................................................................

The essential point is this. Consider the grid associated to a convex
bounded set $B$ as in Figure~\ref{fig-x}. Note that the slope $r$ of
any line passing through two grid points satisfies $r\in\Q(\sqrt 2)$
(unless the line is vertical, in which case the slope is infinite).
Conversely, fix some real number $r$.  Let us call a line of slope $r$
a {\em grid line} if it passes through at least one grid point. We
first wish to show that if $r\in\Q(\sqrt2)$, then the set of grid
lines is discrete, i.e., there is a minimum distance between each pair
of grid lines. To see why this is so, note that
$r\in\frac{1}{n}\Z[\sqrt2]$ for some $n\in\Z$. Each grid line is
uniquely determined by its $y$-intercept, i.e., its point of
intersection with the $y$-axis.  Consider a grid line with
$y$-intercept $b$, and passing through the grid point
$u\in\Z[\omega]$. Using Lemma~\ref{lem-zomega}, we can assume without
loss of generality that $u=\alpha+\beta i$, where
$\alpha,\beta\in\Z[\sqrt2]$; the case where $u=\alpha+\beta i+\omega$
is analogous. Then we have $b=\beta-r\alpha$, and therefore
$b\in\frac{1}{n}\Z[\sqrt2]$. Moreover, the line with $y$-intercept
$b\bul$ and slope $r\bul$ contains $u\bul$, and therefore intersects
$B$. Since $B$ is bounded, it follows that $b\bul$ lies in a certain
finite interval $B'$ of real numbers determined only by $B$ and
$r\bul$.  The two relations $b\in\frac{1}{n}\Z[\sqrt2]$ and
$b\bul\in B'$ determine a one-dimensional grid problem, and therefore
the set of solutions is discrete. This shows that the set of grid
lines is discrete when $r\in\Q(\sqrt2)$.

Recall from {\eqref{eqn-eps-region}} that for given $\theta$ and
$\epsilon$, the $\epsilon$-region is bounded by a circular arc and a
straight line. The slope of the straight line is
$r=-1/\tan\frac{\theta}{2}$, and is independent of $\epsilon$. If
$r\in\Q(\sqrt2)$, then the $\epsilon$-region is ``parallel'' to the
grid lines of slope $r$, as shown in Figure~\ref{fig-x}(a). In this
case, the number of solutions to the grid problem is not dominated by
the {\em area}, but by the {\em width} of the $\epsilon$-region. More
specifically, for the scaled grid problem of Algorithm~\ref{alg-main}
to have a solution for a given denominator exponent $k$, it is a
necessary condition that the $\epsilon$-region intersects some grid
line. The distance between the grid lines scales as $\frac{1}{2^k}$,
and the width of the $\epsilon$-region is $\frac{\epsilon^2}{2}$, and
therefore, we expect the $\epsilon$-region to intersect at least one
grid line when $\frac{1}{2^k}\leq c\epsilon^2$, for some constant
$c$. Equivalently $k\geq K+2\log\frac{1}{\epsilon}$. By
Lemmas~\ref{lem-2k-2} and {\ref{lem-2k-1}}, the $T$-count scales as
$2k$, i.e., $K+4\log\frac{1}{\epsilon}$.

On the other hand, if $r\not\in\Q(\sqrt2)$, then the set of grid lines
is dense, and in this case, it appears that the solution to the grid
problem is dominated by the {\em area} of the $\epsilon$-region. In
this case, the number of grid points per area scales as $4^k$, and the
area of the $\epsilon$-region is proportional to $\epsilon^3$; hence
we expect a grid point to fall in the $\epsilon$-region when
$\frac{1}{4^k}\leq c\epsilon^3$, or equivalently,
$2k\geq K+3\log\frac{1}{\epsilon}$, yielding the typical $T$-count of
$K+3\log\frac{1}{\epsilon}$ in that case.

% ----------------------------------------------------------------------
\subsection{Time complexity of the algorithm}

\begin{proposition}[Complexity]\label{prop-time-complexity}
  Algorithm~\ref{alg-main} runs in expected time
  $O(\polylog(1/\epsilon))$. This is true whether or not a
  factorization oracle is used.
\end{proposition}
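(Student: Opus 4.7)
My plan is to decompose the expected running time of Algorithm~\ref{alg-main} into three contributions: the set-up overhead of step~1, the per-candidate cost of steps~2 and~3, and the expected number of candidates examined. I will bound each contribution by $\polylog(1/\epsilon)$ and combine them multiplicatively.

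For the expected number of candidates $j_0$ that must be processed, I invoke the analysis already carried out in the proof of Proposition~\ref{prop-near-optimality}(a), which gives $E(j_0) = O(\log(1/\epsilon))$ in the absence of a factoring oracle (and an even better bound holds when an oracle is available, since step~2(b) then always succeeds). That same proof, via the bounds leading to $k_2 = O(\log(1/\epsilon))$ and $k_j \leq k_2 + 2(1+\log_2 j)$, shows that the least denominator exponent of every enumerated candidate $u_j$ is $O(\log(1/\epsilon))$. Hence each $u_j$ is represented by a constant number of integers of bit-size $O(\log(1/\epsilon))$ divided by a power of $\sqrt{2}$, and by Lemma~\ref{lem-n2} the integer $n$ computed in step~2(a) satisfies $n \leq 4^{k_j} = \poly(1/\epsilon)$. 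Consequently step~2(a) costs $\polylog(1/\epsilon)$ bit operations; step~2(b), in the absence of an oracle, reduces to a polynomial-time primality test (the algorithm simply abandons $u_j$ if $n$ is composite, which the near-optimality analysis has already charged for), and in the presence of an oracle it is polynomial by assumption; step~2(c) takes expected $\polylog(1/\epsilon)$ bit operations by Theorem~\ref{thm-diophantine}; and step~3 runs in time polynomial in the bit-size of the matrix entries by the exact-synthesis algorithm of Kliuchnikov, Maslov, and Mosca.

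For the overhead of step~1, I apply Proposition~\ref{prop-scaled-increasing}: its cost is $O(\log(1/M))$ arithmetic operations overall, where $M$ is the minimum uprightness of $\Repsilon$ and $\Disk$. The unit disk has constant uprightness, and the $\epsilon$-region, being a circular cap of width $\epsilon^2/2$ and chord length $O(\epsilon)$, has uprightness at worst $\Omega(\epsilon)$ (its bounding box for a generic $\theta$ has area $O(\epsilon^2)$, while its area is $O(\epsilon^3)$), so $\log(1/M) = O(\log(1/\epsilon))$. Combined with the $\polylog(1/\epsilon)$ bit-cost per arithmetic operation, this gives a $\polylog(1/\epsilon)$ set-up cost. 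Multiplying the expected $O(\log(1/\epsilon))$ candidates by the $\polylog(1/\epsilon)$ cost per candidate and adding the set-up cost yields an overall expected running time of $\polylog(1/\epsilon)$, independently of whether a factoring oracle is used.

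The step I expect to require the most care is the translation from the paper's notion of ``arithmetic operation'' (which by the earlier remark includes exponentiation and logarithm) into bit complexity. This requires verifying that every intermediate quantity produced by the grid algorithm of Theorem~\ref{thm-main} and by the skew-reduction procedure of Appendix~\ref{appendix-skew-red} --- in particular, the coefficients of the grid operator $\G$ and of the iteratively normalized ellipses --- remains of bit-size polynomial in $\log(1/\epsilon)$. Since $\G$ is constructed as a product of a bounded number of elementary shift and skew operators with polynomially bounded exponents, this can be confirmed by a direct inspection of the relevant constructions, which I would carry out as the final step of the argument.
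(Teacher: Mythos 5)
Your proposal is correct and follows essentially the same route as the paper's proof: the same decomposition into step-1 overhead, per-candidate cost, and expected candidate count, with the same key inputs ($E(j_0)=O(\log(1/\epsilon))$ and $k_{j}\leq k_2+2(1+\log_2 j)$ from the near-optimality analysis, $n\leq 4^{k}$ from Lemma~\ref{lem-n2}, and $\log(1/M)=O(\log(1/\epsilon))$ from the geometry of $\Repsilon$). The only differences are cosmetic: the paper uses the cruder bound $M=\Omega(\epsilon^4)$ via the bounding box $[-1,1]^2$ rather than your $\Omega(\epsilon)$ estimate, and it dispatches the arithmetic-to-bit-complexity translation you flag at the end with a one-line assertion that each operation can be carried out at precision $O(\log(1/\epsilon))$.
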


\begin{proof}
  Let $M$ be the uprightness of the $\epsilon$-region. Let $j_0$ be
  the average number of candidates tried in steps 2(a)--(c) of the
  algorithm, and let $k_{j_0}$ be the least denominator exponent of
  the final candidate. Let $n$ be the largest integer that appears in
  step 2(a) of the algorithm.

  By Proposition~\ref{prop-scaled-increasing}, step 1 of the algorithm
  requires $O(\log(1/M))$ arithmetic operations, plus a constant
  number per candidate. For each of the $j_0$ candidates, step 2(a)
  requires $O(1)$ arithmetic operations. Step 2(b) also requires
  $O(1)$ arithmetic operations, either due to the use of a factoring
  oracle, or else, because we can put an arbitrary fixed bound on the
  amount of effort invested in factoring any given integer. At
  minimum, this will succeed when the integer in question is prime,
  which is sufficient for the estimates of
  Proposition~\ref{prop-near-optimality}. Step 2(c) requires
  $O(\polylog(n))$ operations by
  Theorem~\ref{thm-diophantine}. Finally, step 3 requires $O(k_{j_0})$
  arithmetic operations; see, e.g., Theorem~5.1 of
  {\cite{ma-remarks}}. So the total expected number of arithmetic
  operations is
  \begin{equation} \label{eqn-expected}
    O(\log(1/M)) + j_0\cdot O(\polylog(n)) + O(k_{j_0}).
  \end{equation}
  Recall that the $\epsilon$-region $\Repsilon$, shown in
  {\eqref{eqn-eps-region}}, contains a disk of radius $\epsilon^2/4$.
  Hence, $\area(\Repsilon)\geq \frac{\pi}{16}\epsilon^4$. On the
  other hand, the square $[-1,1]\times[-1,1]$ is a (not very tight)
  bounding box for $\Repsilon$. It follows that 
  \[ M = \up(\Repsilon) = \frac{\area(\Repsilon)}{\area(\BBox(\Repsilon))} =
  \Omega(\epsilon^4),
  \]
  hence $\log(1/M) = O(\log(1/\epsilon))$. From {\eqref{eqn-ej0-eps}},
  the expected value of $j_0$ is $O(\log({1}/{\epsilon}))$.
  Combining {\eqref{eqn-kj}} with {\eqref{eqn-k2}}, we therefore have
  \[ k_{j_0} \leq k_2+2(1+\log_2 j_0) = O(\log({1}/{\epsilon})) +
  O(\log(\log(1/\epsilon))) = O(\log(1/\epsilon)).
  \]
  From Lemma~\ref{lem-n2}, and the fact that candidates are enumerated
  in order of increasing denominator exponent, we have $n\leq
  4^{k_{j_0}}$, hence
  \[ \polylog(n) = O(\poly(k_{j_0})) = O(\polylog(1/{\epsilon})).
  \]
  Combining all of these estimates with {\eqref{eqn-expected}}, the
  expected number of arithmetic operations for the algorithm is
  $O(\polylog(1/{\epsilon}))$. Moreover, each individual arithmetic
  operation can be performed with precision $O(\log(1/\epsilon))$,
  taking time $O(\polylog(1/{\epsilon}))$. Therefore the total
  expected time complexity of the algorithm is
  $O(\polylog(1/{\epsilon}))$, as desired.
\end{proof}

%----------------------------------------------------------------------
\section{Approximation up to a phase}

So far, we have considered the problem of approximate synthesis ``on
the nose'', i.e., the operator $U$ in
Definition~\ref{def-synthesis-problem} was literally required to
approximate $\Rz(\theta)$ in the operator norm.  However, it is
well-known that global phases have no observable effect in quantum
mechanics, so in quantum computing, it is also common to consider the
problem of approximate synthesis ``up to a phase''.  This is made
precise in the following definition.

\begin{definition}\label{def-approx-up-to-phase}
  Given $\theta$ and some $\epsilon>0$, the {\em approximate synthesis
    problem for $z$-rotations up to a phase} is to find an operator
  $U$ expressible in the single-qubit Clifford+$T$ gate set, and a
  unit scalar $\lambda$, such that
  \begin{equation}\label{eqn-approx-up-to-phase}
    \norm{\Rz(\theta)-\lambda U} \leq \epsilon.
  \end{equation}
  Moreover, it is desirable to find $U$ of smallest possible
  $T$-count. As before, the norm in {\eqref{eqn-approx-up-to-phase}}
  is the operator norm.
\end{definition}

In this section, we will give a version of Algorithm~\ref{alg-main}
that optimally solves the approximate synthesis problem up to a phase.
The central insight is that it is in fact sufficient to restrict
$\lambda$ to only two possible phases, namely $\lambda=1$ and
$\lambda=\sqrt{\omega}=e^{i\pi/8}$. A similar technique was also used
in {\cite{KMM-practical}}.

First, note that if $W$ is a unitary $2\times 2$-matrix and $\det
W=1$, then $\tr W$ is real. This is obvious, because $\det W=1$
ensures that the two eigenvalues of $W$ are each other's complex
conjugates.

\begin{lemma}\label{lem-IW}
  Let $W$ be a unitary $2\times 2$-matrix, and assume that $\det W =
  1$ and $\tr W \geq 0$.  Then for all unit scalars $\lambda$, we have
  \[ \norm{I-W} \leq \norm{I-\lambda W}.
  \]
\end{lemma}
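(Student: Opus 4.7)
The plan is to diagonalize everything and reduce the inequality to a statement about points on the unit circle. Since $W$ is unitary, it is normal, and so is $\lambda W$ (and $I - W$, $I - \lambda W$). For a normal matrix $M$, the operator norm $\|M\|$ equals the maximum of the absolute values of its eigenvalues. Combined with $\det W = 1$ and $\tr W \geq 0$, this lets us parametrize: the eigenvalues of $W$ are $e^{i\phi}$ and $e^{-i\phi}$ for some $\phi$, and $\tr W = 2\cos\phi \geq 0$ forces $|\phi| \leq \pi/2$, so in particular $\cos\phi \geq 0$.

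Writing $\lambda = e^{i\psi}$ for some $\psi \in [-\pi,\pi]$, the eigenvalues of $\lambda W$ are $e^{i(\psi+\phi)}$ and $e^{i(\psi-\phi)}$, hence
\[
\|I - W\| = 2|\sin(\phi/2)|,\qquad \|I - \lambda W\| = \max\bigl\{2|\sin((\psi+\phi)/2)|,\ 2|\sin((\psi-\phi)/2)|\bigr\}.
\]
So the lemma reduces to showing $\max\{A^2,B^2\} \geq \sin^2(\phi/2)$, where $A = \sin((\psi+\phi)/2)$ and $B = \sin((\psi-\phi)/2)$.

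The main step is a short trigonometric calculation: using $\sin^2 x = (1-\cos 2x)/2$ together with the sum-to-product identity for $\cos(\psi+\phi) + \cos(\psi-\phi)$, one obtains
\[
A^2 + B^2 \;=\; 1 - \cos\psi\cos\phi.
\]
Since $\cos\phi \geq 0$ (from $|\phi| \leq \pi/2$) and $1 - \cos\psi \geq 0$ always, we get $A^2 + B^2 \geq 1 - \cos\phi = 2\sin^2(\phi/2)$, and consequently $\max\{A^2,B^2\} \geq \sin^2(\phi/2)$, which is the desired inequality.

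I do not expect any serious obstacle here; the only subtlety is to remember that the operator norm of a normal matrix is its spectral radius, which is what licenses the reduction to eigenvalues in step one. Once that is in place, the trigonometric identity plus the sign of $\cos\phi$ finishes the argument cleanly, and the sharpness case (equality when $\psi = 0$) drops out for free.
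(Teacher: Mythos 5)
Your proof is correct. The reduction is the same as the paper's: diagonalize $W$ (licensed by normality and the unitary invariance of the operator norm), use $\det W = 1$ to write the eigenvalues as $e^{\pm i\phi}$, and use $\tr W \geq 0$ to get $\cos\phi \geq 0$; both proofs then compare $\norm{I-W} = 2|\sin(\phi/2)|$ against the eigenvalue distances $2|\sin((\psi\pm\phi)/2)|$. Where you diverge is the finishing step: the paper splits into cases $\psi \geq 0$ and $\psi \leq 0$ and argues directly that one of the two quantities $|1-e^{i(\psi\pm\phi)}|$ dominates $|1-e^{i\phi}|$ (which implicitly requires a small monotonicity check on $|1-e^{i\theta}|$), whereas you compute the exact identity $\sin^2(\tfrac{\psi+\phi}{2}) + \sin^2(\tfrac{\psi-\phi}{2}) = 1 - \cos\psi\cos\phi \geq 1-\cos\phi = 2\sin^2(\phi/2)$ and conclude by the fact that the max of two numbers is at least their average. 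Your version avoids the case analysis entirely and makes the role of the hypothesis $\tr W \geq 0$ (i.e., $\cos\phi \geq 0$) completely transparent in the single inequality $\cos\phi\,(1-\cos\psi) \geq 0$; the paper's version is more geometric but leaves a small verification to the reader. Both are complete and correct.
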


\begin{proof}
  We may assume without loss of generality that $W$ is diagonal. Since
  $\det W = 1$, we can write
  \[ W = \begin{pmatrix} e^{i\phi} & 0 \\ 0 & e^{-i\phi}
  \end{pmatrix}
  \]
  for some $\phi$. By symmetry, we can assume without loss of
  generality that $0\leq\phi\leq\pi$. Since $\tr W\geq 0$, we have
  $\phi\leq\pi/2$. Now consider a unit scalar $\lambda=e^{i\psi}$,
  where $-\pi\leq\psi\leq\pi$.  Then $\norm{I-\lambda W} =
  \max\s{|1-e^{i(\psi+\phi)}|, |1-e^{i(\psi-\phi)}|}$ and $\norm{I-W}
  = |1-e^{i\phi}|$. If $\psi\geq 0$, then
  $|1-e^{i\phi}|\leq|1-e^{i(\psi+\phi)}|$. Similarly, if $\psi\leq 0$,
  then $|1-e^{i\phi}|\leq|1-e^{i(\psi-\phi)}|$. In either case, we
  have $\norm{I-W} \leq \norm{I-\lambda W}$, as claimed.
\end{proof}

\begin{lemma}\label{lem-discrete-phase}
  Fix $\epsilon$, a unitary operator $R$ with $\det R=1$, and a
  Clifford+$T$ operator $U$. The following are equivalent:
  \begin{enumerate}\roundlabels
  \item There exists a unit scalar $\lambda$ such that
    \[ \norm{R-\lambda U}\leq\epsilon;
    \]
  \item There exists $n\in\Z$ such that
    \[ \norm{R-e^{in\pi/8} U}\leq\epsilon.
    \]
  \end{enumerate}
\end{lemma}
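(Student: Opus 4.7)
The implication (ii) $\Rightarrow$ (i) is immediate by taking $\lambda = e^{in\pi/8}$, so the content is in (i) $\Rightarrow$ (ii). The plan is to combine Lemma~\ref{lem-IW} with the determinant structure of Clifford+$T$ operators: the key observation is that every such operator has determinant an $8$th root of unity, so the correction factor needed to make the determinant equal to $1$ can already be chosen from the discrete set $\{e^{in\pi/8} : n\in\Z\}$.

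First I would verify the determinant claim by inspecting the generators: $\det H = -1$, $\det S = i$, $\det T = e^{i\pi/4}$, and $\det(\omega I) = \omega^2 = i$, all of the form $e^{ik\pi/4}$. Since determinants are multiplicative, $\det U = e^{ik\pi/4}$ for some $k\in\Z$. I would then choose $n_0\in\Z$ with $n_0 \equiv -k\pmod 8$, so that $V_0 := e^{in_0\pi/8}U$ satisfies $\det V_0 = e^{in_0\pi/4}\,e^{ik\pi/4} = 1$. Setting $W_0 := R^{-1}V_0$, the hypothesis $\det R = 1$ gives $\det W_0 = 1$, and hence $\tr W_0$ is real.

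The next step is a dichotomy on the sign of $\tr W_0$, which I expect to be the main (mildly) delicate point in the argument. If $\tr W_0 \geq 0$, I set $n := n_0$ and $W := W_0$. Otherwise I set $n := n_0 + 8$, so that $e^{in\pi/8} = -e^{in_0\pi/8}$ and $W := R^{-1}e^{in\pi/8}U = -W_0$; then $\det W = (-1)^2\det W_0 = 1$ and $\tr W = -\tr W_0 > 0$. In either case, $W$ is a unitary satisfying the hypotheses of Lemma~\ref{lem-IW}, and $W = R^{-1}e^{in\pi/8}U$ for some $n\in\Z$.

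Finally, I would apply Lemma~\ref{lem-IW} to $W$ with the unit scalar $\alpha := \lambda\, e^{-in\pi/8}$, obtaining $\|I - W\| \leq \|I - \alpha W\|$. Since $\alpha W = \lambda R^{-1}U$ and left-multiplication by the unitary $R$ preserves the operator norm, this rewrites as
\[
\|R - e^{in\pi/8}U\| \;=\; \|I - W\| \;\leq\; \|I - \alpha W\| \;=\; \|R - \lambda U\| \;\leq\; \epsilon,
\]
which is (ii). The dichotomy in the previous step is precisely what forces the range of $n$ to be all of $\Z$ (rather than, say, $\{0,1,\ldots,7\}$), since we may need to flip the overall sign of the corrected operator to make its trace non-negative before invoking Lemma~\ref{lem-IW}.
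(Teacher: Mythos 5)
Your proof is correct and follows essentially the same route as the paper's: reduce to a determinant-one operator $W=R^{-1}e^{in\pi/8}U$ via the fact that $\det U$ is a power of $\omega$, flip the sign if necessary to make $\tr W\geq 0$, and then invoke Lemma~\ref{lem-IW} together with unitary invariance of the operator norm. The only cosmetic difference is that you verify the determinant claim from the generators explicitly, whereas the paper simply asserts $\det U=\omega^k$.
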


\begin{proof}
  It is obvious that (2) implies (1). For the opposite implication,
  first note that, because $U$ is a Clifford+$T$ operator, we have
  $\det U=\omega^k$ for some $k\in\Z$, and therefore $\det (R\inv U) =
  \omega^{k}$. Let $V=e^{-ik\pi/8}R\inv U$, so that $\det V=1$. If
  $\tr V\geq 0$, let $W=V$; otherwise, let $W=-V$. Either way, we have
  $W=e^{in\pi/8}R\inv U$, where $n\in\Z$, and $\det W=1$, $\tr W\geq
  0$. Let $\lambda' = e^{-in\pi/8}\lambda$. By Lemma~\ref{lem-IW}, we
  have
  \[ \begin{array}{crcl}
    & \norm{I-W} &\leq& \norm{I-\lambda' W} \\
    \imp &
    \norm{I-e^{in\pi/8}R\inv U}  &\leq& 
    \norm{I-\lambda' e^{in\pi/8}R\inv U} \\
    \imp &
    \norm{R-e^{in\pi/8} U}  &\leq& 
    \norm{R-\lambda' e^{in\pi/8} U},\\
    \imp &
    \norm{R-e^{in\pi/8} U}  &\leq& 
    \norm{R-\lambda U}, \\
  \end{array}
  \]
  which implies the desired claim.
\end{proof}

\begin{remark}
  A version of Lemma~\ref{lem-discrete-phase} also applies to gate
  sets other than Clifford+$T$, as long as the gate set has discrete
  determinants.
\end{remark}

\begin{corollary}\label{cor-only-two-phases}
  In Definition~\ref{def-approx-up-to-phase}, it suffices without loss
  of generality to consider only the two scalars $\lambda=1$ and
  $\lambda=e^{i\pi/8}$.
\end{corollary}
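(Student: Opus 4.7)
The plan is to combine Lemma~\ref{lem-discrete-phase} with the fact that $\omega = e^{i\pi/4}$ is itself a Clifford scalar, so multiplying a Clifford+$T$ operator by any power of $\omega$ yields another Clifford+$T$ operator of the same $T$-count. The rotation $R = \Rz(\theta)$ has $\det R = 1$, so Lemma~\ref{lem-discrete-phase} applies and reduces the set of admissible phases from the full unit circle to the discrete set $\{e^{in\pi/8} \mid n\in\Z\}$.

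Concretely, suppose $U$ is a Clifford+$T$ operator and $\lambda$ is a unit scalar with $\norm{\Rz(\theta)-\lambda U}\leq\epsilon$. By Lemma~\ref{lem-discrete-phase}, there exists some $n\in\Z$ with $\norm{\Rz(\theta)-e^{in\pi/8}U}\leq\epsilon$. Next I would write $n=2m+r$ with $r\in\{0,1\}$, so that $e^{in\pi/8} = e^{ir\pi/8}\,\omega^{m}$. Setting $U' = \omega^{m}U$, this $U'$ is again a Clifford+$T$ operator with the same $T$-count as $U$ (since $\omega$ is a Clifford scalar in the paper's generating set), and
\[
  \norm{\Rz(\theta) - e^{ir\pi/8}\,U'} = \norm{\Rz(\theta) - e^{in\pi/8}\,U} \leq \epsilon.
\]
Since $e^{ir\pi/8}$ is either $1$ or $e^{i\pi/8}$, the pair $(U',e^{ir\pi/8})$ is an admissible approximant of the required restricted form, with no increase in $T$-count.

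There is essentially no obstacle here: the only subtle point is justifying that Lemma~\ref{lem-discrete-phase} is applicable, which requires $\det \Rz(\theta) = 1$, and this is immediate from $\Rz(\theta) = e^{-i\theta Z/2}$ having trace-zero generator. The rest is a one-line parity argument on the exponent $n$, using the inclusion of $\omega$ among the Clifford generators to absorb $\omega^{m}$ into the circuit without cost.
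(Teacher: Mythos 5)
Your proposal is correct and matches the paper's own proof: both apply Lemma~\ref{lem-discrete-phase} to discretize the phase to $e^{in\pi/8}$ and then split off the even part of $n$ as a power of $\omega$, which is absorbed into the circuit at no cost in $T$-count. The parity decomposition $n=2m+r$ is exactly the paper's factorization $\lambda=\omega^k\lambda'$ with $\lambda'\in\{1,e^{i\pi/8}\}$.
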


\begin{proof}
  Let $U$ be a Clifford+$T$ operator satisfying
  {\eqref{eqn-approx-up-to-phase}} for some unit scalar $\lambda$. By
  Lemma~\ref{lem-discrete-phase}, there exists a $\lambda$ of the form
  $e^{in\pi/8}$ also satisfying {\eqref{eqn-approx-up-to-phase}}.
  Then we can write $\lambda=\omega^k\lambda'$, where $k\in\Z$ and
  $\lambda'\in \s{1,e^{i\pi/8}}$. Letting $U'=\omega^k U$, we have
  $\lambda'U' = \lambda U$, and therefore
  \[ \norm{\Rz(\theta)-\lambda' U'} \leq \epsilon,
  \]
  as claimed. Moreover, since $\omega=e^{i\pi/4}$ is a Clifford
  operator, $U$ and $U'$ have the same $T$-count.
\end{proof}

To solve the approximate synthesis problem up to a phase, we therefore
need an algorithm for finding optimal solutions of
{\eqref{eqn-approx-up-to-phase}} in case $\lambda=1$ and
$\lambda=e^{i\pi/8}$ For $\lambda=1$, this is of course just
Algorithm~\ref{alg-main}. So all that remains to do is to find an
algorithm for solving
\begin{equation}\label{eqn-norm8}
  \norm{\Rz(\theta)-e^{i\pi/8}U}\leq \epsilon.
\end{equation}
We use a sequence of steps very similar to those of
Section~\ref{ssec-app-synth-problem} to reduce this to a grid problem
and a Diophantine equation.  We first consider the form of $U$.

\begin{lemma}\label{lem-ell8}
  If $\epsilon<|1-e^{i\pi/8}|$, then all solutions of
  {\eqref{eqn-norm8}} have the form
  \begin{equation}\label{eqn-u8}
    U = \zmatrix{cc}{u & -t\da\omega^{-1} \\ t & u\da\omega^{-1}}.
  \end{equation}
\end{lemma}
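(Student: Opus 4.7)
The plan is to adapt the first-claim proof of Lemma~\ref{lem-ell}, tracking the extra phase $e^{i\pi/8}$ through the determinant calculation. By the exact synthesis characterization of \cite{Kliuchnikov-etal}, any Clifford+$T$ operator $U$ can be written in the form \eqref{eqn-u-ell} for some $\ell\in\Z$, and a direct computation using $u\da u+t\da t=1$ gives $\det U=\omega^\ell$. So the task reduces to showing that the hypothesis forces $\ell\equiv -1\pmod{8}$, since substituting $\omega^\ell=\omega^{-1}$ into \eqref{eqn-u-ell} gives exactly the form \eqref{eqn-u8}.

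To do this, I would set $W = e^{i\pi/8}\,U\,\Rz(\theta)\inv$. The hypothesis \eqref{eqn-norm8} then reads $\norm{I-W}\leq\epsilon<|1-e^{i\pi/8}|$. Let $e^{i\phi_1}, e^{i\phi_2}$ be the eigenvalues of $W$ with $\phi_1,\phi_2\in[-\pi,\pi]$. Since $W$ is unitary, $\norm{I-W}=\max\{|1-e^{i\phi_1}|,|1-e^{i\phi_2}|\}$, and therefore $|1-e^{i\phi_j}|<|1-e^{i\pi/8}|$ for $j=1,2$. This localizes each phase to $|\phi_j|<\pi/8$, and hence $|\phi_1+\phi_2|<\pi/4$, which in turn gives $|1-e^{i(\phi_1+\phi_2)}|<|1-e^{i\pi/4}|=|1-\omega|$.

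The decisive step is then the determinant bookkeeping: since $\det\Rz(\theta)=1$ and $\det U=\omega^\ell$,
\[
e^{i(\phi_1+\phi_2)}=\det W=(e^{i\pi/8})^{2}\,\det U\,\det\Rz(\theta)\inv=\omega\cdot\omega^{\ell}=\omega^{\ell+1}.
\]
Combining with the estimate above yields $|1-\omega^{\ell+1}|<|1-\omega|$. Inspecting the eight eighth-roots of unity, this inequality is strict only for $\omega^{\ell+1}=1$, so $\omega^\ell=\omega^{-1}$, and the desired form \eqref{eqn-u8} follows.

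No real obstacle arises; the only point requiring care is that the scalar $e^{i\pi/8}$ contributes $(e^{i\pi/8})^{2}=\omega$ to $\det W$, shifting the exponent of $\omega$ by exactly $1$ relative to the Lemma~\ref{lem-ell} computation. This single arithmetic shift is what replaces $\omega^{\ell}=1$ by $\omega^{\ell}=\omega^{-1}$ and accounts for the appearance of $\omega^{-1}$ in \eqref{eqn-u8}.
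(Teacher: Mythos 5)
Your proof is correct and is precisely the argument the paper intends: its own proof of Lemma~\ref{lem-ell8} consists of the single remark that one repeats the proof of Lemma~\ref{lem-ell} with $e^{i\pi/8}U$ in place of $U$, and your write-up carries that out, correctly identifying the determinant shift $(e^{i\pi/8})^2=\omega$ as the reason $\omega^\ell=1$ becomes $\omega^\ell=\omega^{-1}$.
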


\begin{proof}
  This is completely analogous to the proof of Lemma~\ref{lem-ell}, using 
  $e^{i\pi/8}U$ in place of $U$.
\end{proof}

Recall that $\delta = 1+\omega$, and note that
$\frac{\delta}{|\delta|} = e^{i\pi/8}$. Also note that
$\delta\omega\inv = \delta\da$, and that the element $\delta$ is
invertible in $\D[\omega]$ with inverse $\delta\inv = (\omega-i)/\sqrt
2$. Suppose that $U$ is of the form {\eqref{eqn-u8}}. Let $u'=\delta
u$ and $t'=\delta t$. We have:
\begin{eqnarray}
  \norm{\Rz(\theta) - e^{i\pi/8}U}
  &=& 
  \bignorm{\Rz(\theta) - \frac{\delta}{|\delta|}\zmatrix{cc}{u &
      -t\da\omega^{-1} \\ t & u\da\omega^{-1}}}
  \nonumber\\ &=&
  \bignorm{\Rz(\theta) - \frac{1}{|\delta|}\zmatrix{cc}{\delta u &
      -\delta\da t\da \\ \delta t & \delta\da u\da}}
  \nonumber\\ &=&
  \bignorm{\Rz(\theta) - \frac{1}{|\delta|}\zmatrix{cc}{u' &
      -{t'}\da \\ t' & {u'}\da}}.
  \nonumber
\end{eqnarray}
Recall the definition of the $\epsilon$-region $\Repsilon$ from
{\eqref{eqn-eps-region}}.  Using exactly the same argument as in
Section~\ref{sec-algorithm}, it follows that {\eqref{eqn-norm8}} holds
if and only if $\frac{u'}{|\delta|}\in\Repsilon$, i.e., $u'\in
|\delta|\Repsilon$.

As before, in order for $U$ to be unitary, of course it must satisfy
$u\da u+t\da t=1$, and a necessary condition for this is
$u,u\bul\in\Disk$. The latter condition can be equivalently
re-expressed in terms of $u'$ by requiring $u'\in|\delta|\,\Disk$ and
${u'}\bul\in|\delta\bul\!|\,\Disk$. Therefore, finding solutions to
{\eqref{eqn-norm8}} of the form {\eqref{eqn-u8}} reduces to the
two-dimensional grid problem $u'\in |\delta|\Repsilon$ and
${u'}\bul\in|\delta\bul\!|\,\Disk$, together with the usual
Diophantine equation $u\da u+t\da t=1$. The last remaining piece of
the puzzle is to compute the $T$-count of $U$, and in particular, to
ensure that potential solutions are found in order of increasing
$T$-count.

\begin{lemma}\label{lem-2k-1}
  Let $U$ be a Clifford+$T$ operator of the form {\eqref{eqn-u8}}, and
  let $k$ be the least denominator exponent of $u'= \delta u$. Then
  the $T$-count of $U$ is either $2k-1$ or $2k+1$. Moreover, if $k>0$
  and $U$ has $T$-count $2k+1$, then $U'=TUT\da$ has $T$-count
  $2k-1$. 
\end{lemma}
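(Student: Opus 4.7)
The proof will closely parallel that of Lemma~\ref{lem-2k-2}. First, I establish that the entries $u$ and $t$ of a unitary $U$ of the form \eqref{eqn-u8} share a common least denominator exponent $j$. This follows exactly as in the proof of Lemma~\ref{lem-2k-2}, from the identity $u\da u + t\da t = 1$ combined with the fact (Lemma~2 of \cite{Giles-Selinger}) that an element $s \in \Z[\omega]$ is divisible by $\sqrt{2}$ if and only if $s\da s$ is divisible by $2$. A parallel argument, now reduced modulo $\delta$---using that the residue field $\Z[\omega]/\delta$ has two elements, so that $s\da s \equiv s \pmod{\delta}$ for every $s \in \Z[\omega]$---shows that, writing $v = \rt{j} u$ and $w = \rt{j} t$, $\delta$ divides $v$ if and only if $\delta$ divides $w$. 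Consequently $u' = \delta u$ and $\delta t$ also share a common least denominator exponent $k$, with $k = j$ or $k = j - 1$ according as $\delta \nmid v$ or $\delta \mid v$.

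The main claim about the $T$-count of $U$ then follows by inspection of Figure~2 of \cite{ma-remarks}, exactly as in the proof of Lemma~\ref{lem-2k-2}. That figure enumerates every possible $k$-residue of a Clifford+$T$ operator modulo the right action of $\groupspan{S,X,\omega}$. Operators $U$ of the form \eqref{eqn-u8} may be written as $V T\da$ with $V$ of the form \eqref{eqn-u}, and therefore populate a subset of $k$-residues distinct from the subset treated in Lemma~\ref{lem-2k-2}; indeed, the two forms are already distinguished by determinant, since $\det U = 1$ in \eqref{eqn-u} while $\det U = \omega^{-1}$ in \eqref{eqn-u8}. Inspecting the figure for this new subset, one reads off that the minimum $T$-count is either $2k - 1$ or $2k + 1$, where $k$ is now indexed by the LDE of $u'$; the passage from $u$ to $u' = \delta u$ exactly compensates for the extra factor of $\omega^{-1}$ in the bottom-right entry of \eqref{eqn-u8}, producing the parity shift relative to Lemma~\ref{lem-2k-2}. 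The same figure further shows that whenever $k > 0$ and $U$ has $T$-count $2k + 1$, conjugation by $T$ produces $U' = T U T\da$ of $T$-count exactly $2k - 1$.

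The main obstacle is the combinatorial step of identifying, within Figure~2 of \cite{ma-remarks}, the precise sub-collection of $k$-residues populated by operators of the form \eqref{eqn-u8}, and verifying that on this sub-collection the minimum $T$-count is correctly indexed by the LDE of $u'$ rather than of $u$. Once this identification is made, both the main claim ($T$-count $\in \{2k-1, 2k+1\}$) and the secondary claim about the effect of conjugation by $T$ are read off directly from the same figure.
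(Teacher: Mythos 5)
Your proposal follows essentially the same route as the paper, which disposes of this lemma in a single sentence (``a tedious but easy induction, analogous to Lemma~\ref{lem-2k-2}''), i.e.\ the same residue-class inspection used for Lemma~\ref{lem-2k-2}. Your write-up is in fact more detailed than the paper's: the bookkeeping relating the least denominator exponent of $u'=\delta u$ to that of $u$ via $\delta$-divisibility, and the observation that $U=VT\da$ with $V$ of the form {\eqref{eqn-u}} and $\det U=\omega^{-1}$, are sound and correctly reduce the remaining work to the same table of $k$-residues.
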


\begin{proof}
  This can be proved by a tedious but easy induction, analogous to
  Lemma~\ref{lem-2k-2}.
\end{proof}

We therefore arrive at the following algorithm for solving
{\eqref{eqn-norm8}}. Here we assume $\epsilon < |1-e^{i\pi/8}|$, so
that Lemma~\ref{lem-ell8} applies.

\begin{algorithm}\label{alg-phase8}
  Given $\theta$ and $\epsilon$, let $A=|\delta|\Repsilon$, and let
  $B=|\delta\bul\!|\,\Disk$.
  \begin{enumerate}
  \item[1.] Use Proposition~\ref{prop-scaled-increasing} to enumerate
    the infinite sequence of solutions to the scaled grid problem
    $u'\in A$ and ${u'}\bul\in B$, where $u'\in\D[\omega]$, in the order
    of increasing least denominator exponent $k$. 
  \item[2.] For each such solution $u'$:
    \begin{enumerate}
    \item[(a)] Let $u=u'/\delta$, let $\xi=1 - u\da u\in\D[\sqrt2]$, and write
      $\xi\bul\xi=\frac{n}{2^\ell}$, where $n\in\Z$ and $\ell\geq 0$
      is minimal.
    \item[(b)] Attempt to find a prime factorization of
      $n$. If $n\neq 0$ but no prime factorization is found, skip step
      2(c) and continue with the next $u'$.
    \item[(c)] Use the algorithm of Theorem~\ref{thm-diophantine} to
      solve the equation $t\da t = \xi$. If a solution $t$ exists,
      go to step 3; otherwise, continue with the next $u'$.
    \end{enumerate}
  \item[3.] Define $U$ as in equation {\eqref{eqn-u8}}, let
      $U'=TUT\da$, and use the exact synthesis algorithm of
      {\cite{Kliuchnikov-etal}} to find a Clifford+$T$ circuit
      implementing either $U$ or $U'$, whichever has smaller
      $T$-count. Output this circuit and stop.
  \end{enumerate}
\end{algorithm}

Algorithm~\ref{alg-phase8} is optimal in the presence of a factoring
oracle, and near-optimal in the absence of a factoring oracle, in the
same sense as Algorithm~\ref{alg-main}. Its expected time complexity
is $O(\polylog(1/\epsilon))$. The proofs are completely analogous to
those of Section~\ref{sec-analysis}. We then arrive at the following
composite algorithm for the approximate synthesis problem for
$z$-rotations up to a phase:

\begin{algorithm}[Approximate synthesis up to a phase]
  \label{alg-phase}
  Given $\theta$ and $\epsilon$, run both Algorithms~\ref{alg-main}
  and {\ref{alg-phase8}}, and return whichever circuit has the smaller
  $T$-count. 
\end{algorithm}

\begin{proposition}[Correctness, time complexity, and optimality]
  Algorithm~\ref{alg-phase} yields a valid solution to the approximate
  synthesis problem up to a phase. Its expected running time is
  $O(\polylog(1/\epsilon))$. In the presence of a factoring oracle, the
  algorithm is optimal, i.e., the returned circuit has the smallest
  $T$-count of any single-qubit Clifford+$T$ circuit approximating
  $\Rz(\theta)$ up to $\epsilon$ and up to a phase.  Moreover, in the
  absence of a factoring oracle, the algorithm is near-optimal in the
  following sense. Let $m$ be the $T$-count of the solution
  found. Then:
  \begin{enumerate}\alphalabels
  \item The approximate synthesis problem has an expected number of at
    most $O(\log(1/\epsilon))$ non-equivalent solutions with $T$-count
    less than $m$.
  \item The expected value of $m$ is $m''' +
    O(\log(\log(1/\epsilon)))$, where $m'''$ is the $T$-count of the
    third-to-optimal solution (up to equivalence) of the approximate
    synthesis problem up to a phase.
  \end{enumerate}
\end{proposition}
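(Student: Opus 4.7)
The plan is to reduce every claim about Algorithm~\ref{alg-phase} to the corresponding statement about its two sub-algorithms. By Corollary~\ref{cor-only-two-phases}, it suffices to search for approximations with phase $\lambda\in\s{1,e^{i\pi/8}}$. Algorithm~\ref{alg-main} handles the $\lambda=1$ case, and Algorithm~\ref{alg-phase8} (by the same analysis as in Section~\ref{sec-analysis}, which the text has already asserted carries over verbatim) handles the $\lambda=e^{i\pi/8}$ case. Since Algorithm~\ref{alg-phase} runs both and returns the smaller-$T$-count output, correctness is immediate and the expected time complexity is the sum of the two polylogarithmic complexities, hence $O(\polylog(1/\epsilon))$. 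Optimality with a factoring oracle follows because each sub-algorithm finds the $T$-count-minimal solution within its case, and by the corollary the overall minimum is attained by one of them.

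The only part that requires genuinely new reasoning is the near-optimality clause, which I would prove by a pigeonhole argument. I would partition the equivalence classes of solutions of the approximate synthesis problem up to a phase into \emph{type 1} (those realizable as in Algorithm~\ref{alg-main}) and \emph{type 2} (those realizable as in Algorithm~\ref{alg-phase8}); by Corollary~\ref{cor-only-two-phases} every equivalence class falls into at least one type. Let $m_1$ and $m_2$ be the random $T$-counts returned by the two sub-algorithms, so that the final output has $T$-count $m=\min(m_1,m_2)$. For part (a), any non-equivalent solution with $T$-count less than $m$ is of some type $i$ and satisfies $T$-count $< m_i$; by Proposition~\ref{prop-near-optimality}(a) applied to sub-algorithm $i$, the expected number of such solutions is $O(\log(1/\epsilon))$, and summing over the two types preserves the bound.

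For part (b), consider the three best equivalence classes, with $T$-counts $m'\leq m''\leq m'''$. Each class is of type 1 or type 2, so by pigeonhole at least two of them share a common type $i$; for that $i$, the second-to-optimal type-$i$ solution has $T$-count $m_i''\leq m'''$. By Proposition~\ref{prop-near-optimality}(b) applied to sub-algorithm $i$, we obtain $E(m_i)\leq m_i''+O(\log(\log(1/\epsilon)))\leq m'''+O(\log(\log(1/\epsilon)))$. Since $m\leq m_i$ holds deterministically, the same bound transfers to $E(m)$, and we are done. The main (essentially only) technical subtlety is the pigeonhole observation above; every other step is a direct transfer of the single-phase analysis of Section~\ref{sec-analysis}.
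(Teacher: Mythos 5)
Your proposal is correct and follows essentially the same route as the paper: reduce everything to the two sub-algorithms via Corollary~\ref{cor-only-two-phases}, and note that for part (b) one must pass to the third-to-optimal solution because the two best solutions may split between the two phases. Your pigeonhole argument is just an explicit spelling-out of the paper's remark that it "may not be until the third-to-optimal solution" that one sub-algorithm's near-optimality bound applies, so no further comment is needed.
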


\begin{proof}
  The correctness and time complexity of Algorithm~\ref{alg-phase}
  follows from that of Algorithms~\ref{alg-main} and
  {\ref{alg-phase8}}. The optimality results follow from those of
  Algorithms~\ref{alg-main} and {\ref{alg-phase8}}, keeping in mind
  that Algorithm~\ref{alg-main} finds an optimal (or near-optimal)
  solution for the phase $\lambda=1$, Algorithm~\ref{alg-phase8} finds
  an optimal (or near-optimal) solution for the phase
  $\lambda=e^{i\pi/8}$, and by Corollary~\ref{cor-only-two-phases},
  these are the only two phases that need to be considered.

  The only subtlety that must be pointed out is that in part (b) of
  the near-optimality, we use the $T$-count of the {\em
    third}-to-optimal solution, rather than the second-to-optimal one
  as in Proposition~\ref{prop-near-optimality}. This is because the
  optimal and second-to-optimal solutions may belong to
  Algorithms~\ref{alg-main} and {\ref{alg-phase8}}, respectively, so
  that it may not be until the third-to-optimal solution that the
  near-optimality result of either Algorithm~\ref{alg-main} or
  Algorithm~\ref{alg-phase8} can be invoked.
\end{proof}

\begin{remark}
  Algorithms~\ref{alg-main} and {\ref{alg-phase8}} share the same
  $\epsilon$-region up to scaling, and therefore the uprightness
  computation of Theorem~\ref{thm-ellipse} only needs to be done once. 
\end{remark}

\begin{remark}
  By Lemmas~\ref{lem-2k-2} and {\ref{lem-2k-1}},
  Algorithm~\ref{alg-main} always produces circuits with even
  $T$-count, and Algorithm~\ref{alg-phase8} always produces circuits
  with odd $T$-count. Instead of running both algorithms to
  completion, it is possible to interleave the two algorithms, so that
  all potential solutions are considered in order of increasing
  $T$-count. This is a slight optimization which does not, however,
  affect the asymptotic time complexity.
\end{remark}

% ----------------------------------------------------------------------
\section{Experimental results}
\label{sec-experimental}

%......................................................................
\begin{table}
\[
\begin{array}{|l|r|r|l||r|r|r|}
\hline
\multicolumn{1}{|c|}{\mbox{$\epsilon$}} &
\multicolumn{1}{|c|}{\mbox{$T$-count}} &
\multicolumn{1}{|c|}{\mbox{$T$-bound}} &
\multicolumn{1}{|c||}{\mbox{Actual error}} &
\multicolumn{1}{|c|}{\mbox{Runtime}} &
\multicolumn{1}{|c|}{\mbox{Candidates}} &
\multicolumn{1}{|c|}{\mbox{Time/Candidate}} \\\hline

10^{-10} &                    % Epsilon
102 &                         % T-count
\geq 102 &                    % Lower bound on T-count
0.91180\cdot 10^{-10} &       % Actual error
0.0190s &                     % Runtime, averaged over 50 runs
3.0 &                         % Candidates tried, averaged over 50 runs
0.0064s \\                    % Time per candidate, averaged over 50 runs

10^{-20} &                    % Epsilon
200 &                         % T-count
\geq 198 &                    % Lower bound on T-count
0.87670\cdot 10^{-20} &       % Actual error
0.0433s &                     % Runtime, averaged over 50 runs
7.0 &                         % Candidates tried, averaged over 50 runs
0.0061s \\                    % Time per candidate, averaged over 50 runs

10^{-30} &                    % Epsilon
298 &                         % T-count
\geq 298 &                    % Lower bound on T-count
0.99836\cdot 10^{-30} &       % Actual error
0.0600s &                     % Runtime, averaged over 50 runs
7.0 &                         % Candidates tried, averaged over 50 runs
0.0085s \\                    % Time per candidate, averaged over 50 runs

10^{-40} &                    % Epsilon
402 &                         % T-count
\geq 400 &                    % Lower bound on T-count
0.77378\cdot 10^{-40} &       % Actual error
0.0976s &                     % Runtime, averaged over 50 runs
11.7 &                        % Candidates tried, averaged over 50 runs
0.0084s \\                    % Time per candidate, averaged over 50 runs

10^{-50} &                    % Epsilon
500 &                         % T-count
\geq 500 &                    % Lower bound on T-count
0.82008\cdot 10^{-50} &       % Actual error
0.1353s &                     % Runtime, averaged over 50 runs
20.3 &                        % Candidates tried, averaged over 50 runs
0.0067s \\                    % Time per candidate, averaged over 50 runs

10^{-60} &                    % Epsilon
602 &                         % T-count
\geq 596 &                    % Lower bound on T-count
0.61151\cdot 10^{-60} &       % Actual error
0.1548s &                     % Runtime, averaged over 50 runs
16.0 &                        % Candidates tried, averaged over 50 runs
0.0097s \\                    % Time per candidate, averaged over 50 runs

10^{-70} &                    % Epsilon
702 &                         % T-count
\geq 698 &                    % Lower bound on T-count
0.40936\cdot 10^{-70} &       % Actual error
0.1931s &                     % Runtime, averaged over 50 runs
20.9 &                        % Candidates tried, averaged over 50 runs
0.0093s \\                    % Time per candidate, averaged over 50 runs

10^{-80} &                    % Epsilon
804 &                         % T-count
\geq 794 &                    % Lower bound on T-count
0.92372\cdot 10^{-80} &       % Actual error
0.2402s &                     % Runtime, averaged over 50 runs
27.2 &                        % Candidates tried, averaged over 50 runs
0.0088s \\                    % Time per candidate, averaged over 50 runs

10^{-90} &                    % Epsilon
898 &                         % T-count
\geq 898 &                    % Lower bound on T-count
0.96607\cdot 10^{-90} &       % Actual error
0.2696s &                     % Runtime, averaged over 50 runs
22.2 &                        % Candidates tried, averaged over 50 runs
0.0121s \\                    % Time per candidate, averaged over 50 runs

10^{-100} &                   % Epsilon
1000 &                        % T-count
\geq 998 &                    % Lower bound on T-count
0.78879\cdot 10^{-100} &      % Actual error
0.3443s &                     % Runtime, averaged over 50 runs
31.2 &                        % Candidates tried, averaged over 50 runs
0.0110s \\                    % Time per candidate, averaged over 50 runs

10^{-200} &                   % Epsilon
1998 &                        % T-count
\geq 1994 &                   % Lower bound on T-count
0.73266\cdot 10^{-200} &      % Actual error
1.1423s &                     % Runtime, averaged over 50 runs
62.3 &                        % Candidates tried, averaged over 50 runs
0.0183s \\                    % Time per candidate, averaged over 50 runs

10^{-500} &                   % Epsilon
4990 &                        % T-count
\geq 4986 &                   % Lower bound on T-count
0.67156\cdot 10^{-500} &      % Actual error
8.6509s &                     % Runtime, averaged over 50 runs
170.4 &                       % Candidates tried, averaged over 50 runs
0.0508s \\                    % Time per candidate, averaged over 50 runs

10^{-1000} &                  % Epsilon
9974 &                        % T-count
\geq 9966 &                   % Lower bound on T-count
0.80457\cdot 10^{-1000} &     % Actual error
47.9300s &                    % Runtime, averaged over 50 runs
270.4 &                       % Candidates tried, averaged over 50 runs
0.1773s \\                    % Time per candidate, averaged over 50 runs

% ----------------------------------------------------------------------

10^{-2000} &                  % Epsilon
19942 &                       % T-count
\geq 19934 &                  % Lower bound on T-count
0.88272\cdot 10^{-2000} &     % Actual error
383.1024s &                   % Runtime, averaged over 50 runs
556.7 &                       % Candidates tried, averaged over 50 runs
0.6881s \\                    % Time per candidate, averaged over 50 runs

% ----------------------------------------------------------------------

\hline
\end{array}
\]
\caption{Experimental results. The first four columns report the
  $T$-count, computed lower bound on the $T$-count, and the actual error 
  for approximating the operator $\Rz(\pi/128)$ up to various $\epsilon$. 
  The remaining columns report the runtime for each $\epsilon$,
  averaged over 50 independent runs of Algorithm~\ref{alg-main} with
  random angles $\theta$. The runtime is further broken down into
  average number of candidates tried per run of the algorithm, and
  time spent per candidate.}
\label{tab-results}
\rule{\textwidth}{0.1mm}
\end{table}
%......................................................................

We implemented Algorithm~\ref{alg-main} in the programming language
Haskell; the implementation is available from {\cite{implementation}}.
Running on a single core of a 3.4GHz Intel i5-3570 CPU, we
approximated the operator $\Rz(\pi/128)$ up to various $\epsilon$. The
results are summarized in the leftmost four columns of
Table~\ref{tab-results}.  For example, here is the approximation of
$\Rz(\pi/128)$ up to $\epsilon=10^{-10}$:
\[ \begin{split}
  \makebox[1em][c]{$u$} =&~ \rtt{52}(-26687414\omega^3+10541729\omega^2+10614512\omega+40727366) \\
  \makebox[1em][c]{$t$} =&~ \rtt{52}(30805761\omega^3-23432014\omega^2+2332111\omega+20133911) \\
  \makebox[1em][c]{$U$} =
  &~ \tt HTSHTSHTSHTHTHTHTSHTHTSHTSHTSHTHTHTSHTSHTHTHTSHTHTSHTHTHTHTHTHTHTSHTSHT\\[-1ex]
  &~ \tt SHTHTSHTHTSHTHTHTHTSHTHTHTSHTHTSHTHTHTHTSHTSHTSHTHTHTSHTSHTSHTSHTHTSHTS\\[-1ex]
  &~ \tt HTSHTSHTHTSHTHTSHTSHTHTHTHTHTSHTHTHTHTSHTSHTSHTHTSHTSHTHTHTSHTHTHTHTHTS\\[-1ex]
  &~ \tt HTSHTHTHTHTHTSHTHTHTHTSHTHTHTHTHTHTH\omega^7
\end{split}
\]
In addition to $\epsilon$, the $T$-count of the computed operator, and
the actual error, we have also reported the lower bound on the
$T$-count, which was computed according to Remark~\ref{rem-lower-bound}.
It can be seen that the actual $T$-count achieved by the algorithm
exceeds this lower bound by at most a very small amount, which is
consistent with $O(\log(\log(1/\epsilon)))$ as predicted by
Proposition~\ref{prop-near-optimality}(b). This excess could be
further reduced by increasing the amount of effort spent on factoring,
and will become zero in the presence of a factoring oracle.

Table~\ref{tab-results} also shows the runtime as a function of
$\epsilon$. Since the enumeration of solutions to the grid problem in
the algorithm is deterministic, the algorithm tends to find the same
one or two solutions each time it is run with the same parameters. For
this reason, we have averaged the runtimes in Table~\ref{tab-results}
over 50 runs of the algorithm with random angles $\theta$, for each
$\epsilon$. As shown in Table~\ref{tab-results}, we were able to
achieve approximations up to $\epsilon=10^{-1000}$ with a $T$-count of
under 10000 in less than 50 seconds on average. This compares to a
$T$-count of 13300 and an average runtime of 504.8 seconds reported in
{\cite{Selinger-newsynth}} on the same hardware. We also note that the
experimental runtimes in Table~\ref{tab-results} are consistent with
the polynomial runtime predicted by
Proposition~\ref{prop-time-complexity}, and appear to be 
$O(\log^3(1/\epsilon))$.

%......................................................................
\begin{table}
\[
\begin{array}{|l|ccccc|}
\hline
& \epsilon=10^{-100}
& \epsilon=10^{-200}
& \epsilon=10^{-300}
& \epsilon=10^{-400}
& \epsilon=10^{-500}
\\\hline
\theta=2\tan^{-1}(2+3\sqrt{2})
& 1320
& 2646
& 3976
& 5306
& 6636
\\
\theta=2\tan^{-1}({5}/{3})
& 1314
& 2646
& 3966
& 5306
& 6636
\\
\theta=2\tan^{-1}({(2+7\sqrt{2})}/{5})
& 1308
& 2642
& 3968
& 5294
& 6630
\\
\theta=2\tan^{-1}(2+3\sqrt{3})
& 1002
& 1996
& 2994
& 3992
& 4990
\\
\theta=2\tan^{-1}(\sqrt{5})
& 998
& 1998
& 2996
& 3990
& 4990
\\
\theta=2\tan^{-1}(1+11\sqrt{7})
& 1000
& 1998
& 2996
& 3992
& 4990
\\
\hline\hline
\multicolumn{1}{|c|}{4\log_2(1/\epsilon)}
& 1329
& 2658
& 3986
& 5315
& 6644
\\
\multicolumn{1}{|c|}{3\log_2(1/\epsilon)}
& 997
& 1993
& 2990
& 3986
& 4983
\\\hline
\end{array}

\]
\caption{Worst-case and typical $T$-counts. The top part of the
  table shows the $T$-counts for various combinations of $\theta$ and
  $\epsilon$. The bottom part of the table shows the actual values of
  $4\log_2(1/\epsilon)$ and $3\log_2(1/\epsilon)$ for each
  $\epsilon$, rounded to the closest integer. As predicted by
  Conjecture~\ref{con-theta}, the
  $T$-counts are close to $4\log_2(1/\epsilon)$ when
  $\theta=2\tan^{-1}(r)$, for $r\in\Q(\sqrt2)$, and close to
  $3\log_2(1/\epsilon)$ otherwise.}
\label{tab-2}
\rule{\textwidth}{0.1mm}
\end{table}
%......................................................................

Table~\ref{tab-2} shows $T$-counts as a function of $\theta$ and
$\epsilon$. As predicted by Conjecture~\ref{con-theta}, the $T$-counts
are close to $4\log_2(1/\epsilon)$ when $\theta=2\tan^{-1}(r)$, for
$r\in\Q(\sqrt2)$, and close to $3\log_2(1/\epsilon)$ otherwise.

% ----------------------------------------------------------------------
\section{Conclusion}

We have presented a fast new probabilistic algorithm for approximating
arbitrary single-qubit $z$-rotations by Clif\-ford+$T$ circuits. Our
algorithm is optimal in the presence of a factoring oracle, i.e., it
finds the shortest possible circuit whatsoever for the given problem
instance. In the absence of a factoring oracle, our algorithm is still
nearly optimal. In particular, no quantum computer is required to run
our algorithm.  The main technical innovation of this paper is a new
efficient algorithm for solving two-dimensional grid problems, such as
the ones that arise in candidate selection for approximate synthesis.
We solved this problem by an iterative method that successively
increases the uprightness of a pair of convex sets until the problem
is in a form where it can be solved directly.

It is an interesting question whether a similar algorithm can be found
for giving optimal or near-optimal approximations of arbitrary
single-qubit operators. In its current form, our method only applies
to $z$-rotations. Since any arbitrary single-qubit operator can be
decomposed into three $z$-rotations using Euler angles, our algorithm
can currently achieve a $T$-count of $9\log_2(1/\epsilon) +
O(\log(\log(1/\epsilon)))$. (Interestingly, the average case gate
complexity can always be achieved in this situation, because if the
operator to be approximated happens to exhibit worst-case behavior, we
can always change it by multiplying it by a small number of random
Clifford+$T$ gates). However, the information-theoretic lower bound in
this situation remains $K+3\log_2(1/\epsilon)$, so there is still
potential for improvement.

% ----------------------------------------------------------------------
\section{Acknowledgements}

This research was supported by the Natural Sciences and Engineering
Research Council of Canada (NSERC). This research was supported by the
Intelligence Advanced Research Projects Activity (IARPA) via
Department of Interior National Business Center contract number
D12PC00527. The U.S.\ Government is authorized to reproduce and
distribute reprints for Governmental purposes notwithstanding any
copyright annotation thereon.  Disclaimer: The views and conclusions
contained herein are those of the authors and should not be
interpreted as necessarily representing the official policies or
endorsements, either expressed or implied, of IARPA, DoI/NBC, or the
U.S.\ Government.

%----------------------------------------------------------------------
\appendix

\section[Proof of Theorem 5.16]{Proof of Theorem~\ref{thm-ellipse}}
\label{appendix-skew-red}

This appendix contains a proof of Theorem~\ref{thm-ellipse}. We start
by reformulating the problem in more convenient terms. Recall that the
notion of uprightness introduced in Section~\ref{ssec-uprightness} was
defined for an arbitrary bounded convex subset of $\R^2$. If the set
in question is an ellipse, we can expand the definition of uprightness
into an explicit expression. Recall from Definition~\ref{def-ellipses}
that an ellipse centered at $p$ can be written as $E=\s{u\in\R^2\such
  (u-p)^\dagger D(u-p)\leq 1}$, where $D$ is a positive definite matrix
whose entries are, e.g., as follows:
\[
  D = 
  \left[ 
    \begin{array}{cc} 
      a & b  \\
      b & d
    \end{array} 
  \right].
\]
We can compute the area of $E$ and the area of its 
bounding box using $D$. Indeed, we have 
$\area(E)= \pi/\sqrt{\Det(D)}$ and 
$\area (\BBox(E))=4\sqrt{ad} /\Det(D)$. 
Substituting these in Definition~\ref{def-uprightness} 
yields the desired expression for uprightness:
\begin{equation}
\label{eqn:upright}
  \up(E) = \frac{\area(E)}
                {\area(\BBox(E))}
         = \frac{\pi}{4} 
             \sqrt{
               \frac{\Det(D)}{ad} 
             }.
    {\hspace{1.5cm}
    \begin{tikzpicture}[baseline=0, xscale=1.6, yscale=1]
      \draw[fill=yellow!20] (-1,-1) -- (1,-1) -- (1,1) -- (-1,1) -- cycle;
      \draw[fill=red!50, rotate=45, yscale=0.4] (0,0) circle (1.3131);
      \draw (0,0) node {\small $E$};
      \draw (-1,.75) node[right] {\small $\BBox(E)$};
    \end{tikzpicture}
    }
\end{equation}
It follows that the uprightness of $E$ is invariant under 
translation and scalar multiplication.

Recall that $\lambda=\sqrt{2} +1$. The matrix $D$ corresponding to 
an ellipse $E$ has determinant 1 if and only if it can be written 
in the form
\begin{equation}
  D =
  \left[ 
    \begin{array} {cc} 
      e{\lambda^{-z}} & b \\
      b & e{\lambda^{z}}
    \end{array} 
  \right]
\end{equation}
for some $b,e,z\in\R$ with $e>0$ and $e^2=b^2+1$.  In this case, the
definition of uprightness {\eqref{eqn:upright}} simplifies to
\begin{equation}\label{eqn-upe}
  \up(E) = \frac{\pi}{4e^2} = \frac{\pi}{4\sqrt{b^2+1}}.
\end{equation}
Equivalently, if $\up(E)=M$, then
\begin{equation}\label{eqn-b2}
  b^2 = \frac{\pi^2}{16M^2} - 1.
\end{equation}

Since Theorem~\ref{thm-ellipse} deals with pairs of ellipses, it is
convenient to introduce the following terminology for discussing pairs
of matrices.

\begin{definition}
A \emph{state} is a pair of real symmetric positive 
definite matrices of determinant 1. Given a state 
$\state$ with
\begin{equation}
\label{eqn:state}
  D =
  \left[ 
    \begin{array}{cc} 
      e{\lambda^{-z}} & b \\
      b & e{\lambda^{z}}
    \end{array} 
  \right]
  ~~~~
  \Delta =
  \left[ 
    \begin{array}{cc} 
      \varepsilon{\lambda^{-\zeta}} & \beta \\
      \beta & \varepsilon{\lambda^{\zeta}}
    \end{array} 
  \right]
\end{equation}
we define its \emph{skew} as 
$\sk\state=b^2+\beta^2$ and its
\emph{bias} as $\bias \state 
=\zeta-z$.
\end{definition}

Note that the skew of a state is small if and only if both $b^2$ and
$\beta^2$ are small, which happens, by {\eqref{eqn-upe}}, if and only
if the ellipses corresponding to $D$ and $\Delta$ both have large
uprightness. So our strategy for increasing the uprightness will be to
reduce the skew. In what follows, we use $\state$ to denote an
arbitrary state and always assume that the entries of $D$ and $\Delta$
are given as in \eqref{eqn:state}. For future reference, we record
here another useful property of states.

\begin{remark}\label{rem-be}
If $\state$ is a state with $b\geq 0$,
then $-be\leq -b^2$. Indeed:
\[
e^2=b^2+1 ~\imp~ e^2\geq b^2 ~\imp~ e\geq b ~\imp~ -be\leq -b^2.
\]
Similarly, if $b\leq 0$, then $be\leq -b^2$. Analogous 
inequalities also hold for $\beta$ and $\varepsilon$.
\end{remark}

The action of a grid operator on an ellipse can be adapted to states
in a natural way, provided that the operator is special.

\begin{definition}
\label{definition_action}
The action of special grid operators on states is defined as
follows. Here, $\G\da$ denotes the transpose of $\G$, and $\G\bul$ is
defined by applying $(-)\bul$ separately to each matrix entry, as in
Remark~\ref{gridoperatorcomposition}.
\[
\state\cdot \G = (\G^\dagger D\G,\G^{\bullet\dagger} \Delta \G^\bullet).
\]
\end{definition}

\begin{lemma}\label{lem-ellipse-action}
  Let $(D,\Delta)$ be a state, and let $A$ and $B$ be the ellipses
  centered at the origin that are defined by $D$ and $\Delta$,
  respectively.  Then the ellipses $\G(A)$ and $\G\bul(B)$ are defined
  by the matrices $D'$ and $\Delta'$, where
  \[ (D',\Delta') = (D,\Delta) \cdot \G\inv
  \]
\end{lemma}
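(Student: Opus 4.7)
The plan is to unfold the definition of an ellipse centered at the origin and perform two symmetric calculations, one for each component of the state. Since $A$ is the ellipse defined by $D$, we have $A=\s{u\in\R^2\such u\da D u\leq 1}$. A point $v\in\R^2$ lies in $\G(A)$ if and only if $\G\inv v\in A$, which by definition means $(\G\inv v)\da D(\G\inv v)\leq 1$, or equivalently $v\da((\G\inv)\da D\,\G\inv)v\leq 1$. Hence $\G(A)$ is the ellipse defined by $D'=(\G\inv)\da D\,\G\inv$, which is exactly the first component of $\state\cdot\G\inv$ in Definition~\ref{definition_action}.

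The second computation proceeds along the same lines, but requires one small additional observation: the automorphism $(-)\bul$ (extended entrywise to matrices) commutes with matrix inversion, i.e., $(\G\bul)\inv=(\G\inv)\bul$. This is immediate because $(-)\bul$ is a ring homomorphism on the entries of $\G$ (which lie in $\D[\sqrt 2]$ by Lemma~\ref{lem-gridoperators}), so applying it to $\G\,\G\inv=I$ yields $\G\bul(\G\inv)\bul=I\bul=I$, and similarly on the other side. With this in hand, $v\in\G\bul(B)$ holds iff $(\G\bul)\inv v=(\G\inv)\bul v\in B$, iff $v\da\bigl(((\G\inv)\bul)\da\Delta(\G\inv)\bul\bigr)v\leq 1$. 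Therefore $\G\bul(B)$ is the ellipse defined by $\Delta'=(\G\inv)^{\bullet\dagger}\Delta(\G\inv)^{\bullet}$, which matches the second component of $\state\cdot\G\inv$.

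The calculation is essentially routine; the only point that requires any attention is the commutation of $(-)\bul$ with matrix inversion, which also implicitly uses that $(-)\bul$ commutes with transposition (true because $(-)\bul$ acts entrywise). Everything else reduces to a direct substitution into the defining inequality of an ellipse, together with the fact that $\G\bul$ is invertible whenever $\G$ is, which is guaranteed by $\G$ being a special grid operator (Remark~\ref{gridoperatorcomposition}).
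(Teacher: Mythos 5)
Your proof is correct and follows essentially the same route as the paper: unfold the defining inequality of the ellipse and substitute $v=\G u$, obtaining $D'=(\G\inv)\da D\,\G\inv$. The paper dismisses the second component with ``the proof for $\G\bul(B)$ is similar''; you usefully make explicit the one fact that makes it work, namely $(\G\bul)\inv=(\G\inv)\bul$, which holds because $(-)\bul$ acts entrywise as a ring homomorphism on the (grid-operator) entries.
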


\begin{proof}
  We have 
  \[ \begin{array}{rcl}
    \G(A) &=& \s{\G(u)\in\R^2 \mid u\da D u \leq 1} \\
    &=& \s{v\in\R^2 \mid (\G\inv v)\da D(\G\inv v) \leq 1} \\
    &=& \s{v\in\R^2 \mid v\da (\G\inv)\da D \G\inv v \leq 1},
  \end{array}
  \] 
  so the ellipse $\G(A)$ is defined by the positive operator
  $D'=(\G\inv)\da D \G\inv$. The proof for $\G\bul(B)$ is similar.
\end{proof}

The main ingredient in the proof of 
Theorem~\ref{thm-ellipse} is the following Step Lemma.

\begin{lemma}[Step Lemma]
\label{Step}
For any state $\state$, if $\sk\state \geq \formula{\P}$, 
then there exists a special grid operator $\G$ such that 
$\sk (\state \cdot \G)\leq \formula{\Qstep} ~\sk\state$. 
Moreover, $\G$ can be computed using a constant number of arithmetic
operations.
\end{lemma}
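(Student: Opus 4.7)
My plan is to design a small fixed catalog of special grid operators and, given any state $(D,\Delta)$ with $\sk(D,\Delta)\geq 15$, identify one operator in the catalog whose action reduces the skew by a factor of at most $0.9$. The proof naturally decomposes into a preliminary bias normalization followed by a case analysis on sign patterns and relative sizes of the off-diagonal entries $b,\beta$.

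\textbf{Step 1: Normalize the bias.} The shift $\sigma=\bigl[\begin{smallmatrix}\lambda&0\\0&\lambda^{-1}\end{smallmatrix}\bigr]$ is a special grid operator; a direct conjugation computation shows that its action sends $z\mapsto z-2$, $\zeta\mapsto \zeta+2$, and crucially leaves $b$, $\beta$, $e$, and $\varepsilon$ unchanged. Hence a single appropriately chosen power $\sigma^{n}$ (determined in $O(1)$ arithmetic operations via a logarithm) reduces to the case $|\bias(D,\Delta)|\leq 2$ (or some other fixed constant), without altering the skew. This normalization is what makes a fixed-size catalog suffice in the next step.

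\textbf{Step 2: The catalog of moves.} With the bias bounded, I will consider a small finite list of special grid operators: in addition to $\sigma$ and $\sigma^{-1}$, the basic candidates are operators that mix the two coordinates in a controlled way, e.g.\ $R=\bigl[\begin{smallmatrix}1&-1\\0&1\end{smallmatrix}\bigr]$, $R^{\bul}$, the swap-like $K=\bigl[\begin{smallmatrix}0&-1\\1&0\end{smallmatrix}\bigr]$, a rotation constructed from $\delta$ (since $\delta^\dagger\delta=\lambda\sqrt2$), and their compositions with low powers of $\sigma$. For each $G$ in this catalog I will compute the transformed off-diagonal entries $b'=b'(b,e,z;G)$ and $\beta'=\beta'(\beta,\varepsilon,\zeta;G)$ explicitly by expanding $G^\dagger D\, G$ and $G^{\bul\dagger}\Delta\, G^{\bul}$, and then compare $(b')^2+(\beta')^2$ against $b^2+\beta^2$.

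\textbf{Step 3: Case analysis driving the constants.} The heart of the proof is showing that some catalog entry gives a contraction. Key input is Remark~\ref{rem-be}: when $b\geq 0$ one has $-be\leq -b^2$, and symmetrically for $\beta$. Combined with $e^2=b^2+1$, $\varepsilon^2=\beta^2+1$, and the boundedness of the bias from Step 1, this forces at least one cross term in $(b')^2+(\beta')^2$ to be negative and large enough to overcome the error terms, provided $b^2+\beta^2$ is large. A case analysis on the four sign patterns of $(b,\beta)$ and a subdivision based on which of $|b|$, $|\beta|$ is dominant will determine which catalog operator to apply. Finally, the threshold $\sk\geq 15$ and the contraction factor $0.9$ will emerge as the quantitative cut-off at which the linear/quadratic contraction term from the chosen $G$ dominates the constant error terms produced by the mismatch between $b$ and $e$ (and between $\beta$ and $\varepsilon$). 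Since the catalog has fixed size and each candidate involves only a constant number of arithmetic operations to evaluate, the overall cost of producing $G$ is $O(1)$, as required.

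\textbf{Main obstacle.} The delicate part is not the high-level strategy but matching the constants: choosing the catalog so that the worst configuration across all sign patterns and bias values in $[-2,2]$ still yields a factor $\leq 0.9$ once $b^2+\beta^2\geq 15$. This is essentially a bounded-variable optimization problem, and will likely require some numerical tuning of the threshold $P$ and the catalog rather than a clean closed-form argument; I expect the write-up to proceed by stating the candidate operators, tabulating the resulting $b',\beta'$, and verifying the contraction inequality in each case.
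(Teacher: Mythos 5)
Your Step 1 is essentially the paper's Shift Lemma and is fine: the paper also begins by normalizing the bias $\zeta-z$ to a bounded interval (it uses non-grid shift matrices plus a conjugation trick, whereas your $\mathrm{diag}(\lambda,\lambda^{-1})$ is itself a special grid operator and shifts the bias by $4$; either variant works and costs $O(1)$ operations).

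The gap is in Steps 2--3: a \emph{fixed finite catalog} of special grid operators cannot suffice. Bias normalization controls only the difference $\zeta-z$; the sum $z+\zeta$ remains unbounded, and no special grid operator can reduce it, since every diagonal grid operator $\mathrm{diag}(\mu,\mu^{-1})$ with $\mu$ a unit of $\Z[\sqrt2]$ moves $z$ and $\zeta$ in \emph{opposite} directions (because $\lambda\bul=-\lambda^{-1}$). Now take a state with $z=\zeta=c$ large and $b=\beta$ large. For $\G=\bigl[\begin{smallmatrix}p&q\\r&s\end{smallmatrix}\bigr]$ the new off-diagonal entry of $\G\da D\G$ is $pq\,e\lambda^{-c}+(ps+qr)b+rs\,e\lambda^{c}$; to avoid the exploding term one needs $rs=0$, and then (using $\det\G=\pm1$ and $e\approx b$) the entry is $\pm b+pq\,e\lambda^{-c}$, so a genuine contraction of $b^2$ forces $|pq|\approx\lambda^{c}$, which is unbounded over all admissible states. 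This is precisely why the paper's $A$ and $B$ Lemmas use shears $A^n=\bigl[\begin{smallmatrix}1&-2n\\0&1\end{smallmatrix}\bigr]$ and $B^n=\bigl[\begin{smallmatrix}1&\sqrt2 n\\0&1\end{smallmatrix}\bigr]$ with $n\approx\lambda^{c}/2$ chosen \emph{from the state} (still $O(1)$ arithmetic operations, since exponentiation counts as one operation, but not a fixed list), while the bounded region of $(z,\zeta)$ is handled by the fixed operators $R$ and $K$. Two smaller points: your candidate $\bigl[\begin{smallmatrix}1&-1\\0&1\end{smallmatrix}\bigr]$ is not a grid operator (it violates the parity condition $a+b+c+d\equiv 0\pmod 2$ of Lemma~\ref{lem-gridoperators}, which is why the paper's $A$ has entry $-2$); and the "negative cross term" heuristic of your Step 3 is exactly what Remark~\ref{rem-be} supplies, but turning it into the contraction $0.9$ at threshold $15$ requires the region-by-region maximizations of $\sinl^2$, $\cosl^2$, $g$, and $h$ carried out in the $R$, $K$, $A$, $B$ Lemmas, not just a sign-pattern table.
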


Before proving the Step Lemma, we show how it can be 
used to derive Theorem~\ref{thm-ellipse}, whose 
statement we reproduce here. 

\begin{un-theorem}
  Suppose $A,B\subseteq \R^2$ are ellipses. Then there exists a grid
  operator $\G$ such that $\G(A)$ and $\G^\bullet (B)$ are
  $1/\formula{\oneoverM}$-upright. Moreover, if $A$ and $B$ are
  $M$-upright, then $\G$ can be efficiently computed in $O(\log(1/M))$
  arithmetic operations.
\end{un-theorem}

\begin{proof}
  Let $D$ and $\Delta$ be the matrices defining $A$ and $B$
  respectively, in the sense of Definition~\ref{def-ellipses}. Since
  uprightness is invariant under translations and scaling, we may
  without loss of generality assume that both ellipses are centered at
  the origin, and that $\det D = \det \Delta = 1$.

  The pair $\state$ is a state. By applying Lemma~\ref{Step}
  repeatedly, we get grid operators $\G_1,\ldots,\G_n$ such that:
  \begin{equation}\label{eqn-skew-reduction}
    \sk(\state\cdot \G_1\ldots \G_n) \leq \formula{\P}.
  \end{equation}
  Now let $(D',\Delta')=\state\cdot \G_1\ldots \G_n$ and set
  $\G=(\G_1\cdots \G_n)^{-1}$. By Lemma~\ref{lem-ellipse-action}, the
  ellipses $\G(A)$ and $\G^{\bullet}(B)$ are defined by the matrices
  $D'$ and $\Delta'$, respectively.  Let $b$ and $\beta$ be the
  anti-diagonal entries of the matrices $D'$ and $\Delta'$,
  respectively.  We have:
  \[
  b^2+\beta^2  = \sk (D',\Delta')=
  \sk(\state\cdot \G\inv) =
  \sk (\state\cdot \G_1\ldots \G_n)\leq \formula{\P},
  \]
  hence $b^2\leq \formula{\P}$ and $\beta^2\leq\formula{\P}$.
  Using {\eqref{eqn-upe}}, we get
  \[
  \up(\G(A)) 
  = \frac{\pi}{4\sqrt{b^2+1}} 
  \geq \frac{\pi}{4\sqrt{\formula{\P+1}}}
  \geq 1/\formula{\oneoverM} 
  \quad\mbox{and}\quad
  \up(\G\bul(B)) 
  = \frac{\pi}{4\sqrt{\beta^2+1}} 
  \geq \frac{\pi}{4\sqrt{\formula{\P+1}}} 
  \geq 1/\formula{\oneoverM},
  \]
  as desired. 
  % Check!!!
  \assert{\P <= (pi^2 / (16 * \M^2))-1}%

  To bound the number of operations, note that each application of
  $\G_j$ reduces the skew by at least $\formula{\roundone{100-100*\Qstep}}$ percent.
  Therefore, the number $n$ in {\eqref{eqn-skew-reduction}} satisfies
  $n\leq \log_{\formula{\Qstep}}({\formula{\P}}/{\sk\state}) =
  O(\log(\sk\state))$.  Using {\eqref{eqn-b2}}, we have
  \[ \log (\sk\state) = \log (b^2+\beta^2)
  \leq \log ((\frac{\pi^2}{16M^2} - 1) + (\frac{\pi^2}{16M^2} - 1))
  = O(\log(1/M)).
  \]
  It follows that the computation of $\G$ requires
  $O(\log(1/M))$ applications of the Step Lemma, each of
  which requires a constant number of arithmetic operations, proving
  the final claim of the theorem. 
\end{proof}

The remainder of this appendix is devoted to proving the 
Step Lemma. To each state, we associate the pair $(z, 
\zeta)$. The proof of the Step Lemma is essentially 
a case distinction on the location of the pair $(z, 
\zeta)$ in the plane. We find 
coverings of the plane with the property that if the 
point $(z, \zeta)$ belongs to some region 
$\mathcal{O}$ of our covering, then it is easy to compute a 
special grid operator $\G$ such that $\sk(\state\cdot \G)\leq 
\formula{\Qstep}~\sk\state$. The relevant grid operators are 
given in Figure~\ref{list_operators}.
\begin{figure}
\[
R= \frac{1}{\sqrt{2}}\left[ \begin{array} {cc} 
1 & -1  \\
1 & 1
\end{array} \right]
~~
A = \left[ \begin{array} {cc} 
1 & -2  \\
0 & 1
\end{array} \right]
~~ 
B = \left[ \begin{array} {cc} 
1 & \sqrt{2}  \\
0 & 1
\end{array} \right]
\]
\[
K = \frac{1}{\sqrt{2}}\left[ \begin{array} {cc} 
-\lambda^{-1} & -1  \\
\lambda & 1
\end{array} \right]
~~
X= \left[ \begin{array} {cc} 
0 & 1  \\
1 & 0
\end{array} \right]
~~ 
Z= \left[ \begin{array} {cc} 
1 & 0  \\
0 & -1
\end{array} \right]
\]
\caption{List of useful grid operators.}
\label{list_operators}
\rule{\textwidth}{0.1mm}
\end{figure}
Each one of the next 5 subsections is dedicated to a
particular region of the plane. We prove the Step Lemma 
in Section~\ref{ssect-step}.

\subsection{The Shift Lemma}
\label{ssect-shift}

In this section, we consider states $\state$ such that 
$|\bias\state|>1$. Any such state can be ``shifted" to a 
state $(D', \Delta')$ of equal skew but with 
$|\bias(D', \Delta')|\leq 1$.

\begin{definition}
The \emph{shift operators} $\sigma$ and $\tau$ are defined by:
\[
\sigma =
\sqrt{\lambda^{-1}}\left[\begin{array}{cc}
\lambda & 0 \\
0 & 1
\end{array} \right],
\tau =
\sqrt{\lambda^{-1}}
\left[\begin{array}{cc}
1 & 0 \\
0 & -\lambda
\end{array} \right]
\]
\end{definition}

Even though $\sigma$ and $\tau$ are not grid operators, we can use 
them to define an operation on states called a \emph{shift by $k$}. 
By abuse of notation, we write this operation as an action.

\begin{definition}
Given a state $\state$ and $k\in\Z$, the \emph{k-shift of $\state$} is defined as:
\[
\state \cdot \shift^k = (\sigma^k D \sigma^k, \tau^k \Delta \tau^k).
\]
\end{definition}

The notation $\state \cdot \shift^k$ is justified by the following lemma.

\begin{lemma}
\label{propshift}
The shift of a state is a state and moreover:
\[
\sk(\state\cdot\shift^k)= \sk\state
~\mbox{ and }~
\bias(\state\cdot\shift^k)=\bias\state +2k
\]
\end{lemma}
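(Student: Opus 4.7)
The plan is to unpack the definitions and carry out an explicit $2\times 2$ matrix computation; no clever idea is required, so I will keep the exposition to the level of ``read off the entries.'' First I would compute the $k$-th powers of $\sigma$ and $\tau$. Pulling the scalar $\sqrt{\lambda^{-1}}$ into the diagonal, one has $\sigma = \mathrm{diag}(\lambda^{1/2},\lambda^{-1/2})$ and $\tau = \mathrm{diag}(\lambda^{-1/2},-\lambda^{1/2})$, hence $\sigma^k = \mathrm{diag}(\lambda^{k/2},\lambda^{-k/2})$ and $\tau^k = \mathrm{diag}(\lambda^{-k/2},(-1)^k\lambda^{k/2})$.

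Next I would substitute these into Definition~\ref{definition_action} (specialized to shifts) and multiply out. Using the parametrization of $D$ and $\Delta$ from \eqref{eqn:state}, the result is
\[
\sigma^k D \sigma^k = \left[\begin{array}{cc} e\lambda^{-(z-k)} & b \\ b & e\lambda^{z-k} \end{array}\right], \qquad \tau^k \Delta \tau^k = \left[\begin{array}{cc} \varepsilon\lambda^{-(\zeta+k)} & (-1)^k\beta \\ (-1)^k\beta & \varepsilon\lambda^{\zeta+k} \end{array}\right].
\]
Both matrices are manifestly symmetric and have positive diagonal entries, and their determinants equal $e^2-b^2=1$ and $\varepsilon^2-\beta^2=1$ respectively (since $\sigma$ and $\tau$ themselves have determinant $\pm 1$, conjugation by them preserves the determinant of $D$ and $\Delta$). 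Positive-definiteness then follows from positive trace and determinant. Thus $\state\cdot\shift^k$ is again a state, with new parameters $z'=z-k$, $e'=e$, $b'=b$ on the left, and $\zeta'=\zeta+k$, $\varepsilon'=\varepsilon$, $\beta'=(-1)^k\beta$ on the right. The skew and bias identities fall out immediately:
\[
\sk(\state\cdot\shift^k) = (b')^2 + (\beta')^2 = b^2 + \beta^2 = \sk\state, \qquad \bias(\state\cdot\shift^k) = \zeta' - z' = (\zeta+k)-(z-k) = \bias\state + 2k.
\]

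I do not anticipate any real difficulty; the only mild subtlety is the factor $(-1)^k$ appearing in the off-diagonal of $\tau^k\Delta\tau^k$. This sign might worry one at first glance, but it is harmless: a state only requires the off-diagonal entries to be real (with $\varepsilon^2=\beta^2+1$ preserved), and the skew depends only on $\beta^2$, so the sign washes out of both assertions.
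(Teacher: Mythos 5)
Your computation matches the paper's proof essentially line for line: both conjugate $D$ and $\Delta$ by the diagonal matrices $\sigma^k$ and $\tau^k$, read off the new parameters $z-k$, $\zeta+k$, $(-1)^k\beta$, and observe that the sign on $\beta$ is irrelevant to the skew. Correct, and no meaningful difference in approach.
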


\begin{proof}
Compute $\state\cdot\shift^k$:
\[
\begin{array}{rcl}
\state\cdot\shift^k & = & (\sigma^k D \sigma^k, \tau^k \Delta \tau^k) \\
& = & (\sigma^k\left[ \begin{array} {cc} 
e{\lambda^{-z}} & b \\
b & e{\lambda^{z}}
\end{array} \right] \sigma^k,
\tau^k\left[ \begin{array} {cc} 
\varepsilon{\lambda^{-\zeta}} & \beta \\
\beta & \varepsilon{\lambda^{\zeta}}
\end{array} \right]\tau^k) \\
& = & (\left[ \begin{array} {cc} 
e\lambda^{-z+k} & b \\
b & e\lambda^{z-k}
\end{array} \right],
\left[ \begin{array} {cc} 
\varepsilon\lambda^{-\zeta-k} & (-1)^k\beta \\
(-1)^k\beta & \varepsilon\lambda^{\zeta+k}
\end{array} \right])
\end{array}
\]
The resulting matrices are clearly symmetric and positive definite. Moreover, since $\sigma^k$ and $\tau^k$ have determinant $\pm 1$, both $\sigma^k D\sigma^k$ and $\tau^k \Delta\tau^k$ have determinant 1. Finally:
\begin{itemize}
  \item $\sk(\state\cdot\shift^k)= b^2+((-1)^k\beta)^2=b^2+\beta^2= \sk\state$ and
  \item $\bias(\state\cdot\shift^k)=(\zeta+k)-(z-k)=\bias\state +2k$,
\end{itemize}
which completes the proof.
\end{proof}

For every special grid operator $\G$, there is a special grid operator $\G'$ whose action on a state corresponds to shifting the state by $k$, applying $\G$ and then shifting the state by $-k$.

\begin{lemma}
\label{conjugationbysigma1}
If $\G$ is a special grid operator and $k\in\Z$, then $\G'=\sigma^k \G \sigma^k$ is a special grid operator and moreover $\G'^\bullet = (-\tau)^k \G^\bullet \tau^k$.
\end{lemma}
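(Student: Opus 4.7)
The plan is to verify both claims by direct entrywise calculation. First, since $\sigma = \sqrt{\lambda^{-1}}\,\mathrm{diag}(\lambda,1) = \mathrm{diag}(\lambda^{1/2},\lambda^{-1/2})$, we have $\sigma^k = \mathrm{diag}(\lambda^{k/2},\lambda^{-k/2})$. Conjugating an arbitrary $2\times 2$ matrix $\G = (g_{ij})$ by $\sigma^k$ yields
\[
\G' = \sigma^k \G \sigma^k = \begin{pmatrix} \lambda^k g_{11} & g_{12} \\ g_{21} & \lambda^{-k} g_{22}\end{pmatrix},
\]
so the off-diagonal entries are unchanged, the diagonal entries are rescaled by $\lambda^{\pm k} \in \Z[\sqrt 2]$, and $\det \G' = \det \G = \pm 1$.

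The substantive step is to verify that $\G'$ is a grid operator, i.e.\ that it preserves $\Z[\omega]$. I would do this via Lemma~\ref{lem-gridoperators}: write $\G$ in normal form with integer parameters $a,b,c,d,a',b',c',d'$ satisfying $a+b+c+d \equiv 0 \pmod 2$ and $a' \equiv b' \equiv c' \equiv d' \pmod 2$, and expand $\lambda^k = p + q\sqrt 2$ and $\lambda^{-k} = p' + q'\sqrt 2$ in $\Z[\sqrt 2]$. Multiplying out shows that $\G'$ is again of the form \eqref{eqn-gridoperator} with new integer parameters $A,B,C,D,A',B',C',D'$. The two congruence conditions on the new parameters reduce to the mod-$2$ facts $p \equiv p' \equiv 1 \pmod 2$ and $q \equiv q' \equiv k \pmod 2$, both of which follow by induction from the recurrence $\lambda^{k+1} = \lambda\cdot\lambda^k$ (and its analogue for $\lambda^{-1} = -1+\sqrt 2$). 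Given these, the condition $A' \equiv B' \equiv C' \equiv D' \pmod 2$ is immediate from oddness of $p, p'$, while $A+B+C+D \equiv 0 \pmod 2$ follows after absorbing cross terms of the shape $qa' + q'd'$, which vanish modulo $2$ in the only nontrivial case (odd $k$) because $a' \equiv d' \pmod 2$. Combined with $\det \G' = \pm 1$, this establishes that $\G'$ is a special grid operator.

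For the identity ${\G'}^\bullet = (-\tau)^k \G^\bullet \tau^k$, I would apply $(-)^\bullet$ to $\G'$ entrywise. Using $\lambda^\bullet = -\lambda^{-1}$, we get $(\lambda^k)^\bullet = (-1)^k \lambda^{-k}$ and $(\lambda^{-k})^\bullet = (-1)^k \lambda^k$, so
\[
{\G'}^{\bullet} = \begin{pmatrix} (-1)^k\lambda^{-k} g_{11}^\bullet & g_{12}^\bullet \\ g_{21}^\bullet & (-1)^k \lambda^k g_{22}^\bullet \end{pmatrix}.
\]
On the other hand, $\tau^k = \mathrm{diag}(\lambda^{-k/2},(-1)^k\lambda^{k/2})$ and $(-\tau)^k = \mathrm{diag}((-1)^k\lambda^{-k/2},\lambda^{k/2})$, so multiplying out $(-\tau)^k \G^\bullet \tau^k$ produces exactly those four entries (the signs $(-1)^k$ on the off-diagonal entries cancel between the two factors, while on the diagonal they survive). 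The main obstacle is the mod-$2$ bookkeeping needed to confirm that $\G'$ still satisfies the congruence conditions of Lemma~\ref{lem-gridoperators}; the rest is routine diagonal-matrix algebra.
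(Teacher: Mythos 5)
Your proof is correct, but it takes a different route from the paper's at the one substantive step, namely showing that $\G'$ preserves $\Z[\omega]$. You work with general $k$ and verify the characterization of Lemma~\ref{lem-gridoperators} directly: writing $\lambda^{\pm k}=p+q\sqrt2$ with $p$ odd and $q\equiv k\pmod 2$, you check that the rescaled diagonal entries still satisfy the two congruence conditions (and your bookkeeping does check out: the new half-integer parameters reduce mod $2$ to the old ones since $p,p'$ are odd, and the parity sum picks up only $k(a'+d')\equiv 0$). The paper instead reduces to $k=1$ and sidesteps all congruence arithmetic by factoring
\[
\sigma\G\sigma=\bigl(\lambda^{-1}I\bigr)\cdot\zmatrix{cc}{\lambda&0\\0&1}\cdot\G\cdot\zmatrix{cc}{\lambda&0\\0&1},
\]
observing that each factor is a grid operator and invoking closure of grid operators under composition (Remark~\ref{gridoperatorcomposition}); general $k$ then follows by iterating. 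The paper's argument is shorter and reuses existing structure, while yours is more self-contained and handles all $k\in\Z$ (including negative $k$) in one uniform computation rather than implicitly via composition and inverses. Your verification of the identity ${\G'}^\bullet=(-\tau)^k\G^\bullet\tau^k$ is the same diagonal-matrix algebra as the paper's $k=1$ computation, just carried out for general $k$, and it is correct.
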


\begin{proof}
It suffices to show this for $k=1$. Suppose $\G=\left[ \begin{array} {cc} 
w & x \\
y & z
\end{array} \right]$ is a special grid operator and note that:
\[
\G'=
\sigma \G \sigma=
\left[ \begin{array} {cc} 
\lambda w & x \\
y & \lambda^{-1}z
\end{array} \right]
=
\left[ \begin{array} {cc} 
\lambda^{-1} & 0 \\
0 & \lambda^{-1}
\end{array} \right]
\left[ \begin{array} {cc} 
\lambda & 0 \\
0 & 1
\end{array} \right]
\G
\left[ \begin{array} {cc} 
\lambda  & 0 \\
0 & 1
\end{array} \right].
\]
Since all the factors in the above product are grid operators, the result is also a grid operator. Moreover $\Det(\sigma\G\sigma)=\Det(\G)=1$ so that $\sigma\G\sigma$ is special. Finally:
\[
\G'^\bullet=(\sigma \G \sigma)^\bullet=
\left[ \begin{array} {cc} 
\lambda^\bullet w^\bullet & x^\bullet \\
y^\bullet & (\lambda^{-1})^\bullet z^\bullet
\end{array} \right]
=
\left[ \begin{array} {cc} 
-\lambda^{-1} w^\bullet & x^\bullet \\
y^\bullet & -\lambda z^\bullet
\end{array} \right]
=
-\tau \G^\bullet \tau.
\]
\end{proof}

\begin{lemma}
\label{conjugationbysigma2}
If $\G$ is a grid operator, then:
\[
((\state\cdot \shift^k) \cdot \G )\cdot \shift^k = \state \cdot (\sigma^k \G\sigma^k).
\]
\end{lemma}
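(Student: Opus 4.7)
The plan is a direct computation: expand both sides using the definitions of $\shift^k$ (Definition of $k$-shift) and of the action $\state\cdot\G$ (Definition~\ref{definition_action}), and then match components. The first component is the easy one, since $\sigma$ is diagonal and symmetric, so conjugations collapse nicely. The second component is where the work is, and this is where Lemma~\ref{conjugationbysigma1} has been set up for us.

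Concretely, I would first compute the left-hand side as
\[
  \bigl(\sigma^k\G^\dagger\sigma^k\,D\,\sigma^k\G\sigma^k,\;
        \tau^k\G^{\bullet\dagger}\tau^k\,\Delta\,\tau^k\G^\bullet\tau^k\bigr).
\]
Since $\sigma$ is symmetric, the first entry is exactly $(\sigma^k\G\sigma^k)^\dagger D (\sigma^k\G\sigma^k)$, which is the first entry of $\state\cdot(\sigma^k\G\sigma^k)$. So the first component requires only unfolding definitions.

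For the second component, set $\G'=\sigma^k\G\sigma^k$. The right-hand side needs $(\G')^{\bullet\dagger}\Delta\,\G'^\bullet$, while the left-hand side has $\tau^k\G^{\bullet\dagger}\tau^k\,\Delta\,\tau^k\G^\bullet\tau^k$, which is $(\tau^k\G^\bullet\tau^k)^\dagger\Delta\,(\tau^k\G^\bullet\tau^k)$ since $\tau$ is symmetric. By Lemma~\ref{conjugationbysigma1}, $\G'^\bullet=(-\tau)^k\G^\bullet\tau^k=(-1)^k\,\tau^k\G^\bullet\tau^k$. The global sign $(-1)^k$ appears once on each side of the sandwich, and so squares away:
\[
  (\G'^\bullet)^\dagger \Delta\, \G'^\bullet
  = (-1)^{2k}(\tau^k\G^\bullet\tau^k)^\dagger\Delta\,(\tau^k\G^\bullet\tau^k)
  = (\tau^k\G^\bullet\tau^k)^\dagger\Delta\,(\tau^k\G^\bullet\tau^k).
\]
This matches the left-hand side, completing the proof.

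The only conceptual subtlety — and therefore the ``main obstacle,'' such as it is — is this $(-1)^k$ sign discrepancy between $\G'^\bullet$ and $\tau^k\G^\bullet\tau^k$, which is why the lemma is phrased as an equality of states rather than of intermediate matrices. Since the action in Definition~\ref{definition_action} uses $\G^{\bullet\dagger}(-)\G^\bullet$, the sign is always squared out, so no case distinction on the parity of $k$ is needed. Everything else is a routine application of the definitions together with the symmetry of $\sigma$ and $\tau$.
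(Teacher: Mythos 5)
Your proof is correct and follows essentially the same route as the paper's: expand both sides via the definitions, use the symmetry of $\sigma$ and $\tau$ to recognize the first component, and invoke Lemma~\ref{conjugationbysigma1} for the second component, where the $(-1)^k$ from $\G'^\bullet=(-\tau)^k\G^\bullet\tau^k$ cancels because it appears on both sides of the sandwich. The paper performs the identical computation in a single chain of equalities, so there is nothing to add.
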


\begin{proof} Write $\G'=\sigma^k \G\sigma^k$. Simple computation then yields the result:
\[
\begin{array}{rcl}
((\state\cdot \shift^k) \cdot \G )\cdot \shift^k 
& = & ((\sigma^k D \sigma^k,\, \tau^k \Delta \tau^k)\cdot \G )\cdot \shift^k \\
& = & (\G^\dagger\sigma^k D \sigma^k \G,\, \G^{\bullet\dagger} \tau^k \Delta \tau^k \G^\bullet )\cdot \shift^k \\
& = & (\sigma^k\G^\dagger\sigma^k D \sigma^k \G\sigma^k,\, \tau^k \G^{\bullet\dagger} \tau^k \Delta \tau^k \G^\bullet \tau^k ) \\
& = & (\sigma^k\G^\dagger\sigma^k D \sigma^k \G\sigma^k,\, ((-\tau)^k\G^{\bullet\dagger} \tau^k )\Delta ((-\tau)^k \G^\bullet \tau^k) ) \\
& = & (\G'^\dagger D \G',\, \G'^{\bullet\dagger} \Delta {\G'}^\bullet ) \\
& = & \state \cdot \G' \\
& = & \state \cdot (\sigma^k \G\sigma^k).
\end{array}
\]
\end{proof}

Shifts allow us to consider only states $\state$ with 
$\bias\state\in[-1,1]$ in the proof of the Step Lemma.

\begin{lemma}
\label{shift_lemma}
If the Step Lemma holds for all states $\state$ with
$\bias\state\in [-1,1]$, then it holds for all states.
\end{lemma}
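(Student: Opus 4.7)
The plan is to exploit the three lemmas just proved (Lemmas~\ref{propshift}, \ref{conjugationbysigma1}, and \ref{conjugationbysigma2}) to reduce the general case to the case of bias in $[-1,1]$. Given an arbitrary state $\state$, I would first choose an integer $k\in\Z$ such that $\bias\state + 2k \in [-1,1]$ (i.e., $k$ is the nearest integer to $-\bias\state/2$). By Lemma~\ref{propshift}, the shifted state $(D',\Delta') = \state\cdot\shift^k$ then satisfies $\bias(D',\Delta')\in[-1,1]$ and $\sk(D',\Delta')=\sk\state$. In particular, the hypothesis $\sk\state\geq\formula{\P}$ transfers to the shifted state.

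Next, by the assumption of the lemma, the Step Lemma applies to $(D',\Delta')$, yielding a special grid operator $\G$ with $\sk((D',\Delta')\cdot\G) \leq \formula{\Q}\,\sk(D',\Delta')$. To transport this back to the original state, I would define $\G' = \sigma^k\G\sigma^k$. By Lemma~\ref{conjugationbysigma1}, $\G'$ is again a special grid operator. By Lemma~\ref{conjugationbysigma2},
\[
\state\cdot \G' \;=\; \bigl((\state\cdot\shift^k)\cdot\G\bigr)\cdot\shift^k \;=\; \bigl((D',\Delta')\cdot\G\bigr)\cdot\shift^k.
\]
Applying Lemma~\ref{propshift} once more (skew is invariant under shifts), we obtain
\[
\sk(\state\cdot\G') \;=\; \sk\bigl((D',\Delta')\cdot\G\bigr) \;\leq\; \formula{\Q}\,\sk(D',\Delta') \;=\; \formula{\Q}\,\sk\state,
\]
which is exactly the conclusion of the Step Lemma for the original state.

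Finally, one must check the computational cost: choosing $k$ requires one arithmetic operation, producing $\G$ from the shifted state takes $O(1)$ operations by assumption, and forming $\G' = \sigma^k\G\sigma^k$ is a constant amount of matrix arithmetic. Hence $\G'$ is computable in a constant number of arithmetic operations, matching the quantitative claim in the Step Lemma. There is no real obstacle here; the only subtle point is bookkeeping the interaction between $\sigma$ and $\tau$ (and the sign $(-1)^k$ appearing in Lemma~\ref{conjugationbysigma1}), but this is already absorbed into Lemmas~\ref{conjugationbysigma1} and \ref{conjugationbysigma2}, so the argument reduces to chaining these three lemmas in the order above.
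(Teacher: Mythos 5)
Your proposal is correct and follows essentially the same route as the paper's proof: shift by $k$ to bring the bias into $[-1,1]$ (the paper takes $k=\floor{\frac{1-\bias\state}{2}}$, which is the same choice you describe), apply the hypothesis to the shifted state, and transport the resulting operator back via $\G'=\sigma^k\G\sigma^k$ using Lemmas~\ref{conjugationbysigma1} and \ref{conjugationbysigma2}, with skew-invariance of shifts closing the argument. No gaps; the extra remark on computational cost is harmless and consistent with the paper.
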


\begin{proof}
Let $\state$ be some state with $\sk\state\geq\formula{\P}$. 
Let $x=\bias\state$ and set $k=\floor{\frac{1-x}{2}}$. Then by 
Lemma~\ref{propshift}, we have $\sk(\state\cdot\shift^k)= \sk\state$ and
$\bias(\state\cdot\shift^k)\in[-1,1]$. Then by assumption, 
there exists a special grid operator $\G$ such that 
$\sk((\state\cdot\shift^k)\cdot \G)\leq \formula{\Qstep}~\sk(\state 
\cdot\shift^k)$. Now by Lemma~\ref{conjugationbysigma1} 
we know that $\G'=\sigma^k ~\G~ \sigma^k$ is a special 
grid operator. Moreover, by 
Lemma~\ref{conjugationbysigma2} and \ref{propshift}, 
we have:
\[
\begin{array}{rcl}
\sk(\state\cdot \G') & = & \sk(((\state\cdot\shift^k)\cdot \G)\cdot \shift^k) \\
& = & \sk((\state \cdot\shift^k)\cdot \G) \\
& \leq & \formula{\Qstep}~\sk(\state \cdot\shift^k) \\
& = & \formula{\Qstep}~\sk\state ,
\end{array}
\]
which completes the proof.
\end{proof}

\subsection[The R Lemma]{The $R$ Lemma}
\label{ssect-R}

\begin{definition}
  The \emph{hyperbolic sine in base $\lambda$} is defined as:
\[
  \sinl(x) = \frac{\lambda^x-\lambda^{-x}}{2}.
\]
\end{definition}

\begin{lemma}
\label{lemmaR}
Recall the operator $R$ from Figure~\ref{list_operators}. 
If $\state$ is such that $\sk\state\geq\formula{\P}$, and
such that $\formula{-\r}\leq z\leq\formula{\r}$ and $\formula{-\r}\leq
\zeta\leq\formula{\r}$, then:
\[
  \sk(\state\cdot R) \leq \formula{\Qstep}~\sk\state. 
\]
\end{lemma}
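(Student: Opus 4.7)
The plan is a direct computation: apply $R$ to the state, read off the new off-diagonal entries, and bound them using the fact that $\sinh_\lambda$ is small on $[-0.8, 0.8]$. First I would verify that $R$ is in fact a special grid operator: in the form of Lemma~\ref{lem-gridoperators}, $R$ corresponds to $a=b=c=d=0$ and $a'=-b'=c'=d'=1$, so the parity conditions hold, and $\det R = 1$. A key preliminary observation is that $R^\bullet = -R$, since each entry of $R$ is of the form $\pm 1/\sqrt 2$ and $\sqrt 2 \mapsto -\sqrt 2$ under $(-)^\bullet$. Consequently $R^{\bullet\dagger}\Delta R^\bullet = R^\dagger\Delta R$, so both components of $\state\cdot R$ amount to conjugation by $R$.

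Next I would compute $R^\dagger D R$ explicitly. A routine $2\times 2$ calculation with
\[
D = \begin{pmatrix} e\lambda^{-z} & b \\ b & e\lambda^z \end{pmatrix}
\]
yields
\[
R^\dagger D R
= \begin{pmatrix} e\cosl(z)+b & e\sinl(z) \\ e\sinl(z) & e\cosl(z)-b \end{pmatrix},
\]
so the off-diagonal entry of the new matrix is $e\sinl(z)$. The same formula with $\varepsilon,\zeta,\beta$ in place of $e,z,b$ gives the off-diagonal entry of the $\Delta$-component. Using $e^2 = b^2+1$ and $\varepsilon^2 = \beta^2+1$, we obtain
\[
\sk(\state\cdot R) \;=\; (b^2+1)\sinl(z)^2 + (\beta^2+1)\sinl(\zeta)^2.
\]

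Now I would bound the $\sinl$ factors. Since $\sinl$ is odd and strictly increasing, and $|z|,|\zeta|\leq \formula{\r}$, we have $\sinl(z)^2,\sinl(\zeta)^2 \leq C := \sinl(\formula{\r})^2$, a fixed constant less than $\formula{\Q}$. Therefore
\[
\sk(\state\cdot R) \;\leq\; C\bigl((b^2+\beta^2)+2\bigr) \;=\; C\,\sk\state + 2C.
\]
To finish, I need $C\,\sk\state + 2C \leq \formula{\Q}\,\sk\state$, equivalently $\sk\state \geq 2C/(\formula{\Q}-C)$. A short numerical verification shows that $2C/(\formula{\Q}-C) < \formula{\P}$, so the assumption $\sk\state \geq \formula{\P}$ suffices. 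Only a constant number of arithmetic operations are used (multiplication by the fixed matrix $R$ and its transpose), giving the computability claim.

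The main obstacle is not conceptual but numerical: one must verify that the specific constants $\formula{\P}$, $\formula{\Q}$, $\formula{\r}$ chosen in the paper satisfy $\sinl(\formula{\r})^2 < \formula{\Q}$ and $2\sinl(\formula{\r})^2/(\formula{\Q}-\sinl(\formula{\r})^2) \leq \formula{\P}$. These inequalities are straightforward once the closed form $\sinl(x)=(\lambda^x-\lambda^{-x})/2$ with $\lambda = 1+\sqrt 2$ is plugged in, but they are what drives the precise choice of constants throughout the appendix; the other subsections handling different regions of the $(z,\zeta)$-plane will use analogous computations with other grid operators from Figure~\ref{list_operators}.
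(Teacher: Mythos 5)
Your proposal is correct and follows essentially the same route as the paper's proof: compute $R^\dagger D R$ and $R^{\bullet\dagger}\Delta R^\bullet$ (using $R^\bullet=-R$), read off the new anti-diagonal entries $e\sinl(z)$ and $\varepsilon\sinl(\zeta)$, substitute $e^2=b^2+1$, $\varepsilon^2=\beta^2+1$, bound by $\sinl^2(\formula{\r})$, and close with a numerical check against $\formula{\P}$ and $\formula{\Q}$. Your final condition $\sk\state\geq 2C/(\formula{\Q}-C)$ is algebraically equivalent to the paper's check that $(1+\tfrac{2}{\formula{\P}})\sinl^2(\formula{\r})\leq\formula{\Q}$, and both verify numerically.
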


\begin{proof}
Compute the action of $R$ on $\state$:
\[
  \begin{array}{lcl}
    R^\dagger DR 
    & = & \displaystyle\frac{1}{2}
    \left[ \begin{array} {cc} 
      1 & 1  \\
      -1 & 1
    \end{array} \right]
    \left[ \begin{array} {cc} 
      e{\lambda^{-z}} & b \\
      b & e{\lambda^{z}}
    \end{array} \right]
    \left[ \begin{array} {cc} 
      1 & -1  \\
      1 & 1
    \end{array} \right] \\\\[-1.5ex]
    & = &
    \left[ \begin{array} {cc} 
      \ldots & \frac{e({\lambda^{z}}-{\lambda^{-z}})}{2}\\
      \frac{e({\lambda^{z}}-{\lambda^{-z}})}{2} & \ldots
    \end{array} \right] 
    ~ = ~ 
    \left[ \begin{array} {cc} 
      \ldots & e\sinl(z)\\
      e\sinl(z) & \ldots
    \end{array} \right], 
  \\\\[-1.5ex]
    R^{\bullet\dagger} \Delta R^\bullet 
    & = & \displaystyle\frac{1}{2}
    \left[ \begin{array} {cc} 
      -1 & -1  \\
      1 & -1
    \end{array} \right]
    \left[ \begin{array} {cc} 
      \varepsilon{\lambda^{-\zeta}} & \beta \\
      \beta & \varepsilon{\lambda^{\zeta}}
    \end{array} \right]
    \left[ \begin{array} {cc} 
      -1 & 1  \\
      -1 & -1
    \end{array} \right] \\\\[-1.5ex]
    & = & \left[ \begin{array} {cc} 
      \ldots & \frac{\varepsilon({\lambda^{\zeta}}-{\lambda^{-\zeta}})}{2}\\
      \frac{\varepsilon({\lambda^{\zeta}}-{\lambda^{-\zeta}})}{2} & \ldots
    \end{array} \right] 
    ~ = ~ \left[ \begin{array} {cc} 
      \ldots & \varepsilon\sinl(\zeta)\\
      \varepsilon\sinl(\zeta) & \ldots
    \end{array} \right].
  \end{array}
\]
Therefore $\sk(\state\cdot R)= e^2\sinl^2(z)+ 
\varepsilon^2\sinl^2(\zeta)$. But recall that 
$e^2=b^2+1$ and $\varepsilon^2=\beta^2+1$, so that in 
fact:
\[
  \sk(\state\cdot R) = (b^2+1)\sinl^2(z)+
  (\beta^2+1)\sinl^2(\zeta). 
\]
We assumed 
$\formula{-\r}\leq z,\zeta\leq\formula{\r}$
and this implies that 
$\sinl^2(\zeta), \sinl^2(z) \leq 
\sinl^2(\formula{\r})$. Writing 
$y=\sinl^2(\formula{\r})$ for brevity, and using 
the assumption that $\sk\state\geq\formula{\P}$, 
we get:
\[
  \begin{array}{rcl}
  \sk(\state\cdot R) & = &
  (b^2+1)\sinl^2(z)+
  (\beta^2+1)\sinl^2(\zeta)\\
  & \leq & 
  (b^2+1)y+(\beta^2+1)y\\
  & = & (b^2+\beta^2+2)y \\
  & \leq & \sk\state (1+\frac{2}{\formula{\P}})y.
  \end{array}
\]
This completes the proof, since $(1+\frac{2}
{\formula{\P}})y = (1+\frac{2}
{\formula{\P}})\sinl^2(\formula{\r}) 
\approx \formula{(1+(2/\P))*(\sinhl \r)^2} 
\leq \formula{\Qstep}$.
%% Check!!!
\assert{(1+(2/\P))*(\sinhl \r)^2 <= \Qstep}%
\end{proof}

\subsection[The K Lemma]{The $K$ Lemma}
\label{ssect-K}

\begin{definition}
The \emph{hyperbolic cosine in base $\lambda$} is 
defined as:
\[
  \cosl (x) = \frac{\lambda^x+\lambda^{-x}}{2}.
\]
\end{definition}

\begin{lemma}
\label{lemmaK}
Recall the operator $K$ from Figure~\ref{list_operators}. 
If $\state$ is such that $\bias\state\in 
[-1,1]$, $\sk\state\geq\formula{\P}$, and such that $b,\beta\geq 0$,
$z\leq\formula{-\a}$, and $\formula{\r}\leq\zeta$, then:
\[
  \sk(\state\cdot K)\leq \formula{\Qstep}~\sk\state.
\]
\end{lemma}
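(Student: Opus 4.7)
The plan is to follow the template of the proof of Lemma~\ref{lemmaR}: compute the products $K^\dagger D K$ and $K^{\bullet\dagger}\Delta K^\bullet$ explicitly, read off their off-diagonal entries to get the new values $b'$ and $\beta'$ after applying $K$, and then bound the resulting skew $(b')^2 + (\beta')^2$ by $\formula{\Q}\,\sk\state$. The only subtlety is that under $\bullet$-conjugation we have $\sqrt{2}^\bullet = -\sqrt{2}$ and $\lambda^\bullet = -\lambda^{-1}$, so $K^\bullet$ differs from $K$ by a specific sign pattern; however, these signs cancel inside the bilinear form $K^{\bullet\dagger}\Delta K^\bullet$, so no additional difficulty arises.

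First I would carry out the $2\times 2$ matrix multiplications, using the identity $\lambda + \lambda^{-1} = 2\sqrt{2}$ (equivalently, $\cosl(1) = \sqrt{2}$) together with analogous identities to collect telescoping sums of the form $\lambda^{z+1} + \lambda^{-(z+1)}$ and $\lambda^{\zeta-1} + \lambda^{-(\zeta-1)}$ into $\cosl$-values. This should yield
\[
b' = e\cosl(z+1) - b\sqrt{2}, \qquad \beta' = \beta\sqrt{2} - \varepsilon\cosl(\zeta - 1).
\]
Next, I would expand $(b')^2 = e^2\cosl^2(z+1) - 2eb\sqrt{2}\cosl(z+1) + 2b^2$, substitute $e^2 = b^2 + 1$, and apply Remark~\ref{rem-be} (which gives $-eb \leq -b^2$ since $b \geq 0$) to obtain
\[
(b')^2 \leq b^2\bigl[\cosl(z+1) - \sqrt{2}\bigr]^2 + \cosl^2(z+1),
\]
and analogously $(\beta')^2 \leq \beta^2[\cosl(\zeta-1) - \sqrt{2}]^2 + \cosl^2(\zeta-1)$ using $\beta \geq 0$. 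Summing gives
\[
\sk(\state\cdot K) \leq b^2\bigl[\cosl(z+1) - \sqrt{2}\bigr]^2 + \beta^2\bigl[\cosl(\zeta-1) - \sqrt{2}\bigr]^2 + \cosl^2(z+1) + \cosl^2(\zeta-1).
\]

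Finally, I would use the hypotheses to conclude. The conditions $\bias\state \in [-1,1]$, $z \leq \formula{-\a}$, and $\zeta \geq \formula{\r}$ jointly pin $z+1$ and $\zeta-1$ into the intervals $[\formula{\r},\,\formula{1-\a}]$ and $[\formula{\r-1},\,\formula{-\a}]$, respectively. On these intervals one checks that $[\cosl(\cdot) - \sqrt{2}]^2$ stays bounded by a common constant $c < \formula{\Q}$ (the tightest point is at $\zeta-1 = \formula{\r-1}$, where the square approaches $(\sqrt{2}-1)^2 \approx 0.172$), while $\cosl^2(z+1) + \cosl^2(\zeta-1) \leq C$ for an absolute constant $C$ of size roughly $4$. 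Using $\sk\state \geq \formula{\P}$ to absorb the constant residual yields
\[
\sk(\state\cdot K) \leq \bigl(c + C/\formula{\P}\bigr)\sk\state,
\]
and a direct \texttt{\textbackslash assert}-style computation with $\lambda = 1+\sqrt{2}$ confirms $c + C/\formula{\P} \leq \formula{\Q}$. I expect this numerical verification to be the main obstacle: it requires showing that the chosen constants $\formula{\P}, \formula{\Q}, \formula{\r}, \formula{-\a}$ are mutually consistent, and the bound must hold on a two-dimensional parameter region rather than at a single point, so the argument relies on the monotonicity of $\cosl$ to reduce to evaluating at the interval endpoints. Everything else is routine matrix algebra and an application of Remark~\ref{rem-be}.
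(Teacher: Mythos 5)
Your proposal is correct and follows essentially the same route as the paper's proof: compute the off-diagonal entries $e\cosl(z+1)-\sqrt2\,b$ and $\sqrt2\,\beta-\varepsilon\cosl(\zeta-1)$, absorb the cross terms using $e^2=b^2+1$ and Remark~\ref{rem-be}, and then bound the resulting expression over the admissible range of $(z,\zeta)$ determined by the bias and the hypotheses on $z$ and $\zeta$ (your per-variable intervals are in fact slightly tighter than the single common interval the paper uses). The only harmless imprecision is that the maximum of $(\sqrt2-\cosl(x))^2$ on the interval containing $\zeta-1$ is attained at the interior point $x=0$ rather than at an endpoint (the function is not monotone there), but the value you quote, $(\sqrt2-1)^2$, is exactly that interior maximum, so your numerical verification goes through unchanged.
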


\begin{proof}
Compute the action of $K$ on $\state$:
\begin{eqnarray}
\lefteqn{K^\dagger DK}\nonumber\\
& = & \frac{1}{2}
  \left[ \begin{array}{cc} 
  -\lambda^{-1} & \lambda  \\
  -1 & 1
  \end{array} \right]
  \left[ \begin{array}{cc} 
  e{\lambda^{-z}} & b \\
  b & e{\lambda^{z}}
  \end{array} \right]
  \left[ \begin{array}{cc} 
  -\lambda^{-1} & -1  \\
  \lambda & 1
  \end{array} \right] \nonumber \\[9pt]
& = & \frac{1}{2} 
  \left[ \begin{array}{cc} 
  \ldots & e(\lambda^{z+1}+\lambda^{-z-1})
    -2\sqrt{2}b \\
  e(\lambda^{z+1}+\lambda^{-z-1})-2\sqrt{2}b & \ldots
  \end{array} \right]  \nonumber \\[9pt]
& = & 
  \left[ \begin{array}{cc} 
  \ldots & e\cosl(z+1) -\sqrt{2}b \\
  e\cosl(z+1)  -\sqrt{2}b & \ldots
  \end{array} \right],\nonumber
\end{eqnarray}
\begin{eqnarray} 
\lefteqn{K^{\bullet\dagger} \Delta K^\bullet} 
\nonumber\\
& = & \frac{1}{2}
  \left[ \begin{array}{cc} 
  \lambda & -\lambda^{-1}  \\
  -1 & 1
  \end{array} \right]
  \left[ \begin{array}{cc} 
  \varepsilon{\lambda^{-\zeta}} & \beta \\
  \beta & \varepsilon{\lambda^{\zeta}}
  \end{array} \right]
  \left[ \begin{array}{cc} 
  \lambda & -1  \\
  -\lambda^{-1} & 1
  \end{array} \right] \nonumber \\[9pt]
& = & \frac{1}{2} 
  \left[ \begin{array}{cc} 
  \ldots & -\varepsilon(
    \lambda^{\zeta-1}+\lambda^{-\zeta+1})+2\sqrt{2}\beta \\
  -\varepsilon(\lambda^{\zeta-1}+\lambda^{-\zeta+1})
    +2\sqrt{2}\beta & \ldots
  \end{array} \right]  \nonumber \\[9pt]
& = &
  \left[ \begin{array}{cc} 
  \ldots & \sqrt{2}\beta - \varepsilon \cosl(\zeta-1) \\
  \sqrt{2}\beta - \varepsilon\cosl(\zeta-1)
    & \ldots
  \end{array} \right].\nonumber
\end{eqnarray}

\noindent
Therefore:
\begin{equation}\label{k0}
\sk(\state \cdot K)=(\sqrt{2}b -e\cosl(z+1))^2 
+(\sqrt{2}\beta-\varepsilon\cosl(\zeta-1))^2.
\end{equation}

\noindent
But recall that $e^2=b^2+1$, and from Remark~\ref{rem-be} that $b\geq 0$ 
implies $-be\leq -b^2$, so:
\begin{eqnarray}
\lefteqn{(\sqrt{2}b -e\cosl(z+1))^2} 
\nonumber\\
& = & 2b^2-2\sqrt{2}\,be\cosl(z+1)+ 
e^2\cosl^2(z+1) \nonumber\\
& \leq & 2b^2-2\sqrt{2}\,b^2\cosl(z+1)+ 
(b^2+1)\cosl^2(z+1) \nonumber\\
& = & b^2(2-2\sqrt{2}\cosl(z+1)+ 
\cosl^2(z+1))+\cosl^2(z+1) 
\nonumber\\
& = & b^2(\sqrt{2}-\cosl(z+1))^2+
\cosl^2(z+1).\label{k1}
\end{eqnarray}
Reasoning analogously, we also have
\begin{eqnarray}
  (\sqrt{2}\beta-\varepsilon\cosl(\zeta-1))^2
  & \leq &
  \beta^2(\sqrt{2}-\cosl(\zeta-1))^2+\cosl^2(\zeta-1).
\end{eqnarray}

\noindent
By assumption, $\bias\state\in [-1,1]$, thus 
$\zeta\leq z+1$.
This, together 
with the assumptions $\formula{\r}\leq\zeta$ 
and $z\leq\formula{-\a}$,
implies that both $z+1$ and $\zeta-1$ are in the interval
$[\formula{-\rr},\formula{-\aa}]$. On this interval, the function 
$\cosl^2(x)$ assumes its maximum at $x=\pgfmathparse{
  ifthenelse((\coshl {-\rr}) <= (\coshl {-\aa}), -\aa,
  -\rr)}\pgfmathresult$, and the function 
$f(x) = (\sqrt 2 - \cosl(x))^2$ assumes its maximum at 
$x=\assert{-\rr <= 0 && 0 <= -\aa}% 
\pgfmathparse{
  ifthenelse((\kone{-\rr}) <= (\kone{0}) && (\kone{-\aa}) <=
  (\kone{0}), 0,  
  ifthenelse ((\kone {-\rr}) <= (\kone {-\aa}), -\aa, -\rr)}\pgfmathresult$.
Therefore, 
\begin{eqnarray}
b^2(\sqrt{2}-\cosl(z+1))^2  + \cosl^2(z+1)
& \leq &
\pgfmathparse{
  ifthenelse((\kone{-\rr}) <= (\kone{0}) && (\kone{-\aa}) <= (\kone{0}), 0,
  ifthenelse ((\kone {-\rr}) <= (\kone {-\aa}), -\aa, -\rr)
}
b^2(\sqrt{2} -\cosl(\pgfmathresult))^2 
+ 
\pgfmathparse{
  ifthenelse((\coshl {-\rr}) <= (\coshl {-\aa}), -\aa, -\rr)
}
\cosl^2(\pgfmathresult).
\end{eqnarray}
and
\begin{eqnarray}
\beta^2(\sqrt{2}-\cosl(\zeta-1))^2+\cosl^2(\zeta-1)
& \leq & 
\beta^2\pgfmathparse{
  ifthenelse((\kone{-\rr}) <= (\kone{0}) && (\kone{-\aa}) <= (\kone{0}), 0,
  ifthenelse ((\kone {-\rr}) <= (\kone {-\aa}), -\aa, -\rr)
}
(\sqrt{2} -\cosl(\pgfmathresult))^2
+
\pgfmathparse{
  ifthenelse((\coshl {-\rr}) <= (\coshl {-\aa}), -\aa, -\rr)
}
\cosl^2(\pgfmathresult).\label{k3}
\end{eqnarray}

\noindent
Combining (\ref{k0})--(\ref{k3}), together 
with the assumption that $\sk\state\geq\formula{\P}$, yields:
\begin{eqnarray}
\sk(\state \cdot K) & = & 
(\sqrt{2}b -e\cosl(z+1))^2 
+(\sqrt{2}\beta-\varepsilon\cosl(\zeta-1))^2 \nonumber\\
& \leq & 
(b^2+\beta^2)\pgfmathparse{
  ifthenelse((\kone{-\rr}) <= (\kone{0}) && (\kone{-\aa}) <= (\kone{0}), 0,
  ifthenelse ((\kone {-\rr}) <= (\kone {-\aa}), -\aa, -\rr)
}
(\sqrt{2} -\cosl(\pgfmathresult))^2
+
2\pgfmathparse{
  ifthenelse((\coshl {-\rr}) <= (\coshl {-\aa}), -\aa, -\rr)
}
\cosl^2(\pgfmathresult) \nonumber\\
& = & 
\sk\state\pgfmathparse{
  ifthenelse((\kone{-\rr}) <= (\kone{0}) && (\kone{-\aa}) <= (\kone{0}), 0,
  ifthenelse ((\kone {-\rr}) <= (\kone {-\aa}), -\aa, -\rr)
}
(\sqrt{2} -\cosl(\pgfmathresult))^2
+
2\pgfmathparse{
  ifthenelse((\coshl {-\rr}) <= (\coshl {-\aa}), -\aa, -\rr)
}
\cosl^2(\pgfmathresult) \nonumber\\
& \leq & 
\sk\state(\pgfmathparse{
  ifthenelse((\kone{-\rr}) <= (\kone{0}) && (\kone{-\aa}) <= (\kone{0}), 0,
  ifthenelse ((\kone {-\rr}) <= (\kone {-\aa}), -\aa, -\rr)
}
(\sqrt{2} -\cosl(\pgfmathresult))^2
+
\frac{2}{\formula{\P}}\pgfmathparse{
  ifthenelse((\coshl {-\rr}) <= (\coshl {-\aa}), -\aa, -\rr)
}
\cosl^2(\pgfmathresult)) \nonumber
\end{eqnarray}
This completes the proof since 
$\pgfmathparse{
  ifthenelse((\kone{-\rr}) <= (\kone{0}) && (\kone{-\aa}) <= (\kone{0}), 0,
  ifthenelse ((\kone {-\rr}) <= (\kone {-\aa}), -\aa, -\rr)
}
(\sqrt{2} -\cosl(\pgfmathresult))^2
+
\frac{2}{\formula{\P}}\pgfmathparse{
  ifthenelse((\coshl {-\rr}) <= (\coshl {-\aa}), -\aa, -\rr)
}
\cosl^2(\pgfmathresult)
\approx\formula{max((\kone {-\rr}),(\kone {-\aa}),(\kone{0})) +
(max((\coshl {-\rr})^2, (\coshl {-\aa})^2))*(2/\P)}
\leq \formula{\Qstep}$.
%% Check!!!
\assert{max((\kone {-\rr}),(\kone {-\aa}),(\kone{0})) + 
(max((\coshl {-\rr})^2, (\coshl {-\aa})^2))*(2/\P) 
<= \Qstep}%
\end{proof}

\subsection[The A Lemma]{The $A$ Lemma}
\label{ssect-A}

\begin{definition}
  Let $g(x) = (1-2x)^2$. 
\end{definition}

\begin{lemma}
\label{lemmaA}
Recall the operator $A$ from Figure~\ref{list_operators}. 
If $\state $ is such that $\bias\state
\in[-1,1]$, $\sk\state\geq\formula{\P}$, and such that
$b,\beta\geq 0$ and $\formula{-\a}\leq{z},\zeta$,
then there exists $n\in \Z$ 
such that:
\[
  \sk(\state \cdot A^n) 
  \leq \formula{\Qstep}~\sk\state . 
\]
\end{lemma}

\begin{proof}
Let $c=\min\s{{z},{\zeta}}$ and 
$n=\max\s{1, \floor{\frac{\lambda^{c}}{2}}}$.
Compute the action of $A^{n}$ on $\state$:
\[
  \begin{array}{lcl}
    {A^{n}}^\dagger DA^{n} & = & 
    \left[\begin{array} {cc} 
      1 & 0  \\
      -2n & 1
    \end{array}\right]
    \left[ \begin{array} {cc} 
      e{\lambda^{-z}} & b \\
      b & e{\lambda^{z}}
    \end{array}\right]
    \left[ \begin{array} {cc} 
      1 & -2n  \\
      0 & 1
    \end{array}\right] \\[9pt]
    & = & \left[\begin{array} {cc} 
      \ldots & b-2ne{\lambda^{-z}}  \\
      b-2ne{\lambda^{-z}} & \ldots
    \end{array} \right], \\[18pt]
    {A^{n}}^{\bullet\dagger}
      \Delta{A^{n}}^\bullet 
    & = & {A^{n}}^\dagger \Delta A^{n} \\
    & = & \left[ \begin{array} {cc} 
      \ldots & \beta-2n\varepsilon{\lambda^{-\zeta}}  \\
      \beta-2n\varepsilon{\lambda^{-\zeta}} & \ldots
    \end{array} \right].
  \end{array}
\]
Therefore:
\[
\sk(\state \cdot A^{n}) = 
(b-2ne{\lambda^{-z}})^2+(\beta-2n\varepsilon{\lambda^{-\zeta}})^2
\]
But recall that $e^2=b^2+1$ and
$\varepsilon^2=\beta^2+1$, and from Remark~\ref{rem-be} that 
$b,\beta\geq 0$ implies $-be\leq -b^2$ and 
$-\varepsilon\beta\leq -\beta^2$. Using these 
facts, we can expand the above formula as 
follows:
\begin{eqnarray}
\lefteqn{\sk(\state\cdot A^{n})}\nonumber \\
& = & (b-2ne{\lambda^{-z}})^2+(\beta-2n\varepsilon{\lambda^{-\zeta}})^2 
  \nonumber\\
& = & b^2
-4nbe{\lambda^{-z}}
+4n^2e^2{\lambda^{-2z}}
+\beta^2
-4n\beta\varepsilon{\lambda^{-\zeta}}
+4n^2\varepsilon^2{\lambda^{-2\zeta}}
  \nonumber\\
& \leq & b^2
-4nb^2{\lambda^{-z}}
+4n^2(b^2+1){\lambda^{-2z}}
+\beta^2
-4n\beta^2{\lambda^{-\zeta}}
+4n^2(\beta^2+1){\lambda^{-2\zeta}}
  \nonumber\\
& = & 
b^2(
1
-4n{\lambda^{-z}}
+4n^2{\lambda^{-2z}}
)+\beta^2(
1
-4n{\lambda^{-\zeta}}
+4n^2{\lambda^{-2\zeta}}
)+4n^2({\lambda^{-2z}}+{\lambda^{-2\zeta}})
  \nonumber \\
& = & 
b^2(1-2n\lambda^{-z})^2
+\beta^2(1-2n\lambda^{-\zeta})^2
+4n^2({\lambda^{-2z}}+{\lambda^{-2\zeta}})
  \nonumber \\
& = & b^2g(n{\lambda^{-z}})+\beta^2g(n{\lambda^{-\zeta}})+ 
  4n^2({\lambda^{-2z}}+{\lambda^{-2\zeta}}).\nonumber
\end{eqnarray}

\noindent
Writing $y=\max\s{g(n{\lambda^{-z}}), g(n{\lambda^{-\zeta}})}$ for 
brevity, and using the assumption that 
$\sk\state\geq\formula{\P}$ together with 
the fact that $c\leq z,\zeta$, we 
get:
\begin{eqnarray}
  \sk(\state \cdot A^{n}) & \leq & 
  b^2y+\beta^2y+ 8n^2\lambda^{-2c} \nonumber\\
  & = & \sk\state y+ 8n^2\lambda^{-2c} \nonumber\\
  & \leq & \sk\state (y+ 
  \frac{8}{\formula{\P}}n^2\lambda^{-2c}). \nonumber
\end{eqnarray}
To finish the proof, it remains to show that 
$y+ \frac{8}{\formula{\P}}n^2\lambda^{-2c}\leq 
\formula{\Qstep}$.  There are two cases:
\begin{itemize}
  \item If $\floor{\frac{\lambda^{c}}{2}}\geq 1$, 
  then $\frac{\lambda^{c}}{4}\leq n\leq 
  \frac{\lambda^{c}}{2}$. From $n\leq \frac{\lambda^{c}}{2}$, we have
  $2n\lambda^{-c}\leq 1$, and so  
  $\frac{8}{\formula{\P}}n^2\lambda^{-2c} \leq 
  \frac{2}{\formula{\P}}$. Moreover, because 
  $\bias\state\in[-1,1]$, we have 
  $c \leq  z,\zeta \leq c+1$. Hence
  $\frac{1}{4\lambda} = \frac{\lambda^{c}}{4} \lambda^{-c-1}
  \leq n\lambda^{-c-1}
  \leq n{\lambda^{-z}}, n{\lambda^{-\zeta}} 
  \leq n\lambda^{-c}
  \leq 
  \frac{1}{2}$. On the interval $[\frac{1}{4\lambda},
  \frac{1}{2}]$, the function $g(x)$ assumes its maximum at
  $x=\frac{1}{4\lambda}$. This implies that $y\leq 
  g(\frac{1}{4\lambda})$. This completes the 
  present case since we get:
  \[
  y+ \frac{8}{\formula{\P}}n^2\lambda^{-2c}\leq 
  g(\frac{1}{4\lambda})
  +\frac{2}{\formula{\P}} 
  \approx\formula{(1/(2*\l)-1)^2+2/\P}
  \leq \formula{\Qstep}.
  \]
  \assert{(1/(2*\l)-1)^2+2/\P <= \Qstep}%
  \item If $\floor{\frac{\lambda^{c}}{2}}< 1$, 
  then $n=1$ and $\lambda^c< 2$. From 
$\formula{-\a}\leq c$, we have
  %% Check!!!
  \assert{(ln 0.5)/(ln ((sqrt 2)+1)) <= \a}%
  $\frac{8}{\formula{\P}}n^2\lambda^{-2c}\leq
  \frac{8}{\formula{\P}}\lambda^{\formula{2*\a}}$.  
  Moreover, because $\bias\state\in[-1,1]$, we have
  ${\formula{-\a}}\leq c \leq z,\zeta\leq c+1$. With
  $\lambda^c\leq 2$, this implies that $\frac{1}{2\lambda}\leq
  \lambda^{-c-1}\leq \lambda^{-z},\lambda^{-\zeta}
  \leq\lambda^{\formula{\a}}$. Therefore both $\lambda^{-z}$ and
  $\lambda^{-\zeta}$ are in the interval $[\frac{1}{2\lambda},
  \lambda^{\formula{\a}}]$. On this interval, the function $g(x)$
  assumes its maximum at
  \assert{(1-2*(\l^(\a)))^2 < (1-2*(1/(2*\l)))^2}%
  $x=\frac{1}{2\lambda}$, and therefore $y\leq 
  g(\frac{1}{2\lambda})$. This completes the 
  proof since:
  \[
  y+\frac{8}{\formula{\P}}n^2\lambda^{-2c}\leq 
  g(\frac{1}{2\lambda})
  + \frac{8}{\formula{\P}}\lambda^{\formula{2*\a}}
  \approx\formula{(max(((2*(pow(\l,\a))-1)^2),
  (((pow(\l,(-1)))-1)^2)) + ((8*(pow (\l, 2*\a))) / \P))}
  \leq \formula{\Qstep}.
  \]
  %% Check!!!
  \assert{(max(((2*(pow(\l,\a))-1)^2),
  (((pow(\l,(-1)))-1)^2)) + ((8*(pow (\l, 2*\a))) / \P)) 
  <= \Qstep}%
  \end{itemize}
\end{proof}

\subsection[The B Lemma]{The $B$ Lemma}
\label{ssect-B}

\begin{definition}
  Let $h(x)=(1-\sqrt{2}x)^2$.
\end{definition}

\begin{lemma}
\label{lemmaB}
Recall the operator $B$ from Figure~\ref{list_operators}. 
If $\state $ is such that $\bias\state\in 
[-1,1]$, $\sk\state\geq\formula{\P}$, and such that $b\leq 0\leq\beta$ 
and $\formula{-\b}\leq z,\zeta$, then 
there exists $n\in\Z$ such that:
\[
  \sk(\state\cdot B^n) \leq \formula{\Qstep}~\sk\state.
\]
\end{lemma}

\begin{proof}
Let $c=\min\s{z, \zeta}$, 
$n=\max\s{1, \floor{\frac{\lambda^{c}}{\sqrt{2}}}}$ 
and compute the action of $B^n$ on $\state$:
\[
\begin{array}{lcl}
{B^n}^\dagger DB^n & = & \left[ \begin{array} {cc} 
1 & 0  \\
\sqrt{2}n & 1
\end{array} \right]
\left[ \begin{array} {cc} 
e{\lambda^{-z}} & b \\
b & e{\lambda^{z}}
\end{array} \right]
\left[ \begin{array} {cc} 
1 & \sqrt{2}n  \\
0 & 1
\end{array} \right] \\[9pt]
& = & \left[ \begin{array} {cc} 
\ldots & b+\sqrt{2}\,ne{\lambda^{-z}}  \\
b+\sqrt{2}\,ne{\lambda^{-z}} & \ldots
\end{array} \right], \\[18pt]
{B^n}^{\bullet\dagger} D{B^n}^\bullet & = & \left[ \begin{array} {cc} 
1 & 0  \\
-\sqrt{2}n & 1
\end{array} \right]
\left[ \begin{array} {cc} 
\varepsilon{\lambda^{-\zeta}} & \beta \\
\beta & \varepsilon{\lambda^{\zeta}}
\end{array} \right]
\left[ \begin{array} {cc} 
1 & -\sqrt{2}n  \\
0 & 1
\end{array} \right] \\[9pt]
& = & \left[ \begin{array} {cc} 
\ldots & \beta-\sqrt{2}\,n\varepsilon{\lambda^{-\zeta}}  \\
\beta-\sqrt{2}\,n\varepsilon{\lambda^{-\zeta}} & \ldots
\end{array} \right].
\end{array}
\]
Therefore:
\[
\sk(\state\cdot B^n) = (b+\sqrt{2}\,ne{\lambda^{-z}})^2+
(\beta-\sqrt{2}\,n\varepsilon{\lambda^{-\zeta}})^2.
\]
But recall that $e^2=b^2+1$, that 
$\varepsilon^2=\beta^2+1$, and from Remark~\ref{rem-be} that 
$b\leq 0\leq \beta$ implies $be\leq -b^2$ and 
$-\beta\varepsilon\leq -\beta^2$. Using these 
facts, we can expand the above formula as 
follows:
\begin{eqnarray}
\lefteqn{\sk(\state\cdot B^n)}\nonumber \\
& = & (b+\sqrt{2}\,ne{\lambda^{-z}})^2+
(\beta-\sqrt{2}\,n\varepsilon{\lambda^{-\zeta}})^2\nonumber\\
& = & 
b^2
+2\sqrt{2}\,nbe{\lambda^{-z}}
+2n^2e^2{\lambda^{-2z}}
+\beta^2
-2\sqrt{2}\,n\beta\varepsilon{\lambda^{-\zeta}}
+2n^2\varepsilon^2{\lambda^{-2\zeta}}
\nonumber\\
& \leq & 
b^2
-2\sqrt{2}\,nb^2{\lambda^{-z}}
+2n^2(b^2+1){\lambda^{-2z}}
+\beta^2
-2\sqrt{2}\,n\beta^2{\lambda^{-\zeta}}
+2n^2(\beta^2+1){\lambda^{-2\zeta}}
\nonumber\\
& = & b^2(
1
-2\sqrt{2}\,n{\lambda^{-z}}
+2n^2{\lambda^{-2z}}
)+\beta^2(
1
-2\sqrt{2}\,n{\lambda^{-\zeta}}
+2n^2{\lambda^{-2\zeta}}
)+2n^2({\lambda^{-2z}}+{\lambda^{-2\zeta}})\nonumber \\
& = & 
b^2(1-\sqrt{2}\,n\lambda^{-z})^2
+\beta^2(1-\sqrt{2}\,n{\lambda^{-\zeta}})^2
+ 2n^2({\lambda^{-2z}}+{\lambda^{-2\zeta}}).\nonumber \\
& = & b^2h(n{\lambda^{-z}})+\beta^2h(n{\lambda^{-\zeta}})+ 2n^2({\lambda^{-2z}}+{\lambda^{-2\zeta}}).\nonumber
\end{eqnarray}

\noindent
Writing $y=\max\s{h(n{\lambda^{-z}}), h(n{\lambda^{-\zeta}})}$ for
brevity, and using the assumption that $\sk\state\geq\formula{\P}$,
together with the fact that $c\leq z,\zeta$, we get:
\begin{eqnarray}
\sk(\state \cdot B^n) & \leq & 
b^2y+\beta^2y+ 4n^2\lambda^{-2c} \nonumber\\
& = & \sk\state y +4n^2\lambda^{-2c}\nonumber\\ 
&\leq & \sk\state (y +\frac{4}{\formula{\P}}n^2\lambda^{-2c}) 
\nonumber.
\end{eqnarray}

\noindent
To finish the proof, it remains to show that 
$y+\frac{4}{\formula{\P}}n^2\lambda^{-2c}\leq \formula{\Qstep}$.  There
are two cases:
\begin{itemize}
  \item If $\floor{\frac{\lambda^{c}}{\sqrt{2}}}\geq 1$, 
  then $\frac{\lambda^{c}}{2\sqrt{2}}\leq n\leq 
  \frac{\lambda^{c}}{\sqrt{2}}$. From $n\leq
  \frac{\lambda^{c}}{\sqrt{2}}$, we have $2n^2\lambda^{-2c}\leq 1$, and
  so
  $\frac{4n^2\lambda^{-2c}}{\formula{\P}}\leq \frac{2}
  {\formula{\P}}$. Moreover, because $\bias\state\in[-1,1]$, we have
  $c\leq z,\zeta\leq c+1$. Hence $\frac{1}{2\sqrt{2}\,\lambda}
  = \frac{\lambda^{c}}{2\sqrt{2}}\lambda^{-c-1}\leq n\lambda^{-c-1}\leq
  n{\lambda^{-z}}, n{\lambda^{-\zeta}} \leq n\lambda^{-c}\leq
  \frac{1}{\sqrt{2}}$. On the interval $[\frac{1}{2\sqrt{2}\,\lambda},
  \frac{1}{\sqrt{2}}]$, the function $h(x)$ assumes its maximum at
  $x=\frac{1}{2\sqrt{2}\,\lambda}$. This implies that $y\leq
  h(\frac{1}{2\sqrt{2}\,\lambda})$. This
  completes the present case since we get:
  \[
  y+ \frac{4}{\formula{\P}}n^2\lambda^{-2c}\leq 
  h(\frac{1}{2\sqrt{2}\,\lambda}) +\frac{2}{\P} 
  \approx \formula{(1 - sqrt(2)*(1/(2*sqrt(2)*\l)))^2 + (2/\P)}
  \leq \formula{\Qstep}.
  \]
  % !!! Check
  \assert{(1 - sqrt(2)*(1/(2*sqrt(2)*\l)))^2 + (2/\P) <= (\Qstep)}%
  \item If $\floor{\frac{\lambda^{c}}{\sqrt{2}}}< 1$, 
  then $n=1$ and $\lambda^c<\sqrt{2}$. 
  From $\formula{-\b}\leq c$, we have  $\frac{4}{\formula{\P}}n^2\lambda^{-2c}
  \leq \frac{4}{\formula{\P}}\lambda^{\formula{2*\b}}$. Moreover, because
  %% Check!!!
  %  \assert{((ln 1/(sqrt 2)) / (ln \l)) <= \b}%
  %
  $\bias\state\in[-1,1]$, we have
  $\formula{-\b}\leq c\leq z,\zeta\leq c+1$.
  With $\lambda^c\leq\sqrt{2}$, this implies that
  $\frac{1}{\sqrt{2}\,\lambda} \leq \lambda^{-c-1}\leq {\lambda^{-z}}, {\lambda^{-\zeta}} \leq 
  \lambda^{\formula{\b}}$. Therefore both $\lambda^{-z}$ and
  $\lambda^{-\zeta}$ are in the interval
  $[\frac{1}{\sqrt{2}\,\lambda},\lambda^{\formula{\b}}]$.
  On this interval, the function $h(x)$ assumes its maximum at
  \assert{(1- sqrt(2)*(1/(sqrt(2)*\l)))^2 <= (1-sqrt(2)*((\l)^(\b)))^2}%
  $x=\lambda^{\formula{\b}}$, and therefore $y\leq
  h(\lambda^{\formula{\b}})$. 
  This completes the proof since:
  \[
  y+\frac{4}{\formula{\P}}n^2\lambda^{-2c}\leq 
  h(\lambda^{\formula{\b}})
  + \frac{4}{\formula{\P}}\lambda^{\formula{2*\b}}
  \approx\formula{((max((1 - sqrt(2)*(1/(sqrt(2)*\l)))^2,
  (1-sqrt(2)*((\l)^(\b)))^2) + (4/\P * (\l^(2*\b))))}
  \leq \formula{\Qstep}.
  \]
  % Check !!!
  \assert{((max((1 - sqrt(2)*(1/(sqrt(2)*\l)))^2,
  (1-sqrt(2)*((\l)^(\b)))^2) + (4/\P * (\l^(2*\b)))) 
  <= (\Qstep)}%
\end{itemize}
\end{proof}

\subsection{Proof of the Step Lemma}
\label{ssect-step}

The proof of the Step Lemma is now basically a case distinction, using
the cases enumerated in Sections~\ref{ssect-shift}--\ref{ssect-B}, as
well as some additional symmetric cases. In particular, the following
remark will allow us to use the grid operators $X$ and $Z$ to reduce
the number of cases to consider.

\begin{remark}
\label{rem-XZ}
The grid operator $Z$ negates the anti-diagonal entries
of a state while the operator $X$ swaps the diagonal entries
of a state. This follows by simple computation:
  \[
    \state\cdot Z = 
    \left(
      \left[
        \begin{array}{cc}
          e{\lambda^{-z}} & -b \\
          -b & e{\lambda^{z}}
        \end{array} 
      \right]
    ,
      \left[
        \begin{array}{cc}
          \varepsilon{\lambda^{-\zeta}} & -\beta \\
          -\beta & \varepsilon{\lambda^{\zeta}}
        \end{array} 
      \right]
    \right),
  ~~~~
    \state\cdot X = 
    \left(
      \left[
        \begin{array}{cc}
          e{\lambda^{z}} & b \\
          b & e{\lambda^{-z}}
        \end{array} 
      \right]
      ,
      \left[
        \begin{array}{cc}
          \varepsilon{\lambda^{\zeta}} & \beta \\
          \beta & \varepsilon{\lambda^{-\zeta}}
        \end{array} 
      \right]
    \right).
  \]            
  Moreover, $\bias (\state\cdot Z) = \bias \state$ and 
  $\bias (\state\cdot X) = -\bias \state$.
\end{remark}

\begin{un-lemma}[Step Lemma]
For any state $\state$, if $\sk\state \geq \formula{\P}$, 
then there exists a special grid operator $\G$ such that 
$\sk (\state\cdot \G)\leq \formula{\Qstep} ~\sk\state$.
Moreover, $\G$ can be computed using a constant number of arithmetic
operations.
\end{un-lemma}

\begin{proof}
Let $\state$ be a state such that $\sk\state \geq 
\formula{\P}$. By Lemma~\ref{shift_lemma} we can assume 
w.l.o.g. that $\bias\state\in[-1,1]$. Moreover, by 
Remark~\ref{rem-XZ}, we can also assume that $\beta \geq 0$ 
and $z+\zeta\geq 0$. Note that the application of the grid operators
$X$ and/or $Z$ in Remark~\ref{rem-XZ} preserves the fact that 
$\bias\state\in[-1,1]$. We now treat in turn the cases 
$b\geq 0$ and $b\leq 0$.
\begin{description}
  \item[Case 1] \label{b_beta_plus} $b \geq 0$. A covering of 
  the strip defined by $z-\zeta\in[-1,1]$ and 
  $z+\zeta\geq 0$ is depicted in Figure~\ref{fig-covers}(a). 
  The $R$ region (in green) and the $A$ region (in red) are 
  defined as the intersection of this space with 
  $\s{(z,\zeta)~|~\formula{-\r}\leq z,\zeta \leq\formula{\r}}$ and 
  $\s{(z,\zeta)~|~z\leq\formula{-\a}\mbox{ and }\formula{\r}\leq \zeta}$
  respectively. The $K$ and $K^\bullet$ regions (both in blue) 
  fill the remaining space.
  % This covering requires -r <= a.
  %% Check!!!
  \assert{-\r <= \a}%
  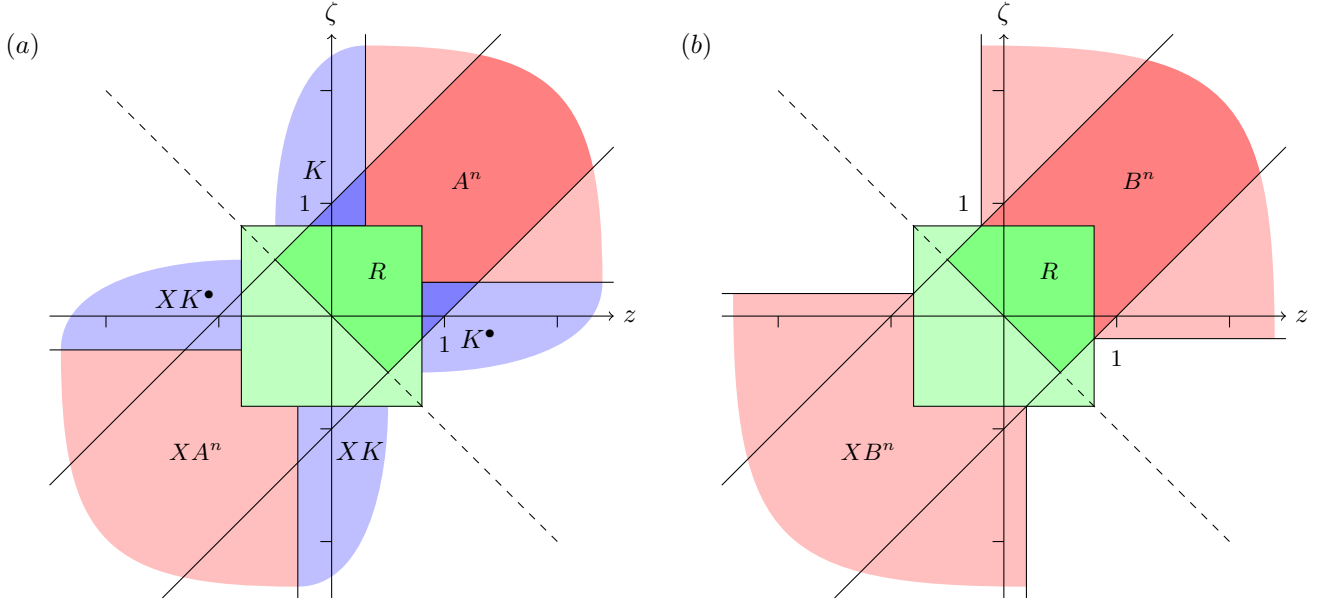
\begin{figure}
  \[ (a)~ \mp{0.9}{\begin{tikzpicture}[scale=1.5]
      % Regions K: [-0.3, \infty] x [-\infty, -0.8]
      \fill[fill=blue!25] (-0.3,-0.8) -- (-0.3,-2.4) .. controls (0.3,-2.4) and
      (0.5,-1.5) .. (0.5,-0.8) -- cycle;
      \draw (0.25,-1.2) node {$XK$};
      \begin{scope}[scale=-1]
        \fill[fill=blue!25] (-0.3,-0.8) -- (-0.3,-2.4) .. controls (0.3,-2.4) and
        (0.5,-1.5) .. (0.5,-0.8) -- cycle;
        \draw (0.15,-1.3) node {$K$};
      \end{scope}
      \begin{scope}[cm={0,1,1,0,(0,0)}]
        \fill[fill=blue!25] (-0.3,-0.8) -- (-0.3,-2.4) .. controls (0.3,-2.4) and
        (0.5,-1.5) .. (0.5,-0.8) -- cycle;
        \draw (0.15,-1.3) node {$XK\bul$};
      \end{scope}
      \begin{scope}[cm={0,-1,-1,0,(0,0)}]
        \fill[fill=blue!25] (-0.3,-0.8) -- (-0.3,-2.4) .. controls (0.3,-2.4) and
        (0.5,-1.5) .. (0.5,-0.8) -- cycle;
        \draw (0.2,-1.3) node {$K\bul$};
      \end{scope}
      % 
      % Region A: [-\infty, -0.3] x [-\infty, -0.3]
      \fill[fill=red!25] (-2.4,-0.3) -- (-0.3,-0.3) -- 
      (-0.3,-2.4) .. controls (-2,-2.4) and (-2.4,-2) ..
      (-2.4,-0.3) -- cycle;
      \draw (-2.5,-0.3) -- (-0.3,-0.3) -- (-0.3,-2.5);
      \draw (-1.2, -1.2) node {\small $XA^n$};
      % 
      % Region XA: [-\infty, -0.3] x [-\infty, -0.3]
      \begin{scope}[scale=-1]
        \fill[fill=red!25] (-2.4,-0.3) -- (-0.3,-0.3) -- 
        (-0.3,-2.4) .. controls (-2,-2.4) and (-2.4,-2) ..
        (-2.4,-0.3) -- cycle;
        \draw (-2.5,-0.3) -- (-0.3,-0.3) -- (-0.3,-2.5);
        \draw (-1.2, -1.2) node {\small $A^n$};
      \end{scope}
      % Region R: [-0.8, 0.8] x [-0.8, 0.8]
      \filldraw[fill=green!25] (0.8, 0.8) -- (-0.8, 0.8) -- (-0.8, -0.8) -- (0.8, -0.8) -- cycle;
      \draw (0.4, 0.4) node {\small $R$};
      \begin{scope}
        \path[clip] (-1.5,-2.5) -- (2.5,1.5) -- (1.5,2.5) -- (-2.5,-1.5) -- cycle;
        \begin{scope}[yscale=-1]
          \fill[fill=blue!50] (-0.3,-2.4) -- (-0.3,-0.8) -- 
          (0.3,-0.8) -- (0.3,-2.4) -- cycle;
          \draw ((-0.3,-2.5) -- (-0.3,-0.8) -- (0.3,-0.8) -- (0.3,-2.5);
        \end{scope}
        \begin{scope}[cm={0,-1,-1,0,(0,0)}]
          \fill[fill=blue!50] (-0.3,-2.4) -- (-0.3,-0.8) -- 
          (0.3,-0.8) -- (0.3,-2.4) -- cycle;
          \draw ((-0.3,-2.5) -- (-0.3,-0.8) -- (0.3,-0.8) -- (0.3,-2.5);
        \end{scope}
        % 
        % Region XA: [-\infty, -0.3] x [-\infty, -0.3]
        \begin{scope}[scale=-1]
          \fill[fill=red!50] (-2.4,-0.3) -- (-0.3,-0.3) -- 
          (-0.3,-2.4) .. controls (-2,-2.4) and (-2.4,-2) ..
          (-2.4,-0.3) -- cycle;
          \draw (-2.5,-0.3) -- (-0.3,-0.3) -- (-0.3,-2.5);
          \draw (-1.2, -1.2) node {\small $A^n$};
        \end{scope}
        % Region R: [-0.8, 0.8] x [-0.8, 0.8]
%        \filldraw[fill=green!50] (0.8, 0.8) -- (-0.8, 0.8) -- (-0.8, -0.8) -- (0.8, -0.8) -- cycle;
        \filldraw[fill=green!50] (0.8, 0.8) -- (-0.8, 0.8) -- (0.8, -0.8) -- cycle;
        \draw (0.4, 0.4) node {\small $R$};
      \end{scope}  
      % 
      % Coordinate system
      \draw[->] (-2.5,0) -- (2.5, 0) node[right] {$z$};
      \draw[->] (0,-2.5) -- (0, 2.5) node[above] {$\zeta$};
      \foreach \x in {-2, -1, 1, 2} {
        \draw (\x,0) -- (\x,-0.1);
      }
      \foreach \y in {-2, -1, 1, 2} {
        \draw (0,\y) -- (-0.1,\y);
      }
      \draw (1,-0.1) node[below] {\small $1$};
      \draw (-0.1,1) node[left] {\small $1$};
      % 
      % Lines for shift lemma
      \draw (-1.5,-2.5) -- (2.5,1.5);
      \draw (-2.5,-1.5) -- (1.5,2.5);
      % Line for z+zeta >= 0
      \draw [dashed] (-2, 2) -- (-0.5,0.5);    
      \begin{scope}[scale=-1]
      \draw [dashed] (-2, 2) -- (-0.5,0.5);          
      \end{scope}
    \end{tikzpicture}
  }
  \quad
  (b)~
   \mp{0.9}{\begin{tikzpicture}[scale=1.5]
      % 
      % Region XB: [-\infty, 0.2] x [-\infty, 0.2]
      \fill[fill=red!25] (-2.4,0.2) -- (0.2,0.2) -- 
      (0.2,-2.4) .. controls (-2,-2.4) and (-2.4,-2) ..
      (-2.4,0.2) -- cycle;
      \draw (-2.5,0.2) -- (0.2,0.2) -- (0.2,-2.5);
      \draw (-1.2, -1.2) node {\small $XB^n$};
      % 
      % Region B: [-\infty, 0.2] x [-\infty, 0.2]
      \begin{scope}[scale=-1]
        \fill[fill=red!25] (-2.4,0.2) -- (0.2,0.2) -- 
        (0.2,-2.4) .. controls (-2,-2.4) and (-2.4,-2) ..
        (-2.4,0.2) -- cycle;
        \draw (-2.5,0.2) -- (0.2,0.2) -- (0.2,-2.5);
        \draw (-1.2, -1.2) node {\small $B^n$};
      \end{scope}
      % Region R: [-0.8, 0.8] x [-0.8, 0.8]
      \filldraw[fill=green!25] (0.8, 0.8) -- (-0.8, 0.8) -- (-0.8, -0.8) -- (0.8, -0.8) -- cycle;
      \draw (0.4, 0.4) node {\small $R$};
      \begin{scope}
        \path[clip] (-1.5,-2.5) -- (2.5,1.5) -- (1.5,2.5) -- (-2.5,-1.5) -- cycle;
        % 
        % Region XB: [-\infty, 0.2] x [-\infty, 0.2]
        % 
        % Region B: [-\infty, 0.2] x [-\infty, 0.2]
        \begin{scope}[scale=-1]
          \fill[fill=red!50] (-2.4,0.2) -- (-0.8,0.2) -- (0.2, -0.8) -- 
          (0.2,-2.4) .. controls (-2,-2.4) and (-2.4,-2) ..
          (-2.4,0.2) -- cycle;
          \draw (-2.5,0.2) -- (-0.8,0.2) -- (0.2, -0.8) -- (0.2,-2.5);
          \draw (-1.2, -1.2) node {\small $B^n$};          
        \end{scope}
        % Region R: [-0.8, 0.8] x [-0.8, 0.8]
        \filldraw[fill=green!50] (0.8, 0.8) -- (-0.8, 0.8) -- (0.8, -0.8) -- cycle;
        \draw (0.4, 0.4) node {\small $R$};
      \end{scope}  
      % 
      % Coordinate system
      \draw[->] (-2.5,0) -- (2.5, 0) node[right] {$z$};
      \draw[->] (0,-2.5) -- (0, 2.5) node[above] {$\zeta$};
      \foreach \x in {-2, -1, 1, 2} {
        \draw (\x,0) -- (\x,-0.1);
      }
      \foreach \y in {-2, -1, 1, 2} {
        \draw (0,\y) -- (-0.1,\y);
      }
      \draw (1,-0.1) node[below=1.2ex] {\small $1$};
      \draw (-0.1,1) node[left=1.2ex] {\small $1$};
      % 
      % Lines for shift lemma
      \draw (-1.5,-2.5) -- (2.5,1.5);
      \draw (-2.5,-1.5) -- (1.5,2.5);
      % Line for z+zeta >= 0
      \draw [dashed] (-2, 2) -- (-0.5,0.5);    
      \begin{scope}[scale=-1]
      \draw [dashed] (-2, 2) -- (-0.5,0.5);          
      \end{scope}
    \end{tikzpicture}
  }
  \]
  \caption{(a) A covering of the region $z-\zeta\in [-1,1]$ and 
  $z+\zeta\geq 0$ for the case $b\geq 0$. (b) A covering of the 
  region $z-\zeta\in [-1,1]$ and $z+\zeta\geq 0$ for the case 
  $b\leq 0$.}
  \label{fig-covers}
  \rule{\textwidth}{0.1mm}
  \end{figure}
  We now consider in turn the possible locations of 
  the pair $(z,\zeta)$ in this covering. 
  \begin{enumerate}
    \item If $\formula{-\r}\leq z,\zeta \leq\formula{\r}$, 
    then $\sk(\state\cdot R)\leq \formula{\Qstep}~\sk\state$ by 
    Lemma~\ref{lemmaR}.
    \item \label{partk}If $z\leq\formula{-\a}$ and 
    $\formula{\r}\leq \zeta$, then 
    $\sk(\state\cdot K)\leq \formula{\Qstep}~\sk\state$ by 
    Lemma~\ref{lemmaK}.    
    \item If $\formula{-\a}\leq z,\zeta$, then there exists 
    $n\in\Z$ such that $\sk(\state\cdot A^n)\leq 
    \formula{\Qstep}~\sk\state$ by Lemma~\ref{lemmaA}.
    \item \label{partsk} If $\formula{\r}\leq z$ and 
    $\zeta\leq\formula{-\a}$, then note that 
    $\state\cdot K^\bullet = (\Delta, D)\cdot K$, and 
    therefore by Lemma~\ref{lemmaK}:
    \[
    \sk(\state\cdot K^\bullet) = \sk ((\Delta, D)\cdot K) 
    \leq \formula{\Qstep}~ \sk (\Delta, D)=
    \formula{\Qstep}~ \sk\state. 
    \]    
  \end{enumerate}
  \item[Case 2] $b\leq 0$. As above, we use a covering of 
  the strip defined by $z-\zeta\in[-1,1]$ and $z+\zeta\geq 0$ and 
  consider the possible locations of $(z, \zeta)$ in this 
  space. The relevant covering is depicted in 
  Figure~\ref{fig-covers}(b), where the $R$ region (in green) 
  is defined as above and the $B$ region (in red) is defined 
  as the intersection of the strip with 
  $\s{(z,\zeta)~|~z,\zeta\geq \formula{-\b}}$.
  % This covering requires -b <= r-1.  
  %% Check!!!
  \assert{-\b <= \r -1}%
  \begin{enumerate}
    \item If $\formula{-\r}\leq z,\zeta\leq\formula{\r}$, 
    then $\sk(\state\cdot R)\leq \formula{\Qstep}~\sk\state$ by 
    Lemma~\ref{lemmaR}.
    \item If $z,\zeta\geq \formula{-\b}$ then there exists 
    $n\in\Z$ such that $\sk(\state\cdot B^n)\leq 
    \formula{\Qstep}~\sk\state$ by Lemma~\ref{lemmaB}.
  \end{enumerate}
\end{description}
Finally, note that only a constant number of calculations are required
to decide which of the above cases applies. Moreover, each case only
requires a constant number of operations. Specifically, the
computation of $k$ and $\sigma^k$ in Lemma~\ref{shift_lemma}, of $n$
and $A^n$ in Lemma~\ref{lemmaA}, and of $n$ and $B^n$ in
Lemma~\ref{lemmaB} each require just a fixed number of operations, and
each of the remaining cases produces a fixed grid operator.
\end{proof}

\section[Proof of Proposition 5.17]{Proof of Proposition~\ref{prop-enclosing-ellipse}}
\label{app-enclosing-ellipse}

We prove Proposition~\ref{prop-enclosing-ellipse}, whose statement we 
reproduce here.

\begin{un-proposition}
  Let $A$ be a bounded convex subset of $\R^2$ with non-empty
  interior. Then there exists an ellipse $E$ such that $A\seq E$, and
  such that 
  \begin{equation}\label{eqn-area-EA}
    \area(E) \leq \frac{4\pi}{3\sqrt 3}\area(A). 
  \end{equation}
\end{un-proposition}

\begin{proof}
  We may assume without loss of generality that $A$ is compact, for it
  it is not, we can replace $A$ by its closure, which has the same
  area as $A$ because $A$ is convex. Let $\Disk$ be the closed unit
  disk, and consider the collection $\Aff(A,\Disk)$ of all affine
  transformations $f:\R^2\to\R^2$ satisfying $f(A)\seq\Disk$. Then
  $\Aff(A,\Disk)$, with the natural topology, is a compact
  set. Therefore, there exists some $f\in\Aff(A,\Disk)$ maximizing the
  area of $f(A)$.  We claim that $f$ is invertible. Indeed, since $A$
  is bounded, there exists some $\lambda>0$ such that $\lambda
  A\seq\Disk$. Since $\lambda A$ has non-zero area, and multiplication
  by $\lambda$ is an affine map, it follows that $f(A)$ has non-zero
  area as well, and so $f$ is invertible. Let $E=f^{-1}(\Disk)$. We
  claim that $E$ is the desired ellipse.
  \[ \m{\begin{tikzpicture}[scale=1.5,cm={1.3,-0.5,0,0.5,(0,0)}]
    \draw[fill=blue!10] (1,0) -- (0.6,0.8) .. controls (0,0.9) and (-0.2,0.8)
    .. (-0.8,0.6) 
    -- (-0.28,-0.96) .. controls (0.5,-0.6) and (0.6,-0.5) .. (1,0) --
    cycle;
    \draw (0.1,0.2) node {$A$};
    \def\dot#1{\draw (#1) node {$\bullet$};}
  \end{tikzpicture}}
  \quad\xmapsto{\makebox[8mm][c]{$f$}}\quad
  \m{\begin{tikzpicture}[scale=1.5]
    \draw[fill=yellow!20] (0,0) circle (1);
    \draw[fill=blue!10] (1,0) -- (0.6,0.8) .. controls (0,0.9) and (-0.2,0.8)
    .. (-0.8,0.6) 
    -- (-0.28,-0.96) .. controls (0.5,-0.6) and (0.6,-0.5) .. (1,0) --
    cycle;
    \draw (0.1,0.2) node {$f(A)$};
    \draw (-0.6,-0.8) node[left=2mm] {$\Disk$};
    \def\dot#1{\draw (#1) node {$\bullet$};}
  \end{tikzpicture}}
  \quad\xmapsto{\makebox[8mm][c]{$f^{-1}$}}\quad
  \m{\begin{tikzpicture}[scale=1.5,cm={1.3,-0.5,0,0.5,(0,0)}]
    \draw[fill=yellow!20] (0,0) circle (1);
    \draw[fill=blue!10] (1,0) -- (0.6,0.8) .. controls (0,0.9) and (-0.2,0.8)
    .. (-0.8,0.6) 
    -- (-0.28,-0.96) .. controls (0.5,-0.6) and (0.6,-0.5) .. (1,0) --
    cycle;
    \draw (0.1,0.2) node {$A$};
    \draw (0,-1) node[left=4mm] {$E=f^{-1}(\Disk)$};
    \def\dot#1{\draw (#1) node {$\bullet$};}
  \end{tikzpicture}}
  \]
  We have $A\seq E$ by construction. We must show
  {\eqref{eqn-area-EA}}. Because affine transformations preserve ratios
  of areas, we may equivalently show
  \[ \area(\Disk) \leq \frac{4\pi}{3\sqrt 3}\area(f(A)).
  \]
  Let $\del\Disk$ be the boundary of $\Disk$, and consider points $p$
  where $f(A)$ ``touches'' the boundary, i.e., points $p\in f(A)\cap
  \del\Disk$. We claim that any arc segment of $\del\Disk$ of length
  $2\pi/3$ radians ($120$ degrees) contains at least one such
  point. To prove this, assume, for the sake of contradiction, that
  there is such an arc segment $Q$ not containing any point of
  $f(A)$. By rotational symmetry, we may without loss of generality
  assume that $Q$ is the arc from $-\pi/3$ to $\pi/3$ radians on the
  unit circle. Let $z_1$ and $z_2$ be the endpoints of $Q$, as shown here:
  \[
  \m{\begin{tikzpicture}[scale=2]
    \draw[fill=yellow!20] (0,0) circle (1);
    \draw[line width=1.2mm, color=green!60!black] (0.5,-0.8660254037) arc (-60:60:1);
    \draw (0,0) circle (1);
    \draw[fill=blue!10] 
    (0.2,-.9797958971) -- (0.6,-0.3) -- (0.5,0.4) -- (0.2,.9797958971) .. controls (0,0.99) and (-0.5,0.8)
    .. (-0.8,0.6) 
    -- (-0.6,-0.8) -- (0.2,-.9797958971) --
    cycle;
    \draw[dashed, cm={1.0714285714,0,0,.9449111825,(.0714285714,0)}] (0,0) circle (1);
    \draw (1,-0.5) node [right] {$g(\Disk)$.};
    \draw (-0.1,0) node {$f(A)$};
    \draw (1,0) node[left, color=green!50!black] {$Q$};
    \draw (-0.6,-0.8) node[left=2mm] {$\Disk$};
    \def\dot#1{\draw (#1) node {$\bullet$};}
    \dot{0.4,.9165151389};  \draw (0.4,.9165151389) node[above=2mm] {$w_1$};
    \dot{0.4,-.9165151389}; \draw (0.4,-.9165151389) node[below=2mm] {$w_2$};
    \dot{0.5,.8660254037};  \draw (0.5,.8660254037) node[above right] {$z_1$};
    \dot{0.5,-.8660254037}; \draw (0.5,-.8660254037) node[below right] {$z_2$};
    \dot{-1,0}; \draw (-1,0) node[left] {$(-1,0)$};
  \end{tikzpicture}}
  \]
  Since both $Q$ and $f(A)$ are compact, there exists some $d>0$ such
  that the distance between any point of $A$ and any point of $Q$ is
  at least $d$. Let $w_1$ and $w_2$ be the two points on the unit
  circle whose distance from $Q$ is $d/2$. Now consider the affine
  transformation $g$ that fixes $(-1,0)$ and maps $w_1$ to $z_1$ and
  $w_2$ to $z_2$. Then $g(\Disk)$ is an ellipse whose boundary passes
  through the points $(-1,0)$, $z_1$, and $z_2$, shown as a dashed
  line in the above illustration. It is therefore bisected by
  $Q$. Since the map $g$ moves no point of the unit disk by more than
  distance $d$, the set $g(f(A))$ does not intersect $Q$. It follows
  that $g(f(A))$ is contained in $\Disk$. On the other hand, a
  calculation shows that the area of $g(f(A))$ is slightly greater
  than that of $f(A)$, contradicting the assumption that the area of
  $f(A)$ was maximal.

  We have proved that every arc segment of length $2\pi/3$ radians on
  the boundary of $\Disk$ contains a point of $f(A)$. It follows that there
  is some finite cyclic sequence of points $p_1,\ldots,p_n\in f(A)\cap
  \del\Disk$ such that consecutive points are no more than $2\pi/3$
  radians apart. By connecting each $p_i$ to the center, we partition
  each of the sets $f(A)$ and $\Disk$ into $n$ pieces $B_1,\ldots,B_n$ and
  $C_1,\ldots,C_n$, respectively.
  \[
  \m{\begin{tikzpicture}[scale=2]
    \draw[fill=yellow!20] (0,0) circle (1);
    \draw[fill=blue!10] (0.8,-0.6) .. controls (0.95,0.1) and
    (0.9,0.3) .. (0.8,0.6) -- (0.3,.9539392014) -- (-0.8,0.6) 
    -- (-0.6,-0.8) -- (0.8,-0.6) --
    cycle;
    \draw (0.52,0) node {$B_1$};
    \draw (.366,0.518) node {$B_2$};
    \draw (-0.166,.518) node {$B_3$};
    \draw (-.466,-.066) node {$B_4$};
    \draw (.066,-.466) node {$B_5$};
    \draw (-.3,.85) node {$C_3$};
    \draw (-.85,-0.15) node {$C_4$};
    \draw (.12,-.85) node {$C_5$};
    \def\dot#1{\draw (#1) node {$\bullet$};}
    \dot {0,0};
    \dot {0.8,-0.6};    
    \draw(0.8,-0.6) node[below right] {$p_1$} -- (0,0);
    \dot {0.8,0.6};     
    \draw(0.8,0.6) node[above right] {$p_2$} -- (0,0);
    \dot {0.3,.9539392014};   
    \draw(0.3,.9539392014) node[above] {$p_3$} -- (0,0);
    \dot {-0.8,0.6};    
    \draw(-0.8,0.6) node[above left] {$p_4$} -- (0,0);
    \dot {-0.6,-0.8};   
    \draw(-0.6,-0.8) node[below left] {$p_5$} -- (0,0);
  \end{tikzpicture}}
  \]
  The fact that the inner angles are less than $2\pi/3$ immediately
  implies that $\area(C_i)\leq \frac{4\pi}{3\sqrt 3}\area(B_i)$ for
  all $i$; hence also $ \area(\Disk) \leq \frac{4\pi}{3\sqrt
    3}\area(f(A))$. This finishes the proof of the proposition.
\end{proof}

\section[Proof of Theorem 6.2]{Proof of Theorem~\ref{thm-diophantine}}
\label{app-rings}

Consider the rings $\Z$ and $\D$, together with their respective
extensions $\Z[\sqrt2]$, $\Z[\omega]$ and $\D[\sqrt2]$, $\D[\omega]$,
as introduced in Section~\ref{sec-algebra}. We wish to give an
efficient method for solving equations of the form $t\da t=\xi$, for
given $\xi\in\D[\sqrt2]$ and unknown $t\in\D[\omega]$. To do this, we
first classify the primes in $\Z[\sqrt2]$ and $\Z[\omega]$, then show
how to solve the equation $t\da t=\xi$ in the case where
$\xi\in\Z[\sqrt2]$ and $t\in\Z[\omega]$, which we finally extend to
the general case. None of the results in this section are original;
they are well-known from the theory of cyclotomic fields, which goes
back to the 19th century work of Gauss and Kummer.  Nevertheless, we
hope that the following detailed treatment may be useful to readers
who are not experts in algebraic number theory. Moreover, we hope to
give sufficient details to enable an interested reader, in principle,
to implement the algorithm. This will also aid in our complexity
analysis.

A fundamental property of the rings $\Z$, $\Z[\sqrt2]$, and
$\Z[\omega]$ is that they are {\em Euclidean domains}; this implies
that the notions of divisibility, greatest common divisor, and unique
prime factorization all make sense in these rings.  Recall that in a
ring, a {\em unit} is an invertible element.  In a Euclidean domain,
we write $x\divides y$ if $x$ is a divisor of $y$, and $x\sim y$ if
$x\divides y$ and $y\divides x$; equivalently, $x\sim y$ iff there
exists a unit $u$ that $x u=y$. An element $x$ is {\em prime} if $x$
is not a unit, and $x=ab$ implies that either $a$ or $b$ is a
unit. Note that if $x$ is prime and $x\divides ab$, then $x\divides a$
or $x\divides b$; this follows from Euclid's algorithm.

\subsection[{Units in Z[sqrt 2]}]{Units in $\Z[\sqrt2]$}

\begin{definition}
  We say that $\xi\in\Z[\sqrt 2]$ is {\em doubly positive} if $\xi\geq
  0$ and $\xi\bul \geq 0$.
\end{definition}

\begin{lemma}\label{lem-units}
  The units of $\Z[\sqrt2]$ are of the form $u=(-1)^n\lambda^m$, where
  $\lambda=1+\sqrt 2$. Moreover, a unit $u$ is doubly positive if and
  only if $u$ is a square in $\Z[\sqrt2]$.
\end{lemma}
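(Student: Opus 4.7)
The plan is to use the norm map $N\colon \Z[\sqrt{2}]\to\Z$ defined by $N(\xi)=\xi\bul\xi = a^2-2b^2$ for $\xi=a+b\sqrt{2}$, which is multiplicative because $(-)\bul$ is a ring automorphism. First I would observe that $u\in\Z[\sqrt{2}]$ is a unit if and only if $N(u)=\pm 1$: if $u v = 1$ then $N(u)N(v)=1$ in $\Z$ forces $N(u)\in\{-1,+1\}$, and conversely $N(u)=\pm 1$ gives the explicit inverse $u\inv = \pm u\bul$. This reduces classifying units to solving the Pell-type equation $a^2-2b^2=\pm1$.

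Next I would show that every unit has the form $(-1)^n\lambda^m$. Since $N(\lambda)=1^2-2\cdot 1^2=-1$, the element $\lambda=1+\sqrt{2}$ is a unit, and $\lambda>1$. The crux is to verify that no unit lies strictly between $1$ and $\lambda$. If $u=a+b\sqrt{2}$ satisfied $1<u<\lambda$, then $|u\bul|=1/|u|\in(\lambda\inv,1)$, so $u+u\bul=2a$ and $u-u\bul=2b\sqrt{2}$ would be bounded in absolute value by $\lambda+1$ and $\lambda$ respectively; the only integer pairs $(a,b)$ compatible with both bounds and with $a^2-2b^2=\pm 1$ give $u=1$, a contradiction. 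Given any unit $u>1$, letting $m$ be the largest integer with $\lambda^m\leq u$, the unit $u\lambda^{-m}$ lies in $[1,\lambda)$, hence equals $1$, so $u=\lambda^m$. Units with $0<u<1$ are handled by inversion, and negative units by multiplication by $-1$, yielding $u=(-1)^n\lambda^m$.

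For the second claim, I would start from the identity $\lambda\bul = 1-\sqrt{2} = -\lambda\inv$, which was already noted in the paper. Raising to the $m$-th power gives $(\lambda^m)\bul=(-1)^m\lambda^{-m}$, and therefore a general unit $u=(-1)^n\lambda^m$ satisfies $u\bul = (-1)^{n+m}\lambda^{-m}$. Since $\lambda^m>0$ always, the condition $u\geq 0$ forces $n$ to be even, and then $u\bul\geq 0$ forces $m$ to be even as well. In that case $u=\lambda^{2k}=(\lambda^k)^2$ is a square in $\Z[\sqrt{2}]$. Conversely, any square $v^2$ is automatically $\geq 0$, and $(v^2)\bul = (v\bul)^2\geq 0$, so squares are doubly positive.

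The main obstacle, and the only step requiring any real work, is the minimality of $\lambda$ among units greater than $1$; once that is in hand everything else is a short computation using $\lambda\bul=-\lambda\inv$. In an implementation-oriented write-up one might prefer to establish minimality by a clean continued-fraction or descent argument rather than case analysis, but for the lemma as stated the finite check above suffices.
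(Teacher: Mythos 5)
Your proof is correct. Note that the paper does not actually prove this lemma itself --- it simply cites Lemma~10 of the reference \cite{Selinger-newsynth} --- so your write-up supplies the standard argument that the citation delegates: the norm criterion $N(u)=\pm1$ for units, the fundamental-unit argument showing no unit lies in $(1,\lambda)$, and the computation $u\bul=(-1)^{n+m}\lambda^{-m}$ to characterize the doubly positive units as the even powers $\lambda^{2k}=(\lambda^k)^2$. One cosmetic imprecision: in your finite check the pairs $(a,b)=(1,\pm1)$ are also compatible with your bounds and the Pell equation (they give $u=\lambda$ and $u=1-\sqrt2$), not only $(1,0)$; but since neither of those lies in the open interval $(1,\lambda)$ either, the contradiction stands and the argument goes through unchanged.
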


\begin{proof}
  Lemma~10 of {\cite{Selinger-newsynth}}.
\end{proof}

\begin{lemma}
  Let $\xi\in\Z[\sqrt2]$, and consider $n=\xi\bul\xi$. If $n$ is a unit
  in $\Z$, then $\xi$ is a unit in $\Z[\sqrt2]$.
\end{lemma}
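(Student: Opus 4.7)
The plan is essentially a one-liner exploiting that $(-)^\bullet$ is an automorphism of $\Z[\sqrt{2}]$. First I would observe that the only units of $\Z$ are $\pm 1$, so the hypothesis that $n = \xi^\bullet \xi$ is a unit means $\xi^\bullet \xi = \pm 1$.

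Next I would exhibit an explicit inverse for $\xi$ in $\Z[\sqrt 2]$. Since $(-)^\bullet$ sends $\Z[\sqrt 2]$ to itself (indeed $(a+b\sqrt 2)^\bullet = a - b\sqrt 2$), we have $\xi^\bullet \in \Z[\sqrt 2]$, and hence also $\pm \xi^\bullet \in \Z[\sqrt 2]$. The equation $\xi^\bullet \xi = \pm 1$ can be rewritten as $\xi \cdot (\pm \xi^\bullet) = 1$, which shows that $\pm \xi^\bullet$ is a two-sided inverse for $\xi$ in $\Z[\sqrt 2]$. Therefore $\xi$ is a unit, completing the proof.

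There is really no obstacle here; the only mild point is remembering that $(-)^\bullet$ preserves $\Z[\sqrt 2]$ (not merely $\D[\sqrt 2]$ or $\Z[\omega]$), so that the candidate inverse actually lives in the correct ring. Everything else is just rewriting $\xi^\bullet \xi = \pm 1$.
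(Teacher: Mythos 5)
Your proof is correct and follows essentially the same approach as the paper: both exhibit a multiple of $\xi^\bullet$ as an explicit inverse of $\xi$ in $\Z[\sqrt2]$ (the paper writes $nm=1$ and concludes $\xi(\xi^\bullet m)=1$ without specializing to $n=\pm1$, but this is the same one-line argument).
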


\begin{proof}
  If $n$ is a unit, then there exists $m\in\Z$ such that $nm=1$, hence
  $\xi\xi\bul m=1$, hence $\xi$ is invertible in $\Z[\sqrt2]$.
\end{proof}

%----------------------------------------------------------------------
\subsection[{Primes in Z[sqrt 2]}]{Primes in $\Z[\sqrt2]$}

\begin{lemma}\label{lem-prime-criterion}
  Let $\xi\in\Z[\sqrt2]$, and consider $n=\xi\bul\xi$. If $n$ is prime
  in $\Z$, then $\xi$ is prime in $\Z[\sqrt2]$.
\end{lemma}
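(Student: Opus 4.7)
The plan is to mimic the proof of the previous lemma, using the multiplicativity of the norm map $\xi \mapsto \xi^\bullet\xi$ together with the preceding lemma to transfer primality from $\Z$ back to $\Z[\sqrt 2]$.

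First I would check that $\xi$ is not a unit. Since $n = \xi^\bullet\xi$ is prime in $\Z$, we have $|n| \geq 2$, so $n$ is not a unit in $\Z$; hence by the contrapositive of the preceding lemma, $\xi$ is not a unit in $\Z[\sqrt 2]$.

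Next, suppose $\xi = ab$ for some $a,b \in \Z[\sqrt 2]$. Applying $(-)^\bullet$, which is a ring automorphism, gives $\xi^\bullet = a^\bullet b^\bullet$, and therefore
\[
n \;=\; \xi^\bullet \xi \;=\; (a^\bullet a)(b^\bullet b).
\]
Note that $a^\bullet a$ and $b^\bullet b$ both lie in $\Z$ (as observed in the remark following the automorphism definition in Section~\ref{sec-algebra}). Since $n$ is prime in $\Z$, one of the factors $a^\bullet a$ or $b^\bullet b$ must be a unit in $\Z$, i.e., equal to $\pm 1$. By the preceding lemma applied to this factor, the corresponding element ($a$ or $b$) is a unit in $\Z[\sqrt 2]$. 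This verifies the primality criterion for $\xi$.

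No step looks like a serious obstacle; the argument is entirely formal once one has the norm identity and the preceding lemma. The only mild subtlety is remembering to rule out $\xi$ being a unit itself, which is why the opening observation that $|n|\geq 2$ forces $\xi$ to be a nonunit matters.
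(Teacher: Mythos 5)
Your proof is correct and follows exactly the paper's argument: factor $\xi=ab$, apply the multiplicative norm to get $n=(a^\bullet a)(b^\bullet b)$ as a product of integers, and conclude that one factor is a unit of $\Z$ and hence (by the preceding lemma) the corresponding element is a unit of $\Z[\sqrt2]$. One small quibble: to rule out $\xi$ being a unit you invoke ``the contrapositive of the preceding lemma,'' but what you actually need is the \emph{converse} of that lemma (if $\xi$ is a unit then $\xi^\bullet\xi$ is a unit in $\Z$) --- which is immediate, since $\xi\mu=1$ gives $(\xi^\bullet\xi)(\mu^\bullet\mu)=1$ with $\mu^\bullet\mu\in\Z$; the paper's own proof skips this non-unit check entirely, so your extra care here is welcome once the citation is fixed.
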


\begin{proof}
  Suppose $\xi=\alpha\beta$ in $\Z[\sqrt 2]$, and consider $n =
  \xi\bul\xi = \alpha\bul\alpha \beta\bul\beta$. Since
  $\alpha\bul\alpha$ and $\beta\bul\beta$ are integers and $n$ is
  prime, we must have that either $\alpha\bul\alpha$ or
  $\beta\bul\beta$ is a unit in $\Z$; hence $\alpha$ or $\beta$ is a
  unit in $\Z[\sqrt 2]$. So $\xi$ is prime.
\end{proof}

\begin{lemma}
  For every prime $\xi$ of $\Z[\sqrt2]$, there exists a unique (up to
  a unit) prime $p$ of $\Z$ such that $\xi \divides p$.
\end{lemma}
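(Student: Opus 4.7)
The plan is to establish existence by extracting a rational prime from the norm of $\xi$, and uniqueness by a Bézout argument.

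For existence, I would first consider the element $n = \xi^\bullet \xi \in \Z$. Since $\xi$ is prime (hence nonzero and not a unit), we have $n \neq 0$. Moreover, $n$ cannot be a unit in $\Z$: if $nm = 1$ for some $m \in \Z$, then $\xi(\xi^\bullet m) = 1$ would make $\xi$ invertible in $\Z[\sqrt2]$, contradicting primality. Since $\Z$ is a unique factorization domain, we may write $|n| = p_1 p_2 \cdots p_k$ for some rational primes $p_i$. Now $\xi$ divides $\xi^\bullet \xi = n$ in $\Z[\sqrt 2]$, hence $\xi$ divides $p_1 p_2 \cdots p_k$. Because $\Z[\sqrt2]$ is a Euclidean domain, primes satisfy the usual property that dividing a product forces dividing one of the factors, so $\xi \divides p_i$ for some $i$. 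This gives the desired rational prime.

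For uniqueness, suppose $\xi \divides p$ and $\xi \divides q$ for two rational primes $p,q$ with $p \not\sim q$ in $\Z$ (i.e., $p \neq \pm q$). Then $p$ and $q$ are coprime in $\Z$, so by Bézout's identity there exist $a,b \in \Z$ with $ap + bq = 1$. But then $\xi$ divides $ap + bq = 1$ in $\Z[\sqrt2]$, so $\xi$ is a unit, contradicting primality. Hence any two rational primes divisible by $\xi$ must agree up to a unit of $\Z$ (i.e., up to sign), which is the claimed uniqueness.

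I do not expect any real obstacle here: both halves of the argument are standard consequences of the fact that $\Z[\sqrt 2]$ is a Euclidean domain (already invoked implicitly in the paper's prior lemmas) and of unique factorization in $\Z$. The only point that warrants care is the observation that $\xi^\bullet \xi$, viewed as an integer, cannot be a unit of $\Z$ when $\xi$ is a nonunit of $\Z[\sqrt 2]$; this is immediate from the multiplicativity of $(-)^\bullet(-)$ and was essentially recorded above.
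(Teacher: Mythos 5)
Your proof is correct and takes essentially the same route as the paper's: existence by factoring the norm $n=\xi^\bullet\xi$ in $\Z$ and using that a prime of the Euclidean domain $\Z[\sqrt2]$ dividing a product divides a factor, and uniqueness via a B\'ezout identity. Your explicit check that $n$ is not a unit (so that the factorization is nonempty) is a small point of care that the paper records in the immediately preceding lemma rather than inside this proof.
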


\begin{proof}
  To show existence, consider $n=\xi\bul\xi$. Note that $\xi\neq 0$,
  hence $n\neq 0$. Let $n=p_1p_2\cdots p_k$ be a prime factorization
  of $n$. Since $\xi \divides p_1p_2\cdots p_k$ and $\xi$ is prime in
  $\Z[\sqrt2]$, there exists some $i$ such that $\xi\divides p_i$.

  To show uniqueness, assume $\xi\divides p$ and $\xi\divides q$, where $p\not\sim q$.
  Then $\gcd(p,q)=1$, hence by Euclid's algorithm, we can write
  $1=np+mq$, for integers $n$ and $m$. Then $\xi\divides np+mq=1$, which
  is absurd since $\xi$ is prime. 
\end{proof}

\begin{lemma}\label{lem-z2-cases}
  Let $\xi$ be a prime of $\Z[\sqrt2]$, and let $p$ be the unique (up
  to a unit) prime of $\Z$ such that $\xi\divides p$. Then exactly one of
  the following holds: $\xi\sim p$ or $\xi\bul\xi\sim p$.
\end{lemma}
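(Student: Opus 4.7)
My plan is to use the multiplicative norm $N(\alpha) = \alpha^\bullet\alpha : \Z[\sqrt 2]\to\Z$ and analyze the factorization of $p$ in $\Z[\sqrt 2]$.

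Since $\xi\divides p$, write $p=\xi\eta$ for some $\eta\in\Z[\sqrt 2]$. Applying $\bullet$ and multiplying by the original equation gives $p^2=(\xi^\bullet\xi)(\eta^\bullet\eta)$ in $\Z$. Because $p$ is prime in $\Z$ and both factors are integers, unique factorization in $\Z$ forces $\xi^\bullet\xi\in\s{\pm 1,\pm p,\pm p^2}$, with $\eta^\bullet\eta$ being the complementary factor.

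The case $\xi^\bullet\xi=\pm 1$ is ruled out by the unit lemma preceding this statement: it would make $\xi$ a unit, contradicting primality. If $\xi^\bullet\xi=\pm p^2$, then $\eta^\bullet\eta=\pm 1$, so $\eta$ is a unit and $\xi\sim p$. If $\xi^\bullet\xi=\pm p$, then since $\pm 1$ are units we obtain $\xi^\bullet\xi\sim p$ directly.

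Finally, to see exclusivity, suppose both $\xi\sim p$ and $\xi^\bullet\xi\sim p$ held. From $\xi\sim p$ we get $\xi=up$ for a unit $u\in\Z[\sqrt 2]$, so $\xi^\bullet\xi=(u^\bullet u)p^2$, where $u^\bullet u=\pm 1$ (since $u$ is a unit and $N$ is multiplicative into $\Z^\times=\s{\pm 1}$). Thus $\xi^\bullet\xi\sim p^2$, and since $p$ is prime in $\Z$ this contradicts $\xi^\bullet\xi\sim p$. Hence exactly one of the two alternatives holds. The only subtle point to check carefully is that the norm $N$ is genuinely multiplicative and maps units to units of $\Z$, but both facts are immediate from the definition of $(-)^\bullet$ as a ring automorphism, so no real obstacle arises.
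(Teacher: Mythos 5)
Your proof is correct and follows essentially the same route as the paper: both arguments take the norm $\xi^\bullet\xi$, observe it divides $p^2$ in $\Z$, and split into the cases $\pm 1$, $\pm p$, $\pm p^2$, ruling out the unit case and converting the $\pm p^2$ case into $\xi\sim p$ via the complementary factor being a unit. Your exclusivity argument (comparing norms $p$ versus $p^2$) is a cosmetic variant of the paper's (which notes that both alternatives would force $\xi^\bullet$ to be a unit), but there is no substantive difference.
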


\begin{proof}
  First note that at most one of these properties can hold, because
  otherwise $\xi\sim\xi\bul\xi$, which implies that $\xi\bul$ is a
  unit, which is absurd since it is prime.

  We now show that at least one of the properties holds. Since
  $\xi\divides p$ and $p$ is an integer, we have $\xi\bul\divides p$, hence
  $\xi\bul\xi\divides p^2$. Also, $\xi\bul\xi$ is an integer, so either
  $\xi\bul\xi\sim 1$, $\xi\bul\xi\sim p$, or $\xi\bul\xi\sim p^2$. The
  first of these cases is absurd, since $\xi$ would then be a unit. In
  the second case, we have $\xi\bul\xi\sim p$, which is to be
  shown. In the third case, since $\xi\divides p$, there is
  $\alpha\in\Z[\sqrt2]$ such that $\xi\alpha=p$. Then we have
  $\xi\bul\xi\alpha\bul\alpha=p^2\sim\xi\bul\xi$, which implies that
  $\alpha$ is a unit, hence $\xi\sim p$.
\end{proof}

\begin{lemma}\label{lem-prime-factor-z2}
  Let $p$ be a prime of $\Z$. Then the prime factorization of $p$ in
  $\Z[\sqrt2]$ consists of one or two factors. 
\end{lemma}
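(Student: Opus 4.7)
My plan is to prove this by a norm-counting argument, using the $\sqrt{2}$-conjugation automorphism together with Lemma~\ref{lem-z2-cases}. Concretely, suppose $p = \xi_1 \xi_2 \cdots \xi_k$ is a prime factorization of $p$ in $\Z[\sqrt 2]$ (one exists and $k \geq 1$ since $p$ has norm $p^2 \neq \pm 1$ and so is not a unit). I would apply the $(-)\bul$ automorphism to get $p = p\bul = \xi_1\bul \cdots \xi_k\bul$, and then multiply the two factorizations together to get
\[
p^2 \;=\; \prod_{i=1}^{k} \xi_i\bul \xi_i.
\]

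Next I would classify each factor $\xi_i$ using Lemma~\ref{lem-z2-cases}. Since $\xi_i \divides p$ and $p$ is the unique prime of $\Z$ divisible by $\xi_i$ (up to a unit), each $\xi_i$ falls into exactly one of two cases: either $\xi_i \sim p$, in which case $\xi_i\bul\xi_i \sim p^2$; or $\xi_i\bul\xi_i \sim p$. Letting $r$ denote the number of indices of the first kind and $s$ the number of the second kind, the equation above becomes
\[
p^2 \;=\; \pm\, p^{2r + s},
\]
from which, since $p$ is a prime integer, one reads off $2r + s = 2$.

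The only non-negative integer solutions are $(r,s) = (1,0)$ and $(r,s) = (0,2)$, giving $k = r + s \in \{1,2\}$, as claimed. I expect the main subtlety to be purely bookkeeping: making sure the uniqueness clause of Lemma~\ref{lem-z2-cases} is correctly invoked (so that every $\xi_i$ indeed sits over the same rational prime $p$, not some other prime), and being careful about units when writing $\xi_i\bul\xi_i \sim p$ or $\sim p^2$, since the overall sign $\pm$ in the norm equation is harmless and does not affect the exponent count. No real obstacle arises beyond this; the heart of the argument is just the linear Diophantine equation $2r+s=2$.
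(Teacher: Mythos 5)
Your proof is correct. It rests on the same key ingredient as the paper's --- the dichotomy of Lemma~\ref{lem-z2-cases} --- but deploys it differently. The paper applies the dichotomy just once, to a single prime factor $\xi$ of $p$: if $\xi\sim p$ then $p$ is itself prime (one factor), and if $\xi\bul\xi\sim p$ this already exhibits a factorization into two primes, with uniqueness of factorization in the Euclidean domain $\Z[\sqrt2]$ doing the rest. You instead take an arbitrary factorization $p=\xi_1\cdots\xi_k$, apply $(-)\bul$, multiply to get $p^2=\prod_i\xi_i\bul\xi_i$, classify each factor by the dichotomy, and read off $2r+s=2$. Your version has the small advantage of bounding the length of \emph{every} prime factorization without appealing to uniqueness at the end; the cost is exactly the bookkeeping you flagged, in particular the (true, but worth stating) fact that a rational integer which is a unit of $\Z[\sqrt2]$ must be $\pm1$ (by Lemma~\ref{lem-units} the units are $\pm\lambda^m$, and $\lambda^m$ is rational only for $m=0$), which is what lets you convert the associations $\xi_i\bul\xi_i\sim p$ and $\xi_i\bul\xi_i\sim p^2$ into integer equalities up to sign before comparing exponents of $p$.
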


\begin{proof}
  Let $\xi$ be some prime factor of $p$ in $\Z[\sqrt2]$. By
  Lemma~\ref{lem-z2-cases}, either $\xi\sim p$ or $\xi\bul\xi\sim
  p$. Either way, this gives a prime factorization of $p$ in $\Z[\sqrt2]$.
\end{proof}

We now determine the prime factorization in $\Z[\sqrt2]$ of every
prime $p$ of $\Z$. It turns out that there are 5 cases, depending
whether $p$ is even, or $p\equiv 1, 3, 5, 7\mmod{8}$.

\begin{lemma}\label{lem-2}
  The prime factorization of $p=2$ in $\Z[\sqrt2]$ is $p =
  \sqrt{2}\cdot \sqrt{2}$. The prime factorization of $p=-2$ in
  $\Z[\sqrt2]$ is $p=-\sqrt{2}\cdot\sqrt{2}$.
\end{lemma}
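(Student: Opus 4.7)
The plan is to verify both factorizations directly and then check that the claimed factors are indeed prime in $\Z[\sqrt 2]$, using the criterion established earlier.

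First, the equalities themselves are immediate algebraic identities: $\sqrt 2 \cdot \sqrt 2 = 2$ and $(-\sqrt 2)\cdot \sqrt 2 = -2$, both computations taking place within $\Z[\sqrt 2]$. So the only real content is verifying that $\sqrt 2$ (resp.\ $-\sqrt 2$) is prime.

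To see that $\sqrt 2$ is prime, I would apply Lemma~\ref{lem-prime-criterion} with $\xi = \sqrt 2$. Compute $\xi\bul\xi = (-\sqrt 2)(\sqrt 2) = -2$, which is prime in $\Z$ (since $-2 \sim 2$ and $2$ is prime); hence $\sqrt 2$ is prime in $\Z[\sqrt 2]$. The same computation works verbatim for $-\sqrt 2$: its $\bul$-norm is again $-2$, which is prime. Alternatively, one may simply observe that $-\sqrt 2 = (-1)\cdot \sqrt 2$, and since $-1$ is a unit in $\Z[\sqrt 2]$ (indeed $-1 = -\lambda^0$ by Lemma~\ref{lem-units}), $-\sqrt 2$ is an associate of $\sqrt 2$ and is therefore also prime.

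There is no serious obstacle here; the entire argument is a two-line invocation of Lemma~\ref{lem-prime-criterion} following a trivial multiplication. The only subtle point worth spelling out is that $-2$ counts as a prime of $\Z$ in the sense used throughout this appendix (the notion of prime is taken up to units, and $\pm 1$ are the units of $\Z$), which is why the criterion applies.
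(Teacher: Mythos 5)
Your proof is correct and follows the paper's approach exactly: the paper also reduces the claim to showing that $\sqrt2$ is prime in $\Z[\sqrt2]$ and cites Lemma~\ref{lem-prime-criterion} (via $\xi\bul\xi=-2$ being prime in $\Z$). You merely spell out the details the paper leaves implicit.
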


\begin{proof}
  We only need to show that $\sqrt2$ is prime in $\Z[\sqrt2]$, but
  this follows from Lemma~\ref{lem-prime-criterion}.
\end{proof}

\begin{lemma}\label{lem-3-5}
  Let $p$ be a prime of $\Z$ such that $p\equiv 3\mmod{8}$ or $p\equiv
  5\mmod{8}$. Then $p$ is prime in $\Z[\sqrt2]$.
\end{lemma}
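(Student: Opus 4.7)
The plan is to argue by contradiction. Suppose $p$ is not prime in $\Z[\sqrt{2}]$. By Lemma~\ref{lem-prime-factor-z2}, the prime factorization of $p$ in $\Z[\sqrt{2}]$ consists of one or two factors; since $p$ is assumed not prime, there must be two, so we can write $p = \xi\xi'$ with neither $\xi$ nor $\xi'$ a unit. Applying Lemma~\ref{lem-z2-cases} to the prime $\xi$, either $\xi\sim p$ or $\xi^\bullet\xi\sim p$. The first case would force $\xi'$ to be a unit, contradicting our choice of factorization, so $\xi^\bullet\xi\sim p$.

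Next I would pin down the scalar ambiguity. Since $\xi^\bullet\xi\in\Z$ and the units of $\Z[\sqrt2]$ are $\pm\lambda^m$ by Lemma~\ref{lem-units}, the relation $\xi^\bullet\xi\sim p$ together with the fact that $\lambda^m\in\Z$ only for $m=0$ forces $\xi^\bullet\xi=\pm p$. Writing $\xi=a+b\sqrt 2$ with $a,b\in\Z$, this becomes the Diophantine condition
\[
a^2-2b^2 = \pm p.
\]

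The final step is a reduction modulo $8$, which is where all the work lives. Squares in $\Z/8\Z$ are $\{0,1,4\}$, and hence $2b^2\bmod 8\in\{0,2\}$. Taking differences, the possible residues of $a^2-2b^2$ modulo $8$ are exactly $\{0,1,2,4,6,7\}$. On the other hand, the hypothesis $p\equiv 3$ or $5\pmod{8}$ gives $\pm p\equiv 3$ or $5\pmod{8}$, neither of which lies in the set above. This contradiction shows that no such non-trivial factorization of $p$ exists, so $p$ is prime in $\Z[\sqrt 2]$, as claimed. There is no real obstacle here; the only subtlety is the bookkeeping that reduces $\xi^\bullet\xi\sim p$ to the clean integer equation $a^2-2b^2=\pm p$ before invoking the mod-$8$ argument.
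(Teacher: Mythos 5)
Your proof is correct and follows essentially the same route as the paper's: take a prime factor $\xi$ of $p$, use Lemma~\ref{lem-z2-cases} to reduce to the case $\xi\bul\xi = a^2-2b^2 = \pm p$, and rule this out modulo $8$. The only differences are cosmetic — you frame it as a contradiction from a two-factor decomposition and spell out why the unit must be $\pm 1$, both of which the paper leaves implicit.
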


\begin{proof}
  Let $\xi$ be a prime factor of $p$ in $\Z[\sqrt2]$. By
  Lemma~\ref{lem-z2-cases}, we know that $\xi\sim p$ or
  $\xi\bul\xi\sim p$. In the former case, $p$ is prime and we are
  done. In the latter case, writing $\xi=a+b\sqrt2$, we have
  $\xi\bul\xi = a^2-2b^2 =\pm p$, hence $a^2-2b^2\equiv\pm3\mmod{8}$.
  By easy case distinction, we see that $a^2$ can only be $0$, $1$, or
  $4$ modulo $8$, and $2b^2$ can only be $0$ or $2$ modulo $8$, so
  $a^2-2b^2\equiv\pm3\mmod{8}$ is plainly impossible.
\end{proof}

\begin{lemma}\label{lem-associate-primes}
  Let $\xi\in\Z[\sqrt2]$ such that $\xi\sim\xi\bul$. Then either
  $\xi\sim n$, or $\xi\sim n\sqrt{2}$, for some $n\in\Z$.
\end{lemma}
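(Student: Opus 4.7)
The plan is to use the classification of units from Lemma~\ref{lem-units} together with a double application of $(-)\bul$. Starting from the hypothesis, write $\xi\bul = u\xi$ for some unit $u\in\Z[\sqrt2]$. Applying $(-)\bul$ to this equation gives $\xi = u\bul\xi\bul = u\bul u\,\xi$, so (since $\xi\neq 0$) we must have $u\bul u = 1$.

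Next I would plug in the explicit form of units. By Lemma~\ref{lem-units}, $u=(-1)^n\lambda^m$. Since $\lambda\bul = -\lambda\inv$, a direct calculation gives $u\bul u = (-1)^m$. Hence the condition $u\bul u=1$ forces $m$ to be even; write $m=2k$, so $u = \pm\lambda^{2k}$.

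The key trick is then to replace $\xi$ by its associate $\eta := \lambda^k\xi$, which lies in the same $\sim$-class. A short computation using $(\lambda^k)\bul = (-1)^k\lambda^{-k}$ and $\xi\bul = \pm\lambda^{2k}\xi$ shows that $\eta\bul = \pm\eta$, with the sign determined by the parities of $k$ and by whether $u = +\lambda^{2k}$ or $u = -\lambda^{2k}$. In each of the four sub-cases, either $\eta\bul = \eta$ or $\eta\bul = -\eta$. Writing $\eta = a+b\sqrt 2$ with $a,b\in\Z$, the first alternative forces $b=0$ (so $\eta\in\Z$), and the second forces $a=0$ (so $\eta = b\sqrt 2$). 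Since $\xi\sim\eta$, this yields the desired conclusion that $\xi\sim n$ or $\xi\sim n\sqrt 2$.

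No single step looks like a serious obstacle; the only place one has to be careful is in keeping track of signs in the parity bookkeeping when computing $u\bul u$ and $\eta\bul$, because the identity $\lambda\bul = -\lambda\inv$ introduces a sign that interacts with the $(-1)^n$ factor from Lemma~\ref{lem-units}. Organizing the argument so that the sign-case analysis is done only once (via the substitution $\eta=\lambda^k\xi$) keeps the proof short.
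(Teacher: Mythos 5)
Your proof is correct and follows essentially the same route as the paper's: derive $u\bul u=1$ from a double application of $(-)\bul$, use the unit classification to conclude $m=2k$, pass to the associate $\lambda^k\xi$, and read off $b=0$ or $a=0$ from $(\lambda^k\xi)\bul=\pm\lambda^k\xi$. The sign bookkeeping you flag works out exactly as in the paper, so there is nothing to change.
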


\begin{proof}
  By assumption, $\xi\sim\xi\bul$, so there exists a unit
  $u\in\Z[\sqrt2]$ such that $\xi\bul=u\xi$. By Lemma~\ref{lem-units}, the 
  units of $\Z[\sqrt2]$ are exactly of the form $u=(-1)^n\lambda^m$, where
  $\lambda=1+\sqrt2$. Applying the automorphism to $\xi\bul=u\xi$, we get 
  $\xi=u\bul\xi\bul$, hence $\xi\bul\xi = u\bul u\xi\bul\xi$, hence 
  $u\bul u=1$, hence $(\lambda\bul\lambda)^m = 1$. But $\lambda\bul\lambda=-1$, 
  so $m$ is even; say $m=2k$. Let $\xi' = \lambda^k\xi$. Note that
  $\xi'\sim\xi$. Then we have
  \begin{equation}\label{eqn-xi-prime}
    \xi'^\bullet = (\lambda\bul)^k\xi\bul = (-\lambda)^{-k} u\xi
    = (-1)^{-k}\lambda^{-k}(-1)^n\lambda^{m}\xi = \pm\lambda^k\xi = \pm \xi'.
  \end{equation}
  We can write $\xi' = a+b\sqrt2$ for some $a,b\in\Z$. From
  {\eqref{eqn-xi-prime}}, we have either $a-b\sqrt 2 = a+b\sqrt 2$, or
  $a-b\sqrt 2 = -(a+b\sqrt 2)$. In the first case, $b=0$, and therefore
  $\xi\sim\xi'= a$. In the second case, $a=0$, and therefore
  $\xi\sim\xi'= b\sqrt 2$.
\end{proof}

\begin{lemma}\label{lem-1-7}
  Let $p$ be a prime of $\Z$ such that $p\equiv 1\mmod{8}$ or $p\equiv
  7\mmod{8}$. Then $p$ has a prime factorization of the form $p\sim
  \xi\bul\xi$ in $\Z[\sqrt2]$; moreover, $\xi\not\sim\xi\bul$, so that
  the two prime factors are distinct.
\end{lemma}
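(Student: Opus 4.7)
My plan is to show two things: first, that $p$ is reducible in $\Z[\sqrt{2}]$ (so by Lemma~\ref{lem-prime-factor-z2} and Lemma~\ref{lem-z2-cases} it must factor as a product of two primes whose norm is $\pm p$), and second, that the two factors cannot be associates.

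For reducibility, the key ingredient will be the second supplement to quadratic reciprocity: for an odd prime $p$, the congruence $a^2 \equiv 2 \pmod{p}$ has a solution if and only if $p \equiv \pm 1 \pmod{8}$, which is exactly the hypothesis on $p$. Given such an integer $a$, I would observe that in $\Z[\sqrt{2}]$ we have
\[
p \mid a^2 - 2 = (a - \sqrt{2})(a + \sqrt{2}).
\]
If $p$ were prime in $\Z[\sqrt{2}]$, it would divide one of the factors $a \pm \sqrt{2}$; but then writing $a \pm \sqrt{2} = p(c + d\sqrt{2})$ would force $pd = \pm 1$, which is impossible for $|p| \geq 2$. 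Hence $p$ is not prime in $\Z[\sqrt{2}]$. Now let $\xi$ be any prime factor of $p$ in $\Z[\sqrt{2}]$. By Lemma~\ref{lem-z2-cases} either $\xi \sim p$ or $\xi^\bullet \xi \sim p$; the first alternative is ruled out by reducibility, so $\xi^\bullet \xi \sim p$ as required.

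For distinctness of the factors, I would argue by contradiction. Suppose $\xi \sim \xi^\bullet$. Then Lemma~\ref{lem-associate-primes} supplies $n \in \Z$ such that either $\xi \sim n$ or $\xi \sim n\sqrt{2}$. In the first case, $p \sim \xi^\bullet \xi \sim n^2$, forcing $p = \pm n^2$ (up to a unit of $\Z$, i.e., a sign), which contradicts $p$ being prime in $\Z$. In the second case, $p \sim \xi^\bullet \xi \sim -2n^2$, which forces $p$ to be even, contradicting the hypothesis that $p \equiv 1$ or $7 \pmod 8$ (both odd residues). Therefore $\xi \not\sim \xi^\bullet$, and the factorization $p \sim \xi^\bullet \xi$ consists of two non-associate primes.

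The main obstacle is really just invoking the quadratic reciprocity fact that $2$ is a quadratic residue modulo $p$ precisely when $p \equiv \pm 1 \pmod 8$; once this is in hand, the rest is a clean application of the already-proved lemmas on primes in $\Z[\sqrt{2}]$. Everything else is essentially bookkeeping with the norm $\xi^\bullet \xi$ and the characterization of units from Lemma~\ref{lem-units}.
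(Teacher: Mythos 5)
Your proposal is correct and follows essentially the same route as the paper: both use the second supplement to quadratic reciprocity to produce $x$ with $x^2\equiv 2\mmod{p}$, factor $x^2-2=(x-\sqrt2)(x+\sqrt2)$ to show $p$ is not prime in $\Z[\sqrt2]$, invoke Lemma~\ref{lem-z2-cases} to get $p\sim\xi\bul\xi$, and then rule out $\xi\sim\xi\bul$ via the two cases of Lemma~\ref{lem-associate-primes} (a square, contradicting primality of $p$; or $2n^2$, contradicting that $p$ is odd). The only differences are cosmetic, e.g.\ how one verifies $p\nmid x\pm\sqrt2$.
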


\begin{proof}
  It is well-known (by quadratic reciprocity) that if $p$ is a prime
  with $p\equiv\pm1\mmod{8}$, then $2$ is a quadratic residue modulo
  $p$. Therefore, the equation $x^2\equiv 2\mmod{p}$ has an integer
  solution; let $x$ be such a solution. Let $\alpha=x+\sqrt2$. Then
  $\alpha\bul\alpha=x^2-2\equiv0\mmod{p}$, and hence
  $p\divides\alpha\bul\alpha$. On the other hand, clearly $p\nmid\alpha$
  (since $\alpha/p\not\in\Z[\sqrt2]$), and similarly
  $p\nmid\alpha\bul$, which shows that $p$ is not a prime in
  $\Z[\sqrt2]$. By Lemma~\ref{lem-z2-cases}, $p\sim\xi\bul\xi$ for
  some prime $\xi$ of $\Z[\sqrt 2]$. 

  The final claim follows from Lemma~\ref{lem-associate-primes},
  because if $\xi\sim\xi\bul$, then either $\xi\sim n$ or $\xi\sim
  n\sqrt2$. In the first case, $n^2\divides \xi\bul\xi \sim p$, but $p$ is
  prime, so that $n=\pm 1$, contradicting that $\xi$ is prime. In the
  second case, $2n^2\divides\xi\bul\xi\sim p$, contradicting the
  assumption that $p$ is odd.
\end{proof}

\begin{lemma}\label{lem-prime-factorization-z2}
  Let $p$ be a prime of $\Z$. A prime factorization of $p$ in
  $\Z[\sqrt2]$ can be computed in probabilistic polynomial time.
\end{lemma}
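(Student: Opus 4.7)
The plan is to do a case analysis on $p\bmod 8$, using the structural lemmas already established. If $p=\pm 2$, Lemma~\ref{lem-2} supplies the factorization directly. If $p\equiv 3,5\mmod{8}$, Lemma~\ref{lem-3-5} says $p$ is already prime in $\Z[\sqrt 2]$, so there is nothing to compute. The only nontrivial case is $p\equiv \pm 1\mmod{8}$, and this is where I would concentrate the effort.

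In the case $p\equiv \pm 1\mmod{8}$, the proof of Lemma~\ref{lem-1-7} is essentially constructive: it tells us that if $x\in\Z$ satisfies $x^2\equiv 2\mmod{p}$, then $\alpha=x+\sqrt 2\in\Z[\sqrt 2]$ has $p\divides \alpha\bul\alpha$ but $p\nmid\alpha$ and $p\nmid\alpha\bul$. First, I would compute such an $x$ using a standard probabilistic square-root algorithm modulo a prime (for instance Tonelli--Shanks, or Cipolla's algorithm); both run in expected time polynomial in $\log p$. Second, I would compute $\xi=\gcd(\alpha, p)$ in $\Z[\sqrt 2]$ using the Euclidean algorithm. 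Because $\Z[\sqrt 2]$ is a Euclidean domain with norm $|N(\gamma)|=|\gamma\bul\gamma|$, this terminates, and because every norm decreases strictly at each division step, the number of steps is polynomial in $\log |N(\alpha)| = O(\log p)$.

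It then remains to verify that the $\xi$ produced is indeed a prime factor of $p$. Since $p\divides \alpha\bul\alpha$ but $p\nmid\alpha$, the greatest common divisor $\xi=\gcd(\alpha,p)$ must be a non-trivial and non-unit divisor of $p$ in $\Z[\sqrt 2]$; by Lemma~\ref{lem-prime-factor-z2} the prime factorization of $p$ has at most two factors, so $\xi$ itself is a prime factor, and $p\sim \xi\bul\xi$ is automatically the full factorization (by Lemma~\ref{lem-1-7}). The cofactor is then obtained simply as $\xi\bul$ (up to a unit computed by dividing $p$ by $\xi$ once in $\Z[\sqrt 2]$).

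The step I expect to require the most care is the complexity analysis of the Euclidean algorithm in $\Z[\sqrt 2]$: one has to show that the division-with-remainder can be performed in polynomial time (round each coefficient of the quotient $\alpha/p$ in $\D[\sqrt 2]$ to a nearest element of $\Z[\sqrt 2]$, then use this as the quotient) and that the norm decreases at a geometric rate. The square-root extraction modulo $p$ is standard, and everything else is either bookkeeping or a direct appeal to the previous lemmas. Combining the three cases yields a probabilistic polynomial-time algorithm that produces a prime factorization of $p$ in $\Z[\sqrt 2]$.
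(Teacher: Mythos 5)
Your proposal is correct and follows essentially the same route as the paper: case analysis on $p \bmod 8$ using Lemmas~\ref{lem-2}, \ref{lem-3-5}, and \ref{lem-1-7}, with the split prime in the case $p\equiv\pm1\mmod 8$ obtained as $\gcd(p, x+\sqrt2)$ where $x^2\equiv 2\mmod p$ is found by a probabilistic square-root algorithm. You merely spell out a few details (the Euclidean algorithm in $\Z[\sqrt2]$ and the verification that the gcd is a nontrivial prime factor) that the paper leaves implicit by citing the proof of Lemma~\ref{lem-1-7}.
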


\begin{proof}
  Assume without loss of generality that $p>0$. If $p=2$, then
  $p=\rt{2}$ is the desired prime factorization by
  Lemma~\ref{lem-2}. If $p\equiv 3,5\mmod{8}$, then $p$ is already
  prime in $\Z[\sqrt2]$ by Lemma~\ref{lem-3-5}. The remaining case is
  when $p\equiv 1,7\mmod{8}$. In this case the prime factorization is
  of the form $p\sim\xi\bul\xi$ by Lemma~\ref{lem-1-7}. Moreover, the
  proof of Lemma~\ref{lem-1-7} indicates how such $\xi$ can be
  computed, namely as $\xi=\gcd(p,x+\sqrt2)$, where $x$ is a solution
  of $x^2\equiv 2\mmod{p}$. The equation $x^2\equiv 2\mmod{p}$ can be
  solved in probabilistic polynomial time by a well-known algorithm,
  see {\cite{Rabin1980}}.
\end{proof}

\subsection[{Primes in Z[omega]}]{Primes in $\Z[\omega]$}

\begin{lemma}
  Let $\xi$ be a prime in $\Z[\sqrt2]$. Then either $\xi$ is prime in
  $\Z[\omega]$, or else $\xi\sim t\da t$ where $t$ is some prime of
  $\Z[\omega]$. In particular, the prime factorization
  of $\xi$ in $\Z[\omega]$ consists of either one or two factors. 
\end{lemma}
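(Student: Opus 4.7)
The plan is to start with an arbitrary prime factorization of $\xi$ in $\Z[\omega]$ and exploit the fact that $\xi$, being a real element of $\Z[\omega]$, is fixed by complex conjugation; this will let me combine a factorization with its conjugate to obtain strong constraints via unique factorization in $\Z[\sqrt 2]$.

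Concretely, I would write $\xi = t \cdot s$ in $\Z[\omega]$, where $t$ is any prime factor and $s$ is the product of the remaining prime factors. Applying $(-)\da$, which is an automorphism of $\Z[\omega]$, and using $\xi\da=\xi$, I get $\xi = t\da s\da$. Multiplying the two expressions yields
\[
  \xi^2 = (t\da t)(s\da s),
\]
and since $t\da t$ and $s\da s$ are non-negative real elements of $\Z[\omega]$, they in fact lie in $\Z[\omega]\cap\R = \Z[\sqrt 2]$. So this is actually an equation in $\Z[\sqrt 2]$.

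Because $\xi$ is prime in $\Z[\sqrt 2]$, unique factorization there implies that $t\da t$ is associated (up to a unit) to one of $1$, $\xi$, or $\xi^2$. The case $t\da t\sim 1$ is ruled out since $t$ is prime and hence not a unit. If $t\da t\sim \xi^2$, then $s\da s$ is a unit, so $s$ itself is a unit, the factorization $\xi = t\cdot s$ has only one non-unit factor, and $\xi\sim t$ is prime in $\Z[\omega]$. If $t\da t\sim \xi$, then $s\da s\sim\xi$ as well; substituting $\xi \sim t\da t$ into $\xi = t\cdot s$ and cancelling $t$ gives $s\sim t\da$, which is prime since $(-)\da$ is an automorphism. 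So $\xi\sim t\da t$ with exactly two prime factors, as claimed.

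The main obstacle I anticipate is the step where I assert $\Z[\omega]\cap\R = \Z[\sqrt 2]$, which is what licenses transferring the divisibility from $\Z[\omega]$ into $\Z[\sqrt 2]$. This requires taking a general element $a\omega^3+b\omega^2+c\omega+d\in\Z[\omega]$, expanding it as in the proof of Lemma~\ref{lem-zomega} into real and imaginary parts $(d+\tfrac{c-a}{\sqrt 2}) + (b+\tfrac{c+a}{\sqrt 2})i$, and observing that the imaginary part vanishes precisely when $b=0$ and $a+c=0$ (the integer and irrational parts must each vanish), at which point the real part is an element of $\Z[\sqrt 2]$. Everything else --- enumerating the divisors of $\xi^2$, the unit analysis, and the final uniqueness argument --- is routine bookkeeping given the Euclidean-domain structure of $\Z[\omega]$.
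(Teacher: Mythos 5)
Your proof is correct and follows essentially the same route as the paper's: pass from $t\divides\xi$ to $t\da t\divides\xi^2$ in $\Z[\sqrt2]$, use primality of $\xi$ there to reduce to the three cases $t\da t\sim 1,\xi,\xi^2$, and dispose of the first and third by unit arguments. The only (welcome) addition is that in the case $t\da t\sim\xi$ you explicitly identify the cofactor $s\sim t\da$ as prime, which the paper leaves implicit when justifying the ``one or two factors'' clause.
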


\begin{proof}
  This is similar to the proof of Lemma~\ref{lem-prime-factor-z2}.
  Let $t$ be some prime factor of $\xi$ in $\Z[\omega]$. Then
  $t\divides \xi$, therefore $t\da\divides \xi\da=\xi$, therefore
  $t\da t\divides \xi^2$ in $\Z[\omega]$. But both $t\da t$ and $\xi$
  are elements of $\Z[\sqrt2]$, so $t\da t\divides \xi^2$ in
  $\Z[\sqrt2]$. Since $\xi$ is prime in $\Z[\sqrt2]$, there are only
  three possibilities: $t\da t\sim 1$, $t\da t\sim \xi$, and $t\da
  t\sim \xi^2$. In the first case, $t$ is a unit, contradicting the
  fact that it is prime in $\Z[\omega]$. In the second case, we are
  done. In the third case, since $t\divides \xi$, there is
  $a\in\Z[\omega]$ such that $at=\xi$. Then $a\da at\da
  t=\xi\da\xi=\xi^2 = t\da t$, which implies $a\da a=1$. Therefore $a$
  is a unit, and $t\sim \xi$. Therefore $\xi$ is prime in $\Z[\omega]$
  and we are done.
\end{proof}

\subsection[{The Diophantine equation t* t = xi}]
{The Diophantine equation $t\da t=\xi$}

We are interested in solving equations of the form 
\begin{equation}\label{eqn-tdat}
t\da t=\xi,
\end{equation}
where $\xi\in\D[\sqrt2]$ is given, and $t\in\D[\omega]$ is unknown.

\begin{definition}
  Recall that for elements $\xi,\xi'\in\Z[\sqrt2]$, the notation
  $\xi\sim\xi'$ means that $\xi,\xi'$ differ by a unit, i.e., there
  exists a unit $u\in\Z[\sqrt2]$ such that $\xi=u\xi'$. We extend this
  notation also to the ring $\D[\sqrt2]$: for $\xi,\xi'\in\D[\sqrt2]$,
  we write $\xi\sim\xi'$ iff there exists a unit $u\in\Z[\sqrt2]$ such
  that $\xi=u\xi'$. Note that we have taken $u$ to be a unit of the
  ring $\Z[\sqrt2]$, not of $\D[\sqrt2]$.
\end{definition}

It will often be convenient to replace {\eqref{eqn-tdat}} by the
following weaker condition. 

\begin{definition}
  We say that $\xi\in\D[\sqrt2]$ is \emph{$\dagger$-decomposable} if
  the equation
  \begin{equation}\label{eqn-tdat-associates}
    t\da t\sim\xi
  \end{equation} 
  has a solution $t\in\D[\omega]$.
\end{definition}

Solutions to {\eqref{eqn-tdat}} can be recovered from solutions to
{\eqref{eqn-tdat-associates}} by using the following observation:

\begin{lemma}\label{lem-sim}
  Let $\xi\in\D[\sqrt2]$. Then equation {\eqref{eqn-tdat}} has a
  solution if and only if $\xi$ is doubly positive and
  $\dagger$-decomposable.
\end{lemma}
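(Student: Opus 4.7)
The plan is to prove both directions separately; the forward direction is essentially immediate while the backward direction requires extracting a square root of a unit.

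\medskip

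For the forward direction, assume $t\da t = \xi$ has a solution $t \in \D[\omega]$. Lemma~\ref{lem-nec-cond} immediately gives $\xi \geq 0$ and $\xi\bul \geq 0$, so $\xi$ is doubly positive. Since $\sim$ permits multiplication by the unit $1$, the same $t$ witnesses $t\da t \sim \xi$, so $\xi$ is $\dagger$-decomposable. This is a one-line argument and carries no real content.

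\medskip

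For the backward direction, suppose $\xi$ is doubly positive and $\dagger$-decomposable, so that there exist $s \in \D[\omega]$ and a unit $u \in \Z[\sqrt2]$ with $s\da s = u\xi$. The case $\xi=0$ is trivial with $t=0$, so assume $\xi \neq 0$. The first observation is that the quantity $s\da s = |s|^2$ is strictly positive, and similarly $(s\bul)\da(s\bul) = |s\bul|^2 > 0$; meanwhile $\xi > 0$ and $\xi\bul > 0$ because a nonzero doubly positive element has both values strictly positive (otherwise $\xi\bul \xi = 0$ would force $\xi = 0$). Combining these facts with $s\da s = u\xi$ and its $\bul$-image $(s\bul)\da(s\bul) = u\bul \xi\bul$ yields $u > 0$ and $u\bul > 0$, so $u$ is a doubly positive unit.

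\medskip

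The key step is now to invoke Lemma~\ref{lem-units}: a doubly positive unit of $\Z[\sqrt2]$ is a square, so $u = v^2$ for some $v \in \Z[\sqrt2]$. Since $v$ is a unit of $\Z[\sqrt2]$ (being a square root of a unit), $v^{-1}$ lies in $\Z[\sqrt2] \subseteq \D[\omega]$, and we may define $t = v^{-1} s \in \D[\omega]$. Because $v$ is real we have $v\da = v$, and therefore
\[
  t\da t \;=\; v^{-1} \, s\da s \, v^{-1} \;=\; v^{-2}\, u\,\xi \;=\; u^{-1} u\, \xi \;=\; \xi,
\]
as desired.

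\medskip

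I do not anticipate a genuine obstacle here: the entire content lies in recognizing that the unit $u$ appearing in the definition of $\dagger$-decomposability is forced to be doubly positive by the double positivity of $\xi$ and of $s\da s$, after which Lemma~\ref{lem-units} provides the square root needed to absorb $u$ into $s$. The only small subtlety is ensuring that $t = v^{-1} s$ really lives in $\D[\omega]$, which follows because $v^{-1} \in \Z[\sqrt2] \subseteq \D[\omega]$.
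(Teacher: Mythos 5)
Your proof is correct and follows essentially the same route as the paper: the forward direction via Lemma~\ref{lem-nec-cond}, and the backward direction by showing the unit relating $t\da t$ to $\xi$ is doubly positive and then applying Lemma~\ref{lem-units} to write it as $v^2$ and absorb $v$ into $t$. The only differences are cosmetic — you place the unit on the other side of the relation (so you divide by $v$ where the paper multiplies) and you explicitly dispose of the $\xi=0$ case, which the paper glosses over.
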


\begin{proof}
  If $\xi$ is a solution to {\eqref{eqn-tdat}}, then it is obviously
  $\dagger$-decomposable. It is also doubly positive by
  Lemma~\ref{lem-nec-cond}. Conversely, assume $t\da t\sim
  \xi$. Then there exists a unit $u$ of $\Z[\sqrt2]$ such that
  $\xi=ut\da t$. Since both $\xi$ and $t\da t$ are doubly positive, it
  follows that $u$ is doubly positive, and hence $u$ is a square of
  the ring $\Z[\sqrt2]$ by Lemma~\ref{lem-units}; say $u=v^2$. Since
  $v\in\Z[\sqrt2]$, we have $v=v\da$. Setting $t'=vt$, we have
  $\xi=v\da vt\da t = t'^{\dagger}t'$, which finishes the proof.
\end{proof}

%----------------------------------------------------------------------
\subsection[{The case xi in Z[root 2]}]
{The case $\xi\in\Z[\sqrt 2]$}

\begin{lemma}\label{lem-zomega-domega}
  Suppose $t\da t\sim \xi$ for $t\in\D[\omega]$ and $\xi\in\Z[\sqrt
  2]$. Then $t\in\Z[\omega]$. 
\end{lemma}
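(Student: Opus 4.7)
The plan is to argue by minimality of the denominator exponent of $t$, using the standard fact that in $\Z[\omega]$, divisibility of $s\da s$ by $2$ forces divisibility of $s$ by $\sqrt{2}$.

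First I would unpack the hypothesis. Since $t\da t \sim \xi$ means $t\da t = u\xi$ for some unit $u \in \Z[\sqrt{2}]$, and $u\xi \in \Z[\sqrt{2}] \seq \Z[\omega]$, we know right away that $t\da t$ lies in $\Z[\omega]$. Let $k \in \N$ be the least denominator exponent of $t$, and write $t = s/\rt{k}$ with $s \in \Z[\omega]$; the minimality of $k$ says that if $k \geq 1$ then $s$ is not divisible by $\sqrt 2$ in $\Z[\omega]$. The goal is to show $k = 0$.

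Next, I would translate the condition $t\da t \in \Z[\omega]$ into a divisibility condition on $s\da s$: we have $t\da t = s\da s / 2^k$, and for this to be an element of $\Z[\omega]$ we need $2^k \mid s\da s$ in $\Z[\omega]$. If $k \geq 1$, then in particular $2 \mid s\da s$. Here I would invoke the number-theoretic fact already used in the proof of Lemma~\ref{lem-2k-2} (namely, Lemma~2 of \cite{Giles-Selinger}): an element $s \in \Z[\omega]$ is divisible by $\sqrt 2$ if and only if $s\da s$ is divisible by $2$. Applying this gives $\sqrt 2 \mid s$ in $\Z[\omega]$, so $s = \sqrt 2 \, s'$ for some $s' \in \Z[\omega]$, and therefore $t = s'/\rt{\,k-1}$ has a smaller denominator exponent $k-1$, contradicting the minimality of $k$. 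Hence $k = 0$ and $t \in \Z[\omega]$.

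The only non-routine ingredient is the implication ``$2 \mid s\da s \Rightarrow \sqrt 2 \mid s$'' in $\Z[\omega]$, but this has already been cited in the paper (in Lemma~\ref{lem-2k-2}) and is a standard property of the ring of cyclotomic integers; the rest is just bookkeeping with denominator exponents, so I do not expect any genuine obstacle.
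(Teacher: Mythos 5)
Your proof is correct and rests on exactly the same key fact as the paper's own argument, namely that in $\Z[\omega]$ divisibility of $s\da s$ by $2$ forces divisibility of $s$ by $\sqrt 2$; the paper simply applies the iterated form ($\rt{k}\divides t'$ iff $2^k \divides t'^{\dagger}t'$) in one step, whereas you descend one power of $\sqrt2$ at a time via minimality of $k$. The two are the same argument organized slightly differently, so there is nothing further to add.
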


\begin{proof}
  Note that, in $\Z[\omega]$, we have $\rt{k}\divides t$ if and only
  if $2^k\divides t\da t$. Choose $t'\in\Z[\omega]$ and $k\geq 0$ such
  that $t=t'/\rt{k}$. Then $t'^{\dagger}t' = 2^k\xi$, hence
  $2^k\divides t'^{\dagger}t'$, hence $\rt{k}\divides t'$, hence
  $t\in\Z[\omega]$. 
\end{proof}

\begin{lemma}\label{lem-gcd}
  If $x,y,z$ are three elements of a Euclidean domain, then
  \[\gcd(xy,z)\divides \gcd(x,z)\cdot\gcd(y,z).
  \]
\end{lemma}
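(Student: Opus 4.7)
The plan is to use the standard identity $\gcd(xu, xv) \sim x\gcd(u,v)$, which holds in any Euclidean domain (indeed in any GCD domain), together with a short two-step reduction.

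Let me set $d = \gcd(xy,z)$, $a = \gcd(x,z)$, and $b = \gcd(y,z)$. The goal is to show $d \divides ab$. First I would observe that, since $d \divides z$, we also have $d \divides xz$; combined with $d \divides xy$, this gives $d \divides \gcd(xy, xz)$. Applying the distributive identity for gcd, $\gcd(xy, xz) \sim x\gcd(y,z) = xb$, and so $d \divides xb$.

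The second step is essentially the same move again: trivially $d \divides zb$ (since $d \divides z$), and combined with $d \divides xb$ this yields $d \divides \gcd(xb, zb) \sim b\gcd(x,z) = ab$. This is the desired conclusion.

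The only substantive ingredient is the identity $\gcd(xu,xv)\sim x\gcd(u,v)$, which is not itself a difficulty --- it is a standard property of Euclidean (or merely GCD) domains, provable either via unique factorization (comparing $p$-adic valuations of both sides) or directly from a B\'ezout relation $\gcd(u,v) = \alpha u + \beta v$, which upon multiplication by $x$ shows that $x\gcd(u,v)$ divides any common divisor of $xu$ and $xv$. Apart from invoking this identity twice, the argument is two lines; I do not anticipate any genuine obstacle.
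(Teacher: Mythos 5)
Your proof is correct. It takes a genuinely different route from the paper: the paper's proof is the one-line ``by considering the prime factorization of $z$'', i.e.\ for each prime $p$ one compares multiplicities, $\min(v_p(x)+v_p(y),\,v_p(z)) \leq \min(v_p(x),v_p(z)) + \min(v_p(y),v_p(z))$, which settles the claim prime by prime. You instead avoid unique factorization altogether and run the two-step reduction $d \divides \gcd(xy,xz) \sim x\gcd(y,z)$ and then $d \divides \gcd(xb,zb) \sim b\gcd(x,z)$, resting everything on the distributive identity $\gcd(xu,xv)\sim x\gcd(u,v)$. Both arguments are valid in a Euclidean domain; the paper's is shorter because unique factorization is available there for free, while yours is more portable (it works in any GCD domain, where primes need not exist) and, as you note, the one identity you invoke follows directly from a B\'ezout relation. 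No gaps.
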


\begin{proof}
  By considering the prime factorization of $z$. 
\end{proof}

\begin{lemma}\label{lem-alpha-beta}
  Suppose $\xi=\alpha\beta$, where $\alpha,\beta\in\Z[\sqrt2]$ and
  $\gcd(\alpha,\beta)=1$. Then $\xi$ is $\dagger$-decomposable iff
  $\alpha$ and $\beta$ are $\dagger$-decomposable.
\end{lemma}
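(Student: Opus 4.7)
The forward direction is easy: if $s,u\in\D[\omega]$ satisfy $s^\dagger s\sim\alpha$ and $u^\dagger u\sim\beta$, then since $\D[\omega]$ is commutative one computes $(su)^\dagger(su)=s^\dagger s\cdot u^\dagger u\sim\alpha\beta=\xi$. The real content is the reverse direction, and the strategy will be to work in $\Z[\omega]$, split $t$ according to the prime factorization of $\xi$ in $\Z[\sqrt2]$, and then check that the two halves do what we want.

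So suppose $t^\dagger t\sim\alpha\beta$ for some $t\in\D[\omega]$. The first step is to pass from $\D[\omega]$ to $\Z[\omega]$: by Lemma~\ref{lem-zomega-domega}, the hypothesis already forces $t\in\Z[\omega]$. I may therefore work entirely in the Euclidean domain $\Z[\omega]$, where unique factorization into primes is available. The plan is then to use the classification of primes from the preceding subsection: every prime $\pi\in\Z[\omega]$ gives rise to an element $\pi^\dagger\pi\in\Z[\sqrt2]$ which is (up to a unit) either a prime of $\Z[\sqrt2]$ or the square of one; dually, every prime of $\Z[\sqrt2]$ either remains prime in $\Z[\omega]$ (in which case $\pi=\pi^\dagger$) or splits as $\pi^\dagger\pi$ for some prime $\pi$ of $\Z[\omega]$. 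In both cases there is a well-defined ``underlying'' prime of $\Z[\sqrt2]$ associated to each prime of $\Z[\omega]$.

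With this in hand, write $t=c\,\pi_1^{k_1}\cdots\pi_n^{k_n}$ in $\Z[\omega]$, with $c$ a unit and the $\pi_i$ pairwise non-associate primes. For each $i$, let $\xi_i\in\Z[\sqrt2]$ be the underlying prime associated with $\pi_i$. Because $\gcd(\alpha,\beta)=1$ in $\Z[\sqrt2]$, no prime of $\Z[\sqrt2]$ divides both $\alpha$ and $\beta$, so I can partition the index set $\{1,\dots,n\}$ into $I\sqcup J$, where $i\in I$ iff $\xi_i\mid\alpha$ and $i\in J$ iff $\xi_i\mid\beta$ (an index in neither would contradict $t^\dagger t\sim\alpha\beta$ once I compare prime factorizations in $\Z[\sqrt2]$). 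Now set
\[
s=\prod_{i\in I}\pi_i^{k_i},\qquad u=c\prod_{j\in J}\pi_j^{k_j},
\]
so that $t=su$. The key point is that, by construction, the primes of $\Z[\sqrt2]$ appearing in $s^\dagger s$ all divide $\alpha$, while the primes appearing in $u^\dagger u$ all divide $\beta$; in particular $\gcd(s^\dagger s,\beta)=1$ and $\gcd(u^\dagger u,\alpha)=1$. Combined with the identity $(s^\dagger s)(u^\dagger u)=t^\dagger t\sim\alpha\beta$, comparison of prime factorizations in $\Z[\sqrt2]$ forces $s^\dagger s\sim\alpha$ and $u^\dagger u\sim\beta$, which is what we want.

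The main obstacle is really bookkeeping rather than difficulty: one has to carefully use the prime classification of the previous subsection to see that each prime of $\Z[\omega]$ dividing $t$ has a single well-defined ``target'' prime in $\Z[\sqrt2]$, and then invoke coprimality of $\alpha,\beta$ to route each $\pi_i$ into exactly one of the two products. Once this matching is set up, the identity $(s^\dagger s)(u^\dagger u)\sim\alpha\beta$ together with the coprimality of the halves collapses everything to the desired $\sim$-relations.
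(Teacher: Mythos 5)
Your proof is correct, but it takes a genuinely different route from the paper's. The paper never factors $t$: it sets $s=\gcd(t,\alpha)$ in $\Z[\omega]$ and establishes $s\da s\sim\alpha$ by two divisibility arguments --- $s\da s\divides\gcd(\alpha^2,\alpha\beta)\sim\alpha$ in one direction (using $\gcd(\alpha,\beta)=1$), and $\alpha\sim\gcd(t\da t,\alpha)\divides\gcd(t\da,\alpha)\cdot\gcd(t,\alpha)=s\da s$ in the other, via the auxiliary Lemma~\ref{lem-gcd}. You instead decompose $t$ into primes of $\Z[\omega]$ and route each prime factor to $\alpha$ or to $\beta$ according to the prime of $\Z[\sqrt2]$ it lies over. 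That works, but it quietly uses a fact the paper states only one level down: that every prime of $\Z[\omega]$ divides a unique (up to units) prime of $\Z[\sqrt2]$, and that $\pi$ and $\pi\da$ lie over the same one. The paper proves the analogue for $\Z[\sqrt2]$ over $\Z$, and the same Euclidean argument ($\pi\divides\pi\da\pi\in\Z[\sqrt2]$ for existence, B\'ezout for uniqueness) transfers verbatim, so this is a routine gap to fill rather than a real one. The trade-off: your version makes the mechanism completely transparent --- one literally sees $t=su$ with the two halves accounting for $\alpha$ and $\beta$ separately --- at the cost of more bookkeeping, whereas the paper's gcd computation is shorter and needs no lying-over lemma. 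Both arguments ultimately rest on unique factorization in the Euclidean domain $\Z[\omega]$, and your final step (coprimality of the halves plus $(s\da s)(u\da u)\sim\alpha\beta$ forcing $s\da s\sim\alpha$ and $u\da u\sim\beta$) is sound.
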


\begin{proof}
  For the right-to-left implication, assume $\alpha\sim s\da s$ and
  $\beta\sim r\da r$. Clearly $\xi=\alpha\beta\sim (sr)\da sr$. For
  the left-to-right implication, assume $\xi\sim t\da t$. Note that
  $t\in\Z[\omega]$ by Lemma~\ref{lem-zomega-domega}. To show that
  $\alpha$ is $\dagger$-decomposable, let $s=\gcd(t, \alpha)$.  We
  claim that $s\da s\sim\alpha$. Clearly, since $s\divides\alpha$ and
  $s\da\divides\alpha$, we have $s\da s\divides \alpha^2$. One the
  other hand, we know that $s\da s\divides t\da t \sim
  \xi=\alpha\beta$. Since $\alpha$ and $\beta$ are relatively prime,
  it follows that $s\da s\divides \alpha$. Conversely, note that by
  Lemma~\ref{lem-gcd}, $\alpha = \gcd(t\da t,\alpha) \divides
  \gcd(t\da,\alpha)\cdot \gcd(t,\alpha) = s\da s$. Therefore $s\da
  s\sim\alpha$ and $\alpha$ is $\dagger$-decomposable. The argument
  for $\beta$ is similar.
\end{proof}

\begin{lemma}\label{lem-1235}
  Suppose that $\xi\in\Z[\sqrt2]$ is prime. Let $p>0$ be the unique
  positive prime in $\Z$ such that $\xi\divides p$. Then $\xi$ is
  $\dagger$-decomposable if and only if $p=2$ or $p\equiv
  1,3,5\mmod{8}$.
\end{lemma}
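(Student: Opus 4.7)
My plan is a case analysis on $p$ modulo $8$. The main computational tool, which I would verify once by direct expansion using $\omega = (1+i)/\sqrt{2}$, is the identity
\[
(a\omega^3 + b\omega^2 + c\omega + d)\da(a\omega^3 + b\omega^2 + c\omega + d) = (a^2+b^2+c^2+d^2) + (ab+bc+cd-ad)\sqrt{2}.
\]
For the sufficient direction I would exhibit an explicit $t \in \Z[\omega]$ in each residue class. When $p=2$, Lemma~\ref{lem-2} gives $\xi \sim \sqrt 2$, and $t = 1+\omega = \delta$ gives $t\da t = 2+\sqrt 2 = \lambda\sqrt 2 \sim \sqrt 2$. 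When $p \equiv 5 \pmod 8$, Lemma~\ref{lem-3-5} gives $\xi \sim p$, and Fermat's two-squares theorem writes $p = a^2 + b^2$; then $t = a+b\omega^2 \in \Z[i]$ satisfies $t\da t = p \sim \xi$. When $p \equiv 3 \pmod 8$ (again $\xi \sim p$), the classical companion result gives $p = x^2 + 2y^2$, and since $\omega+\omega^3 = i\sqrt 2$, the element $t = x + y(\omega+\omega^3)$ satisfies $t\da t = x^2+2y^2 = p \sim \xi$ directly from the identity above.

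The subtle case on the sufficient side is $p \equiv 1 \pmod 8$, where $\xi\bul\xi \sim p$ but $\xi \not\sim p$. Here I would use that $\Z[\omega]$ is a Euclidean domain (which I expect to be established earlier in the appendix) together with the two-squares representation $p = a^2 + b^2$. Letting $t_0 = a+bi$, so that $t_0\da t_0 = p$, the key observation is that the ideals $(\xi)$ and $(t_0)$ in $\Z[\omega]$ are both proper divisors of $(p)$ invariant under different non-trivial subgroups of the Galois group of $\mathbb{Q}(\omega)/\mathbb{Q}$: the ideal $(\xi)$ is fixed by complex conjugation (since $\xi \in \Z[\sqrt 2]$), while $(t_0)$ is fixed by the automorphism $\omega \mapsto -\omega$ (since $t_0 \in \Z[i]$). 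Since $p$ splits into four primes in $\Z[\omega]$ on which the Galois group acts simply transitively, these invariance conditions force $(\xi)$ and $(t_0)$ to share exactly one prime factor. Hence $t = \gcd(\xi, t_0)$ satisfies $(t)(t)\da = (\xi)$, and extracting a generator yields $t\da t = u\xi$ for some $u \in \Z[\omega]^\times$. Since both sides lie in $\Z[\sqrt 2]$ and $\Z[\omega]\cap\mathbb{Q}(\sqrt 2) = \Z[\sqrt 2]$, in fact $u \in \Z[\sqrt 2]^\times$, giving $t\da t \sim \xi$.

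For the necessary direction, assume $p \equiv 7 \pmod 8$ and, for contradiction, that $\xi \sim t\da t$ with $t \in \Z[\omega]$. Applying $\bul$ and multiplying yields $(t\da t)(t\da t)\bul = \pm \xi\xi\bul = \pm p$, since units of $\Z[\sqrt 2]$ have $\bul$-norm $\pm 1$. The left side is the absolute norm $N_{\mathbb{Q}(\omega)/\mathbb{Q}}(t) \in \Z$, so $|N(t)| = p$. The main obstacle, and the technical heart of the argument, will be ruling this out. I would do so by factoring $x^4+1$ over $\mathbb{F}_p$: for $p \equiv 7 \pmod 8$, the fact that $2$ is a QR while $-1$ and $-2$ are nonresidues forces $x^4+1 = (x^2 - \alpha x + 1)(x^2 + \alpha x + 1)$ with $\alpha^2 \equiv 2 \pmod p$, both factors irreducible over $\mathbb{F}_p$ (each has discriminant $-2$, a nonresidue). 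Consequently $\Z[\omega]/(p) \cong \mathbb{F}_{p^2}\times\mathbb{F}_{p^2}$, so every prime of $\Z[\omega]$ lying over $p$ has residue field of size $p^2$. If $|N(t)| = p$, then $t$ must be divisible in $\Z[\omega]$ by some such prime $\mathfrak{p}$, which gives $p^2 = |\Z[\omega]/\mathfrak{p}|$ dividing $|N(t)| = p$, the desired contradiction.
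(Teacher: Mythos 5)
Your proof is correct, but it takes a genuinely different route from the paper's in three of the four residue classes. For $p\equiv 3,5\mmod{8}$ (where $\xi\sim p$) you import the classical representation theorems $p=a^2+b^2$ and $p=x^2+2y^2$ and write down $t$ explicitly; for $p\equiv 1\mmod{8}$ you argue via the complete splitting of $p$ in $\Z[\omega]$ and the simply transitive Galois action on the four primes above it; and for $p\equiv 7\mmod{8}$ you rule out $|N(t)|=p$ by factoring $x^4+1$ over $\mathbb{F}_p$ to show every prime of $\Z[\omega]$ over $p$ has residue field $\mathbb{F}_{p^2}$. The paper instead treats $p\equiv 1\mmod{4}$ uniformly and elementarily: solve $u^2\equiv -1\mmod{p}$, set $t=\gcd(\xi,u+i)$ in $\Z[\omega]$, and use primality of $\xi$ in $\Z[\sqrt2]$ to show that of the three possibilities $t\da t\sim 1,\ \xi,\ \xi^2$ only the middle one survives (similarly with $u^2\equiv-2$ and $u+i\sqrt2$ for $p\equiv3\mmod{8}$); and for $p\equiv7\mmod{8}$ it observes that $t\bul t$ is $\bullet$-invariant, hence lies in $\Z[i]$, so $\xi\sim t\da t$ would force $p=a^2+b^2$, which is impossible modulo $8$. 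What the paper's route buys is self-containedness and constructivity --- Remark~\ref{rem-1235-eff} and the complexity analysis rest on the fact that the only nontrivial computational steps are a square root modulo $p$ and a gcd --- whereas your route is shorter for a reader who already has splitting of primes in cyclotomic fields at hand, but leans on several unproved classical inputs (two squares, $x^2+2y^2$, complete splitting for $p\equiv1\mmod{8}$), and the $p\equiv1\mmod{8}$ case is precisely where that machinery is load-bearing rather than decorative. Your residue-field argument for $p\equiv7\mmod{8}$ is a valid alternative, though the paper's observation that $t\bul t\in\Z[i]$ is a strictly more elementary shortcut worth knowing.
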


\begin{proof}
  We consider each case in turn. If $p=2$, then $\xi\sim\sqrt2$. Let
  $\delta=1+\omega$; a simple calculation shows that $\delta\da
  \delta=\lambda\sqrt2\sim\xi$. So $\xi$ is $\dagger$-decomposable.

  If $p\equiv 1\mmod{4}$, then by quadratic reciprocity, there exists
  some integer $u$ such that $u^2\equiv -1\mmod{p}$. Therefore
  $\xi\divides p\divides u^2+1 = (u+i)(u-i)$. Let $t=\gcd(\xi,u+i)$.
  We claim that $\xi\sim t\da t$. Note that $t\divides \xi$, hence
  $t\da\divides\xi\da=\xi$, hence $t\da t\divides \xi^2$. Since $\xi$
  is prime in $\Z[\sqrt2]$, there are 3 possibilities: $t\da t\sim 1$,
  $t\da t\sim \xi$, or $t\da t\sim \xi^2$. The first case is not
  possible, because in this case, $t$ would be a unit, so that $\xi$
  is relatively prime to $u+i$, hence to $u-i$, hence to $u^2+1$,
  contradicting $\xi\divides u^2+1$. In the second case, we have $t\da
  t\sim \xi$, which was to be shown. In the third case, we have $t\da
  t\sim\xi^2$. Since $t\divides \xi$, there exists some
  $s\in\Z[\omega]$ such that $ts=\xi$. But then $t\da ts\da s =
  \xi\da\xi=\xi^2\sim t\da t$, so that $s$ is a unit. In this case, we
  have $t\sim \xi$, therefore $\xi\divides u+i$, therefore also
  $\xi=\xi\da\divides u-i$, hence $\xi\divides (u+i)-(u-i)\sim 2$,
  contradicting the fact that $p$ is the only prime integer divisible
  by $\xi$.

  The case $p\equiv 3\mmod{8}$ is very similar. In this case, by
  quadratic reciprocity, $-2$ is a square modulo $p$, so that there
  exists some integer $u$ such that $u^2\equiv-2\mmod{p}$. Therefore
  $\xi\divides p\divides u^2+2 = (u+i\sqrt2)(u-i\sqrt2)$. Let
  $t=\gcd(\xi,u+i\sqrt2)$.  We claim that $\xi\sim t\da t$. Note that
  $t\divides\xi$, hence $t\da\divides\xi\da=\xi$, so $t\da t\divides
  \xi^2$. So again we have the three possibilities $t\da t\sim 1$,
  $t\da t\sim \xi$, or $t\da t\sim \xi^2$. The first case is not
  possible, because in this case, $t$ would be a unit, so that $\xi$
  is relatively prime to $u+i\sqrt2$, hence to $u-i\sqrt2$, hence to
  $u^2+2$, contradicting $\xi\divides u^2+2$. In the second case, we
  have $t\da t\sim \xi$, which was to be shown. In the third case, we
  have $t\da t\sim\xi^2$. Since $t\divides \xi$, there exists some
  $s\in\Z[\omega]$ such that $ts=\xi$. But then $t\da ts\da s =
  \xi\da\xi=\xi^2\sim t\da t$, so that $s$ is a unit. In this case, we
  have $t\sim \xi$, therefore $\xi\divides u+i\sqrt2$, therefore also
  $\xi=\xi\da\divides u-i\sqrt2$, hence $\xi\divides
  i(u-i\sqrt2)-i(u+i\sqrt2)=2\sqrt2$. On the other hand, $\xi\divides
  p$, therefore $\xi\divides\gcd(p,2\sqrt2)=1$, contradicting the fact
  that $\xi$ is not a unit.

  Finally, if $p\equiv 7\mmod{8}$, then $p\sim\xi\bul\xi$ by
  Lemma~\ref{lem-1-7}. Assume that $\xi$ is $\dagger$-decomposable as
  $\xi\sim t\da t$. Then
  \[ (t\da t)\bul(t\da t) \sim \xi\bul\xi\sim p.
  \]
  Note that $t\bul t\in\Z[i]$, so we can write $t\bul t=a+bi$ for some
  $a,b\in\Z$. But then we have
  \[ p \sim (t\bul t)(t\bul t)\da = (a+bi)(a-bi) = a^2+b^2.
  \]
  But $a^2$ and $b^2$ can only be congruent to $0$, $1$, or $4$ modulo
  8, contradicting $p\equiv 7\mmod{8}$. Therefore $\xi$ is not
  $\dagger$-decomposable, which is what was claimed.
\end{proof}

\begin{lemma}\label{lem-1235-powers}
  Suppose $\xi\in\Z[\sqrt2]$ is prime. Let $p>0$ be the unique
  positive prime in $\Z$ such that $\xi\divides p$. Let $m$ be a
  positive integer. Then $\xi^m$ is $\dagger$-decomposable if and only
  if $m$ is even or $p\equiv 1,2,3,5\mmod{8}$.
\end{lemma}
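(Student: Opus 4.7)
My plan is to treat the three cases covered by the disjunction, then handle the complementary case (namely, $m$ odd and $p\equiv 7\mmod{8}$) separately as the only nontrivial direction.

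For the easy direction, first suppose $m$ is even, say $m=2k$. Since $\xi\in\Z[\sqrt2]\subseteq\R$, we have $\xi\da=\xi$, and hence setting $t=\xi^k\in\Z[\omega]\subseteq\D[\omega]$ gives $t\da t=\xi^{2k}=\xi^m$, so $\xi^m$ is $\dagger$-decomposable. Next, suppose $p\in\s{2,1,3,5}\mmod{8}$. By Lemma~\ref{lem-1235}, there exists $s\in\D[\omega]$ and a unit $u\in\Z[\sqrt2]$ with $\xi=u\cdot s\da s$. Raising this to the $m$th power and using that $s\da$ and $s$ commute gives $\xi^m = u^m (s^m)\da(s^m)$, and since $u^m$ is still a unit of $\Z[\sqrt2]$, we conclude $\xi^m\sim (s^m)\da(s^m)$, as desired.

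The main obstacle is showing the converse: if $m$ is odd and $p\equiv7\mmod{8}$, then $\xi^m$ is not $\dagger$-decomposable. I would proceed by contradiction, adapting the $p\equiv7\mmod8$ argument of Lemma~\ref{lem-1235}. Suppose $t\da t\sim\xi^m$ for some $t\in\D[\omega]$. By Lemma~\ref{lem-zomega-domega}, $t\in\Z[\omega]$. Applying the $(-)\bul$ automorphism (which commutes with $\da$) and multiplying gives
\[
  (t\da t)\bul(t\da t) \sim (\xi^m)\bul\xi^m = (\xi\bul\xi)^m.
\]
By Lemma~\ref{lem-1-7}, $\xi\bul\xi\sim p$; since $\xi\bul\xi$ and $p$ are both integers and the only units of $\Z[\sqrt2]$ that are integers are $\pm 1$, we have $\xi\bul\xi=\pm p$, and hence $(\xi\bul\xi)^m = \pm p^m$ (the sign depending only on the parity of $m$, but either way an integer of absolute value $p^m$). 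On the left, note that $t\bul t\in\Z[i]$, so we may write $t\bul t=a+bi$ with $a,b\in\Z$, giving $(t\da t)\bul(t\da t)=(a+bi)(a-bi)=a^2+b^2$. The relation $a^2+b^2\sim\pm p^m$ in $\Z[\sqrt2]$, again by the observation about integer units, forces $a^2+b^2=p^m$ (noting that $a^2+b^2\geq 0$ and $p^m>0$, and that $a^2+b^2=0$ would force $t=0$, contradicting $t\da t\sim\xi^m\neq 0$).

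To finish, I invoke the classical theorem on sums of two squares: a positive integer $n$ is a sum of two integer squares if and only if every prime factor $q\equiv3\mmod{4}$ of $n$ appears to an even power in the prime factorization of $n$. Since $p\equiv7\mmod{8}$ implies $p\equiv3\mmod{4}$, and $p$ appears in $p^m$ to the odd power $m$, the equation $a^2+b^2=p^m$ has no integer solutions. This contradiction completes the proof. The argument requires no new machinery beyond Lemmas~\ref{lem-zomega-domega} and~\ref{lem-1-7} together with the classical two-squares theorem, and the only subtlety is carefully tracking that the $\sim$ relation, when both sides are integers, reduces to equality up to sign.
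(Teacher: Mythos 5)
Your proof is correct and follows the same overall structure as the paper's: the even-$m$ case via $t=\xi^{m/2}$, the $p\equiv 1,2,3,5\mmod 8$ case by raising the decomposition from Lemma~\ref{lem-1235} to the $m$th power, and a contradiction in the remaining case by reducing to $a^2+b^2=p^m$ with $t\bul t=a+bi\in\Z[i]$. The one genuine difference is the final step: the paper closes the argument with an elementary congruence --- squares are $0,1,4\mmod 8$, so $a^2+b^2$ can never be $\equiv 7\mmod 8$, whereas $p^m\equiv 7\mmod 8$ when $m$ is odd --- while you instead invoke the classical two-squares theorem (a prime $\equiv 3\mmod 4$ must occur to an even power). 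Both are valid; the paper's version is more self-contained, needing only mod-$8$ arithmetic, while yours imports a heavier classical result but makes the obstruction conceptually transparent. You are also more careful than the paper in two small respects: you justify why $\sim$ between nonzero integers forces equality up to sign (a rational unit of $\Z[\sqrt2]$ must be $\pm1$), and you explicitly rule out the sign ambiguity and the degenerate case $a^2+b^2=0$.
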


\begin{proof}
  If $m$ is even, then note that $\xi\in\Z[\sqrt2]$, so $\xi=\xi\da$;
  therefore, $t=\xi^{m/2}$ will be a solution. If $p\equiv
  1,2,3,5\mmod{8}$, then by Lemma~\ref{lem-1235}, there exists
  $s\in\Z[\omega]$ with $s\da s\sim\xi$; therefore $t=s^m$ is a
  solution of $t\da t\sim\xi^m$. The only remaining case is then $m$
  is odd and $p\equiv 7\mmod{8}$. In this case, there can be no
  solution. For assume on the contrary that $t\da t=\xi^m$. Then as in
  the proof of Lemma~\ref{lem-1235}, we can write $t\bul t=a+bi$ for
  some $a,b\in\Z$, and we get
  \[ p^m \sim (t\bul t)(t\bul t)\da = (a+bi)(a-bi) = a^2+b^2,
  \]
  contradicting $p^m\equiv 7\mmod{8}$.
\end{proof}

\begin{remark}\label{rem-1235-eff}
  The proofs of Lemmas~\ref{lem-1235} and {\ref{lem-1235-powers}} are
  constructive, and immediately yield efficient algorithms for
  determining $t$, provided that we have an efficient method of
  solving $u^2\equiv-1\mmod{p}$ when $p$ is a prime congruent to
  $1\mmod{4}$ and of solving $u^2\equiv-2\mmod{p}$ when $p$ is a prime
  congruent to $3\mmod{8}$. These last problems can be solved in
  probabilistic polynomial time by a well-known algorithm, see
  {\cite{Rabin1980}}.
\end{remark}

\begin{lemma}\label{lem-solve}
  Given $\xi\in\Z[\sqrt2]$, together with its prime factorization in
  $\Z[\sqrt2]$, there exists an algorithm that determines, in
  probabilistic polynomial time, whether the equation $t\da t\sim\xi$ has
  a solution or not, and finds a solution if there is one.
\end{lemma}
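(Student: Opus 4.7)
The plan is to reduce the general equation to the case of prime powers by using unique factorization in $\Z[\sqrt 2]$ together with Lemma~\ref{lem-alpha-beta}, and then handle each prime power using Lemmas~\ref{lem-1235} and {\ref{lem-1235-powers}}.

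Given the prime factorization $\xi = u \cdot \xi_1^{m_1} \cdots \xi_k^{m_k}$, where $u$ is a unit of $\Z[\sqrt2]$ and the $\xi_1, \ldots, \xi_k$ are pairwise non-associate primes of $\Z[\sqrt2]$, I first observe that $\xi$ is $\dagger$-decomposable if and only if the product $\xi_1^{m_1} \cdots \xi_k^{m_k}$ is (the unit $u$ may be absorbed into the equivalence $\sim$). Since the prime powers $\xi_i^{m_i}$ are pairwise coprime, iterated application of Lemma~\ref{lem-alpha-beta} shows that the product is $\dagger$-decomposable if and only if each $\xi_i^{m_i}$ is individually $\dagger$-decomposable. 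Moreover, if $t_i\da t_i \sim \xi_i^{m_i}$ for each $i$, then $t := t_1 t_2 \cdots t_k$ satisfies $t\da t \sim \xi$, so a solution can be assembled from the individual solutions.

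For each $i$, I would next compute (from the given $\xi_i$) the unique positive prime $p_i \in \Z$ with $\xi_i \divides p_i$; this is easily obtained as a prime factor of $\xi_i\bul \xi_i \in \Z$, and the proof of Lemma~\ref{lem-prime-factorization-z2} already shows this correspondence.  By Lemma~\ref{lem-1235-powers}, the prime power $\xi_i^{m_i}$ is $\dagger$-decomposable if and only if $m_i$ is even or $p_i \in \{2\} \cup \{p : p \equiv 1, 3, 5 \mmod 8\}$. Thus the \emph{decision} part of the algorithm is straightforward: simply inspect each $(\xi_i, m_i)$ in this fashion, and return \emph{no solution} if some prime power fails the criterion.

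For the \emph{construction} part, I would follow the constructive proofs of Lemmas~\ref{lem-1235} and \ref{lem-1235-powers}: when $m_i$ is even, take $t_i = \xi_i^{m_i/2}$; when $p_i = 2$, use $t_i = \delta^{m_i}$ (since $\delta\da \delta = \lambda\sqrt 2 \sim \sqrt 2$); when $p_i \equiv 1 \mmod 4$, compute an integer $u$ with $u^2 \equiv -1 \mmod{p_i}$ and set $t_i = \gcd(\xi_i, u+i)^{m_i}$; when $p_i \equiv 3 \mmod 8$, compute $u$ with $u^2 \equiv -2 \mmod{p_i}$ and set $t_i = \gcd(\xi_i, u + i\sqrt2)^{m_i}$. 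The main obstacle will be verifying that each of these operations runs in probabilistic polynomial time: the square-root-modulo-$p$ computations can be carried out by the algorithm of Rabin {\cite{Rabin1980}} (as noted in Remark~\ref{rem-1235-eff}), and all greatest common divisors in $\Z[\omega]$ can be computed by the Euclidean algorithm, whose cost is polynomial in the bit size of the operands. Finally, the total number of prime powers $k$ is $O(\log(\xi\bul\xi))$, since the $p_i$ are all at least $2$, so assembling the answer from the $t_i$ adds only a polynomial overhead, and the overall algorithm runs in probabilistic polynomial time in the size of $\xi$ as required.
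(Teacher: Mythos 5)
Your proposal is correct and follows essentially the same route as the paper's proof: factor $\xi$ into pairwise non-associate prime powers, reduce to each prime power via Lemma~\ref{lem-alpha-beta}, decide and construct via Lemma~\ref{lem-1235-powers} (with the constructive content of Lemma~\ref{lem-1235} and Rabin's square-root algorithm, as in Remark~\ref{rem-1235-eff}), and multiply the individual solutions together. You simply spell out the per-prime constructions in more detail than the paper does.
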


\begin{proof}
  Let $\xi\sim \xi_1^{m_1}\xi_2^{m_2}\cdots\xi_k^{m_k}$ be a prime
  factorization of $\xi$ in $\Z[\sqrt2]$, where $\xi_1,\ldots,\xi_k$
  are distinct (i.e., pairwise non-associate) primes. By
  Lemma~\ref{lem-alpha-beta}, $t\da t\sim\xi$ has a solution if and
  only if $t\da t\sim\xi_i^{m_i}$ has a solution for all $i$.  By
  Lemma~\ref{lem-1235-powers}, $t\da t\sim\xi_i^{m_i}$ has a solution
  if and only if $m_i$ is even or $p\equiv 1,2,3,5\mmod{8}$. Since
  these conditions are easy to check, we can therefore determine the
  existence of a solution efficiently, i.e., in polynomial time.

  Moreover, once a solution has been shown to exist, the actual
  solution can be efficiently computed. Namely, for each $i$, find
  $t_i$ such that $t_i\da t_i\sim\xi_i^{m_i}$ as in
  Lemma~\ref{lem-1235-powers}. Then $t\sim t_1t_2\cdots t_k$ satisfies
  $t\da t\sim \xi_1^{m_1}\xi_2^{m_2}\cdots\xi_k^{m_k}\sim\xi$. By
  Remark~\ref{rem-1235-eff}, all of this can be computed in
  probabilistic polynomial time.
\end{proof}

\begin{proposition}\label{prop-diophantine}
  Let $\xi\in\Z[\sqrt2]$, and let $n=\xi\bul\xi$. Note that $n$ is an
  integer. Given the prime factorization of $n$, there exists an
  algorithm that determines, in probabilistic polynomial time, whether
  the equation $t\da t\sim\xi$ has a solution or not, and finds a
  solution if there is one.
\end{proposition}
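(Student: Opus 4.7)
The plan is to reduce Proposition~\ref{prop-diophantine} to Lemma~\ref{lem-solve}, whose hypothesis requires not merely the factorization of $n=\xi\bul\xi$ in $\Z$, but the factorization of $\xi$ itself in $\Z[\sqrt 2]$. So the real content of the proof is a polynomial-time method to ``lift'' the given integer factorization $n=\prod p_i^{e_i}$ to a factorization of $\xi$ in $\Z[\sqrt 2]$. The degenerate case $n=0$ forces $\xi=0$, where $t=0$ is a trivial solution and there is nothing to do, so we may assume $n\neq 0$.

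For the main case, I would proceed as follows. Every prime $\zeta$ of $\Z[\sqrt 2]$ that divides $\xi$ also divides $n=\xi\bul\xi$, and hence divides one of the $p_i$; this lets the primes in the given factorization of $n$ serve as a list of candidates. For each $p_i$, I would invoke Lemma~\ref{lem-prime-factor-z2} and in particular the constructive algorithm of Lemma~\ref{lem-prime-factorization-z2} to compute the factorization of $p_i$ in $\Z[\sqrt 2]$, which is one of the three possibilities: (i) $p_i=2\sim\sqrt 2\cdot\sqrt 2$, (ii) $p_i$ stays prime (when $p_i\equiv 3,5\mmod 8$), or (iii) $p_i\sim\pi_i\bul\pi_i$ with $\pi_i\not\sim\pi_i\bul$ (when $p_i\equiv 1,7\mmod 8$). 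The primes appearing on the right-hand sides of these decompositions, over all $i$, form the complete list of primes of $\Z[\sqrt 2]$ that can divide $\xi$.

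To recover the exponent of each such prime $\zeta$ in $\xi$, I would use trial division directly in $\Z[\sqrt 2]$: repeatedly replace $\xi$ by $\xi/\zeta$ as long as the quotient lies in $\Z[\sqrt 2]$. This can be bounded by the corresponding exponent in $n$, since if $\zeta\divides p_i$ and $\zeta$ has exponent $a$ in $\xi$ then $\zeta$ contributes at least $a$ to the exponent of $p_i$ in $n=\xi\bul\xi$. After iterating over all candidate primes, what remains is a unit of $\Z[\sqrt 2]$, which by Lemma~\ref{lem-units} can be written as $(-1)^n\lambda^m$ and read off from the result. The total number of arithmetic operations is polynomial in the total exponent $\sum e_i$, hence polynomial in the size of $n$.

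With the prime factorization of $\xi$ in $\Z[\sqrt 2]$ now in hand, Lemma~\ref{lem-solve} finishes the job in probabilistic polynomial time, both deciding whether $t\da t\sim\xi$ has a solution and producing one when it does. The only randomness used in the overall procedure comes from Lemma~\ref{lem-prime-factorization-z2} (solving $x^2\equiv 2\mmod p$ via the algorithm of \cite{Rabin1980}) and from Lemma~\ref{lem-solve} itself (solving $u^2\equiv -1\mmod p$ and $u^2\equiv -2\mmod p$), so the total expected runtime is polynomial in the size of $n$, as required. I do not foresee a substantive obstacle: every ingredient is already in place, and the proof is essentially a routine assembly, with the only mild subtlety being the need to distinguish the two distinct conjugate primes $\pi_i$ and $\pi_i\bul$ when $p_i\equiv 1,7\mmod 8$, which is handled automatically by trial division since $\pi_i\not\sim\pi_i\bul$.
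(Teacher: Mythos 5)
Your proposal is correct and follows essentially the same route as the paper: factor each rational prime $p_i$ in $\Z[\sqrt2]$ via Lemma~\ref{lem-prime-factorization-z2}, recover the factorization of $\xi$ from that of $n=\xi\bul\xi$, and hand the result to Lemma~\ref{lem-solve}. The paper compresses the lifting step into the phrase ``it is easy to obtain a prime factorization of $\xi$,'' whereas you spell it out via trial division with the correct exponent bound; this is just a more explicit rendering of the same argument.
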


\begin{proof}
  Suppose that $n=\pm p_1^{m_1}\cdots p_k^{m_k}$ is the prime
  factorization of $n$, where $p_1,\ldots,p_k$ are distinct positive
  primes.  Each $p_i$ can be efficiently factored into primes of
  $\Z[\sqrt2]$ by Lemma~\ref{lem-prime-factorization-z2}; this yields
  the prime factorization of $n=\xi\bul\xi$ in $\Z[\sqrt2]$. From
  this, it is easy to obtain a prime factorization of $\xi$. The rest
  of the claim then follows from Lemma~\ref{lem-solve}.
\end{proof}

%----------------------------------------------------------------------
\subsection[{The case xi in D[root 2]}]
{The case $\xi\in\D[\sqrt 2]$}

The following lemma can be used to reduce the problem of
$\dagger$-decomposability in $\D[\sqrt2]$ to $\dagger$-decomposability
in $\Z[\sqrt2]$.

\begin{lemma}\label{lem-Z-to-D-new}
  Consider $\xi\in\D[\sqrt 2]$. Then $\xi$ is $\dagger$-decomposable
  if and only if $\sqrt2\,\xi$ is $\dagger$-decomposable.
\end{lemma}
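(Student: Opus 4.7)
The plan is to use the element $\delta = 1 + \omega \in \Z[\omega]$ highlighted earlier in the paper, together with the two facts recorded in the remark following its definition: that $\delta\da \delta = \lambda \sqrt{2}$, where $\lambda = 1 + \sqrt 2$ is a unit of $\Z[\sqrt 2]$, and that $\delta$ is invertible in $\D[\omega]$ with inverse $\delta\inv = (\omega - i)/\sqrt{2}$. Multiplication by $\delta$ will give the bijection between the two sets of solutions.

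For the forward direction, I would assume $t \in \D[\omega]$ satisfies $t\da t \sim \xi$ and consider $s = \delta t \in \D[\omega]$. Then
\[
 s\da s \;=\; \delta\da \delta \cdot t\da t \;=\; \lambda\sqrt{2}\cdot t\da t,
\]
and since $\lambda$ is a unit of $\Z[\sqrt 2]$, this gives $s\da s \sim \sqrt{2}\,\xi$, so $\sqrt 2\,\xi$ is $\dagger$-decomposable.

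For the converse, I would start from some $s \in \D[\omega]$ with $s\da s \sim \sqrt 2 \,\xi$, and set $t = \delta\inv s$, which lies in $\D[\omega]$ because $\delta$ is a unit of $\D[\omega]$. A symmetric calculation gives
\[
 t\da t \;=\; (\delta\da \delta)\inv s\da s \;=\; (\lambda\sqrt 2)\inv s\da s,
\]
and multiplying both sides by the unit $\lambda \in \Z[\sqrt 2]$ yields $\lambda \, t\da t \sim s\da s/\sqrt 2 \sim \xi$; hence $t\da t \sim \xi$.

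There is no real obstacle here; the only mild subtlety is keeping straight that the equivalence $\sim$ is defined via units of $\Z[\sqrt 2]$ (not $\D[\sqrt 2]$), so I need to make sure every scalar I absorb into $\sim$ (namely $\lambda$ and $\lambda\inv = -\lambda\bul$) is indeed a unit of $\Z[\sqrt 2]$, which it is by Lemma~\ref{lem-units}. The two constructions $t \mapsto \delta t$ and $s \mapsto \delta\inv s$ are mutually inverse, which justifies the equivalence claimed in the lemma.
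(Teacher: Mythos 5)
Your proof is correct and follows essentially the same route as the paper's: both directions are handled by multiplying by $\delta=1+\omega$ (resp.\ $\delta\inv$), using $\delta\da\delta=\lambda\sqrt2$ with $\lambda$ a unit of $\Z[\sqrt2]$. Your attention to the fact that $\sim$ absorbs only units of $\Z[\sqrt2]$ (so that the $\sqrt2$ factors must cancel exactly, as they do) is exactly the right point to check.
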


\begin{proof}
  Recall that $\delta=1+\omega$ satisfies $\delta\da\delta=\lambda\sqrt2 \sim
  \sqrt2$. Also note that $\delta$ is invertible in $\D[\omega]$;
  specifically, $\delta\inv = \delta\lambda\inv\omega\inv/\sqrt2 \in\D[\omega]$.
  Assume that $\xi$ is $\dagger$-decomposable, so that $t\da
  t\sim\xi$. Letting $t'=\delta t$, we have
  $t'^{\dagger}t'=\delta\da\delta t\da t \sim \sqrt2 t\da t\sim
  \sqrt2\,\xi$, so that $\sqrt2\,\xi$ is $\dagger$-decomposable.
  The converse is proved similarly, using $\delta\inv$ instead of
  $\delta$. 
\end{proof}

We can now prove Theorem~\ref{thm-diophantine}, whose statement we 
reproduce here.

\begin{un-theorem}
  Let $\xi\in\D[\sqrt 2]$. Note that $\xi\bul\xi\in\D$, so we can
  write $\xi\bul\xi=\frac{n}{2^\ell}$ for some $n\in\Z$ and $\ell\in\N$. 
  There exists a probabilistic algorithm which, given $\xi$ and, in 
  case $n\neq 0$, a prime factorization of $n$, determines whether or 
  not the equation $t\da t =\xi$ has a solution, and finds a 
  solution if there is one. Moreover, the expected runtime of this 
  algorithm is polynomial in the size of $n$.
\end{un-theorem}

\begin{proof}
  Let $\xi$ be an element of $\D[\sqrt 2]$ with $\xi\bul\xi=
  \frac{n}{2^\ell}$ for some $n\in\Z$. First, the algorithm can easily
  check whether $\xi$ is doubly positive, and if it is not, there is
  no solution. Next, if $n=0$, then $\xi=0$, and $t=0$ is obviously a
  solution, so there is nothing to do. Otherwise, let $\xi'= {\rt
    \ell}\xi$. Since $\xi'\in \Z[\sqrt 2]$ and $\xi'^{\bullet} \xi'=n$, and
  a factorization of $n$ is given, we can use
  Proposition~\ref{prop-diophantine} to efficiently determine whether
  $s\da s\sim \xi'$ has a solution. By
  Lemma~\ref{lem-Z-to-D-new} this is the case if and only if the equation
  $t\da t\sim\xi$ also has a solution, in which case it is given by
  $t=\delta^{-\ell} s$. Finally, since $\xi$ is doubly positive, we can
  solve $t'^{\dagger}t'=\xi$ by Lemma~\ref{lem-sim}.
\end{proof}

We conclude this appendix with a useful fact: the algorithm of
Theorem~\ref{thm-diophantine} always succeeds in case $n$ is a prime
that is congruent to $1$ modulo $8$.

\begin{proposition}\label{prop-prime-1mod8}
  Let $\xi=\D[\sqrt 2]$ be doubly positive, with $\xi\bul\xi=
  \frac{n}{2^\ell}$ for some $n\in\Z$. If $n$ is prime and $n\equiv
  1\mmod{8}$, then the equation $t\da t=\xi$ has a solution.
\end{proposition}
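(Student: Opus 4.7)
The plan is to trace through the algorithm of Theorem~\ref{thm-diophantine} and verify that every step succeeds under the stated hypotheses. First I would set $\xi' := \rt{\ell}\,\xi$, and argue that $\xi'\in\Z[\sqrt 2]$: writing $\xi=a+b\sqrt 2$ with $a,b\in\D$ in lowest dyadic form, a short case split on the parity of $\ell$ (which is determined by the $2$-adic valuations of $a^2$ and $2b^2$) shows that $\rt{\ell}\xi$ clears all denominators. A direct computation using $(\sqrt 2)^\bullet=-\sqrt 2$ then gives
\[
  \xi'^\bullet\xi' \;=\; (-1)^\ell\,2^\ell\,\xi^\bullet\xi \;=\; (-1)^\ell n,
\]
whose absolute value is the prime $n$.

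Next I would apply Lemma~\ref{lem-prime-criterion}: since $\xi'^\bullet\xi'=\pm n$ is prime in $\Z$, $\xi'$ is prime in $\Z[\sqrt 2]$. By Lemma~\ref{lem-z2-cases}, there is a unique positive prime $p$ of $\Z$ with $\xi'\divides p$, and either $\xi'\sim p$ or $\xi'^\bullet\xi'\sim p$. The first option would force $\xi'^\bullet\xi'\sim p^2$, which contradicts primality of $\pm n$; therefore $\xi'^\bullet\xi'\sim p$, forcing $p=n$. Now Lemma~\ref{lem-1235} asserts that a prime of $\Z[\sqrt 2]$ lying over a rational prime $p$ is $\dagger$-decomposable exactly when $p=2$ or $p\equiv 1,3,5\pmod 8$. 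Our hypothesis $n\equiv 1\pmod 8$ puts us squarely in this case, so $\xi'$ is $\dagger$-decomposable.

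To finish, I would apply Lemma~\ref{lem-Z-to-D-new} iteratively $\ell$ times, which transfers $\dagger$-decomposability along the chain $\xi,\sqrt 2\,\xi,2\xi,\ldots,\rt{\ell}\,\xi=\xi'$: each step is an ``iff,'' so $\dagger$-decomposability of $\xi'$ implies $\dagger$-decomposability of $\xi$. Finally, because $\xi$ is doubly positive by assumption, Lemma~\ref{lem-sim} upgrades a solution of $t^\dagger t\sim\xi$ to a solution of $t^\dagger t=\xi$, completing the proof.

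The only nontrivial input is the algebraic identity $\xi'^\bullet\xi'=(-1)^\ell n$ together with the membership $\xi'\in\Z[\sqrt 2]$; everything else is a clean assembly of results already in hand. There is no genuine obstacle beyond keeping the sign $(-1)^\ell$ under control when invoking Lemma~\ref{lem-prime-criterion} and Lemma~\ref{lem-z2-cases}, since these results care only about associate classes.
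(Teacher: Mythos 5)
Your proposal is correct and follows essentially the same route as the paper's proof: pass to $\xi'=\rt{\ell}\xi\in\Z[\sqrt 2]$, apply Lemma~\ref{lem-prime-criterion} and Lemma~\ref{lem-1235} to get $\dagger$-decomposability of $\xi'$, transfer it back to $\xi$ via Lemma~\ref{lem-Z-to-D-new}, and conclude with Lemma~\ref{lem-sim}. The extra care you take with the sign $(-1)^\ell$ and with invoking Lemma~\ref{lem-z2-cases} to identify the rational prime as $n$ only makes explicit details the paper leaves implicit.
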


\begin{proof}
  Let $\xi'={\rt \ell}\xi$. Then $\xi'\in\Z[\sqrt 2]$ and
  $\xi'^{\bullet}\xi'=n$. Then $\xi'$ is prime by
  Lemma~\ref{lem-prime-criterion} and $\dagger$-decomposable by
  Lemma~\ref{lem-1235}. By Lemma~\ref{lem-Z-to-D-new}, $\xi$ is
  $\dagger$-decomposable, so that the equation $t\da t=\xi$ can be
  solved by Lemma~\ref{lem-sim}.
\end{proof}

\section[Proof of Lemma 8.4]{Proof of Lemma~\ref{lem-n}}
\label{appendix-n}

\begin{definition}
  Recall that $\delta=1+\omega$.
  Every element $u\in\D[\omega]$ can be written in the form
  \begin{equation}\label{eqn-delta-exponent}
    u = \frac{1}{\delta^k} (a\omega^3+b\omega^2+c\omega+d),
  \end{equation}
  where $a,b,c,d\in\Z$ and $k\geq 0$.
  The smallest $k$ such that $u$ can be written in this form is called
  the {\em least $\delta$-exponent} of $u$.
\end{definition}

\begin{remark}\label{rem-delta-exp}
  A calculation shows that
  $\frac{1}{\delta}(a\omega^3+b\omega^2+c\omega+d)$ is equal to
  \[ \frac12\left[\,(a-b+c-d)\omega^3 
    + (a+b-c+d)\omega^2 
    + (-a+b+c-d)\omega 
    + (a-b+c+d)\,\right].
  \]
  It follows that an element
  $a\omega^3+b\omega^2+c\omega+d\in\Z[\omega]$ is divisible by
  $\delta$ if and only if $a+b+c+d$ is even.
\end{remark}

We can now prove Lemma~\ref{lem-n}, whose statement we 
reproduce here.

\begin{un-lemma}
  Each of the numbers $n$ produced in step 2(a) of
  Algorithm~\ref{alg-main} satisfies $n\geq 0$, and either $n=0$ or
  $n\equiv1\mmod{8}$.
\end{un-lemma}

\begin{proof}
  Recall that in step 2(a) of Algorithm~\ref{alg-main}, we are given
  $u\in\D[\omega]$ such that $u\in\Disk$ and $u\bul\in\Disk$. We let
  $\xi=1-u\da u$ and write $\xi\bul\xi=\frac{n}{2^\ell}$, where
  $\ell\geq 0$ is minimal. We must show that $n\geq 0$, and that either
  $n=0$ or $n\equiv 1\mmod{8}$.

  The first claim is trivial, since by assumption $u\da u\leq 1$ and
  $(u\da u)\bul\leq 1$, and therefore $\xi,\xi\bul\geq 0$, which
  implies $n\geq 0$. For the second claim, write $u$ in the form
  {\eqref{eqn-delta-exponent}}, with least $\delta$-exponent $k$.
  
  \begin{itemize}
  \item Case 1: $k\leq 1$. In this case, one can show by a direct
    calculation (for example with the help of the algorithm from
    Proposition~\ref{prop-scaled1}) that the scaled two-dimensional
    grid problem only has the 9 solutions $u=0$ and $u=\omega^j$,
    where $j=0,\ldots,7$. In the first case, $n=1$, and in the
    remaining 8 cases, $n=0$, so the claim follows.
  \item Case 2: $k\geq 2$. We calculate
    \[ u\da u = \frac{1}{(\delta\da\delta)^k}(A+B\sqrt 2),
    \] 
    where $A=a^2+b^2+c^2+d^2$ and $B=cd+bc+ab-da$. Note that, since
    $k$ is the least $\delta$-exponent of $u$,
    Remark~\ref{rem-delta-exp} implies that $a+b+c+d$ is odd. It
    easily follows that $A$ is odd; moreover, since $B\equiv
    (a+c)(b+d)\mmod{2}$, it also follows that $B$ is even.  Also note
    that $(\delta\da\delta)^k = (\lambda\sqrt2)^k$ is an element
    of $\Z[\sqrt2]$, and is divisible by $2$ since $k\geq
    2$. Therefore, we have $(\delta\da\delta)^k = C+D\sqrt2$ for
    some $C,D\in\Z$ where $C,D$ are both even. We further calculate
    \[ \xi = 1-u\da u =
    \frac{1}{(\delta\da\delta)^k}(C+D\sqrt2 - A -
    B\sqrt 2)
    = \frac{1}{(\delta\da\delta)^k}(x+y\sqrt2),
    \]
    where $x=C-A$ is odd and $y=D-B$ is even. Noting that
    $(\delta\da\delta)\bul(\delta\da\delta)=2$ and that
    $x^2-2y^2\equiv 1\mmod{8}$, we therefore have
    \[ \xi\bul\xi = 
    \frac{1}{((\delta\da\delta)\bul(\delta\da\delta))^k}(x^2-2y^2)
    = \frac{1}{2^k}(x^2-2y^2).
    \]
    It follows that $\ell=k$ and $n=x^2-2y^2$, and therefore $n\equiv 1\mmod{8}$,
    which was to be shown.\qedhere
  \end{itemize}
\end{proof}

\section[Proof of Lemma 8.6]{Proof of Lemma~\ref{lem-summation}}\label{app-d}

\begin{lemma}\label{lem-calculus}
  \begin{enumerate}\alphalabels
  \item\label{lem-calculus-b} $\sqrt x\geq\ln x$ for all $x>0$.
  \item\label{lem-calculus-c} $3\sqrt x\geq(\ln x)^2$ for all $x\geq 1$.
  \item\label{lem-calculus-d} $(1-\frac{a}{x})^x \leq e^{-a}$, for all
    $x\geq a>0$.
  \end{enumerate}
\end{lemma}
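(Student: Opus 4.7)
The plan is to treat each of the three inequalities as a standard one-variable calculus exercise. None of them requires deep ideas; the only thing to watch is getting the constants right in part (b).

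For part (\ref{lem-calculus-b}), I would define $f(x)=\sqrt{x}-\ln x$ on $(0,\infty)$ and locate its minimum. Computing $f'(x)=\tfrac{1}{2\sqrt{x}}-\tfrac{1}{x}=\tfrac{\sqrt{x}-2}{2x}$ shows the unique critical point is at $x=4$, and the second-derivative test (or the sign chart of $f'$) makes this a global minimum on $(0,\infty)$. Since $f(0^+)=+\infty$, $f(\infty)=+\infty$, and $f(4)=2-2\ln 2>0$, we conclude $f(x)>0$ everywhere, as desired.

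For part (\ref{lem-calculus-c}), the natural approach is to define $g(x)=3\sqrt{x}-(\ln x)^2$ on $[1,\infty)$, check that $g(1)=3>0$, and then argue $g$ is non-decreasing. Differentiation gives $g'(x)=\tfrac{3\sqrt{x}-4\ln x}{2x}$, so it suffices to show $p(x):=3\sqrt{x}-4\ln x\geq 0$ on $[1,\infty)$. A second calculus pass (same recipe as in part (\ref{lem-calculus-b})) locates the unique minimum of $p$ at $x=64/9$, where $p(64/9)=8-4\ln(64/9)>0$. The main obstacle here is a mild one: one must verify numerically that $\ln(64/9)<2$, which is immediate from $64/9<e^2\approx 7.389$. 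I briefly considered trying to deduce part (\ref{lem-calculus-c}) from part (\ref{lem-calculus-b}) by substituting $\sqrt{x}$ into (\ref{lem-calculus-b}) — this yields $4\sqrt{x}\geq(\ln x)^2$, which is the right shape but with the wrong constant, so the direct calculus route seems necessary.

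For part (\ref{lem-calculus-d}), this is essentially the textbook derivation of $e^{-a}$ as the limit of $(1-a/x)^x$. Starting from the elementary inequality $1+y\leq e^y$ (valid for all real $y$, with equality only at $y=0$), we substitute $y=-a/x$ to get $1-a/x\leq e^{-a/x}$. The hypothesis $x\geq a>0$ ensures that $1-a/x\geq 0$, so we may raise both sides to the $x$-th power while preserving the inequality, giving $(1-a/x)^x\leq e^{-a}$ at once. This part involves no real obstacle.
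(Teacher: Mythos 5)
Your proposal is correct and follows essentially the same route as the paper: parts (\ref{lem-calculus-b}) and (\ref{lem-calculus-c}) are the paper's elementary one-variable calculus argument (the paper maximizes the ratios $\ln x/\sqrt x$ and $(\ln x)^2/(3\sqrt x)$ at $x=e^2$ and $x=e^4$, while you minimize the corresponding differences at $x=4$ and via the nested critical point $x=64/9$ — the same computation in a different guise), and part (\ref{lem-calculus-d}) is identical, substituting $y=-a/x$ into $1+y\leq e^y$. All your critical-point locations and sign checks are accurate.
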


\begin{proof}
  By elementary calculus. For (a) and (b), note that the functions
  $f(x)={\ln x}/{\sqrt x}$ and $g(x)={(\ln x)^2}/{3\sqrt x}$, on the
  given domains, take their maxima at $x=e^2$ and $x=e^4$,
  respectively, and in both cases, the maximum is less than 1. 
  For (c), first note that for all $z\in\R$, $z+1\leq e^z$; the claim
  follows by letting $z=-\frac{a}{x}$.
\end{proof}

We can now prove Lemma~\ref{lem-summation}, whose statement we reproduce 
here.

\begin{un-lemma}
  Let $b>0$ be an arbitrary fixed constant. Then for $a\geq 1$,
  \[ \sum_{x=1}^{\infty} \bigparen{1-\frac{1}{a+b\ln x}}^x = O(a).
  \]
\end{un-lemma}

\begin{proof}
  Let
  \begin{equation}
    x_0 = 16a^2 + 144b^2.\label{eqn-x0}
  \end{equation}
  We claim that for all $x \geq x_0$, 
  \begin{equation}
    \bigparen{1-\frac{1}{a+b\ln x}}^x \leq \frac{1}{x^2}. \label{eqn-1x2}
  \end{equation}
  Indeed, from $x\geq 16a^2$, we get
  \begin{equation}\label{eqn-rx22a}
    \frac{\sqrt{x}}{2} \geq 2a.
  \end{equation}
  From $x\geq 144b^2$, we get 
  \begin{equation}\label{eqn-rx62b}
    \frac{\sqrt{x}}{6} \geq 2b.
  \end{equation}
  Combining {\eqref{eqn-rx22a}} and {\eqref{eqn-rx62b}} with
  Lemma~\ref{lem-calculus}(\ref{lem-calculus-b}) and
  (\ref{lem-calculus-c}), we have
  \[
    x = \frac{\sqrt x}{2}\cdot \sqrt x + \frac{\sqrt
      x}{6}\cdot 3\sqrt x
    \geq 2a\ln x + 2b(\ln x)^2 = 2\ln x(a+b\ln x),
  \]
  hence
  \[
  \frac{1}{a + b \ln x} \geq \frac{2\ln x}{x},
  \]
  hence
  \[ 
  \bigparen{1-\frac{1}{a + b \ln x}}^x \leq \bigparen{1-\frac{2\ln x}{x}}^x
  \leq e^{-2\ln x} = \frac{1}{x^2}
  \]
  where the final inequality uses
  Lemma~\ref{lem-calculus}(\ref{lem-calculus-d}). This finishes the
  proof of {\eqref{eqn-1x2}}. The lemma now immediately follows,
  because we have 
  \begin{eqnarray}
    \sum_{x=1}^{\infty} \bigparen{1-\frac{1}{a+b\ln x}}^x
    &=& \sum_{x=1}^{\floor{x_0}} \bigparen{1-\frac{1}{a+b\ln x}}^x +
    \sum_{x=\floor{x_0}+1}^{\infty} \bigparen{1-\frac{1}{a+b\ln x}}^x\nonumber\\
    &\leq& \sum_{x=1}^{\floor{x_0}} \bigparen{1-\frac{1}{a+b\ln x_0}}^x +
    \sum_{x=\floor{x_0}+1}^{\infty} \frac{1}{x^2}\nonumber\\
    &\leq& \sum_{x=0}^{\infty} \bigparen{1-\frac{1}{a+b\ln x_0}}^x +
    \sum_{x=1}^{\infty} \frac{1}{x^2}\nonumber \\
    &=& a+b\ln x_0 + \frac{\pi^2}{6}
    ~=~  a+b\ln (16a^2 + 144b^2) + \frac{\pi^2}{6} ~=~ O(a).\nonumber
  \end{eqnarray}
\end{proof}

%----------------------------------------------------------------------
\bibliographystyle{abbrv}
\bibliography{gridsynth}

%----------------------------------------------------------------------
\end{document}